\newcounter{mycount}
\newcolumntype{C}[1]{>{\centering\arraybackslash}m{#1}}
\newtheorem{theorem}{Theorem}[section]
\newtheorem{definition}{Definition}[section]
\newtheorem{lemma}{Lemma}[section]
\newtheorem{corollary}{Corollary}[section]
\newtheorem{proposition}{Proposition}[section]
\newtheorem{claim}{Claim}[section]
\newtheorem{fact}{Fact}[section]
\newtheorem{problem}{Problem}[section]
\DeclareMathOperator{\spec}{spec}
\DeclareMathOperator{\ply}{poly}
\DeclareMathOperator{\im}{Im}
\let\ker\relax
\DeclareMathOperator{\ker}{Ker}
\DeclareMathOperator{\Kcyc}{\mathcal{K}}
\DeclareMathOperator{\J}{\mathcal{J}}
\DeclareMathOperator{\minusST}{|-\rangle}
\DeclareMathOperator{\z}{| 0 \rangle}
\DeclareMathOperator{\1}{|1\rangle}
\DeclareMathOperator{\phiST}{|\phi\rangle}
\DeclareMathOperator{\identity}{\mathbb{I}}
\DeclareMathOperator{\Cl}{Cl}
\DeclareMathOperator{\Hbravyi}{\mathit{H}_{\textrm{Bravyi}}}
\DeclareMathOperator{\Hin}{\mathit{H}_{\textrm{in}}}
\DeclareMathOperator{\Hout}{\mathit{H}_{\textrm{out}}}
\DeclareMathOperator{\Hprop}{\mathit{H}_{\textrm{prop}}}
\DeclareMathOperator{\Hclock}{\mathit{H}_{\textrm{clock}}}
\DeclareMathOperator{\Hhs}{\mathit{H}_{\textrm{hs}}}
\newcommand{\qmsat}{{\textup{\sc Quantum $m$-$\SAT$}}\xspace}
\newcommand{\qfsat}{{\textup{\sc Quantum $4$-$\SAT$}}\xspace}
\newcommand{\qtsat}{{\textup{\sc Quantum $2$-$\SAT$}}\xspace}
\newcommand{\YES}{{\textup{YES}}\xspace}
\newcommand{\NO}{{\textup{NO}}\xspace}
\newcommand{\Hpropt}{H_\textrm{prop,t}}
\definecolor{tk}{RGB}{246,76,246}
\newcommand{\CNOT}{\text{CNOT}}
\begin{document}

\title{Gapped Clique Homology on weighted graphs is $\QMA_1$-hard and contained in $\QMA$}
\author[1]{Robbie King}
\author[2,3]{Tamara Kohler}
\affil[1]{\small Department of Computing and Mathematical Sciences, Caltech, Pasadena, USA}
\affil[2]{\small Instituto de Ciencias Matemáticas, Madrid, Spain}
\affil[3]{\small Department of Computer Science, Stanford University}
\date{}
\maketitle

\begin{abstract}
We study the complexity of a classic problem in computational topology, the homology problem: given a description of some space $X$ and an integer $k$, decide if $X$ contains a $k$-dimensional hole.
The setting and statement of the homology problem are completely classical, yet we find that the complexity is characterized by quantum complexity classes.
Our result can be seen as an aspect of a connection between homology and supersymmetric quantum mechanics \cite{witten1982supersymmetry}.

We consider clique complexes, motivated by the practical application of topological data analysis (TDA).
The clique complex of a graph is the simplicial complex formed by declaring every $k+1$-clique in the graph to be a $k$-simplex.
Our main result is that deciding whether the clique complex of a weighted graph has a hole or not, given a suitable promise on the gap, is $\QMA_1$-hard and contained in $\QMA$.

Our main innovation is a technique to lower bound the eigenvalues of the combinatorial Laplacian operator. 
For this, we invoke a tool from algebraic topology known as \emph{spectral sequences}. In particular, we exploit a connection between spectral sequences and Hodge theory \cite{forman1994hodge}.
Spectral sequences will play a role analogous to perturbation theory for combinatorial Laplacians.
In addition, we develop the simplicial surgery technique used in prior work \cite{crichigno2022clique}.

Our result provides some suggestion that the quantum TDA algorithm \cite{lloyd2016quantum} \emph{cannot} be dequantized. More broadly, we hope that our results will open up new possibilities for quantum advantage in topological data analysis.
\end{abstract}

\pagebreak
\setcounter{tocdepth}{1}
\tableofcontents

\pagebreak
\section{Introduction}

In quantum complexity theory, the goal is to characterize the capacity of quantum computers for solving computational problems.
There is no requirement that the problems considered be quantum mechanical in nature, but most of the key results in the field do consider such problems. See \cite{osborne2012hamiltonian, gharibian2015quantum} for reviews.
The field has been important both for understanding the potential of quantum computers and for giving insight into fundamental physics.
However, the capacity of quantum computers to solve problems that do not appear inherently quantum mechanical is less well understood.

In this work, we examine a classic problem in computational topology and find that it can be characterized by quantum complexity classes.
Topology studies properties of spaces that only depend on the continuity and connectivity between points and do not depend on distances between points.
In particular, we are concerned with \emph{homology}, a branch of topology that describes $k$-dimensional holes in a topological space.
For example, a circle constitutes a 1-dimensional hole, a hollow sphere a 2-dimensional hole, and so on.
The types of spaces we consider are motivated by the practical application of topological data analysis (TDA).
Our result thus has implications for quantum advantage in TDA and the quantum TDA algorithm introduced in \cite{lloyd2016quantum}.
In TDA, one applies techniques from topology to extract global information from data in a way that is resistant to local noise.
For background on TDA and its applications, see \cite{wasserman2018topological,petri2014homological,giusti2016two,reimann:hal-01706964}.

\medskip
{\noindent \bf Results.}
\smallskip

The computational problem we consider is perhaps the simplest question in homology -- Given some space, does it have a $k$-dimensional hole or not?
Our main result is that, for certain types of input spaces and a suitable promise on the gap, this problem is $\QMA_1$-hard and contained in $\QMA$. The complexity class $\QMA$ is the quantum analogue of the class $\MA$, and $\QMA_1$ is a one-sided error version of $\QMA$. (See \Cref{sec:q complexity} for full definitions.)

A problem of interest in TDA is to compute the number of holes in the \emph{clique complex} of a graph -- the clique complex of a graph is the simplicial complex formed by mapping every $k+1$-clique in the graph to a $k$-simplex. (See \Cref{sec:clique} for formal definition.)
We consider the problem of deciding whether a weighted graph's clique complex has a $k$-dimensional hole or not.
The decision version of the clique homology problem was shown to be $\QMA_1$-hard in \cite{crichigno2022clique}.
However, the problem is only believed to be inside $\QMA$ when the combinatorial Laplacian has a promise gap.
Our key contribution is showing that the clique homology problem remains $\QMA_1$-hard when the gap is imposed. Thus we can provide upper and lower bounds on the complexity of this classical problem in terms of quantum complexity classes.

\begin{problem} \label{informal_prom_prob}
\emph{(Gapped clique homology)}
Fix functions $k : \mathbb{N} \rightarrow \mathbb{N}$ and $g : \mathbb{N} \rightarrow [0,\infty)$, with $g(n) \geq 1 / \ply{n}$, $k(n) \leq n$. The input to the problem is a vertex-weighted graph $\mathcal{G}$ on $n$ vertices. The task is to decide whether:
\begin{itemize}
    \item {\bf \YES} \ the $k(n)^{\text{th}}$ homology group of $\Cl(\mathcal{G})$ is non-trivial $H_k(\mathcal{G}) \neq 0$
    \item {\bf \NO} \ the $k(n)^{\text{th}}$ homology group of $\Cl(\mathcal{G})$ is trivial $H_k(\mathcal{G}) = 0$ and the weighted combinatorial Laplacian $\Delta_k$ has minimum eigenvalue $\lambda_{\min}(\Delta_k) \geq g(n)$.
\end{itemize}
\end{problem}

The condition on the minimum eigenvalue $\lambda_{\min}(\Delta_k)$ can be interpreted as a promise that the graph is far from having a hole in the NO case; see below.

\begin{theorem} \label{QMA1_cor} 
\Cref{informal_prom_prob} is $\QMA_1$-hard and contained in $\QMA$.
\end{theorem}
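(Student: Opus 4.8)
The plan is to prove the two directions of \Cref{QMA1_cor} separately, with almost all of the difficulty in the hardness direction. For \emph{containment in $\QMA$}, the witness is a classical description sufficient to prepare a normalized harmonic $k$-cochain of the weighted clique complex, i.e.\ a unit vector in $\ker \Delta_k$; by Hodge theory such a vector exists precisely when $H_k(\mathcal{G}) \neq 0$. The verifier rescales $\Delta_k = \partial_{k+1}\partial_{k+1}^\dagger + \partial_k^\dagger \partial_k$ to have operator norm at most $1$, views it as a Hamiltonian on the space spanned by the $k$-simplices (dimension at most $\binom{n}{k+1}$), and estimates the energy of the witness by phase estimation; this is efficient because $\partial_k$ and $\partial_{k+1}$ are sparse and row-computable, so $\Delta_k$ can be Hamiltonian-simulated. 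In a \YES instance the prover sends a harmonic witness of energy $0$, while in a \NO instance the promise $\lambda_{\min}(\Delta_k) \ge g(n) \ge 1/\ply(n)$ forces every witness above $g(n)/2$, an inverse-polynomial separation that phase estimation resolves in polynomial time (completeness is not perfect, because phase estimation has error, which is why one gets $\QMA$ rather than $\QMA_1$ here). This is essentially the verifier underlying the quantum TDA algorithm \cite{lloyd2016quantum}; the only new point is that vertex weights rescale a $k$-simplex by the product of the weights of its $k+1$ vertices, which preserves sparsity.

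For \emph{$\QMA_1$-hardness}, I would reduce from a $\QMA_1$-complete Quantum SAT problem, taking as a starting point the construction of \cite{crichigno2022clique}: from an instance $H$ on $m$ qubits it builds a graph whose clique complex has degree-$k$ homology isomorphic to the ground space of $H$, so that $H$ is frustration-free iff $H_k \neq 0$. This already gives $\QMA_1$-hardness of the \emph{ungapped} problem; the new content is to redo the construction so that, in the \NO case, $\lambda_{\min}(\Delta_k) \ge 1/\ply(m)$, which the existing reduction does not control. The idea is to place weights on the vertices and present the complex as a filtered space, so that $\Delta_k$ becomes block-structured across energy scales set by the weights, and then to analyze the associated spectral sequence in Forman's Hodge-theoretic form \cite{forman1994hodge}. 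Here the filtration parameter (encoded by the weights) plays the role of the small parameter in perturbation theory: the $E_1$ page is governed by the ``cheap'' part of $\Delta_k$, and the later differentials $d_r$ act as effective Hamiltonians at successive orders. If the weights and the filtration are chosen so that the spectral sequence stabilizes within a polynomial number of pages and each page carries an inverse-polynomial spectral gap, then in the \NO case $\lambda_{\min}(\Delta_k) \ge 1/\ply(m)$, while in the \YES case exactness at every page leaves a genuine harmonic representative of the hole. Along the way I would extend the simplicial surgery of \cite{crichigno2022clique} to fill or excise spurious cycles without disturbing degree-$k$ homology.

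I expect this last step --- making the spectral-sequence/perturbation-theory analogy quantitative --- to be the main obstacle. There is no clean splitting $\Delta_k = H_0 + \epsilon V$ available here; instead one must engineer the vertex weights and the filtration so that simultaneously (i) the pages stabilize after a controlled number of steps; (ii) every differential contributes at least an inverse polynomial to the relevant eigenvalue; and (iii) the cross-terms between filtration levels, which have no counterpart in the Hamiltonian perturbation setting, do not reintroduce tiny eigenvalues. Arranging all three at once, for a complex large enough to encode an arbitrary Quantum SAT instance and compatible with the surgery moves, is where the bulk of the technical work lies; the surgery arguments themselves, by contrast, should be extensions of the techniques already present in \cite{crichigno2022clique}.
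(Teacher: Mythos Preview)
Your containment argument is essentially the paper's, modulo one slip: the $\QMA$ witness is the quantum state itself (the harmonic cochain embedded into $n$ qubits via indicator bitstrings of cliques), not a classical description of it. The paper handles the embedding by extending $\Delta_k$ to all of $(\mathbb{C}^2)^{\otimes n}$ with a large diagonal penalty on non-clique basis states, then runs phase estimation on this sparse operator; your sketch amounts to the same verifier.

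For hardness, your high-level plan (weights $\to$ filtration $\to$ Forman spectral sequence) is the paper's, but two concrete points differ from what you wrote and would block the argument as stated.

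First, you cannot take the graph from \cite{crichigno2022clique} and simply add weights. The paper uses a \emph{different} single-qubit graph (a bowtie with two square loops) precisely because the spectral argument needs the encoded computational basis states to be \emph{orthogonal harmonic representatives} in the natural inner product on chains, not merely linearly independent homology classes. Without this orthogonality the projector $\sum_i \Phi_i$ that appears in the combining step does not implement $H$ on the simulated qubit subspace, and the eigenvalue lower bound collapses.

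Second, the spectral sequence is not a global tool doing what your items (i)--(iii) describe. In the paper it is applied to a \emph{single} gadget on $m\le 4$ qubits; this is a constant-size object, the sequence collapses at page $2m+2$, and the output is a structural lemma for that one gadget's Laplacian (kernel is a $\mathcal{O}(\lambda)$-perturbation of $|\phi\rangle^\perp$, first excited state at $\Theta(\lambda^{4m+2})$, next at $\Theta(\lambda^2)$ supported near the central vertex, rest at $\Theta(1)$). The genuinely hard step---and the one your (iii) gestures toward but mislocates---is \emph{combining} the $t=\ply(n)$ gadgets. This is done by a separate, non-spectral-sequence argument: the up-Laplacian decomposes as $\hat\Delta^{\uparrow}=\sum_i \hat\Delta_i^{\uparrow}$ because distinct gadgets share no top-dimensional simplices, but the down-Laplacian does \emph{not} decompose, and boundaries from different gadgets can interfere on the shared qubit complex. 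Ruling out that this interference creates tiny eigenvalues requires knowing exactly where each gadget's $\Theta(\lambda^2)$ eigenspace lives (in the ``bulk'' near its central vertex, disjoint across gadgets) and then a careful projector-by-projector energy accounting. That combining argument, not the spectral sequence itself, is where the inverse-polynomial gap is finally established.
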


At this point, we would like to make two clarifying remarks on our main result. First of all, one may wonder why we cannot get tight bounds on the complexity of \Cref{informal_prom_prob}; for example, could it be $\QMA$-hard, or contained in $\QMA_1$? We suspect that the true complexity of \Cref{informal_prom_prob} is $\QMA_1$. However, this is a somewhat fragile argument to make since containment in $\QMA_1$ depends on the specific choice of universal gate set for the verifier quantum circuit. Secondly, we would like to clarify the comparison to the previous results of \cite{crichigno2022clique}. By analogy to the local Hamiltonian problem without a promise gap, we can expect the decision version of the clique homology problem, as considered in \cite{crichigno2022clique}, to be in fact PSPACE-hard. Since PSPACE contains $\QMA$ and $\QMA_1$, the statement that the decision version is $\QMA_1$-hard does not provide compelling evidence that the clique homology problem has a quantum structure. On the other hand, our results study a related problem for which the complexity can be upper and lower bounded by quantum complexity classes.

\medskip
{\noindent \bf Techniques.}
\smallskip

The main technical contribution is the development of tools which allow us to lower bound the eigenvalues of the Laplacian operator from Hodge theory. With this purpose, we introduce a new technique to the field -- a powerful tool from algebraic topology known as \emph{spectral sequences}. In particular, we exploit a connection between spectral sequences and Hodge theory presented in \cite{forman1994hodge}. In homology, spectral sequences can play a powerful role analogous to perturbation theory in the analysis of perturbative gadgets \cite{kempe2006complexity}. The use of spectral sequences to perform a kind of perturbation theory on the Laplacians of simplicial complexes is novel to quantum information theory. Further, the ability to lower bound the Laplacian eigenvalues may be of independent interest. In addition, our work extends the simplicial surgery technique used in prior quantum complexity work \cite{crichigno2022clique}.

Our hardness proof will proceed by reducing from a particular local Hamiltonian problem. We would like to encode the Hamiltonian problem into the gapped clique homology problem. For this purpose, our main focus will be to establish the following theorem. (See Theorem \ref{main_thm} for formal version.)

\begin{theorem}\label{informal_main_thm}
\emph{(Main theorem, informal)}
Given a local Hamiltonian* $H$ on $n$ qubits, we can construct a vertex-weighted graph $\mathcal{G}$ on $\ply{n}$ vertices and a $k$ such that the combinatorial Laplacian $\Delta_k(\Cl(\mathcal{G}))$ satisfies
\begin{align*}
\lambda_{\min}(H) = 0 \ &\implies \ \lambda_{\min}(\Delta_k(\Cl(\mathcal{G}))) = 0 \\
\lambda_{\min}(H) \geq \frac{1}{\ply{n}} \ &\implies \ \lambda_{\min}(\Delta_k(\Cl(\mathcal{G}))) \geq \frac{1}{\ply{n}}
\end{align*}
* There are some conditions on the form of $H$, but the class is sufficiently expressive to be $\QMA_1$-hard.
\end{theorem}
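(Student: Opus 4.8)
\emph{Proof idea.} The plan is to reduce from a frustration-free local Hamiltonian problem of the restricted form permitted in the hypothesis --- concretely, $H = \sum_i \Pi_i$, a sum of local projectors arising from a Feynman--Kitaev-style clock construction (with $\Hin$, $\Hprop$ and $\Hout$ type terms), so that $\lambda_{\min}(H) = 0$ on \YES instances and $\lambda_{\min}(H) \geq 1/\ply{n}$ on \NO instances. Given such an $H$ on $n$ qubits, I would build the vertex-weighted graph $\mathcal{G}$ so that its clique complex $X = \Cl(\mathcal{G})$ has a distinguished dimension $k$ with the following structure: the space of $k$-cochains decomposes (orthogonally in the weighted inner product) as $W \oplus W^{\perp}$, where $W$ is identified with the $n$-qubit Hilbert space, $\Delta_k$ restricted to $W$ agrees with $H$ up to a fixed $1/\ply{n}$ rescaling, and $\Delta_k$ restricted to $W^{\perp}$ is bounded below by $1/\ply{n}$. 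Concretely, encode the $n$ qubits in a ``base'' subcomplex whose top cells are indexed by computational basis states (a join of $n$ copies of $S^{0}$, i.e. the clique complex of a complete multipartite graph), and impose each projector $\Pi_i$ using the \emph{simplicial surgery} technique of \cite{crichigno2022clique}: attach higher-dimensional cliques --- equivalently, add edges to $\mathcal{G}$ --- that feed into $\partial_{k+1}$ in precisely the way needed so that the up-Laplacian $\partial_{k+1}\partial_{k+1}^{\dagger}$ reproduces $\Pi_i$, tuning the vertex weights to normalize the relative strengths of the terms.

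With the encoding in place, the \YES case is immediate from Hodge theory: a zero-energy ground state $\psiST$ of $H$ lies in $\ker\Delta_k$, so $H_k(X) \cong \ker\Delta_k \neq 0$ and $\lambda_{\min}(\Delta_k) = 0$. All the difficulty is in the \NO case, where I must show $\lambda_{\min}(\Delta_k) \geq 1/\ply{n}$ (which then also gives $H_k(X) = 0$ by Hodge theory). The obstruction to a naive argument is that $\Delta_k = \partial_{k+1}\partial_{k+1}^{\dagger} + \partial_k^{\dagger}\partial_k$ is a sum of two positive semidefinite pieces, neither individually bounded below on $W$; a $k$-cochain can have small up-Laplacian and small down-Laplacian whose sum is still small, and disentangling this is exactly what perturbation theory does for Hamiltonian gadgets. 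I would handle it with a \emph{spectral sequence}: filter the cochain complex of $X$ (by clause index, or by a grading separating the base cells from the surgery cells), obtaining a spectral sequence $E_r^{p,q}$ that converges to $H_\bullet(X)$ and collapses after a bounded number of pages. This computes, order by order, the effect of the surgery cells on the homology, just as the successive orders of perturbation theory compute an effective low-energy Hamiltonian; by the Hodge-theoretic dictionary of \cite{forman1994hodge}, each page $E_r$ is carried by a space of ``approximately harmonic'' cochains and the differential $d_r$ plays the role of the order-$r$ term. In a \NO instance the relevant entry of $E_\infty$ vanishes because $\ker H = 0$, so along the finitely many pages some $d_r$ is injective (or surjective) on the relevant subquotient, and its least singular value is bounded below in terms of $\lambda_{\min}(H) \geq 1/\ply{n}$ together with the nonzero eigenvalues of the graded Laplacians, which are $\Omega(1)$ by construction.

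The technical heart --- the step I expect to be the main obstacle --- is upgrading this qualitative picture into a quantitative eigenvalue bound. What is needed is a lemma of the form: for a filtered finite-dimensional cochain complex whose spectral sequence has $E_\infty^{p,q} = 0$, if every intermediate differential $d_r$ (on the appropriate subquotient) has least singular value at least $\delta$ and every graded Laplacian has nonzero eigenvalues at least $\delta$, then $\Delta$ restricted to the bidegree-$(p,q)$ cochains has all eigenvalues at least $\ply(\delta)$, with the polynomial's degree controlled by the (bounded) length of the filtration. Proving this requires iterating the Hodge decomposition page by page --- passing to ``harmonic representatives of the cohomology of harmonic representatives'' and tracking how the eigenvalue bounds degrade at each stage --- and is the combinatorial-Laplacian analogue of the error analysis in the perturbative-gadget literature \cite{kempe2006complexity}. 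Granting the lemma, applying it with $\delta = 1/\ply{n}$ to the filtration above yields $\lambda_{\min}(\Delta_k) \geq 1/\ply{n}$ in the \NO case, completing the reduction.

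It then remains to handle routine bookkeeping: that $\mathcal{G}$ has $\ply{n}$ vertices and efficiently computable, polynomially bounded weights; that the surgery creates no unintended higher cliques, so that $X$ is genuinely the clique complex of $\mathcal{G}$ (this is where most of the low-level casework of \cite{crichigno2022clique} lives, and I would adapt it here); and that the restricted Hamiltonian class used as the reduction source is itself $\QMA_1$-hard, so that \Cref{QMA1_cor} follows.
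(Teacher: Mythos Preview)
Your high-level architecture --- base complex for qubits, surgery gadgets per projector, spectral sequence plus Forman's Hodge dictionary for the gap --- matches the paper, but two load-bearing specifics are off. The base encoding fails: the join of $n$ copies of $S^0$ is the cross-polytope $\mathfrak{g}_n$, whose top homology is \emph{one}-dimensional; its $2^n$ top cells span the chain space, not the harmonic subspace. The paper uses a bowtie graph $\mathcal{G}_1$ with $H^1\cong\mathbb{C}^2$ and \emph{orthogonal} harmonic generators, so the $n$-fold join has $H^{2n-1}\cong(\mathbb{C}^2)^{\otimes n}$; this orthogonality is exactly why the paper departs from the encoding of \cite{crichigno2022clique}. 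Relatedly, your picture that $\Delta_k|_W$ equals $H$ up to rescaling is not how the reduction works: each gadget only \emph{fills in} the cycle $|\phi_i\rangle$, and the lifted state acquires energy $\Theta(\lambda^{4m+2})$, a perturbative quantity --- there is no block of $\Delta_k$ equal to $H$.

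Second, the vertex weights are not mere normalization: gadget vertices carry a small perturbative weight $\lambda$, and the filtration is by the number of gadget vertices in a simplex, so Forman's theorem yields eigenvalue scaling $\Theta(\lambda^{2j})$ by page of death. Crucially the paper runs this on a \emph{single} gadget (\Cref{spec_seq_lemma}), not on the global complex. Your global-filtration plan runs into the fact that only the up-Laplacian decomposes as a sum over gadgets (\Cref{up_Laplacian_clm}); down-Laplacians of different gadgets interfere on the shared qubit complex. The paper devotes all of \Cref{sec:combining gadgets full} to this combining step, using the per-gadget spectral data together with structural facts about each gadget's ``bulk'' region to rule out destructive interference (Lemmas~\ref{phi_terms_lem}--\ref{B_terms_lem}). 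Your quantitative lemma presupposes that every intermediate $d_r$ has singular value $\geq\delta$, but for a global filtration those $d_r$ mix all gadgets, and bounding them by $\lambda_{\min}(H)$ is essentially the combining problem in disguise.
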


In the language of the theorem, the bulk of the work is to establish $\lambda_{\min}(\Delta_k(\Cl(\mathcal{G}))) \geq 1/\ply{n}$ in the case $\lambda_{\min}(H) \geq 1/\ply{n}$. Here, spectral sequences will be essential.

\medskip
{\noindent \bf Spectral gap of Laplacian.}
\smallskip

How can we interpret the gap at the bottom of the spectrum of the combinatorial Laplacian $\lambda_{\min}(\Delta_k) \geq 1/\ply{n}$? At $k=0$, the combinatorial Laplacian $\Delta^0$ is equal to the usual graph Laplacian $L$ plus the projector onto the constant vector:
\begin{equation}
\Delta^0 = L +
\begin{pmatrix}
    1 & \dots & 1 \\
    \vdots & & \vdots \\
    1 & \dots & 1
\end{pmatrix}
\end{equation}
Thus the smallest eigenvalue of $\Delta^0$ corresponds to the first non-zero eigenvalue of the graph Laplacian $L$. This is precisely the eigenvalue which controls graph expansion; it appears in the well-known Cheeger inequality which relates the Laplacian spectrum to geometric connectivity of the graph \cite{levin2017markov}. This provides a geometric interpretation of the minimum eigenvalue of $\Delta^0$ -- it measures how far the graph is from being disconnected. A similar geometric interpretation holds for higher dimensional combinatorial Laplacians $\Delta_k$. Indeed, higher-dimensional Cheeger inequalities have been studied \cite{gundert2014higher,steenbergen2014cheeger,parzanchevski2016isoperimetric}, with connections to the field of \emph{high-dimensional expanders} \cite{lubotzky2018high}. A large minimum eigenvalue $\lambda_{\min}(\Delta_k)$ means that the graph is far from having a $k$-dimensional hole.

This leads us to an exciting future direction: Can we use graph operations and gadgets to perform gap amplification on the combinatorial Laplacian? In light of our $\QMA_1$-hardness result, this may have connections to the yet elusive \emph{quantum PCP conjecture} \cite{aharonov2013guest}.

\medskip
{\noindent \bf Implications.}
\smallskip

Related to deciding the existence of a hole is the problem of computing the normalized number of holes. For this problem there is an efficient quantum algorithm, known as the quantum TDA algorithm. (For a discussion of how this algorithm works, and what precisely we mean by `normalized number of holes' we refer readers to \Cref{sec:prior work}.)
A significant motivation for our work was understanding the complexity of the problem solved by this quantum TDA algorithm and its speedup over classical algorithms \cite{lloyd2016quantum, gunn2019review, gyurik2022towards, ubaru2021quantum, hayakawa2022quantum, mcardle2022streamlined, berry2022quantifying, akhalwaya2022towards, schmidhuber2022complexity, apers2022simple}.
Unlike many other quantum computing applications in machine learning \cite{tang2019quantum, gilyen2018quantum, chia2022sampling}, the quantum TDA algorithm has resisted \emph{dequantization}, and researchers still debate the presence of a speedup over the best possible classical algorithm.

Our result can be seen as providing some suggestion that the quantum TDA \emph{cannot} be dequantized.
We have shown that deciding whether clique complexes have holes is just as hard as deciding if a generic local Hamiltonian is frustration-free.
We can likewise translate the problem solved by the quantum TDA algorithm into a problem phrased in local Hamiltonians.
It is known that for generic local Hamiltonians this problem is very unlikely to be tractable on a classical computer. More precisely, it was shown in \cite{gyurik2022towards} that this problem is $\textrm{DQC}1$-hard (see \Cref{sec:prior work}).
However, the $\textrm{DQC}1$-hardness for generic Hamiltonians is inconclusive since some classical algorithms may exist that could exploit some unique structure in clique complexes to outperform algorithms for generic quantum Hamiltonians.
Our work suggests that problems involving clique complexes do \emph{not} possess exploitable structure, and are just as hard as the corresponding problem on a general Hamiltonian.

The results in our paper do not answer these questions conclusively -- our reduction does not immediately imply that \emph{all} problems regarding clique complexes are just as hard as the corresponding problem translated into the language of quantum Hamiltonians.
Nevertheless, we have developed techniques which are able to lower bound the eigenvalues of the Laplacian operator when reducing from a quantum Hamiltonian. 
We anticipate this could open up new possibilities in searching for quantum advantage in topological data analysis.
We elaborate on this and other open questions in \Cref{sec:open}.

\medskip
{\noindent \bf Discussion.}
\smallskip

The problem of deciding whether or not a space contains a hole makes no explicit reference to quantum mechanics, so it is surprising that its complexity turns out to be characterised by quantum complexity classes. Other examples of this kind are rare.
One story to compare to is that of the \emph{Jones polynomial}. Estimating the Jones polynomial, an invariant from knot theory, was shown to be BQP-complete in \cite{aharonov2006polynomial}.
That result is an aspect of a deep connection between topological quantum field theory and knot theory. 
Similarly, our result is an aspect of a deep connection between \emph{supersymmetry} and homology, which was previously explored in \cite{cade2021complexity,Crichigno:2020vue,crichigno2022clique}.
(We refer readers to \Cref{sec:SUSY} for a detailed discussion of this connection.)
Incidentally, both of these connections were explored in the 1980s by Witten \cite{witten1982supersymmetry, witten1989quantum}.

These results in quantum complexity suggest that a fruitful avenue for studying the possibility of quantum advantage in seemingly classical problems is to look for `hidden quantumness' -- mathematical problems that, at first glance, do not appear quantum; but can be mapped to specific families of quantum systems. 
One of the critical areas where quantum computers will offer practical advantanges over their classical counterparts is in studying quantum systems. By looking for such examples of `hidden quantumness', we may be able to extend the utility of quantum computation into more fields.

\medskip
{\noindent \bf Paper outline.}
\smallskip

In \Cref{sec:background} we provide the necessary background for our setting, before giving an overview of the proof of our main results in \Cref{sec:proof_overview}.
We provide an overview of related works in \Cref{sec:prior work}, and discuss open questions raised by our work in \Cref{sec:open}.
\Cref{sec:SUSY} is dedicated to explaining the link with supersymmetric quantum mechanics, and outlines the intuition behind the proof in this picture.
This closes the first part of the paper.

The second part of the paper develops techniques to rigorously prove our main results.
In \Cref{sec:prelims} we cover the necessary technical preliminaries.
In \Cref{sec:construction} we construct the gadgets needed for our reduction.
\Cref{spec_seq_sec} takes the gadgets and analyzes properties of their spectrum using spectral sequences.
In \Cref{sec:combining gadgets full} we describe how to combine many gadgets together, and analyze the spectrum of the combined complex, which constitutes our main technical theorem.
Finally, in \Cref{sec:final} we bring everything together to show our main result.

\section{Background} \label{sec:background}

In this section we outline essential background from simplicial homology and quantum complexity in order to understand the results.
More technical preliminaries required for our proof are contained in \Cref{sec:prelims}.

\subsection{Simplicial homology} \label{sec:homology_informal}

For a rigorous introduction to homology we refer readers to \Cref{sec:homology_formal} in the technical part of the paper.
Here we aim to introduce the concepts in enough detail to understand the statement of the main result, and the intuition behind the proof.

The building blocks of simplicial complexes are \emph{simplices}. 
Simplices can be thought of as the generalisation of triangles and tetrahedron to arbitrary dimensions.
A 0-simplex is a point, a 1-simplex is a line, 2- and 3-simplices are triangles and tetrahedra respectively.
In higher dimensions a $k$-simplex is defined as a $k$-dimensional polytope which is the convex hull of $k+1$-vertices.
We can denote a $k$-simplex by its vertices:
\begin{equation*}
\sigma = [x_0\ldots x_k]
\end{equation*}
where the $x_i$ are the vertices of the simplex. 
Simplices are oriented, with the orientation induced by the ordering of the vertices.
So permuting the vertices in a simplex leads to an equivalent simplex, possibly up to an overall sign:
\begin{equation*}
[x_{\pi(0)}\ldots x_{\pi(k)}] = \textrm{sgn}(\pi)[x_0\ldots x_k]
\end{equation*}

A \emph{simplicial complex} $\mathcal{K}$ is a collection of simplices satisfying two requirements: (A) if a simplex is in $\mathcal{K}$ then all its faces are also in $\mathcal{K}$ and (B) the intersection of any two simplices in $\mathcal{K}$ is a face of both the simplices. Intuitively we can think of constructing a simplicial complex by gluing simplices together along faces.

In general, we can consider linear combinations of $k$-simplices, known as $k$-\emph{chains}. The space of $k$-chains forms a vector space. To define the notion of a hole in a simplicial complex we need to introduce the boundary operator $\partial^k$.
The boundary operator acts on $k$-simplices as:
\begin{equation*}
\partial^k [x_0 \ldots x_k] = \sum_{j=0}^k (-1)^j [x_0 \ldots \hat{x}_j \ldots x_k]
\end{equation*}
where the notation $[x_0 \cdots \hat{x}_j \cdots x_k]$ means that the $j^{\textrm{th}}$ vertex is deleted.
In \Cref{fig:boundary} we demonstrate the action of the boundary map on a 2-simplex.
As the name suggests, the boundary map acting on a simplex gives the boundary of that simplex.
The action of $\partial^k$ can be extended by linearity to collections of simplices.

\begin{figure}
\centering
\includegraphics[scale=0.5]{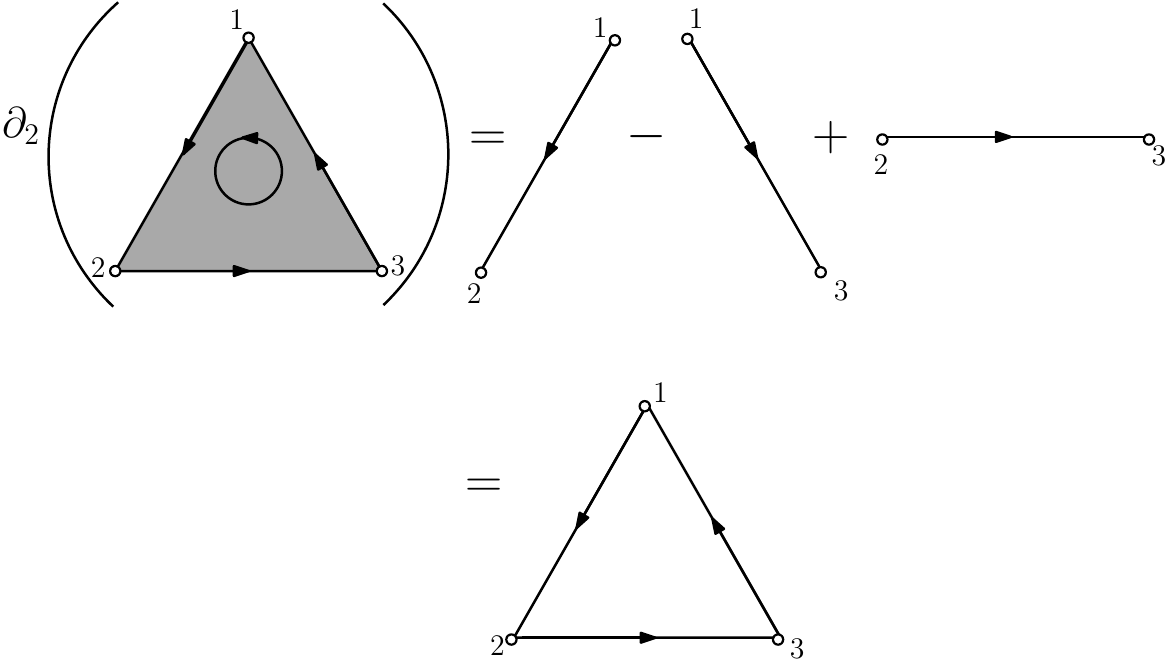}
\caption{The boundary operators action on a 2-simplex (i.e. a triangle).} \label{fig:boundary}
\end{figure}

We define any object that does not have a boundary as a cycle.
So a cycle $c$ satisfies $c \in \ker{\partial^k}$.
All boundaries are cycles, because boundaries don't themselves have a boundary.
In other words, the boundary operator is nilpotent:
\begin{equation} \label{boundary_nilpotent}
\partial^k \circ \partial^{k+1} =0
\end{equation}

How should we define what is a hole in a simplicial complex?
Intuitively a hole is a cycle which is not the boundary of anything. 
So a hole $h$ satisfies $h \in \ker{\partial^k}$, but there does not exist any $v$ such that $h = \partial^{k+1} v$.
Formally this means that holes are elements of the \emph{homology group}:
\begin{equation*}
H_k = \frac{\ker{\partial^k}}{\im{\partial^{k+1}}}
\end{equation*}
Note that the homology group is a quotient group, meaning that its elements are equivalence classes.
We can think of these equivalence classes as being sets of cycles that can be continuously deformed into each other.
Cycles which are boundaries can be continuously deformed to a single point, so these are trivial elements in homology.
If two non-trivial cycles cannot be continuously deformed into one another then they are the boundaries of different holes, so are different elements of homology.

It is possible to define a coboundary operator (see \Cref{sec:hodge} for details):
\begin{equation*}
d^k = (\partial^{k+1})^\dagger
\end{equation*}
Which can in turn be used to define the \emph{Laplacian}:
\begin{equation} \label{Laplacian_definition_informal}
\Delta_k = d^{k-1}\partial^k + \partial^{k+1}d^k
\end{equation}
which is a positive semi-definite operator. 
It can be shown (see \Cref{sec:hodge} for the proof) that $\ker(\Delta_k)$ is isomorphic to the homology group $H_k$.
In other words, a simplicial complex has a $k$-dimensional hole iff the Laplacian has a zero eigenvalue.

\subsection{Clique complexes}\label{sec:clique}

The computational complexity of determining whether or not a simplicial complex has a trivial homology group depends, of course, on how the simplicial complex is provided as input.
If we are given the simplicial complex as a list of simplices, the problem of deciding the homology is in $\P$. 
This is because we are doing linear algebra over a space whose dimension is equal to the number of $k$-simplices \cite{donald:1991}. 
To make the question more interesting, we would like a succinct description of the simplicial complex. 
This is the purpose of this section. Motivated from the practical task of topological data analysis, we will study clique complexes -- a class of simplicial complexes that can be represented by a graph $\mathcal{G}$.\footnote{We note there do exist succinct descriptions of general simplicial complexes e.g. by providing a list of vertices and maximal faces, or a list of vertices and minimal non-faces. We do not consider these input representations in this work.}

\begin{definition}
The \emph{clique complex} of a graph $\mathcal{G}$, denoted $\Cl(\mathcal{G})$, is the simplicial complex consisting of the \emph{cliques} of $\mathcal{G}$. A $k+1$-clique becomes a $k$-simplex. Throughout this paper, when there is no chance of confusion, we will sometimes abuse notation and write $\mathcal{G}$ when we are really referring to its clique complex $\Cl(\mathcal{G})$.
\end{definition}

Now the input size is the $n \times n$ adjacency matrix of $\mathcal{G}$, where $n$ is the number of vertices. Yet there could be up to $n \choose k+1$ $k$-simplices. If $k$ is growing with $n$, the number of $k$-simplices and hence the dimension of $\mathcal{C}^k(\mathcal{G})$ could be exponential in $n$. Combined with the relevance of the self-adjoint operator $\Delta_k$, we can start to see the emergence of quantum mechanical concepts in these homological objects. Hiding in this succinct graph is an exponential-dimensional Hilbert space $\mathcal{C}^k(\mathcal{G})$ with a Hamiltonian $\Delta_k$!

It should be noted that not all simplicial complexes arise as clique complexes. 
For example, the `hollow triangle' $\{[x_0x_1], [x_1x_2], [x_2x_0]\}$ cannot be the clique complex of any graph.

\begin{definition}
We say a simplicial complex $\mathcal{K}$ is \emph{2-determined} if it is the clique complex of its 1-skeleton graph $(\mathcal{K}^0,\mathcal{K}^1)$. A simplicial complex $\mathcal{K}$ is a clique complex if and only if it is 2-determined.
\end{definition}

\subsection{Quantum complexity theory}\label{sec:q complexity}

There are two complexity classes we will be interested in throughout this paper. The first, $\QMA$ is often referred to as the quantum analogue of the classical complexity class, $\NP$.
It is the set of problems where a proof (in the form of a quantum state) can be checked efficiently by a quantum computer. The second complexity class we deal with is a slight modification of $\QMA$ known as $\QMA_1$.
This is the `perfect completeness' version of $\QMA$. 
This means that in \YES cases we require the verifier to accept on valid witnesses with probability 1, while we still allow some probability of error in \NO cases. We postpone formal definitions to Section \ref{sec:qma1}.

The question of whether $\QMA_1$ is strictly contained within $\QMA$ is open.
The complexity classes are known to be distinct relative to a particular quantum oracle \cite{aaronson2008perfect}. 
However, the versions of $\QMA$ and $\QMA_1$ where the proofs are restricted to be classical bit strings \emph{are} equal, for certain choices of universal gate sets in the perfect completeness case \cite{jordan2012achieving}.
Equivalence of the classes for certain choices of universal gate set is also known to be true in the setting where there is exponentially small completeness-soundness gap \cite{fefferman2016complete}.

\section{Proof overview} \label{sec:proof_overview}

We stated \Cref{informal_prom_prob} without any reference to quantum mechanics, and \Cref{QMA1_cor} may appear surprising.
However, this classical problem does exhibit some characteristic properties of quantum mechanics.
Note first that the basis of the vector space $C_k$ are the $k$-simplices $\mathcal{G}^k$.
There could be as many as $n \choose k+1$ of these.
So if $k$ is growing linearly with $n$, the chainspace could have dimension exponential in $n$.
The emergence of an exponential-dimensional Hilbert space from a small object is a characteristic property of quantum mechanics.
Moreover, we can see the combinatorial Laplacian as playing the role of a quantum Hamiltonian, strengthening the link to quantum mechanics.
Viewing the setup in this manner, it is possible to demonstrate containment in $\QMA$ \cite{cade2021complexity}. 
Containment tells us we can frame the topological problem in quantum mechanical terms.
Demonstrating hardness is more involved, and is the main contribution of this work. 
Hardness gives us a converse: we can convert a quantum Hamiltonian to a topological object whose topology reflects the minimum eigenvalue of the Hamiltonian.

Let's first go over the argument for containment in $\QMA$.
The Laplacian $\Delta_k$ of $\Cl(\mathcal{G})$ is a sparse Hermitian operator.
If $\Cl(\mathcal{G})$ has a hole then $\Delta_k$ has a zero eigenvalue, and if it has no hole then by the gap every eigenvalue of $\Delta_k$ is bounded away from zero.
So our $\QMA$ verification protocol is simply to run quantum phase estimation \cite{kitaev1995quantum} on the witness state.
We accept if the measured energy is smaller than $g/2$, and reject otherwise.
In \YES cases a valid witness will be an eigenstate of $\Delta_k$ with eigenvalue zero.
In \NO cases all possible witnesses will fail with high probability.

The more challenging aspect of the work is to demonstrate hardness.
We will do this by reducing from $\qfsat$.
In \cite{bravyi2011efficient} the authors show $\QMA_1$-hardness of $\qfsat$ by constructing a family of local Hamiltonians $H$ which encode the computational histories of a $\QMA_1$-verification circuits, such that: 
\begin{enumerate}[a.]
\item If there is an accepting witness to the $\QMA_1$-verification circuit then $H$ has a zero energy eigenstate.
\item If there is no accepting witness then the minimum eigenvalue of $H$ is bounded away from zero.
\end{enumerate}
We will encode $H$ into some $\mathcal{G}$ via Theorem \ref{informal_main_thm}. $\mathcal{G}$ then has the properties:
\begin{enumerate}[i.]
\item \label{point:yes} If $H$ has a zero energy groundstate there is a hole in $\Cl(\mathcal{G})$.
\item \label{point:no} If the minimum eigenvalue of $H$ is bounded away from zero then the minimum eigenvalue of the Laplacian is bounded away from zero.
\end{enumerate}
This completes the hardness argument. Thus it remains to demonstrate Theorem \ref{informal_main_thm}.

\medskip
{\noindent \bf Proof overview of Theorem \ref{informal_main_thm}.}
\smallskip

The first step is to construct a graph $\mathcal{G}_1$ such that $\Cl(\mathcal{G}_1)$ has two holes.\footnote{This was also the first step in \cite{crichigno2022clique}; however, the graph we choose here is a different one.
In order to show hardness of the gapped problem, we need that the natural inner product on simplices respects that encoded computational basis states should be orthogonal. This feature was not present in the encoding of \cite{crichigno2022clique}.}
This means that the homology of $\Cl(\mathcal{G}_1)$ can encode the Hilbert space of one qubit. 
We then form our base qubit graph $\mathcal{G}_n$ by taking the $n$-fold join of $\mathcal{G}_1$. That is, we take $n$ copies of $\mathcal{G}_1$ where vertices are connected all-to-all between the different copies. 
Constructed in this way, $\mathcal{G}_n$ has $2^n$ $2n-1$-dimensional holes, one for each computational basis state.
Moreover, $\mathcal{G}_n$ has a tensor-product-like structure -- each copy of $\mathcal{G}_1$ in the join can be identified with a qubit.
At this stage, the kernel of the Laplacian is isomorphic to the entire encoded Hilbert space of $n$ qubits -- $\mathcal{G}_n$ by itself corresponds to the empty zero Hamiltonian. 

The next step is to design \emph{gadgets} which implement terms in the $H$.
First we decompose $H$ into a sum of local rank-1 projectors. 
Then for each $m$-local term $\ketbra{\phi}{\phi}$ we take the graph $\mathcal{G}_m$ (the $m$-fold join of the single qubit graph) and design a gadget which `fills in the hole' in $\Cl(\mathcal{G}_m)$ corresponding to the state $\ket{\phi}$. 
Constructively, this involves adding extra \emph{gadget vertices} to the graph, and adding edges between these new vertices and the original vertices from $\mathcal{G}_m$.
This serves to lift the cycle corresponding to $\ket{\phi}$ out of the homology by rendering it a boundary.
The clique complex of the resulting $m$-qubit graph has $2^m-1$ holes of dimension $2m-1$, encoding a $m$-local projector acting on a system of $m$ qubits. This is the content of \Cref{sec:construction}.

To construct a graph $\hat{\mathcal{G}}_n$ which implements the Hamiltonian $H = \sum_i \ketbra{\phi_i}{\phi_i}$ the procedure is as follows:
\begin{enumerate}
\item Start with the graph $\mathcal{G}_n$.
\item For each term $\ketbra{\phi_i}{\phi_i}$ in $H$, insert the gadget implementing that term onto the copies of $\mathcal{G}_1$ corresponding to qubits in the support of $\ketbra{\phi_i}{\phi_i}$.
\item For each term $\ketbra{\phi_i}{\phi_i}$, connect its gadget vertices \emph{all to all} with the vertices of $\mathcal{G}_n$ corresponding to qubits outside the support of $\ketbra{\phi_i}{\phi_i}$.
\item Do \emph{not} connect any gadget vertices coming from different Hamiltonian terms.
\end{enumerate}

With this candidate reduction in hand we then need to show that it satisfies the necessary properties.
Demonstrating that the resulting graph satisfies \Cref{point:yes} is straightforward -- our construction of the gadgets fills in the holes in $\Cl(\hat{\mathcal{G}}_n)$ corresponding to states that are lifted in $H$.
If the Hamiltonian $H$ has a zero energy groundstate then there is a state $\ket{\psi}$ in the Hilbert space of $n$-qubits that satisfies every projector in $H$.
This state corresponds to a hole in $\Cl({\mathcal{G}}_n)$ that has not been filled in by any gadget, thus the hole remains in $\Cl(\hat{\mathcal{G}}_n)$ has non-trivial homology.
If, on the other hand, $H$ is not satisfiable there is no state in the Hilbert space of $n$ qubits that satisfies every projector in $H$.
Therefore, the process of `filling in holes' via gadgets has removed $2^n$ holes from the homology of $\Cl(\mathcal{G}_n)$ to construct $\Cl(\hat{\mathcal{G}}_n)$. 
We demonstrate when constructing the gadgets that the method of constructing gadgets does not introduce any new `spurious' homology classes into the complex. 
Therefore, since all $2^n$ holes have been removed, and no no holes have been introduced, the resulting complex $\Cl(\hat{\mathcal{G}}_n)$ has trivial homology.\footnote{This same argument was used in \cite{crichigno2022clique} with a different graph construction to show that the decision version of the clique homology problem is $\QMA_1$-hard - see \Cref{sec:prior work} for details.} 

Demonstrating that the graph satisfies \Cref{point:no} is more challenging, and constitutes the main technical contribution of this work. 
At first glance there is no reason why \Cref{point:no} should hold.
We have encoded the ground space of $H$ into the homology of $\Cl(\hat{\mathcal{G}}_n)$.
But the Laplacian of $\Cl(\hat{\mathcal{G}}_n)$ has many more excited states than the spectrum of $H$, and it is plausible that the excited spectrum of the Laplacian includes very low energy states.

In order to get a handle on the excited spectrum of a single gadget, we need to ensure that the states we lift out of homology do not mix with the rest of the spectrum.
In order to do this we develop a way of \emph{weighting} the complex so that the gadgets can be viewed as perturbations of the original qubit complex.
In terms of the graph, this amounts to weighting the gadget vertices with a parameter $\lambda \ll 1$.
We then want to analyze the spectrum and eigenspaces, which sounds similar to the domain of perturbation theory and perturbative gadgets \cite{kempe2006complexity}. 
However, perturbation theory is not able to provide the level of generality and control we require for our purposes.\footnote{The difficulty is that we would like to go to arbitrarily high orders of perturbation theory. Using generic perturbation theory tools, this would be close to impossible.}

Here, the main innovation of our work enters. We use an advanced tool from algebraic topology known as \emph{spectral sequences} \cite{forman1994hodge, chow2006you, mccleary2001user} to analyze the spectrum of each gadget. This forms the content of \Cref{spec_seq_sec}, and we will now elaborate on this key technique.

Our complex has some vertices weighted by $\lambda$, which is a small perturbative parameter $\lambda \ll 1$. In order to understand the low-energy spectrum of the Laplacian, we would like to expand the kernel of the Laplacian perturbatively in $\lambda$. This gives us a sequence of vector spaces $E_0^k,E_1^k,E_2^k,\dots$ which provide increasingly close approximations to $\ker{\Delta_k}$.
\begin{equation*}
E_j^k \rightarrow \ker{\Delta_k} \ \text{as} \ j \rightarrow \infty
\end{equation*}
Taking all orders of $\lambda$ into account gives the true kernel of the Laplacian, which is isomorphic to the homology.
\begin{equation*}
\ker{\Delta_k} \cong H_k
\end{equation*}

From the weighting of the complex we can obtain a \emph{filtration} on the chain complex, and a filtered chain complex has an associated \emph{spectral sequence}, which consists of vector spaces $e_{j,l}^k$. At $j=0$, our spectral sequence $e_{0,l}^k$ consists of the $k$-simplices that are weighted by $\lambda^l$. For each `page' $j$ there are coboundary maps
\begin{equation*}
d_{j,l}^k : e_{j,l}^k \rightarrow e_{j,l+j}^{k+1}
\end{equation*}
and to obtain the page $e_{j+1, l}^k$ we take the coboundary of the previous page $e_{j,l}^k$. Now consider the direct sums
\begin{equation*}
e_j^k := \bigoplus_l e_{j,l}^k
\end{equation*}
The vector spaces $e_j^k$ provides a sequence of approximations to the true homology group $H_k$ as $j$ increases.
\begin{equation*}
e_j^k \rightarrow H_k \ \text{as} \ j \rightarrow \infty
\end{equation*}
Intuitively, subsequent pages of the spectral sequence take into account more terms in the filtration, which in our case corresponds to including terms of higher order in $\lambda$.

A result from algebraic topology tells us that these two sequences are in fact isomorphic \cite{forman1994hodge}.
\begin{equation*}
E_j^k \cong e_j^k
\end{equation*}
That is, the perturbative expansion of the Laplacian groundspace is isomorphic to the spectral sequence of the filtration. This provides a connection between Hodge theory and spectral sequences.
Our strategy from here is clear: We compute the spectral sequence $e_j^k$ algebraically and use the isomorphism of Ref.~\cite{forman1994hodge} to deduce the perturbative expansion $E_j^k$ of the Laplacian kernel and hence the low-energy spectrum of the Laplacian.
Our use of this isomorphism is reminiscent of many methods in algebraic topology, where one converts difficult questions in analysis and topology into the easier language of algebra.

Once we have completed the analysis of the spectrum and eigenspace of the Laplacian corresponding to a single gadget using spectral sequences, we must still analyze what happens when we put all the gadgets together.
The argument that the lowest eigenvalue of the Laplacian is bounded away from zero in \NO cases proceeds in two steps: 
\begin{enumerate}
\item In a \NO case, any state must have large overlap with the excited subspace of at least one of the gadgets.
\item States with low energy must have small overlap with the excited subspace of all the gadgets. Hence, by Step 1, such low energy states do not exist.
\end{enumerate}

The first point is straightforward - in \NO cases the Hamiltonian $H$ is not satisfiable, so it is not possible to construct a global state which is in the ground state of every projector.
Therefore the overlap of any global state with the zero energy groundstate of each gadget must be bounded away from one for at least one of the gadgets.
The second point is technically challenging -- it involves detailed understanding of the structure of the combined Laplacian and its eigenspaces, and constitutes the content of \Cref{sec:combining gadgets full}.

\section{Prior work}\label{sec:prior work}

Understanding the possiblity of quantum advantage in TDA has inspired a number of works in this area in recent years.

The first key result in the field was the quantum TDA algorithm of \cite{lloyd2016quantum}.
It gives an approximation to the \emph{normalized} $k^{\textrm{th}}$-Betti number of a simplicial complex.
The $k^{\textrm{th}}$-Betti number $\beta_k$ is the number of $k$ dimensional holes in the complex, and the normalized Betti number is given by $\beta_k/|S_k|$ where $S_k$ is the set of $k$-simplices.
We can understand the quantum TDA algorithm as running phase estimation on the Laplacian $\Delta_k$.
The input state for the phase estimation is the maximally mixed state over $k$-simplices:
\begin{equation*}
\frac{1}{|S_k|} \sum_{x \in S_k} |x\rangle\langle x|
\end{equation*}
The algorithm effectively samples the eigenvalues of the Laplacian $\Delta_k$, to any $1/\poly({n})$ precision. By counting the fraction of times a zero eigenvalue is observed, we get an estimate of $\beta_k/|S_k|$ to any $1/\poly(n)$ additive error.
In order for the input state to be prepared efficiently for a clique complex, the graph must be \emph{clique dense} (see \cite{gyurik2022towards}).

In \cite{gyurik2022towards}, the authors initiated the investigation into the complexity of the problem solved by the quantum TDA algorithm. They showed that if one applies the quantum TDA algorithm with a generic local Hamiltionian in the place of $\Delta_k$, then it is able to solve a $\textrm{DQC}1$-hard problem.
\footnote{$\textrm{DQC}1$ is the `one clean qubit' model of quantum computation where the initial state is limited to a single qubit in the state $|0\rangle$, along with a supply of maximally mixed qubits. (See \cite[Section 6.3]{brandao2008entanglement} for a formal definition.) It does not capture the full power of quantum computation, but is thought to be impossible to simulate efficiently with classical computation.}

Inspired by the connection between homology and supersymmetry, in \cite{cade2021complexity} it was shown that the problem of deciding whether a general chain complex (a generalisation of simplicial complexes) has an $k$-dimensional hole is $\QMA_1$-hard and contained in $\QMA$ (given a suitable promise gap). 
Moreover, it was shown in the same paper that estimating the normalized Betti numbers of a general chain complex is $\textrm{DQC}1$-hard and contained in BQP.

Other papers have considered the \emph{decision} version of \Cref{informal_prom_prob}:
\begin{problem} \label{dec_homology_prob}
\emph{(Decision clique homology)}
Let $\mathcal{G}$ be a graph on $n$ vertices, given by its adjacency matrix. We are also given an integer $k$. The task is to decide
\begin{itemize}
    \item {\bf YES} \ The $k^{\text{th}}$ homology group of $\Cl(\mathcal{G})$ is non-trivial $H_k(\mathcal{G}) \neq 0$.
    \item {\bf NO} \ The $k^{\text{th}}$ homology group of $\Cl(\mathcal{G})$ is trivial $H_k(\mathcal{G}) = 0$.
\end{itemize}
\end{problem}
In \cite{adamaszek2016complexity} it was shown that \Cref{dec_homology_prob} is $\NP$-hard, and in \cite{schmidhuber2022complexity} it was shown that \Cref{dec_homology_prob} remains $\NP$-hard when restricted to clique dense graphs.
These results culminated in \cite{crichigno2022clique} where it was shown that \Cref{dec_homology_prob} is $\QMA_1$-hard, including when restricted to clique-dense graphs.
However, \Cref{dec_homology_prob} is not believed to be inside $\QMA$. Moreover, the constructions used for the reductions in previous works do not satisfy the necessary gap to guarantee containment in $\QMA$.
It should be noted the previous result holds for both weighted and unweighted graphs, whereas our results hold for weighted graphs only.

On the more applied side of the field, a number of recent papers have looked more closely at the quantum TDA algorithm.
In \cite{ubaru2021quantum, hayakawa2022quantum, mcardle2022streamlined, berry2022quantifying, akhalwaya2022towards} improvements were made to the algorithm in \cite{lloyd2016quantum} which make it more practical to run.
\section{Future directions} \label{sec:open}

\paragraph{\emph{Does \Cref{QMA1_cor} hold for unweighted graphs?}}

An immediate question is whether the $\QMA_1$-hardness of the gapped problem still holds with an unweighted clique complex. For an unweighted clique complex, there is no perturbative regime. However, perhaps it is possible to simulate the weights with an unweighted complex. For example, if the weights were integers, we could consider replacing a vertex of weight $w$ with a clique of size $w$. Note that the weights in our construction are all between some $1/\poly(n)$ and $1$.

\paragraph{\emph{Does \Cref{informal_prom_prob} remain $\QMA_1$-hard when the input is restricted to graphs whose complements have bounded degree?}}

The supersymmetric Hamiltonian $H$ given in Section \ref{sec:SUSY} is precisely the Laplacian of the clique complex $\mathcal{G}$. However, the projector $P_i$ appearing in this Hamiltonian act on all the neighbours of $i$ in the \emph{complement} graph. Thus $H$ is only a \emph{local} Hamiltonian if the complement of $\mathcal{G}$ has degree bounded by some constant. Our reduction does \emph{not} have this property.
Does $\QMA_1$-hardness of the gapped problem still hold for graphs whose complements have bounded degree?
We note that the graphs coming from the reduction in \cite{crichigno2022clique} \emph{do} give rise to local supersymmetric Hamiltonians, but of course they reduce from the decision problem rather than the gapped problem.

\paragraph{\emph{Generalized gadgets}}

Can we generalize the gadget construction from \Cref{single_gadget_sec} to work on projectors onto (a) \emph{any} integer state (see  \Cref{integer_state_def}) and (b) arbitrary states. In \Cref{single_gadget_sec} we provide a construction that should generalize to any integer state, although we do not prove that it generalizes. Tackling arbitrary states is much more challenging, and it is not clear what the right approach is.

\paragraph{\emph{Is the following problem $\QMA$-complete}?}

\begin{problem}\label{Laplacian_min_evalue_prob}
\emph{(Laplacian minimum eigenvalue)}
Given a weighted graph $\mathcal{G}$, dimension $k$, and $0 \leq a < b$ with $b-a > 1/\poly(n)$, decide which of the following is true.
\begin{itemize}
    \item {\bf YES} \ $\lambda_{\min}(\Delta^k) \leq a$.
    \item {\bf NO} \ $\lambda_{\min}(\Delta^k) \geq b$.
\end{itemize}
where $\Delta^k$ is the $k$-Laplacian of $\mathcal{G}$.
\end{problem}

In this work, we have focused on a lower bound on the minimum eigenvalue of the Laplacian in the $\NO$ case. In the $\YES$ case, when the original Hamiltonian is frustration free, it was relatively easy to argue the Laplacian has a zero eigenvalue using \Cref{Hodge_prop}. Perhaps it is possible to get exact control on the Laplacian eigenvalues of our construction when the original Hamiltonian is frustrated. Rather than simply applying \Cref{Hodge_prop}, this would require a new upper bounding technique.

Succeeding with this for the minimum eigenvalue alone would be enough to establish a $\QMA$-completeness result. This is because precise control of the eigenvalue would allow us to reduce from the standard local Hamiltonian problem with a promise gap.

\paragraph{\emph{Are Laplacians of clique complexes \emph{universal quantum Hamiltonians}, in the sense of \cite{cubitt2018universal}?}}

Even further, perhaps controlling the eigenvalues of the Laplacian in our reduction is not only possible for the minimum eigenvalue, but also for excited eigenvalues. This opens up the exciting prospect that the Laplacian of clique complexes could be a \emph{universal quantum Hamiltonian}, in the sense of \cite{cubitt2018universal}. Here, the lowest $2^n$ eigenvalues of the Laplacian form a scaled version of the entire spectrum of the original Hamiltonian.

\paragraph{\emph{Is approximate Betti number estimation $\textrm{\emph{DQC}}1$-hard?}}

Can we show that the quantum TDA algorithm of \cite{lloyd2016quantum} cannot be dequantised? 
The precise problem solved by the quantum TDA algorithm was formalized in \cite{gyurik2022towards} and named \emph{approximate Betti number estimation} (ABNE):

\begin{problem} \label{ABNE_prob}
\emph{(Approximate Betti Number Estimation (ABNE))}
Let $\mathcal{G}$ be a graph on $n$ vertices, given by its adjacency matrix. We are also given an integer $k$, and a precision parameters $\varepsilon, \delta > 1/\poly(n)$. The task is to output an estimate $\chi$ which satisfies with high probability
\begin{equation*}
\frac{\beta_k}{{n \choose k+1}} - \varepsilon \ \leq \ \chi \ \leq \ \frac{|\{\lambda : \lambda \in \spec{\Delta^k} , \ 0\leq \lambda < \delta\}|}{{n \choose k+1}} + \varepsilon
\end{equation*}
where $\beta_k = \dim H^k(\mathcal{G})$ is the $k^{\textrm{th}}$ Betti number.
\end{problem}
\begin{fact}
ABNE can be solved with a quantum algorithm. The algorithm runs phase estimation on the operator $\widetilde{\Delta}^k$ from the proof of \Cref{main_cor}, with the input state which is maximally mixed over $(k+1)$-subsets, and counts the proportion of eigenvalues $\delta$.
\end{fact}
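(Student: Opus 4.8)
The plan is to read the statement off from the regularized Laplacian $\widetilde{\Delta}^k$ already constructed in the proof of \Cref{main_cor}. Recall that $\widetilde{\Delta}^k$ is a Hermitian operator on the $\binom{n}{k+1}$-dimensional space spanned by \emph{all} $(k+1)$-subsets of the $n$ vertices; with respect to the orthogonal splitting of this space into the chain space $\mathcal{C}^k(\Cl(\mathcal{G}))$, spanned by the $(k+1)$-cliques, and its complement $\mathcal{N}$, spanned by the non-cliques, it is block diagonal, it restricts to the weighted combinatorial Laplacian $\Delta^k$ on $\mathcal{C}^k(\Cl(\mathcal{G}))$, and it acts as the identity on $\mathcal{N}$. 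We may assume $\delta \le 1$ without loss of generality (replacing $\delta$ by $\min\{\delta,1\}$ only shrinks the right-hand count in \Cref{ABNE_prob} while leaving $\beta_k$ unchanged), so every eigenvalue contributed by the $\mathcal{N}$-block is at least $\delta$, and hence as multisets
\begin{equation*}
\spec \widetilde{\Delta}^k \cap [0,\delta) \;=\; \spec \Delta^k \cap [0,\delta), \qquad \ker \widetilde{\Delta}^k \cong \ker \Delta^k \cong H_k(\Cl(\mathcal{G})).
\end{equation*}
Thus a histogram of the small eigenvalues of $\widetilde{\Delta}^k$ is precisely a histogram of the small eigenvalues of $\Delta^k$, and — crucially — the natural input state for $\widetilde{\Delta}^k$, the maximally mixed state over all $(k+1)$-subsets, is prepared by simply sampling a uniformly random $(k+1)$-subset $S$ classically and writing down $|S\rangle$, with no clique-density assumption needed.

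The algorithm is then: prepare this maximally mixed state; run quantum phase estimation \cite{kitaev1995quantum} on $\widetilde{\Delta}^k$ with resolution $\delta/3$; record the outcome ``small'' when the estimated eigenvalue is below $\delta/2$; and repeat the whole experiment, outputting the empirical frequency $\chi$ of ``small''. Phase estimation is efficient here because $\widetilde{\Delta}^k$ is $O(nk)$-sparse with efficiently computable rows: the only subsets adjacent to $S$ are those obtained by a single vertex swap, and each matrix entry is determined by testing cliqueness against the adjacency matrix and reading off the relevant vertex weights. Hence standard sparse Hamiltonian simulation implements $e^{-i\widetilde{\Delta}^k t}$ to inverse-polynomial accuracy in time polynomial in $n$ and $t$; since $\delta \ge 1/\poly(n)$, resolution $\delta/3$ requires only polynomially large $t$ and polynomially many controlled exponentials, and the register uses at most $n$ qubits even when $k$ grows with $n$. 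Up to the exponentially small tails of phase estimation, the expected number of ``small'' outcomes divided by the number of eigenstates is, by the displayed identity, a real number in the interval $\big[\,\beta_k,\ |\{\lambda\in\spec\Delta^k:\,0\le\lambda<\delta\}|\,\big]$ divided by $\binom{n}{k+1}$. Taking $O(\varepsilon^{-2})$ repetitions with the usual amplification (polynomial since $\varepsilon \ge 1/\poly(n)$) and a Chernoff bound, with high probability $\chi$ lands within $\varepsilon$ of this value, i.e.
\begin{equation*}
\frac{\beta_k}{\binom{n}{k+1}} - \varepsilon \;\le\; \chi \;\le\; \frac{|\{\lambda : \lambda \in \spec \Delta^k,\ 0\le\lambda<\delta\}|}{\binom{n}{k+1}} + \varepsilon ,
\end{equation*}
which is exactly what \Cref{ABNE_prob} demands.

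The only point that needs slight care — and it is precisely why \Cref{ABNE_prob} is stated with a one-sided window rather than asking for the exact count $|\{\lambda<\delta\}|$ — is that phase estimation cannot sharply classify eigenvalues sitting at the threshold: an eigenvalue in, say, $[\delta/2,\delta)$ may be reported either way. This is harmless because zero eigenvalues are classified ``small'' with certainty and eigenvalues $\ge\delta$ essentially never are (misclassification here is exponentially rare and is absorbed into $\varepsilon$), so the quantity the experiment actually estimates is pinned between $\beta_k$ and $|\{\lambda<\delta\}|$ no matter how the borderline eigenvalues fall, which is exactly the slack the problem allows. I expect this bookkeeping, together with the one fact that must be imported from the proof of \Cref{main_cor} — that the $\mathcal{N}$-block of $\widetilde{\Delta}^k$ genuinely keeps all non-simplex eigenvalues above $\delta$ — to be the only obstacles; everything else is a direct application of sparse Hamiltonian simulation and phase estimation.
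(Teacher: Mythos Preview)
Your proposal is correct and is precisely the approach the paper has in mind; the paper states this Fact without proof, with the algorithm description embedded in the statement itself, and you have simply fleshed out the standard details (sparse access, phase estimation, Chernoff). One small discrepancy: in the proof of \Cref{main_cor} the non-clique block of $\widetilde{\Delta}^k$ is $A\cdot\mathbbm{1}$ for a tunable penalty $A\geq E$, not literally the identity --- but this is immaterial since for ABNE one simply sets $A\geq\delta$, which your ``$\delta\le 1$ WLOG'' argument effectively arranges.
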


The same problem on generic sparse quantum Hamiltonians is called the Low Lying Spectral Density (LLSD) problem:

\begin{problem} \label{LLSD_prob}
We are given a $\poly(n)$-sparse Hamiltonian $H$ on $n$-qubits, a threshold $b$, precision parameters $\varepsilon, \delta > 1/\poly(n)$. The task is to output an estimate $\chi$ which satisfies with high probability
\begin{equation*}
\frac{|\{\lambda : \lambda \in \spec{H} , \ 0\leq \lambda < b\}|}{2^n} - \varepsilon \ \leq \ \chi \ \leq \ \frac{|\{\lambda : \lambda \in \spec{H} , \ 0\leq \lambda < b + \delta\}|}{2^n} + \varepsilon
\end{equation*}
\end{problem}

LLSD is known to be $\textrm{DQC}1$-hard \cite{gyurik2022towards}.
Showing that ABNE is $\textrm{DQC}1$-hard would provide complexity-theoretic proof that it cannot be dequantised.
For this, we would like to reduce from LLSD to ABNE.

Unfortunately, even the very strong universal quantum Hamiltonian property is not enough for the desired reduction. 
If we were to reduce from LLSD, the problem we would get would not quite be ABNE, since the denominator would still be $2^n$ rather than ${n \choose k+1}$. In our construction, ${n \choose k+1}$ is exponentially smaller than $2^n$, so amending the denominator would require us to modify the precision $\varepsilon$ to be exponentially small.

\paragraph{\emph{Applications to quantum PCP.}}

As discussed in the introduction, an exciting future direction is to develop technique to perform \emph{gap amplification} on the combinatorial Laplacian. The combinatorial Laplacian corresponds to a supersymmetric fermionic Hamiltonian, and we have shown that detecting a zero-eigenvalue is $\QMA_1$-hard. Thus this provides a tentative route towards a fermionic quantum PCP theorem \cite{aharonov2013guest}.

A related future direction is to better understand the geometric interpretation of the minimum eigenvalue of the combinatorial Laplacian via higher dimensional Cheeger inequalities \cite{gundert2014higher,steenbergen2014cheeger,parzanchevski2016isoperimetric}, and the connection to high-dimensional expansion \cite{lubotzky2018high}.

\section{Supersymmetry} \label{sec:SUSY}

In this section we will introduce \emph{supersymmetric quantum mechanics} (SUSY), and explain how it is connected to homology.
We follow the introductions to the subject in \cite{cade2021complexity,crichigno2022clique}.
Readers who are not interested in the connection between SUSY and our results can safely skip this section.

In Appendix \ref{app:susy_proof_sketch}, we will give a sketch of the main construction and proof in the SUSY quantum mechanics picture. This is not intended as a rigorous proof, but may help readers more familiar with physics to understand the intuition behind the formal proof.

\subsection{Supersymmetric quantum mechanics}

The Hilbert space of any quantum mechanical system can be decomposed into the space of fermionic states $\mathcal{H}_F$ and the space of bosonic states $\mathcal{H}_B$:
\begin{equation*}
\mathcal{H} = \mathcal{H}_B \oplus \mathcal{H}_F
\end{equation*}
where bosonic states are those that are symmetric under particle exchange, whereas fermionic states are anti-symmetric. 

Supersymmetric quantum systems exhibit a symmetry between fermionic and bosonic states.
More formally, in a $\mathcal{N}=2$ supersymmetric  system there is an operator, called the supercharge, which sends fermionic states to bosonic states, and vice versa:
\begin{equation*}
\mathcal{Q}:(\mathcal{H}_B,\mathcal{H}_F) = (\mathcal{H}_F,\mathcal{H}_B)
\end{equation*}
which is nilpotent:
\begin{equation*}
Q^2=0
\end{equation*}
The Hamiltonian of the sytem is then given by:
\begin{equation*}
H = \{Q,Q^\dagger \} = QQ^\dagger + Q^\dagger Q
\end{equation*}
Note that due to the nilpotency condition, $Q$ and $Q^\dagger$ commute with the Hamiltonian, so they are indeed symmetries of the system as claimed.

The equation $Q^2 = 0$ is reminiscient of \Cref{boundary_nilpotent} for the boundary operator. Indeed, $Q$ can be interpreted as a homological boundary operator in the abstract sense, and $\im{Q} \subseteq \ker{Q}$. The SUSY Hamiltonian $H$ then corresponds to the Laplacian operator from \Cref{Laplacian_definition_informal}.

A number of properties of the spectrum of SUSY systems follow immediately from the definition.
Firstly, we have:
\begin{equation*}
\langle\psi | H|\psi\rangle = \langle\psi |  (QQ^\dagger + Q^\dagger Q) |\psi\rangle = |Q|\psi\rangle|^2 +|Q^\dagger |\psi\rangle|^2 \geq 0
\end{equation*}
So the lowest possible energy in a SUSY system is $E=0$.

Consider first states with $E>0$.
It turns out that such states come in fermionic-bosonic pairs of the same energy.\footnote{See \Cref{pairing_sec} for a discussion of the same pairing phenomenon in simplicial complexes.}
Consider a bosonic energy eigenstate $|\psi_B\rangle$.
We have $H|\psi_B\rangle = E|\psi_B\rangle$, which implies that at least one of $Q|\psi_B\rangle$ or $Q^\dagger|\psi_B\rangle$ is non-zero, moreover it must be a fermionic state. 
Since $Q$ and $Q^\dagger$ are symmetries of the Hamiltonian the fermionic state will have the same energy as the original bosonic state. 
Moreover, the pairing up of the eigenstates, together with the nilpotency condidition, implies that exactly one of $Q|\psi_B\rangle$ or $Q^\dagger|\psi_B\rangle$ is non-zero (the same applies to fermionic energy eigenstates).

Now we turn to consider energy eigenstates with $E=0$.
These are known as SUSY eigenstates, and the question of whether or not a system has any SUSY eigenstates is a key question in the study of SUSY systems.
Note first that a state has energy $E=0$ iff:
\begin{equation*}
Q |\Phi\rangle = Q^\dagger |\Phi\rangle =0 
\end{equation*}

Therefore we have that $|\Phi\rangle \in \ker(Q)$ and likewise $|\Phi\rangle \in \ker(Q^\dagger)$.
We can also deduce that SUSY groundstates are not in $\im(Q)$ or $\im(Q^\dagger)$.
To see this assume that there exists a state $|\psi\rangle$ such that $|\Phi\rangle = Q^\dagger|\psi\rangle$.
Then since $Q^\dagger$ commutes with the Hamiltonian we must have that $|\psi\rangle$ is also a SUSY groundstate, which implies that $Q^\dagger|\psi\rangle=0$, a contradiction. 

Putting everything together we have that SUSY groundstates are in the homology of both $Q$ and $Q^\dagger$:
\begin{equation*}
|\Phi\rangle \in \frac{\ker(Q)}{\im(Q)} \textrm{\ \ \ and \ \ \ } |\Phi\rangle \in \frac{\ker(Q^\dagger)}{\im(Q^\dagger)} 
\end{equation*}

Elements which are in the homology of both an operator and its conjugate are known as \emph{harmonic} representatives of homology classes.
So, we have that SUSY groundstates are harmonic representatives of the homology classes of $Q$ and $Q^\dagger$.
This also implies that SUSY groundstates are in one-to-one correspondence with the homology classes themselves.
Therefore the question of deciding whether a homology group is non-trivial is equivalent to asking whether a certain SUSY system has a zero-energy eigenstate.

The \emph{Witten index} of a supersymmetric system is given by the difference between the number of bosonic and fermionic supersymmetric ground states.

\subsection{Fermion hard core model}

To connect with the clique homology problem we consider a particular SUSY system known as the \emph{fermion hard core model}, introduced in \cite{Fendley:2002sg}.
The fermion hard core model corresponds more naturally to the independence homology problem.\footnote{See \cite{Huijse:2010jia} for a nice overview of this model, its extensions, and the relation to independence homology. } The independence complex of a graph is the simplicial complex defined by declaring each independent set to be a simplex. The independence homology problem takes as input a graph $G$, and outputs \YES if the independence complex of the graph has a non-trivial homology group, and \NO otherwise.
The independence complex of a graph is the clique complex of the complement graph, so the two problems are computationally equivalent.
Throughout this section we will consider the independence homology problem. 

The fermion hard core model is defined on a graph $G = (V,E)$. 
The Hilbert space of the system is given by all configurations of fermions on $G$ subject to the condition that no two fermions occupy adjacent vertices (the hard-core condition).  
In other words, the Hilbert space corresponds to the independent sets of $G$.

The supercharges of the system are given by:
\begin{equation*}
Q = \sum_{i \in V}P_i a_i^\dagger \textrm{\ \ \ and \ \ \ } Q^\dagger = \sum_{i\in V} a_i P_i \textrm{\ \ \ where \ \ \ } P_i = \prod_{j|(i,j) \in E} (1-\hat{n}_j)
\end{equation*}
where $a_i, a^\dagger_i$ are the fermionic annihilation and creation operators respectively, and $\hat{n}_i = a_i^\dagger a_i$ is the fermionic number operator. 
Intuitively, $Q$ ($Q^\dagger$) adds (removes) a fermion to (from) the state, subject to the hard-core condition (which is enforced by the $P_i$). 
Straightforward manipulations then  give a Hamiltonian of the form:
\begin{equation*}
H = \sum_{(i,j) \in E} P_i a_i^\dagger a_j P_j + \sum_{i \in V} P_i
\end{equation*}

As outlined in the previous section, the zero-energy ground states of the fermion hard core model will be in one-to-one correspondence with elements of the homology of $Q^\dagger$.
It can be checked $Q^\dagger$ acts on the independence complex of $G$ as the boundary operator $\partial$. The space spanned by the $k$-simplices of the independence complex corresponds to the $k+1$-particle sector of Fock space.
Therefore the problem of whether the independence complex of a graph $I(G)$ has a non-trivial homology group is equivalent to the problem of whether or not the fermion hard core model defined on $G$ has a SUSY groundstate. Fixing a dimension $k$ at which we are studying the homology corresponds to restricting to the $k+1$-particle sector of Fock space.

In our proof we will weight each vertex in the graph by some complex number $\lambda$ in order to apply perturbative techniques to analyse the spectrum of our construction (see \cref{weighting_sec}).
In the supersymmetric picture this is an example of a technique known as staggering \cite{Huijse_2012,Bauer_2013}.\footnote{Note that the terminology `staggering' is the standard terminology for this technique in the fermion-hard core model. However, in graph theory what we are doing is weighting the vertices. In order to stay consistent with the terminology from the different fields we refer to the technique as staggering in this section, but as weighting in the rest of the paper.}
The idea is that we can modify the supercharges, as:
\begin{equation*}
Q = \sum_{i \in V}\lambda_i P_i a_i^\dagger \textrm{\ \ \ and \ \ \ } Q^\dagger = \sum_{i\in V} \lambda_i^* a_i P_i 
\end{equation*}
This gives a Hamiltonian:
\begin{equation} \label{fermion_Hamiltonian}
H = \sum_{(i,j) \in E} \lambda_i \lambda_j^* P_i a_i^\dagger a_j P_j + \sum_{i \in V} |\lambda_i|^2 P_i
\end{equation}
Applying staggering in this way does not alter the number of SUSY groundstates.
However, it can modify the excited part of the spectrum.
The system is still supersymmetric, so all fermionic eigenstates are still paired with a bosonic eigenstate of the same energy, and vice versa. 
But the eigenstates are generically shifted compared with the non-staggered case.

\pagebreak
\addcontentsline{toc}{section}{Proof of results} 
\noindent {\huge \huge \bf Proof of results}

\section{Preliminaries} \label{sec:prelims}

\subsection{$\QMA_1$ and $\qmsat$} \label{sec:qma1}

Let's begin by formally defining $\QMA$ and $\QMA_1$.
\begin{definition}[$\QMA$ \cite{kitaev2002classical}] \label{defQMA}
A problem $A = (A_{\text{yes}}, A_{\text{no}})$ is in $\QMA$ if there is a $\P$-uniform family of polynomial-time quantum circuits (the ``verifier'') $V_n$, one for each input size $n$, such that
\begin{itemize}
    \item If $x \in A_{\text{yes}}$, there exists a $\poly(n)$-qubit witness state $\ket{w}$ such that $\Pr[V_n(x,\ket{w}) = 1] \geq \frac{2}{3}$,
    \item If $x \in A_{\text{no}}$, then for any $\poly(n)$-qubit witness state $\ket{w}$, $\Pr[V_n(x,\ket{w}) = 1] \leq \frac{1}{3}$.
\end{itemize}
\end{definition}

The constants $\frac{1}{3}$, $\frac{2}{3}$ in the definition of $\QMA$ are conventional. The definition of $\QMA$ is equivalent as long as the acceptance and failure probabilities are separated by some inverse polynomial in the problem size \cite{kitaev2002classical}.

\begin{definition}[$\QMA_1$ \cite{bravyi2011efficient,gosset2016quantum}] \label{defQMA1}
A problem $A = (A_{\text{yes}}, A_{\text{no}})$ is in $\QMA_1$ if there is a $\P$-uniform family of polynomial-time quantum circuits (the ``verifier'') $V_n$, one for each input size $n$, such that
\begin{itemize}
    \item If $x \in A_{\text{yes}}$, there exists a $\poly(n)$-qubit witness state $\ket{w}$ such that $\Pr[V_n(x,\ket{w}) = 1] = 1$,
    \item If $x \in A_{\text{no}}$, then for any $\poly(n)$-qubit witness state $\ket{w}$, $\Pr[V_n(x,\ket{w}) = 1] \leq \frac{1}{3}$.
\end{itemize}
\end{definition}

The canonical $\QMA_1$-complete problem is $\qmsat$, in which you are asked to decide whether a local Hamiltonian composed of positive semi-definite local terms has an exactly zero eigenvalue, given a suitable promise gap. In this work we will reduce from \qfsat to the problem of deciding whether or not a particular homology group of a clique complex is non-trivial.

\begin{problem}\label{def:qmsat}\emph{(\qmsat\cite{bravyi2011efficient})}
Fix function $g : \mathbb{N} \rightarrow [0,\infty)$ with $g(n) \geq 1 / \ply{n}$. The input to the problem is a a list of $m$-local projectors $\Pi_j \in \mathcal{P}$. $\mathcal{P}$ are a set of projectors obeying certain constraints.\footnote{We have been deliberately vague in defining the constraints that the set of projectors must satisfy.
That is because the constraints depend on the gate set, and there is not a standard definition.
However, this is only an issue for showing {\it containment} in $\QMA_1$.
For showing $\QMA_1$-hardness there is no need to include any constraints on the form of the projectors in \Cref{def:qmsat}.
Since we are interested in showing $\QMA_1$-hardness and containment in $\QMA$ (where we only need to constrain the locality of the projectors), we will not impose any constraints on the allowable projectors (beyond their locality) throughout this work.}
Let
\begin{equation}
H = \sum_j \Pi_j
\end{equation}
The task is to decide whether:
\begin{itemize}
    \item {\bf \YES} \ $H$ is satisfiable i.e. there exists a state $\ket{\psi}$ with $H\ket{\psi}=0$.
    \item {\bf \NO} \ The minimum eigenvalue of $H$ is at least $g$.
\end{itemize}
\end{problem}

\qmsat is known to be $\QMA_1$-complete for $m\ge 3$ \cite{bravyi2011efficient,gosset2016quantum}.
\qtsat is known to be in $\P$ \cite{bravyi2011efficient}.

In \cite{bravyi2011efficient} a history state construction was used to reduce from a general problem in $\QMA_1$ to \qfsat.
Computational history states are of the form:
\begin{equation*}
\ket{\Phi}_{CQ} = \sum_{t=0}^T \ket{t}\ket{\psi_t}
\end{equation*}
where $\{\ket{t}\}$ is an orthonormal basis for $\mathcal{H}_C$, the clock register, and the $\ket{\psi_t} = \Pi_{i=0}^t U_i \ket{\psi_0}$ for some initial state $\ket{\psi_0}$ and some set of unitaries $\{U_i\}$. 
The first register of $\ket{\Phi}_{QC}$ encodes the time, while the second register (the `computational' register) encodes the state of the quantum circuit at time $t$. 

The idea in \cite{bravyi2011efficient} is to construct a local Hamiltonian (composed of projectors), whose zero energy ground states are history states, this Hamiltonian is given by:
\begin{equation*}
\Hhs =  \Hin +  \Hclock + \Hprop 
\end{equation*}
where $\Hin$ constrains the starting state $\ket{\psi_0}$ of the circuit, $\Hclock$ penalises any states in the clock register which don't encode valid times, and $\Hprop$ penalises any states where $\ket{\psi_t} \neq U_t\ket{\psi_{t-1}}$.
$\Hhs$ has a degenerate zero energy ground state, where all computations which start in a valid state $\ket{\psi_0}$ satisfy every constraint.
To break this degeneracy, and encode $\QMA_1$-verification circuits it is necessary to add one extra term to the Hamiltonian:
\begin{equation*}
\Hbravyi = \Hhs + \Hout
\end{equation*}
where $\Hout$ penalises any computation which outputs \NO, and gives zero energy to any computation which outputs \YES. 

A detailed overview of the reduction is given in \Cref{app:qma}.
In \Cref{table:states} we give an overview of the rank-1 projectors that we need to be able to implement in order to reduce from the construction in \cite{bravyi2011efficient}.
In \Cref{sec:construction} we will construct gadgets to implement each of these states. 

\vspace{\baselineskip} 

\noindent{\bf Note on gateset for $\QMA_1$}: It is important to note that the definition of $\QMA_1$ (see \Cref{defQMA1}) implicitly depends on a choice of universal gate set.
In standard $\QMA$ (see \Cref{defQMA}) the definition is independent of gate set, since all universal gate sets can approximate any unitary evolution.
However, the requirement of perfect completeness in $\QMA_1$ means that it may be necessary to implement a given unitary evolution exactly. 

When choosing a universal gate set for our construction we require that every gate in the  set should have only rational coefficients. This ensures that the only states we need to lift are integer states -- see \Cref{single_gadget_sec}.
Based on this requirement, we choose the universal gate set: $\mathcal{G} = \{\textrm{CNOT},U \}$ where $U$ is the `Pythagorean gate' \cite{crichigno2022clique}:
\begin{equation*}
U = \frac{1}{5} 
\begin{pmatrix}
3 & 4 \\
-4 & 3
    \end{pmatrix}
\end{equation*}
This choice of gate set is shown to be universal in \cite[Theorem 3.3]{Adleman:1997} and \cite[Theorem 1.2]{shi2003both}.

\begin{table} 
\begin{center}
\begin{tabular}{ |c|c|c| } 
 \hline
 \emph{Term in} $H_{\text{Bravyi}}$ & \emph{Penalizes state} $\ket{\psi_S}$ \\ 
 \hline
$\Hpropt' $& $ \frac{1}{\sqrt{2}} \left(\ket{1011}-\ket{1000} \right) $     \\ 
 \hline
  $\Hpropt(\text{CNOT})$ & $\frac{1}{\sqrt{2}} \left(\ket{0110}-\ket{0101} \right)$  \\
 \hline
   $\Hpropt(\text{CNOT})$ & $\frac{1}{\sqrt{2}}  \left(\ket{0010}-\ket{0001} \right)$  \\
  
\hline
$\Hpropt(U_{\mathit{Pyth.}})$ &$ \frac{1}{5\sqrt{2}} \left(-5\ket{011}+4\ket{100}+3\ket{101}\right)$   \\ 
 \hline
 $\Hpropt(U_{\mathit{Pyth.}})$ &$ \frac{1}{5\sqrt{2}}  \left(-5\ket{010}+3\ket{100}-4\ket{101}\right)$   \\ 
  \hline
   $\Hpropt(\text{CNOT})$ & $\frac{1}{\sqrt{2}}  \left(\ket{1101}-\ket{1010} \right)$ \\
 \hline
 $\Hpropt(\text{CNOT})$ & $\frac{1}{\sqrt{2}} \left(\ket{1011}-\ket{1100} \right) $  \\
 \hline
$\Hclock^{(1)}$ & $\ket{00}$    \\ 
 \hline
 $\Hclock^{(2)}$ & $\ket{11} $  \\ 
 \hline
 $\Hin$, $\Hout$ &$ \ket{011}  $  \\ 
 \hline
$ \Hclock^{(6)}$, $\Hclock^{(4)}$, $\Hclock^{(5)}$, $\Hclock^{(3)}$ &$ \ket{1100}$ \\
\hline
$ \Hclock^{(4)}$ &$ \ket{0111}$  \\
\hline
$ \Hclock^{(5)}$ &$ \ket{0001}$  \\
\hline
\end{tabular}
\caption{Projectors needed for quantum $4$-$\SAT$ with universal gate set $\mathcal{G}$. Note we collated projectors which are the same up to re-ordering the qubits involved.}
\label{table:states}
\end{center}
\end{table}

\subsection{Simplicial homology} \label{sec:homology_formal}

We will re-introduce the subject of homology more formally and from a `cohomology-first' perspective, since this will play nicely with the weighting introduced in \Cref{weighting_sec}, and later spectral sequences in \Cref{spec_seq_sec}.

\begin{definition}
A \emph{simplicial complex} $\mathcal{K}$ is a collection of subsets $\mathcal{K} = \mathcal{K}^0 \cup \mathcal{K}^1 \cup \dots$ such that
\begin{itemize}
    \item $\sigma \in \mathcal{K}^k$ have $|\sigma| = k+1$.
    \item If $\sigma \in \mathcal{K}$, then $\tau \in \mathcal{K}$ for all $\tau \subset \sigma$.
\end{itemize}
\end{definition}

Intuitively, a simplicial complex is a higher dimensional generalization of a graph. It has vertices $\mathcal{K}^0$, edges $\mathcal{K}^1$, triangles $\mathcal{K}^2$, tetrahedra $\mathcal{K}^3$, et cetera.

From a simplicial complex we can derive a \emph{chain complex}. Let $\mathcal{C}^k(\mathcal{K})$ be the complex vector space formally spanned by $\mathcal{K}^k$. This involves picking a conventional ordering for each simplex $\sigma = [v_0,\dots,v_k]$ and identifying $|[\pi(v_0),\dots,\pi(v_k)]\rangle = \textrm{sgn}(\pi) |\sigma\rangle$ for any permutation $\pi \in S_{k+1}$. Here $\textrm{sgn}(\pi)$ denotes the sign of the permutation, and it is known as the \emph{orientation} of the simplex. $\mathcal{C}^k(\mathcal{K})$ is known as a \emph{chain space}.

For a $k$-simplex $\sigma \in \mathcal{K}^k$, let $\text{up}(\sigma) \subset \mathcal{K}^0$ be the subset of vertices $v$ such that $\sigma \cup \{v\} \in \mathcal{K}^{k+1}$ is a $k+1$-simplex.

Define the \emph{coboundary map} $d^k: \mathcal{C}^k(\mathcal{K}) \rightarrow \mathcal{C}^{k+1}(\mathcal{K})$ to act as
\begin{equation*}
d^k |\sigma\rangle = \sum_{v \in \text{up}(\sigma)} |\sigma \cup \{v\}\rangle
\end{equation*}
for a $k$-simplex $\sigma \in \mathcal{K}^k$ - see \Cref{fig:coboundary}

\begin{figure}
\centering
\includegraphics[scale=0.8]{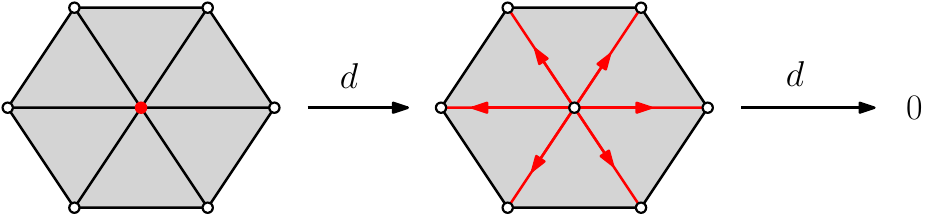}
\caption{The action of the coboundary map. In each step the coboundary map is acting on the parts of the complex shown in red. The third figure is empty because the lines that are acted on by the coboundary map in the second figure all have coboundaries composed of two triangles, and these coboundaries cancel out with those of the lines on either side as they have opposite orientation.}\label{fig:coboundary}
\end{figure}

This gives a chain of vector spaces with linear maps between them.
\begin{equation*}
\begin{tikzcd}
\mathcal{C}^{-1}(\mathcal{K}) \arrow[r,"d^{-1}"] & \mathcal{C}^{0}(\mathcal{K}) \arrow[r,"d^0"] & \mathcal{C}^{1}(\mathcal{K}) \arrow[r,"d^1"] & \mathcal{C}^{2}(\mathcal{K}) \arrow[r,"d^2"] & \dots
\end{tikzcd}
\end{equation*}
Here we define $\mathcal{C}^{-1}(\mathcal{K}) = \text{span}\{\emptyset\} \cong \mathbb{C}$ the 1-dimensional vector space, and $d^0$ maps the empty set $|\emptyset\rangle$ to the uniform superposition $\sum_{v \in \mathcal{K}^0} |v\rangle \in \mathcal{C}^0(\mathcal{K})$. With this convention, we are working with the \emph{reduced} cohomology.

It can be checked that $d^k \circ d^{k-1} = 0$ for each $k$. This gives a \emph{chain complex}.

\begin{definition}
A \emph{chain complex} is a chain of complex vector spaces $\mathcal{C}^k$ with linear maps $d^k : \mathcal{C}^k \rightarrow \mathcal{C}^{k+1}$ which satisfy $d^k \circ d^{k-1} = 0$ for all $k$.
\begin{equation*}
\begin{tikzcd}
\mathcal{C}^{-1} \arrow[r,"d^{-1}"] & \mathcal{C}^0 \arrow[r,"d^0"] & \mathcal{C}^1 \arrow[r,"d^1"] & \mathcal{C}^2 \arrow[r,"d^2"] & \dots
\end{tikzcd}
\end{equation*}
\end{definition}

$d^k \circ d^{k-1} = 0$ means that $\im{d^{k-1}} \subseteq \ker{d^k}$ for each $k$. This allows us to define the \emph{cohomology groups} as
\begin{equation*}
H^k = \frac{\ker{d^k}}{\im{d^{k-1}}}
\end{equation*}

Let $(\mathcal{C}^k)^*$ be the \emph{dual space} of $\mathcal{C}^k$. Formally, this is the space of linear functionals $f: \mathcal{C}^k \rightarrow \mathbb{C}$. Let $\partial^k = (d^{k-1})^* : (\mathcal{C}^k)^* \rightarrow (\mathcal{C}^{k-1})^*$ the dual map of $d^{k-1}$. $\partial^k$ are known as \emph{boundary maps}, as introduced in \Cref{sec:homology_informal}. 
As noted before, these act as:
\begin{equation*}
\partial^k |\sigma\rangle = \sum_{v \in \sigma} |\sigma \setminus \{v\}\rangle
\end{equation*}
for a $k$-simplex $\sigma \in \mathcal{K}^k$. (Here, by $|\sigma\rangle$ and $|\sigma \setminus \{v\}\rangle$, we technically mean the indicator functions of these simplices, which are members of the dual spaces $(\mathcal{C}^k)^*, (\mathcal{C}^{k-1})^*$.)
See \Cref{fig:boundary} for a diagrammatic representation.

We get the chain complex
\begin{equation*}
\begin{tikzcd}
(\mathcal{C}^{-1})^* & \arrow[l,"\partial^0"] (\mathcal{C}^0)^* & \arrow[l,"\partial^1"] (\mathcal{C}^1)^* & \arrow[l,"\partial^2"] (\mathcal{C}^2)^* & \arrow[l,"\partial^3"] \dots
\end{tikzcd}
\end{equation*}
From $d^k \circ d^{k-1} = 0 \ \forall k$ we get that $\partial^k \circ \partial^{k+1} = 0 \ \forall k$. This allows us to define the \emph{homology groups} as
\begin{equation*}
H^{k*} = \frac{\ker{\partial^k}}{\im{\partial^{k+1}}}
\end{equation*}
The homology groups $H^{k*}$ are the dual spaces of the cohomology groups $H^k$.

\subsection{Hodge theory} \label{sec:hodge}

Our next move will be to choose an inner product on $\mathcal{C}^k$, thus rendering it a \emph{Hilbert space}. This is equivalent to choosing an isomorphism between the space and its dual $\mathcal{C}^k \leftrightarrow \mathcal{C}^{k*}$. The most basic choice is the declare the simplices themselves to form an orthonormal basis. That is, for $\sigma, \tau \in \mathcal{K}^k$
\begin{equation*}
\langle\sigma|\tau\rangle = \begin{cases}
1 & \sigma = \tau \\
0 & \text{otherwise}
\end{cases}
\end{equation*}

We will modify this choice later, but for now let's consider this case. We can now drop the asterisks in the notation and identify $\mathcal{C}^{k*} = \mathcal{C}^k$. The $k$-boundary map is now the adjoint of the $k-1$-coboundary map
\begin{equation*}
\partial^k = (d^{k-1})^\dag
\end{equation*}

The inner product allows us to define the \emph{Laplacian} as
\begin{align*}
\Delta^k &: \mathcal{C}^k \rightarrow \mathcal{C}^k \\
\Delta^k &= d^{k-1} \partial^k + \partial^{k+1} d^k
\end{align*}
This is a positive semi-definite self-adjoint operator on the Hilbert space $\mathcal{C}^k$. In fact, we can split up the definition and write
\begin{align*}
\Delta^{\downarrow k} &= d^{k-1} \partial^k \\
\Delta^{\uparrow k} &= \partial^{k+1} d^k \\
\Delta^k &= \Delta^{\downarrow k} + \Delta^{\uparrow k}
\end{align*}
and now both $\Delta^{\downarrow k}$ and $\Delta^{\uparrow k}$ are individually positive semi-definite.

\begin{fact}\label{energy_formulas_lem}
\begin{align*}
\langle\psi|\Delta^{\downarrow k}|\psi\rangle &= ||\partial^k|\psi\rangle||^2 \\
\langle\psi|\Delta^{\uparrow k}|\psi\rangle &= ||d^k|\psi\rangle||^2 \\
\langle\psi|\Delta^k|\psi\rangle &= ||\partial^k|\psi\rangle||^2 + ||d^k|\psi\rangle||^2 \\
\end{align*}
\end{fact}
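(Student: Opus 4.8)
The plan is to observe that each of the two "half-Laplacians" is literally of the form (operator)$^\dagger$(operator), so the claimed identities are just the defining property of an adjoint applied to a quadratic form. Concretely, I would proceed as follows.

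First I would unwind the definitions in terms of a single map and its adjoint. Recall from the inner-product identification $\mathcal{C}^{k*} = \mathcal{C}^k$ that $\partial^k = (d^{k-1})^\dagger$, and hence equivalently $d^{k-1} = (\partial^k)^\dagger$. Substituting into $\Delta^{\downarrow k} = d^{k-1}\partial^k$ gives $\Delta^{\downarrow k} = (\partial^k)^\dagger \partial^k$. Therefore, for any $|\psi\rangle \in \mathcal{C}^k$,
\begin{equation*}
\langle\psi|\Delta^{\downarrow k}|\psi\rangle = \langle\psi|(\partial^k)^\dagger \partial^k|\psi\rangle = \langle \partial^k\psi|\partial^k\psi\rangle = \||\partial^k|\psi\rangle\|^2 .
\end{equation*}
The identical argument applied one degree up: since $\partial^{k+1} = (d^k)^\dagger$, we get $\Delta^{\uparrow k} = \partial^{k+1} d^k = (d^k)^\dagger d^k$, so $\langle\psi|\Delta^{\uparrow k}|\psi\rangle = \||d^k|\psi\rangle\|^2$.

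Finally, the third identity follows immediately by linearity of the inner product from $\Delta^k = \Delta^{\downarrow k} + \Delta^{\uparrow k}$, adding the two formulas just established. As a byproduct this also re-derives positive semi-definiteness of $\Delta^{\downarrow k}$, $\Delta^{\uparrow k}$, and $\Delta^k$, since each right-hand side is a squared norm. There is essentially no obstacle here: the only thing to be careful about is the bookkeeping of which coboundary/boundary map is the adjoint of which (i.e. keeping the superscripts straight so that $d^{k-1}$ pairs with $\partial^k$ and $d^k$ pairs with $\partial^{k+1}$), and the fact that this all relies on having fixed the orthonormal-simplex inner product so that the starred and unstarred chain spaces are identified.
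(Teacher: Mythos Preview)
Your proposal is correct; the paper does not provide an explicit proof of this fact (it is stated as a \emph{Fact} without a proof block), but your argument is exactly the intended one. Indeed, the paper uses precisely this reasoning implicitly in the proof of the subsequent proposition, invoking $d^{k-1} = (\partial^k)^\dagger$ and $\partial^{k+1} = (d^k)^\dagger$ to obtain $\langle\psi|\Delta^k|\psi\rangle = \|\partial^k|\psi\rangle\|^2 + \|d^k|\psi\rangle\|^2$.
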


There is a close relationship between the Laplacian and the homology, which is described by \emph{Hodge theory}. The most basic proposition of Hodge theory is the following.
\begin{proposition} \label{Hodge_prop}
$\ker{\Delta^k}$ is canonically isomorphic to $H^k$.
\end{proposition}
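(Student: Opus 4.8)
The plan is to prove the standard Hodge decomposition and then read off the isomorphism. First I would establish the orthogonal decomposition of the chain space
\begin{equation*}
\mathcal{C}^k = \im{d^{k-1}} \oplus \ker{\Delta^k} \oplus \im{\partial^{k+1}},
\end{equation*}
where all three summands are mutually orthogonal. Orthogonality of $\im{d^{k-1}}$ and $\im{\partial^{k+1}}$ follows from $\partial^k \circ \partial^{k+1} = (d^{k-1} \circ d^k)^\dag = 0$: for any $\alpha, \beta$ we have $\langle d^{k-1}\alpha \,|\, \partial^{k+1}\beta\rangle = \langle \alpha \,|\, \partial^k \partial^{k+1}\beta\rangle = 0$. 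To show $\ker{\Delta^k}$ is orthogonal to both images and that the three together span everything, I would use \Cref{energy_formulas_lem}: $\Delta^k|\psi\rangle = 0$ iff $\partial^k|\psi\rangle = 0$ and $d^k|\psi\rangle = 0$, i.e.\ $\ker{\Delta^k} = \ker{\partial^k} \cap \ker{d^k}$ (the ``harmonic'' chains). Since $\ker{\partial^k} = (\im{\partial^{k+1}})^\perp$ is false in general but $\ker{d^k} = (\im{(d^k)^\dag})^\perp = (\im{\partial^{k+1}})^\perp$ and $\ker{\partial^k} = (\im{(d^{k-1})})^\perp$ are the correct adjoint relations, a harmonic chain is orthogonal to both $\im{d^{k-1}}$ and $\im{\partial^{k+1}}$; conversely anything orthogonal to both images lies in $\ker{d^k} \cap \ker{\partial^k}$, which is exactly harmonic. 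This gives the decomposition in finite dimensions.

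Next I would use the decomposition to identify the cohomology. By definition $H^k = \ker{d^k} / \im{d^{k-1}}$, and $\ker{d^k} = \ker{\Delta^k} \oplus \im{\partial^{k+1}}$ — wait, rather $\ker{d^k} = \ker{\Delta^k} \oplus \im{d^{k-1}}$ after re-examining which pieces of the three-way decomposition lie in $\ker{d^k}$: since $d^k \circ d^{k-1} = 0$ we have $\im{d^{k-1}} \subseteq \ker{d^k}$, and $\ker{\Delta^k} \subseteq \ker{d^k}$ by the harmonic characterization, whereas $\im{\partial^{k+1}}$ meets $\ker{d^k}$ only in $0$ (if $\partial^{k+1}\beta \in \ker{d^k}$ then it is both in $\im{\partial^{k+1}}$ and orthogonal to it, hence zero). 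Therefore $\ker{d^k} = \ker{\Delta^k} \oplus \im{d^{k-1}}$, and quotienting by $\im{d^{k-1}}$ leaves $H^k \cong \ker{\Delta^k}$. The isomorphism is canonical: it is induced by the inclusion $\ker{\Delta^k} \hookrightarrow \ker{d^k}$ followed by the quotient projection $\ker{d^k} \twoheadrightarrow H^k$, which requires no choice of basis beyond the fixed inner product on $\mathcal{C}^k$.

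I would also remark that the same argument run on the dual complex shows $\ker{\Delta^{k}}$ (viewed via $\partial$) is isomorphic to the homology $H^{k*}$, so harmonic chains simultaneously represent cohomology and homology classes — this is the ``harmonic representative'' statement alluded to in the SUSY section. I do not anticipate a serious obstacle here; the only subtlety to be careful about is the bookkeeping of which adjoint relation ($\ker{d^k}=(\im \partial^{k+1})^\perp$ versus $\ker{\partial^k}=(\im d^{k-1})^\perp$) is being used at each step, since it is easy to transpose them, and the fact that everything is finite-dimensional (guaranteed because $\mathcal{K}$ has finitely many simplices) so that orthogonal complements behave well and no completeness/closed-range issues arise. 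The proof is essentially a three-line linear algebra argument once \Cref{energy_formulas_lem} and $d \circ d = 0$ are in hand.
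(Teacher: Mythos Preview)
Your proposal is correct and follows essentially the same approach as the paper: both use the energy identity (\Cref{energy_formulas_lem}) to characterize $\ker\Delta^k = \ker d^k \cap \ker\partial^k$, then invoke the adjoint relation $\ker\partial^k = (\im d^{k-1})^\perp$ to identify harmonic chains with the unique orthogonal representatives of cohomology classes. You spell out the full three-way Hodge decomposition $\mathcal{C}^k = \im d^{k-1} \oplus \ker\Delta^k \oplus \im\partial^{k+1}$ explicitly, whereas the paper's proof is a more compressed version that skips directly to the orthogonal-representative statement; the content is the same.
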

\begin{proof}
Suppose $\Delta^k|\psi\rangle = 0$. This means that
\begin{align*}
0 &= \langle\psi|\Delta^k|\psi\rangle \\
&= || \partial^k |\psi\rangle ||^2 + || d^k |\psi\rangle ||^2
\end{align*}
where we used that $d^{k-1} = \partial^{k*}$ and $\partial^{k+1} = d^{k*}$. Thus
\begin{align*}
|| \partial^k |\psi\rangle ||^2 &= || d^k |\psi\rangle ||^2 = 0 \\
\implies \partial^k |\psi\rangle &= d^k |\psi\rangle = 0
\end{align*}
Thus
\begin{equation*}
|\psi\rangle \in \ker{d^k}
\end{equation*}
and
\begin{equation*}
|\psi\rangle \in \ker{\partial^k} = (\im{(\partial^k)^\dag})^\perp = (\im{d^{k-1}})^\perp
\end{equation*}
Each homology class $[|\psi\rangle] \in H^k = \ker{d^k} / \im{d^{k-1}}$ will have a unique representative orthogonal to $\im{d^{k-1}}$.
\end{proof}
The proposition tells us that each homology class has a unique \emph{harmonic} representative, where harmonic means that it is in the kernel of the Laplacian. The equation $\Delta^k |\psi\rangle = 0$ is a high-dimensional generalization of Laplace's equation, and $\ker{\Delta^k}$ is sometimes referred to as the \emph{harmonic subspace}.

\subsection{Pairing of Laplacian eigenstates} \label{pairing_sec}

From the chain complex
\begin{equation*}
\begin{tikzcd}
\mathcal{C}^{-1} & \arrow[l,"\partial^0"] \mathcal{C}^0 & \arrow[l,"\partial^1"] \mathcal{C}^1 & \arrow[l,"\partial^2"] \mathcal{C}^2 & \arrow[l,"\partial^3"] \dots
\end{tikzcd}
\end{equation*}
we can build the \emph{graded} vector space
\begin{equation*}
\mathcal{C} = \mathcal{C}^{-1} \oplus \mathcal{C}^0 \oplus \mathcal{C}^1 \oplus \dots
\end{equation*}
We will refer to this as the \emph{Fock space}, where this terminology comes from a connection to supersymmetric quantum systems. $\partial^k$ and $d^k$ give maps
\begin{equation*}
\partial, d : \mathcal{C} \rightarrow \mathcal{C}
\end{equation*}
by acting blockwise. The Laplacian becomes
\begin{align*}
\Delta &= d \partial + \partial d : \mathcal{C} \rightarrow \mathcal{C} \\
\Delta^\downarrow &= d \partial : \mathcal{C} \rightarrow \mathcal{C} \\
\Delta^\uparrow &= \partial d : \mathcal{C} \rightarrow \mathcal{C}
\end{align*}

$\{\partial,d\}$ generate a $C^\ast$-algebra representation on Fock space. Moreover, the Laplacian $\Delta$ commutes with $\partial$ and $d$, and hence commutes with this representation. Thus we can write each $\Delta$-eigenspace as a sum of irreducible $\{\partial,d\}$-subrepresentations.

\begin{proposition} \label{pairing_prop}
The kernel of the Laplacian $\ker{\Delta}$ consists of states that are annihilated by both $\partial$ and $d$. These are one-dimensional $\{\partial,d\}$-irreps, or \emph{singlets}. Let $E>0$ be an eigenvalue of the Laplacian $\Delta$. The $E$-eigenspace is a direct sum of 2-dimensional subspaces $\{|\psi^\uparrow\rangle, |\psi^\downarrow\rangle\}$ with $|\psi^\uparrow\rangle \in \mathcal{C}^k$ for some $k$ and $|\psi^\downarrow\rangle \in \mathcal{C}^{k+1}$ such that
\begin{align*}
&\partial|\psi^\uparrow\rangle = 0 \quad , \quad
d|\psi^\uparrow\rangle \propto |\psi^\downarrow\rangle \\
&\partial|\psi^\downarrow\rangle \propto |\psi^\uparrow\rangle \quad , \quad
d|\psi^\downarrow\rangle = 0
\end{align*}
which implies
\begin{align*}
&\Delta^\downarrow|\psi^\uparrow\rangle = 0 \quad , \quad
\Delta^\uparrow|\psi^\uparrow\rangle = E|\psi^\uparrow\rangle \\
&\Delta^\downarrow|\psi^\downarrow\rangle = E|\psi^\downarrow\rangle \quad , \quad
\Delta^\uparrow|\psi^\downarrow\rangle = 0
\end{align*}
These 2-dimensional subspaces are likewise $\{\partial,d\}$-irreps, or \emph{doublets}. We refer to the states $|\psi^\uparrow\rangle$ as `paired up', and the states $|\psi^\downarrow\rangle$ as `paired down'.
\end{proposition}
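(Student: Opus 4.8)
The plan is to exploit the $C^\ast$-algebra structure generated by $\{\partial, d\}$ acting on Fock space, together with the nilpotency relations $\partial^2 = d^2 = 0$ and the fact that $d^k = (\partial^{k+1})^\dagger$. First I would fix an eigenvalue $E > 0$ of $\Delta$ and restrict attention to the $E$-eigenspace $V_E \subseteq \mathcal{C}$. Since $\Delta$ commutes with both $\partial$ and $d$ (this follows immediately from nilpotency: $\Delta \partial = (\partial d + d\partial)\partial = \partial d \partial = \partial(\partial d + d \partial) = \partial \Delta$, and similarly for $d$), the subspace $V_E$ is invariant under $\partial$ and $d$, so it carries a representation of the algebra they generate. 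The strategy is then to decompose $V_E$ into irreducibles of this representation and show every irreducible occurring for $E > 0$ is two-dimensional of the stated form.

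The key computational step is to understand the finite-dimensional $\ast$-representations of the algebra with generators $\partial$ (and its adjoint $d$) subject to $\partial^2 = 0$. On any $E$-eigenspace with $E>0$, I would use the grading: write $V_E = \bigoplus_k V_E^k$ with $V_E^k \subseteq \mathcal{C}^k$, and note that $\partial$ lowers degree by one while $d$ raises it. Take a nonzero $|\psi^\uparrow\rangle \in V_E^k$ that is annihilated by $\partial$ (such a vector exists: pick the lowest nonzero graded piece of any cyclic subspace, since $\partial$ strictly lowers degree it must eventually kill things, or more carefully decompose using $\Delta^\downarrow, \Delta^\uparrow$). For such a vector, $\Delta^\downarrow |\psi^\uparrow\rangle = d\partial|\psi^\uparrow\rangle = 0$, so $\Delta^\uparrow |\psi^\uparrow\rangle = \Delta|\psi^\uparrow\rangle = E|\psi^\uparrow\rangle$, i.e. $\partial d |\psi^\uparrow\rangle = E |\psi^\uparrow\rangle$. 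Set $|\psi^\downarrow\rangle := d|\psi^\uparrow\rangle \in V_E^{k+1}$; this is nonzero because $\||\psi^\downarrow\rangle\|^2 = \langle \psi^\uparrow| \partial d|\psi^\uparrow\rangle = E\||\psi^\uparrow\rangle\|^2 > 0$. Then $d|\psi^\downarrow\rangle = d^2|\psi^\uparrow\rangle = 0$, and $\partial|\psi^\downarrow\rangle = \partial d |\psi^\uparrow\rangle = E|\psi^\uparrow\rangle \propto |\psi^\uparrow\rangle$. So the span of $\{|\psi^\uparrow\rangle, |\psi^\downarrow\rangle\}$ is a two-dimensional invariant subspace, and the displayed eigenvalue equations for $\Delta^\downarrow, \Delta^\uparrow$ on each follow directly by substituting the relations just derived. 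The claim about $\ker \Delta$ being singlets annihilated by both $\partial$ and $d$ is exactly the content already extracted in the proof of \Cref{Hodge_prop}.

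To finish, I would run an orthogonal-complement induction: having peeled off one doublet $D$ from $V_E$, observe that $D^\perp \cap V_E$ is again invariant under $\partial$ and $d$ (because $\partial, d$ are adjoint to each other and $D$ is invariant, the orthogonal complement of an invariant subspace is invariant under the adjoint family, which is the same family here), so repeat. Since $V_E$ is finite-dimensional this terminates, exhibiting $V_E$ as an orthogonal direct sum of doublets of the stated form. The one genuine subtlety — the step I expect to need the most care — is producing at each stage a $\partial$-closed vector in $V_E$ that is \emph{not} already a doublet partner, i.e. verifying the decomposition is exhaustive and that no larger irreps or stray singlets appear at $E>0$; this is handled cleanly by the observation that on $V_E$ with $E>0$ neither $\partial$ nor $d$ can vanish identically on a cyclic piece (else that piece would lie in $\ker\Delta^\downarrow \cap \ker\Delta^\uparrow = \ker\Delta$, forcing $E=0$), combined with $\partial^2 = d^2 = 0$ which caps the length of any $\partial$-$d$ chain at two. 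Everything else is a short manipulation of the nilpotency and adjointness relations.
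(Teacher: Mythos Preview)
Your proposal is correct and follows essentially the same approach as the paper: both exploit that $\Delta$ commutes with $\partial$ and $d$, use the nilpotency $\partial^2 = d^2 = 0$ together with the grading to cap any cyclic piece at two, and invoke the $\{\partial,d\}$-representation structure to conclude that non-singlet irreps are doublets on neighboring blocks. The paper's proof is only a two-line sketch; you have simply filled in the explicit constructions (picking a $\partial$-closed vector, building its partner, peeling off doublets by orthogonal complement) that the paper leaves implicit.
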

\begin{proof}
\emph{(Sketch.)}
$\langle\psi|\Delta|\psi\rangle = 0$ if and only if $\partial|\psi\rangle = d|\psi\rangle = 0$ by \Cref{energy_formulas_lem}, and the states in the kernel of the Laplacian are precisely the \emph{singlets} in this representation.

Now recall that $\partial : \mathcal{C}^{k+1} \rightarrow \mathcal{C}^k$, $d : \mathcal{C}^k \rightarrow \mathcal{C}^{k+1}$ and $\partial^2 = d^2 = 0$. These properties imply that all non-singlet irreducible $\{\partial,d\}$-subrepresentations are necessarily \emph{doublets} supported on neighboring blocks $\mathcal{C}^k \oplus \mathcal{C}^{k+1}$ for some $k$.
\end{proof}

\subsection{Joins}

The \emph{join} will be an important operation for us on simplicial complexes.

\begin{definition}
Given two simplicial complexes $\mathcal{K}$ and $\mathcal{L}$, define their \emph{join} $\mathcal{K} \ast \mathcal{L}$ to be the simplicial complex consisting of simplices $\sigma \otimes \tau := \sigma \cup \tau$ for all $\sigma \in \mathcal{K}$, $\tau \in \mathcal{L}$.
\end{definition}

The chain spaces and homology of the join are given by the Kunneth formula.

\begin{fact}\label{Kunneth_lem} \emph{(Kunneth formula)}
There are canonical isomorphisms
\begin{align*}
\mathcal{C}^k(\mathcal{K} \ast \mathcal{L}) &\cong \bigoplus_{i+j = k-1} \mathcal{C}^i(\mathcal{K}) \otimes \mathcal{C}^j(\mathcal{L}) \\
H^k(\mathcal{K} \ast \mathcal{L}) &\cong \bigoplus_{i+j = k-1} H^i(\mathcal{K}) \otimes H^j(\mathcal{L})
\end{align*}
\end{fact}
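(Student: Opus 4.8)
\textbf{Proof proposal for the Künneth formula (Fact \ref{Kunneth_lem}).}

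The plan is to prove the chain-level isomorphism first, and then deduce the (co)homology statement from it by a standard homological-algebra argument. For the chain level, I would define the map
\begin{equation*}
\Phi : \bigoplus_{i+j=k-1} \mathcal{C}^i(\mathcal{K}) \otimes \mathcal{C}^j(\mathcal{L}) \longrightarrow \mathcal{C}^k(\mathcal{K} \ast \mathcal{L})
\end{equation*}
on basis elements by $\Phi(|\sigma\rangle \otimes |\tau\rangle) = |\sigma \cup \tau\rangle = |\sigma \otimes \tau\rangle$, extended linearly. First I would check this is well-defined: the orientation convention on $\mathcal{C}^i(\mathcal{K})$, $\mathcal{C}^j(\mathcal{L})$, and $\mathcal{C}^k(\mathcal{K}\ast\mathcal{L})$ must be chosen compatibly (e.g. order the vertices of $\mathcal{K}$ before those of $\mathcal{L}$ in every join simplex), so that a permutation of the vertices of $\sigma$ or of $\tau$ changes both sides by the same sign. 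Since every simplex of $\mathcal{K}\ast\mathcal{L}$ is uniquely $\sigma\cup\tau$ with $\sigma = \rho \cap \mathcal{K}^0$, $\tau = \rho\cap\mathcal{L}^0$ disjoint, the map is a bijection on the distinguished bases, hence a vector space isomorphism in each degree $k$ (note the index shift by $1$ comes from the reduced convention: the empty simplex $\emptyset \in \mathcal{C}^{-1}$ acts as a unit, and $\emptyset \cup \emptyset = \emptyset$ lands in $\mathcal{C}^{-1}(\mathcal{K}\ast\mathcal{L})$, matching $i+j = k-1$ with $i=j=k=-1$).

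Next I would verify that $\Phi$ intertwines the differentials, i.e. that under $\Phi$ the coboundary $d^{k}$ on $\mathcal{C}^\bullet(\mathcal{K}\ast\mathcal{L})$ corresponds to the tensor-product differential $d_\mathcal{K}\otimes 1 \pm 1\otimes d_\mathcal{L}$. Concretely, for a join simplex $\sigma\otimes\tau$, a vertex $v$ with $(\sigma\otimes\tau)\cup\{v\}$ a simplex of $\mathcal{K}\ast\mathcal{L}$ is either a vertex of $\mathcal{K}$ with $\sigma\cup\{v\}\in\mathcal{K}$ (giving $(\sigma\cup\{v\})\otimes\tau$) or a vertex of $\mathcal{L}$ with $\tau\cup\{v\}\in\mathcal{L}$ (giving $\sigma\otimes(\tau\cup\{v\})$) — this uses that $\mathcal{K}\ast\mathcal{L}$ contains \emph{all} unions, which is exactly the defining property of the join. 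Summing over these two types of vertex reproduces $d_\mathcal{K}|\sigma\rangle\otimes|\tau\rangle + (-1)^{i}|\sigma\rangle\otimes d_\mathcal{L}|\tau\rangle$ once the sign bookkeeping from reordering $v$ past the vertices of $\tau$ is tracked; the degree shift by $1$ is what makes the Koszul sign $(-1)^{i}$ rather than $(-1)^{i+1}$. So $\Phi$ is an isomorphism of cochain complexes from the tensor-product complex $\big(\mathcal{C}^\bullet(\mathcal{K})\otimes\mathcal{C}^\bullet(\mathcal{L})\big)[-1]$ to $\mathcal{C}^\bullet(\mathcal{K}\ast\mathcal{L})$.

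The cohomology statement then follows from the algebraic Künneth theorem over the field $\mathbb{C}$: since we work over a field, all modules are flat, there are no Tor terms, and $H^\bullet$ of a tensor product of complexes is the tensor product of the cohomologies, $H^k\big((\mathcal{C}^\bullet(\mathcal{K})\otimes\mathcal{C}^\bullet(\mathcal{L}))[-1]\big) \cong \bigoplus_{i+j=k-1} H^i(\mathcal{K})\otimes H^j(\mathcal{L})$. Combined with the chain isomorphism $\Phi$ this gives the claimed canonical isomorphism $H^k(\mathcal{K}\ast\mathcal{L}) \cong \bigoplus_{i+j=k-1} H^i(\mathcal{K})\otimes H^j(\mathcal{L})$. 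Alternatively, one can give a direct proof: harmonic representatives tensor to harmonic representatives (since $\Delta^k_{\mathcal{K}\ast\mathcal{L}}$ restricted through $\Phi$ is $\Delta_\mathcal{K}\otimes 1 + 1\otimes\Delta_\mathcal{L}$ on the shifted complex), which fits the Hodge-theoretic viewpoint of \Cref{sec:hodge} and avoids invoking the general Künneth machinery. I expect the main obstacle to be purely bookkeeping: pinning down a global vertex-ordering convention so that every sign in the differential-intertwining check works out, rather than any conceptual difficulty — over a field the homological content is immediate once the chain-level statement is set up correctly.
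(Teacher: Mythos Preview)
The paper does not supply a proof of this statement: it is recorded as a \emph{Fact} (the classical K\"unneth formula) and used as a black box. Your proposal is a correct and standard argument, so there is nothing to compare against on the paper's side. One small remark: the differential-intertwining identity you verify, namely
\[
d(|\sigma\rangle\otimes|\tau\rangle)=(d|\sigma\rangle)\otimes|\tau\rangle+(-1)^{|\sigma|}|\sigma\rangle\otimes(d|\tau\rangle),
\]
is exactly what the paper writes down (and checks) in the proof of the very next statement, \Cref{Laplacian_join_lem}; so your chain-map step is consistent with the paper's conventions, and your alternative Hodge-theoretic finish via $\Delta_{\mathcal{K}\ast\mathcal{L}}=\Delta_{\mathcal{K}}\otimes 1+1\otimes\Delta_{\mathcal{L}}$ is precisely the content of that lemma.
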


We would also like to relate the Laplacian of $\mathcal{K} \ast \mathcal{L}$ to the Laplacians of $\mathcal{K}$ and $\mathcal{L}$.

\begin{lemma}\label{Laplacian_join_lem}
If $|\psi\rangle \in \mathcal{C}^i(\mathcal{K})$ and $|\varphi\rangle \in \mathcal{C}^j(\mathcal{L})$ where $i+j=k-1$, then
\begin{equation*}
\Delta^k (|\psi\rangle \otimes |\varphi\rangle) = (\Delta^i |\psi\rangle) \otimes |\varphi\rangle + |\psi\rangle \otimes (\Delta^j |\varphi\rangle)
\end{equation*}
\end{lemma}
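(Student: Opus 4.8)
\textbf{Proof plan for \Cref{Laplacian_join_lem}.}

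The plan is to establish the identity by working directly with the coboundary and boundary maps on the join, and showing that $d$ and $\partial$ each act on a product $|\psi\rangle \otimes |\varphi\rangle$ like a graded derivation. First I would record how the coboundary map behaves under the Kunneth isomorphism $\mathcal{C}^k(\mathcal{K}\ast\mathcal{L}) \cong \bigoplus_{i+j=k-1}\mathcal{C}^i(\mathcal{K})\otimes\mathcal{C}^j(\mathcal{L})$ of \Cref{Kunneth_lem}. For a simplex $\sigma\otimes\tau = \sigma\cup\tau$ in $\mathcal{K}\ast\mathcal{L}$, the vertices $v$ with $(\sigma\cup\tau)\cup\{v\}$ a simplex of the join are exactly those $v\in\mathrm{up}(\sigma)$ in $\mathcal{K}$ together with those $v\in\mathrm{up}(\tau)$ in $\mathcal{L}$ (since every vertex of $\mathcal{K}$ is joined to every vertex of $\mathcal{L}$). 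Tracking the orientation/sign conventions in the identification of $\sigma\cup\{v\}$ with a basis element of $\mathcal{C}^{i+1}(\mathcal{K})\otimes\mathcal{C}^j(\mathcal{L})$ (respectively $\mathcal{C}^i(\mathcal{K})\otimes\mathcal{C}^{j+1}(\mathcal{L})$), this yields the Leibniz-type rule
\begin{equation*}
d^{k}(|\psi\rangle\otimes|\varphi\rangle) = (d^i|\psi\rangle)\otimes|\varphi\rangle + (-1)^{i+1}|\psi\rangle\otimes(d^j|\varphi\rangle).
\end{equation*}
Taking adjoints with respect to the orthonormal-simplex inner product (and its tensor-product structure, which is compatible with the Kunneth isomorphism) gives the dual identity for $\partial^k = (d^{k-1})^\dagger$:
\begin{equation*}
\partial^{k}(|\psi\rangle\otimes|\varphi\rangle) = (\partial^i|\psi\rangle)\otimes|\varphi\rangle + (-1)^{i}|\psi\rangle\otimes(\partial^j|\varphi\rangle),
\end{equation*}
where the sign on the second term is the adjoint of $(-1)^{(i-1)+1}=(-1)^i$ coming from the $(i-1,j)$ piece that maps into bidegree $(i,j)$.

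Next I would simply substitute these two rules into $\Delta^k = d^{k-1}\partial^k + \partial^{k+1}d^k$ and expand. Applying $\partial^k$ first and then $d^{k-1}$ produces four terms: $(d^{i-1}\partial^i|\psi\rangle)\otimes|\varphi\rangle$, a cross term $(-1)^{i}(d^{i-1}|\psi\rangle)\otimes(\cdots)$ plus $(-1)^{i+1}(\partial^i\cdots)\otimes(\cdots)$, and $|\psi\rangle\otimes(d^{j-1}\partial^j|\varphi\rangle)$; similarly $\partial^{k+1}d^k$ produces $(\partial^{i+1}d^i|\psi\rangle)\otimes|\varphi\rangle$, two cross terms, and $|\psi\rangle\otimes(\partial^{j+1}d^j|\varphi\rangle)$. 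The diagonal terms assemble into $(\Delta^i|\psi\rangle)\otimes|\varphi\rangle + |\psi\rangle\otimes(\Delta^j|\varphi\rangle)$, which is the claimed right-hand side, so the only thing to check is that the four cross terms cancel. This is exactly where the $(-1)^i$ signs are doing their job: the cross term from $d^{k-1}\partial^k$ carries a relative sign $(-1)^{i}\cdot(-1)^{i+1} = -1$ between its two pieces (after commuting $d$ and $\partial$ past the tensor factors and relabelling bidegrees), and likewise for $\partial^{k+1}d^k$, and the two surviving combinations are negatives of each other; a careful bookkeeping of which bidegree each intermediate term lives in shows they sum to zero.

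I expect the \emph{sign bookkeeping} to be the main obstacle — not conceptually hard, but the place where an error is easy to make. Concretely, the risk is in (a) fixing a consistent convention for the ordering of vertices in $\sigma\cup\tau$ (e.g. all of $\sigma$'s vertices before all of $\tau$'s) and computing $\mathrm{sgn}$ of the shuffle that arises when $d$ inserts a new $\mathcal{L}$-vertex into a simplex whose $\mathcal{K}$-part has size $i+1$, and (b) making sure the adjoint is taken consistently so the signs in the $\partial$ rule match. An alternative, essentially sign-free route is available if one prefers: one can verify the quadratic-form version using \Cref{energy_formulas_lem}, namely that $\|\partial^k(|\psi\rangle\otimes|\varphi\rangle)\|^2 = \|\partial^i|\psi\rangle\|^2\|\varphi\|^2 + \|\psi\|^2\|\partial^j|\varphi\rangle\|^2$ and similarly for $d$, using orthogonality of the two summands in the Leibniz rule (they live in different bidegrees, hence are orthogonal, so the cross terms in the norm vanish regardless of the sign); together with the analogous polarized identities this pins down $\Delta^k$ on the product. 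Either way the content is the derivation property of $d$ and $\partial$ on the join together with the fact that the tensor decomposition is orthogonal, and I would present the direct computation with the Leibniz rules as the cleanest writeup.
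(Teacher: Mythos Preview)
Your approach is exactly the paper's: establish graded Leibniz rules for $d$ and $\partial$ on $|\sigma\rangle\otimes|\tau\rangle$, then expand $\Delta^k=d^{k-1}\partial^k+\partial^{k+1}d^k$ and watch the cross terms cancel. The paper simply records
\[
\partial^k(|\sigma\rangle\otimes|\tau\rangle)=(\partial^i|\sigma\rangle)\otimes|\tau\rangle+(-1)^{|\sigma|}\,|\sigma\rangle\otimes(\partial^j|\tau\rangle),\qquad
d^k(|\sigma\rangle\otimes|\tau\rangle)=(d^i|\sigma\rangle)\otimes|\tau\rangle+(-1)^{|\sigma|}\,|\sigma\rangle\otimes(d^j|\tau\rangle),
\]
with $|\sigma|=i+1$, and then states that the Laplacian identity follows; your writeup fleshes out exactly this computation.

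One correction worth making: your sign on the $\partial$ rule is off by one. Taking the adjoint of $d^{k-1}$, the component landing in bidegree $(i,j)$ from bidegree $(i,j-1)$ is the second term of the Leibniz rule applied to an element whose $\mathcal{K}$-part has $i+1$ vertices, so it carries $(-1)^{i+1}$, not $(-1)^i$; thus $\partial$ has the \emph{same} sign $(-1)^{i+1}$ as $d$. With your stated signs the cross terms still cancel, but the diagonal part would come out as $(\Delta^i|\psi\rangle)\otimes|\varphi\rangle - |\psi\rangle\otimes(\Delta^j|\varphi\rangle)$, which is wrong. Once both rules carry $(-1)^{i+1}$ the four cross terms pair off as $(-1)^i+(-1)^{i+1}=0$ and the diagonal terms add correctly. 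Your anticipated ``sign bookkeeping is the obstacle'' is precisely right, and your alternative norm-based argument via \Cref{energy_formulas_lem} and bidegree orthogonality is a clean way to sidestep it.
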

\begin{proof}
If $\sigma$ is an $i$-simplex of $\mathcal{K}$ and $\tau$ is a $j$-simplex of $\mathcal{L}$, and $i+j=k-1$, it can be checked that
\begin{equation*}
\partial^k(|\sigma\rangle \otimes |\tau\rangle) = (\partial^i |\sigma\rangle) \otimes |\tau\rangle + (-1)^{|\sigma|} |\sigma\rangle \otimes (\partial^j |\tau\rangle)
\end{equation*}
and
\begin{equation*}
d^k(|\sigma\rangle \otimes |\tau\rangle) = (d^i |\sigma\rangle) \otimes |\tau\rangle + (-1)^{|\sigma|} |\sigma\rangle \otimes (d^j |\tau\rangle)
\end{equation*}

Now
\begin{equation*}
\Delta^k = d^{k-1} \partial^k + \partial^{k+1} d^k
\end{equation*}
so
\begin{equation*}
\Delta^k(|\sigma\rangle \otimes |\tau\rangle) = (\Delta^i |\sigma\rangle) \otimes |\tau\rangle + |\sigma\rangle \otimes (\Delta^j |\tau\rangle)
\end{equation*}

By linearity, this extends to chains $|\psi\rangle \in \mathcal{C}^i(\mathcal{K})$, $|\varphi\rangle \in \mathcal{C}^j(\mathcal{L})$.
\end{proof}

Since we are building our simplicial complexes as the clique complexes of graphs, we must be able to implement the join at the level of the graphs. This is achieved by taking the two constituent graphs $\mathcal{G}$ and $\mathcal{G}'$ and including all edges between $\mathcal{G}$ and $\mathcal{G}'$.

\begin{definition}
The \emph{join} of two graphs $\mathcal{G} = (V,E)$ and $\mathcal{G}' = (V',E')$ is the graph $\mathcal{G} \ast \mathcal{G}'$ with vertices $V \cup V'$ and edges $E \cup E' \cup \{(u,v): u \in V, v \in V'\}$.
\end{definition}

\begin{fact} \label{fact:join}
The clique complex of the join of two graphs is the join of the clique complexes of the graphs.
\end{fact}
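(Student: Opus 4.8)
\textbf{Proof proposal for \Cref{fact:join}.}

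The plan is to show that the two simplicial complexes $\Cl(\mathcal{G} \ast \mathcal{G}')$ and $\Cl(\mathcal{G}) \ast \Cl(\mathcal{G}')$ have exactly the same simplices, since both live on the vertex set $V \cup V'$ and a simplicial complex is determined by its list of simplices. By the definition of the join of graphs, the key structural fact is that the edge set of $\mathcal{G} \ast \mathcal{G}'$ restricted to $V$ is exactly $E$, restricted to $V'$ is exactly $E'$, and every pair $(u,v)$ with $u \in V$, $v \in V'$ is an edge. I would not need to invoke the $2$-determined characterization separately; it suffices to match simplices directly.

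First I would take an arbitrary simplex $\sigma$ of $\Cl(\mathcal{G} \ast \mathcal{G}')$, i.e. a clique of $\mathcal{G} \ast \mathcal{G}'$, and write $\sigma = \sigma_1 \cup \sigma_2$ where $\sigma_1 = \sigma \cap V$ and $\sigma_2 = \sigma \cap V'$. Since $\sigma$ is a clique, any two vertices of $\sigma_1$ are adjacent in $\mathcal{G} \ast \mathcal{G}'$; as both lie in $V$, the only edges between them come from $E$, so $\sigma_1$ is a clique of $\mathcal{G}$, hence a simplex of $\Cl(\mathcal{G})$ (including the degenerate case $\sigma_1 = \emptyset$). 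Symmetrically $\sigma_2$ is a simplex of $\Cl(\mathcal{G}')$. Therefore $\sigma = \sigma_1 \cup \sigma_2 = \sigma_1 \otimes \sigma_2$ is a simplex of the join $\Cl(\mathcal{G}) \ast \Cl(\mathcal{G}')$ by definition of the join of simplicial complexes.

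Conversely, I would take a simplex $\sigma_1 \otimes \sigma_2 = \sigma_1 \cup \sigma_2$ of $\Cl(\mathcal{G}) \ast \Cl(\mathcal{G}')$ with $\sigma_1 \in \Cl(\mathcal{G})$, $\sigma_2 \in \Cl(\mathcal{G}')$, and check it is a clique of $\mathcal{G} \ast \mathcal{G}'$: two vertices both in $\sigma_1$ are adjacent in $\mathcal{G}$ and hence in $\mathcal{G} \ast \mathcal{G}'$; likewise for two vertices in $\sigma_2$; and a vertex of $\sigma_1 \subseteq V$ together with a vertex of $\sigma_2 \subseteq V'$ is an edge of $\mathcal{G} \ast \mathcal{G}'$ because the join adds all such cross edges. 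So every pair of distinct vertices of $\sigma_1 \cup \sigma_2$ is adjacent, i.e.\ $\sigma_1 \cup \sigma_2$ is a clique, hence a simplex of $\Cl(\mathcal{G} \ast \mathcal{G}')$. Since the two complexes have the same underlying vertex set and the same collection of simplices, they are equal. This is essentially a bookkeeping argument; the only mild subtlety — and the one place I would be careful — is to handle the empty simplex / empty-clique cases consistently on both sides (so that, for instance, a clique lying entirely in $V$ is recovered as $\sigma_1 \cup \emptyset$), which is harmless given the conventions already fixed in the excerpt.
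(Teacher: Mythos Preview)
Your proof is correct; the paper states this as a \emph{Fact} without proof, and your direct verification that the simplices of $\Cl(\mathcal{G}\ast\mathcal{G}')$ and $\Cl(\mathcal{G})\ast\Cl(\mathcal{G}')$ coincide is exactly the standard argument one would supply.
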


\subsection{Generalized octrahedra}\label{octahedra_sec}

Let $\mathfrak{g}_1$ denote the graph consisting of two disjoint points (with no edges), and let $\mathfrak{g}_n$ be the $n$-fold \emph{join} of $\mathfrak{g}_1$.\footnote{By \Cref{fact:join}, it is equivalent to think of the join as acting at the level of graphs or at the level of simplicial complexes.}
\begin{equation*}
\mathfrak{g}_n = \mathfrak{g}_1 \ast \dots \ast \mathfrak{g}_1 \quad \text{($n$ times)}
\end{equation*}
It will be useful to develop an intuitive interpretation for the complex $\mathfrak{g}_n$. The Kunneth formula tells us that it should have
\begin{equation*}
\dim{H^k} = \begin{cases}
1 & k = n-1 \\
0 & \text{otherwise}
\end{cases}
\end{equation*}
In fact, $\mathfrak{g}_n$ is topologically homeomorphic to the $(n-1)$-sphere $S^{n-1}$. But which triangulation of $S^{n-1}$ does it form in particular?

We can interpret $\mathfrak{g}_n$ as a ``generalized octahedron''. $\mathfrak{g}_2$ is the square loop, and $\mathfrak{g}_3$ is the standard octahedron. The $(n-1)$-simplices of $\mathfrak{g}_n$ are the $n$-cliques of the 1-skeleton. There are $2^n$ of these, corresponding to choosing one vertex from each copy of $\mathfrak{g}_1$. Notice how the number of simplices which make up the higher dimensional octahedron is exponential in the number of vertices.
These generalized octahedron are sometimes referred to as cross-polytopes.

\subsection{Weighting}\label{weighting_sec}

In our construction, we would like to consider \emph{weighted} simplicial complexes. The purpose of this section is to define a natural notion of weighting. We should clarify that the weighting will \emph{not} affect the homology of the simplicial complex; rather, it will only affect the Laplacian operator.

Recall that, in order to define the Laplacian, we had to choose an inner product on the chain spaces $\mathcal{C}^k$. We went with the most basic choice of declaring the simplices themselves to form an orthonormal basis. We will now relax this so that the simplices are orthogonal with weights. The more general inner product is
\begin{equation*}
\langle\sigma|\tau\rangle = \begin{cases}
w(\sigma)^2 & \sigma = \tau \\
0 & \text{otherwise}
\end{cases}
\end{equation*}
for $\sigma, \tau \in \mathcal{K}^k$.

This involves assigning a weight $w(\sigma) \geq 0$ to each simplex $\sigma$ in the simplicial complex $\mathcal{K}$. There are two issues associated with the generality of this definition. Firstly, recall that we introduced the clique complex to provide a more succinct description of a simplicial complex. We likewise need the inner product to be succinctly describable, so listing the weights of all simplices is not possible. The second is that we would like the inner product to respect the join operation. That is, after taking the join of two simplicial complexes $\mathcal{K} \ast \mathcal{L}$, we would like the inner product on $\bigoplus_{i+j = k-1} \mathcal{C}^i(\mathcal{K}) \otimes \mathcal{C}^j(\mathcal{L})$ to be induced from those on $\mathcal{C}^i(\mathcal{K})$ and $\mathcal{C}^j(\mathcal{L})$.

To solve these issues, we add more structure to the definition. Each \emph{vertex} $v$ in the simplicial complex is assigned a weight $w(v)$, and the weights of the higher simplices are induced via
\begin{equation*}
w(\sigma) = \prod_{v \in \sigma} w(v)
\end{equation*}

This allows a graph $\mathcal{G}$ with weighted \emph{vertices} to induce a weighted clique complex. Note that the edges of $\mathcal{G}$ are still binary (present or not present). This also ensures that, in the join construction, the weight of the tensor product of two simplices $\sigma, \tau$ is the product of the individual weights $w(\sigma \otimes \tau) = w(\sigma) w(\tau)$.

How do the coboundary and boundary operators now act on the weighted complex? Let's first consider the coboundary operator. We would like to transform to bases which are orthonormal in the new inner products. Our new basis for $\mathcal{C}^k(\mathcal{K})$ will be $\{|\sigma'\rangle : \sigma \in \mathcal{K}^k\}$, where
\begin{equation*}
|\sigma'\rangle = \frac{1}{w(\sigma)} |\sigma\rangle
\end{equation*}
is the unit vector of $|\sigma\rangle$. $d^k$ originally acted as
\begin{equation*}
d^k |\sigma\rangle = \sum_{v \in \text{up}(\sigma)} |\sigma \cup \{v\}\rangle
\end{equation*}
Written in the new orthonormal basis, this becomes
\begin{align}
d^k |\sigma'\rangle &= \sum_{v \in \text{up}(\sigma)} \frac{w(\sigma \cup \{v\})}{w(\sigma)} |(\sigma \cup \{v\})'\rangle \nonumber\\
&= \sum_{v \in \text{up}(\sigma)} w(v) |(\sigma \cup \{v\})'\rangle \label{coboundary_entries_eq}
\end{align}
$\partial^k$ is defined by
\begin{equation*}
\partial^k = (d^{k-1})^\dag
\end{equation*}
which thus acts by
\begin{equation} \label{boundary_entries_eq}
\partial^k |\sigma'\rangle = \sum_{v \in \sigma} w(v) |(\sigma \setminus \{v\})'\rangle
\end{equation}
From here onwards, we drop the primes on the standard orthonormal basis.

We anticipate it will be useful to the reader to explicitly describe the action of the Laplacian $\Delta^k$ of a weighted clique complex $\mathcal{G}$. Let $\sigma$, $\tau$ be two $k$-simplices. Let's say that $\sigma$ and $\tau$ have a \emph{similar common lower simplex} if we can remove a vertex $v_\sigma$ from $\sigma$ and a vertex $v_\tau$ from $\tau$ such that we get the same $k-1$-simplex, with the same orientation. Let's say that they have a \emph{dissimilar common lower simplex} if the same holds but with opposite orientation. Let's say that $\sigma$ and $\tau$ are \emph{upper adjacent} if they are lower adjacent and their union forms a $k+1$-simplex. $\sigma$ and $\tau$ being upper adjacent means that they are the faces of a common $k+1$-simplex.

\begin{fact} \label{fact_Laplacian_entries} (Similar to \cite[Theorem 3.3.4]{goldberg2002combinatorial})
\begin{equation*}
\langle\sigma|\Delta^k|\tau\rangle =
\begin{cases}
\left(\sum_{u \in \text{up}(\sigma)} w(u)^2\right) + \left(\sum_{v \in \sigma} w(v)^2\right) + 1 & \text{If $\sigma = \tau$.} \\
\\
w(v_\sigma) w(v_\tau) & \parbox{20em}{If $\sigma$ and $\tau$ have a similar common lower simplex and are \emph{not} upper adjacent. $v_\sigma$ and $v_\tau$ are the vertices removed from $\sigma$ and $\tau$ respectively to get the common lower simplex.} \\
\\
- w(v_\sigma) w(v_\tau) & \parbox{20em}{If $\sigma$ and $\tau$ have a dissimilar common lower simplex and are \emph{not} upper adjacent. $v_\sigma$ and $v_\tau$ are the vertices removed from $\sigma$ and $\tau$ respectively.} \\
\\
0 & \parbox{20em}{Otherwise. This includes the case that $\sigma$ and $\tau$ have no common lower simplex, and the case that they are upper adjacent.}
\end{cases}
\end{equation*}
\end{fact}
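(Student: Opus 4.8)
The plan is to verify Fact \ref{fact_Laplacian_entries} by computing the matrix element $\langle\sigma|\Delta^k|\tau\rangle$ directly from the definition $\Delta^k = d^{k-1}\partial^k + \partial^{k+1}d^k$, using the explicit formulas \eqref{coboundary_entries_eq} and \eqref{boundary_entries_eq} for the weighted boundary and coboundary maps in the orthonormal basis. I would split into the up-Laplacian term $\Delta^{\uparrow k} = \partial^{k+1}d^k$ and the down-Laplacian term $\Delta^{\downarrow k} = d^{k-1}\partial^k$ and handle each separately, then combine. Throughout, $\sigma$ and $\tau$ are $k$-simplices, so they agree on a common lower $(k-1)$-face exactly when $|\sigma \cap \tau| = k$, i.e. $\sigma = \rho \cup \{v_\sigma\}$ and $\tau = \rho \cup \{v_\tau\}$ for a common $(k-1)$-simplex $\rho$ and distinct vertices $v_\sigma, v_\tau$.

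For the diagonal case $\sigma = \tau$: from \eqref{coboundary_entries_eq}, $d^k|\sigma\rangle = \sum_{v\in\text{up}(\sigma)} w(v)|\sigma\cup\{v\}\rangle$, and since the simplices $\sigma\cup\{v\}$ are orthonormal, $\langle\sigma|\partial^{k+1}d^k|\sigma\rangle = \sum_{v\in\text{up}(\sigma)} w(v)^2$. Here I must be slightly careful about the case $k=n-1$ or top-dimensional $\sigma$, but the formula $\text{up}(\sigma)$ handles that uniformly (the sum is empty). Similarly, from \eqref{boundary_entries_eq}, $\partial^k|\sigma\rangle = \sum_{v\in\sigma} w(v)|\sigma\setminus\{v\}\rangle$, giving $\langle\sigma|d^{k-1}\partial^k|\sigma\rangle = \sum_{v\in\sigma} w(v)^2$. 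The extra $+1$ comes from the reduced-cohomology convention at the bottom of the chain complex: when $k=0$, $\partial^0$ maps into $\mathcal{C}^{-1} = \text{span}\{\emptyset\}$ with $d^{-1}|\emptyset\rangle = \sum_v |v\rangle$, so $d^{-1}$ contributes the all-ones rank-one term; for general $k$ this propagates as the $+1$ on the diagonal (equivalently, the chain complex is reduced so $\mathcal{C}^{-1}\neq 0$). I would state this cleanly by noting $\langle \sigma | d^{k-1}\partial^k|\sigma\rangle$ includes the contribution from the $(k-1)$-face $\emptyset$ when $k=0$, and in general account for it via the reduced convention — this is the one genuinely subtle bookkeeping point.

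For the off-diagonal case with $\sigma \neq \tau$: if $\sigma$ and $\tau$ share no common $(k-1)$-face then both $\langle\sigma|\partial^{k+1}d^k|\tau\rangle$ and $\langle\sigma|d^{k-1}\partial^k|\tau\rangle$ vanish (no common upper simplex and no common lower simplex, respectively), so the entry is $0$. If they share a common $(k-1)$-face $\rho$ with $\sigma = \rho\cup\{v_\sigma\}$, $\tau = \rho\cup\{v_\tau\}$: the down-term gives $\langle\sigma|d^{k-1}\partial^k|\tau\rangle = \pm w(v_\sigma) w(v_\tau)$, with sign determined by whether removing $v_\sigma$ from $\sigma$ and $v_\tau$ from $\tau$ yields $\rho$ with the same or opposite orientation (this is precisely the definition of similar vs.\ dissimilar common lower simplex). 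The up-term $\langle\sigma|\partial^{k+1}d^k|\tau\rangle$ is nonzero iff $\sigma\cup\tau = \rho\cup\{v_\sigma,v_\tau\}$ is a $(k+1)$-simplex, i.e.\ $\sigma$ and $\tau$ are upper adjacent, in which case a standard computation shows it exactly cancels the down-term (this is the well-known fact that upper-and-lower-adjacent simplices contribute zero to the full Laplacian — the signs conspire, as in the classical combinatorial Laplacian). Hence the nonzero off-diagonal entries occur exactly when $\sigma,\tau$ are lower adjacent but not upper adjacent, with value $\pm w(v_\sigma)w(v_\tau)$ according to orientation, matching the claim.

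The main obstacle is not any single computation but getting the orientation signs exactly right in the cancellation between the up- and down-Laplacian for upper-adjacent pairs; this requires carefully tracking the signs $(-1)^j$ in $\partial$ through both $\partial^{k+1}d^k$ and $d^{k-1}\partial^k$ on the three-vertex "diamond" $\{\rho, \sigma, \tau, \sigma\cup\tau\}$. I would do this by fixing an ordering $\rho = [x_0,\dots,x_{k-1}]$ and computing $d^k|\tau\rangle$, extracting the coefficient of $|\sigma\cup\tau\rangle$, then applying $\partial^{k+1}$ and extracting the coefficient of $|\sigma\rangle$; comparing with the analogous computation via $\rho$ shows the two paths differ by exactly a sign, giving cancellation. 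Since this is entirely routine (and appears in \cite{goldberg2002combinatorial}), I would present it compactly rather than in full, citing the referenced theorem for the unweighted case and noting that the weights $w(v)$ simply ride along multiplicatively because each boundary/coboundary entry in \eqref{coboundary_entries_eq}--\eqref{boundary_entries_eq} carries exactly one factor $w(v)$ for the vertex being added or removed.
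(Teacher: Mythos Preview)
The paper does not prove this Fact; it is stated with a citation to Goldberg and then used only to note that $\Delta^k$ is sparse. Your approach---computing $\langle\sigma|\Delta^{\uparrow k}|\tau\rangle$ and $\langle\sigma|\Delta^{\downarrow k}|\tau\rangle$ separately from \eqref{coboundary_entries_eq}--\eqref{boundary_entries_eq} and checking the sign cancellation on the upper-adjacent ``diamond''---is the standard one and is correct for the off-diagonal entries and for the two diagonal sums $\sum_{u\in\text{up}(\sigma)} w(u)^2$ and $\sum_{v\in\sigma} w(v)^2$.

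Your explanation of the $+1$, however, does not work. The reduced convention $\mathcal{C}^{-1}=\text{span}\{\emptyset\}$ affects only $\Delta^0$; for $k\ge 1$ the maps $d^{k-1},\partial^k$ never touch $\mathcal{C}^{-1}$, so nothing ``propagates'' to general $k$. Even at $k=0$ the reduced term does not produce an extra $+1$: with $w(\emptyset)=1$ (empty product) one gets $\partial^0|v\rangle = w(v)|\emptyset\rangle$ and $d^{-1}\partial^0|v\rangle = w(v)\sum_u w(u)|u\rangle$, whose diagonal is $w(v)^2$, which is already exactly $\sum_{u\in\sigma} w(u)^2$ for the $0$-simplex $\sigma=\{v\}$. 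From the definitions in \Cref{weighting_sec} the diagonal is simply $\sum_{u\in\text{up}(\sigma)} w(u)^2 + \sum_{v\in\sigma} w(v)^2$, so the $+1$ in the stated formula appears to be a typo in the paper rather than something requiring justification. Since the Fact is used only to conclude sparsity, this discrepancy is immaterial to the paper; but you should not manufacture an argument for a term that the computation does not actually produce.
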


As a corollary, we can see that the Laplacian $\Delta^k$ is a $\poly(n)$-sparse matrix.

\subsection{Thickening} \label{sec:thickening}
Given a graph whose clique complex, $\mathcal{K}$, is a triangulation of $S^n$.
We would like to find a graph whose clique complex is topologically $\mathcal{K} \times I$ where $I = [0,1]$. Here, we describe a construction which we call \emph{thickening} which achieves this.

First, the following lemma tells us how to \emph{triangulate} $\mathcal{K} \times I$.
\begin{lemma}\label{thickening_lem_1}
Let $\mathcal{K}$ be a simplicial complex. Order the vertices $\mathcal{K}^0$. Let $\mathcal{L}$ be the simplicial complex with vertices $\mathcal{L}^0 = \mathcal{K}^0 \times \{0,1\}$ and simplices
\begin{equation*}
[(u_1,0)(u_2,0) \dots (u_a,0)]
\end{equation*}
whenever $[u_1u_2\dots u_a] \in \Kcyc$,
\begin{equation*}
[(u_1,1)(u_2,1)\dots (u_a,1)]
\end{equation*}
whenever
$[u_1u_2\dots u_a] \in \Kcyc$
and finally,
\begin{equation*}
[(u_1,0)(u_2,0) \dots (u_a,0) (v_1,1)\dots (v_b,1)]
\end{equation*}
whenever
\begin{itemize}
    \item $u_1 < \dots < u_a \leq v_1 < \dots < v_b$
    \item $[u_1\dots u_a] \in \Kcyc$
    \item $[v_1 \dots v_b] \in \Kcyc$
    \item if $u_a=v_1$ then $[u_1\dots u_a v_2 \dots v_b] \in \Kcyc$
    \item if $u_a\neq v_1$ then $[u_1\dots u_a v_1 \dots v_b] \in \Kcyc$
\end{itemize}
Then $\mathcal{L}$ is a triangulation of $\mathcal{K} \times I$.
\end{lemma}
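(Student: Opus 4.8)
The plan is to show that the abstract family $\mathcal{L}$ is a simplicial complex in the sense of \Cref{sec:homology_formal}, and that its geometric realization $|\mathcal{L}|$ is homeomorphic to $|\mathcal{K}| \times I$ by the piecewise-linear map sending each vertex $(u,\varepsilon)$ to $(u,\varepsilon) \in |\mathcal{K}| \times \{0,1\}$ and extended affinely over each simplex. The key bookkeeping observation is that every simplex of $\mathcal{L}$, written $\sigma = [(u_1,0)\dots(u_a,0)(v_1,1)\dots(v_b,1)]$, has a well-defined \emph{support} $\mathrm{supp}(\sigma) = \{u_1,\dots,u_a\}\cup\{v_1,\dots,v_b\}$, and this support is always a simplex of $\mathcal{K}$ --- when $u_a \neq v_1$ this is literally the hypothesis $[u_1\dots u_a v_1 \dots v_b] \in \mathcal{K}$, and when $u_a = v_1$ it is $[u_1\dots u_a v_2 \dots v_b] \in \mathcal{K}$. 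Granting this, closure under faces is a routine verification: deleting vertices from $\sigma$ keeps the level-$0$ part and level-$1$ part each an increasing chain, preserves $\max(\text{level }0) \leq \min(\text{level }1)$, keeps each part a face of a simplex of $\mathcal{K}$, and makes the ``union'' simplices demanded by the last two bullets subsets of $\mathrm{supp}(\sigma) \in \mathcal{K}$; since $\mathcal{K}$ is closed under faces, all the conditions survive. Hence $\mathcal{L}$ is a simplicial complex.

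Next I would analyze the map $h \colon |\mathcal{L}| \to |\mathcal{K}| \times I$ described above. It is well-defined (affine maps on simplices agree on common faces because they agree on vertices), continuous, and lands in $|\mathcal{K}| \times I$ since it carries a simplex $\sigma$ onto $\mathrm{conv}\big(\{(u_i,0)\}\cup\{(v_j,1)\}\big) \subseteq |\mathrm{supp}(\sigma)| \times I$. As $|\mathcal{L}|$ is compact and $|\mathcal{K}| \times I$ is Hausdorff, it is enough to show $h$ is a bijection. For $\tau \in \mathcal{K}$ let $\mathcal{L}_\tau = \{\sigma \in \mathcal{L} : \mathrm{supp}(\sigma) \subseteq \tau\}$, a subcomplex with $|\mathcal{L}| = \bigcup_\tau |\mathcal{L}_\tau|$; note $h$ sends the relative interior of $\sigma$ into $\mathrm{relint}|\mathrm{supp}(\sigma)| \times I$. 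Since the relative interiors of simplices of $\mathcal{K}$ partition $|\mathcal{K}|$, a point $(x,t)$ with $x \in \mathrm{relint}|\tau|$ can be hit only by points of $|\mathcal{L}_\tau|$, and $|\tau| \times I = \bigcup_{\tau' \subseteq \tau} (\mathrm{relint}|\tau'| \times I)$; so $h$ is a bijection as soon as it restricts to a bijection $|\mathcal{L}_\tau| \to |\tau| \times I$ for each $\tau$, which also forces compatibility of the pieces. This reduces the lemma to the single-simplex case $\mathcal{K} = \overline{\tau}$, a simplex $\tau = [w_1 < \dots < w_n]$ and all its faces.

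For the single-simplex case one first checks that $\mathcal{L}_{\overline{\tau}}$ is the down-closure of the $n$ top simplices $P_k = [(w_1,0)\dots(w_k,0)(w_k,1)\dots(w_n,1)]$, $k = 1,\dots,n$: each $P_k$ meets the conditions of the lemma (the ``$u_a = v_1$'' clause, with union simplex $\tau$), has support $\tau$, and conversely each $\sigma \in \mathcal{L}_{\overline{\tau}}$ is a face of some $P_k$ (take $k$ to be the index of the shared boundary vertex if the two parts of $\sigma$ share one, and otherwise any index between $\max(\text{level }0)$ and $\min(\text{level }1)$). It then remains to confirm the classical fact that $\{|P_1|,\dots,|P_n|\}$ triangulates $\Delta^{n-1} \times I = |\tau| \times I$. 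Writing a point as $\big(\sum_i \lambda_i w_i,\, t\big)$ with $\lambda_i \geq 0$, $\sum_i \lambda_i = 1$, $t \in [0,1]$, and $s_k = \lambda_1 + \dots + \lambda_k$, a short barycentric-coordinate computation shows the point lies in $h(|P_k|)$ exactly when $1 - s_k \leq t \leq 1 - s_{k-1}$, with $h|_{|P_k|}$ a bijection onto that region whose inverse is $a_i = \lambda_i$, $b_j = \lambda_j$ for the non-split indices and $a_k = 1 - s_{k-1} - t$, $b_k = t - 1 + s_k$ for the split index $k$. Since $0 = s_0 \leq s_1 \leq \dots \leq s_n = 1$, the intervals $[1 - s_k, 1 - s_{k-1}]$ tile $[0,1]$ with disjoint interiors, so the $|P_k|$ cover $|\tau| \times I$ and meet only along common faces; an induction on $\dim\tau$ (a proper face of $P_k$ has support a proper face of $\tau$) upgrades this to the bijectivity of $h \colon |\mathcal{L}_{\overline{\tau}}| \to |\tau| \times I$, which closes the reduction.

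I expect the genuine content --- and the place where care is needed --- to be exactly this last step: the prism (``staircase'') triangulation of $\Delta^{n-1} \times I$, together with the bookkeeping that assembles the per-simplex triangulations of $\tau \times I$ into one global triangulation of $|\mathcal{K}| \times I$ consistently with the chosen vertex ordering. The triangulation itself is standard (it is the one behind the prism operator in proofs of homotopy invariance of simplicial homology), so the real work is verifying that the combinatorial description in the statement reproduces precisely this triangulation simplex by simplex. The face-closure check and the continuity/well-definedness of $h$ are routine by comparison.
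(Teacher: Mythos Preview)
Your proof is correct and takes a genuinely different route from the paper. You construct an explicit piecewise-linear map $h\colon |\mathcal{L}| \to |\mathcal{K}| \times I$, reduce bijectivity to the single-simplex case via the support decomposition, and then verify the standard prism (staircase) triangulation of $\Delta^{n-1} \times I$ with an explicit barycentric-coordinate computation. The paper instead argues combinatorially that $\mathcal{L}$ ``fills'' $\mathcal{K} \times I$ with no gaps (every facet of a maximal simplex lies on one of the two boundary layers or is shared with another maximal simplex) and with no overlaps (any two maximal simplices meet in a common face). Notably, the paper's facet-sharing step actually invokes that $\mathcal{K}$ is a closed manifold, an assumption not present in the lemma statement but sufficient for their application to spheres. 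Your argument is the textbook one --- it is exactly the construction behind the prism operator in proofs of homotopy invariance of simplicial homology --- and it works for arbitrary $\mathcal{K}$ while making the homeomorphism explicit; the paper's argument is shorter and more informal, tailored to the manifold case they actually need.
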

\begin{proof}
To demonstrate that $\mathcal{L}$ is a triangulation of $\Kcyc \times I$ we must show that $\mathcal{L}$ subdivides $\Kcyc \times I$ -- i.e. that $\Kcyc \times I$ is filled by simplices, and no two simplices overlap. 

Consider the first point.
It is immediate from the definition of $\mathcal{L}$ that if the maximal simplices in $\Kcyc$ are $n$-simplices than the maximal simplices in $\mathcal{L}$ are $n+1$-simplices.
The maximal simplices in $\mathcal{L}$ are of the form:
\begin{equation*}
[(u_1,0)(u_2,0) \dots (u_a,0) (u_a,1)(v_2,1)\dots (v_b,1)]
\end{equation*}
where $a+b = n$
Note that every facet of a maximal simplex in $\mathcal{L}$ is either:
\begin{itemize}
\item of the form $[(u_1,0)(u_2,0) \dots (u_a,0)]$ -- i.e. lies on the boundary $\Kcyc \times \{0\}$ of the triangulation
\item  of the form $[(u_1,1)(u_2,1) \dots (u_a,1)]$ -- i.e. lies on the boundary $\Kcyc \times \{1\}$ of the triangulation
\item shared between at least two maximal simplices
\end{itemize}
The first two claims are trivial.
To see that the third claim is true note that $\Kcyc$ is a closed manifold.
Therefore consider a maximal simplex $s \in \mathcal{L}$. By the definition of $\mathcal{L}$, there exists a corresponding maximal simplex $s' \in \Kcyc$, such that  for any vertex $v$ we remove from $s$ to construct a facet there is a corresponding vertex $v'$ we can remove from $s'$.
Since $\Kcyc$ is closed we can always add a \emph{different} vertex $w'$ to $s' \setminus v'$ to give a different maximal simplex $t' \in \Kcyc$.
There will be a corresponding vertex $w$ we can add to the facet of $s$ to give a different maximal facet $t \in \mathcal{L}$, such that $t$ and $s$ share the facet obtained from $s$ by removing $v$. Therefore there are no `gaps' in $\mathcal{L}$, it completely fills $\Kcyc \times I$. 

Consider now the second point -- the simplices of $\mathcal{L}$ must not overlap.
Assume for contradiction that they do overlap. 
Then two simplices $s,t \in \mathcal{L}$ must share an intersection which is not a simplex.
But it is straightforward to check that any two maximal simplices of $\mathcal{L}$ either do not intersect, or intersect on a simplex.
Therefore the simplices of $\mathcal{L}$ do not overlap.
\end{proof}

Finally, we can implement this triangulation with a clique complex using the following lemma.
\begin{lemma}\label{thickening_lem_2}
Let $\mathcal{K}$ be a clique complex. Order the vertices $\mathcal{K}^0$. Let $\mathcal{G}$ be the graph with vertices $\mathcal{K}^0 \times \{0,1\}$ and edges
\begin{align*}
&\{((u,0),(v,0)):(u,v)\in\mathcal{K}^1\} \\
\cup \ &\{((u,1),(v,1)):(u,v)\in\mathcal{K}^1\} \\
\cup \ &\{((v,0),(v,1)):v\in\mathcal{K}^0\} \\
\cup \ &\{((u,0),(v,1)):(u,v)\in\mathcal{K}^1,u<v\}
\end{align*}
The clique complex of $\mathcal{G}$ is $\mathcal{L}$ from \Cref{thickening_lem_1}, and in particular has clique complex triangulating $\mathcal{K} \times I$.
\end{lemma}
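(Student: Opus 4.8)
The plan is to verify that the clique complex of the graph $\mathcal{G}$ defined here coincides exactly with the abstract simplicial complex $\mathcal{L}$ constructed in \Cref{thickening_lem_1}; once this identification is established, the final claim ("has clique complex triangulating $\mathcal{K} \times I$") follows immediately from \Cref{thickening_lem_1}. Since both $\Cl(\mathcal{G})$ and $\mathcal{L}$ are simplicial complexes on the same vertex set $\mathcal{K}^0 \times \{0,1\}$, it suffices to show they have the same simplices, and because $\Cl(\mathcal{G})$ is by definition 2-determined (it is the clique complex of its own 1-skeleton), it further suffices to show two things: (a) the 1-skeleton of $\mathcal{L}$ equals the edge set of $\mathcal{G}$, and (b) $\mathcal{L}$ is itself 2-determined, i.e. $\mathcal{L}$ is a clique complex. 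Then $\Cl(\mathcal{G}) = \Cl(\text{1-skeleton of }\mathcal{L}) = \mathcal{L}$.

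For step (a), I would read off the edges of $\mathcal{L}$ directly from the three families of simplices in \Cref{thickening_lem_1} by specializing to the case of two vertices. Edges of the form $[(u,0)(v,0)]$ arise exactly when $[uv] \in \mathcal{K}$, i.e. $(u,v) \in \mathcal{K}^1$; similarly for $[(u,1)(v,1)]$; and the mixed edges $[(u,0)(v,1)]$ arise from the third family with $a=b=1$, whose conditions reduce to $u \le v$ together with: if $u = v$ then (vacuously) $[u] \in \mathcal{K}$, and if $u \neq v$ then $[uv] \in \mathcal{K}$. This yields exactly the mixed edges $\{((v,0),(v,1)) : v \in \mathcal{K}^0\}$ and $\{((u,0),(v,1)) : (u,v)\in\mathcal{K}^1, u<v\}$, matching the definition of $\mathcal{G}$ precisely.

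For step (b), I need to check that every clique of the 1-skeleton of $\mathcal{L}$ is in fact a simplex of $\mathcal{L}$. A clique in this graph is a vertex set $\{(u_1,0),\dots,(u_a,0),(v_1,1),\dots,(v_b,1)\}$ (after separating the "level-0" and "level-1" vertices and using the orderings) such that all pairs are edges: the $u_i$ form a clique in $\mathcal{K}$, the $v_j$ form a clique in $\mathcal{K}$, and every mixed pair $(u_i,0),(v_j,1)$ is an edge, which by the edge description means $u_i \le v_j$ or ($u_i \neq v_j$ and $(u_i,v_j)\in\mathcal{K}^1$) — one must be slightly careful that $u_i < v_j$ is required when $u_i \neq v_j$, but $u_i = v_j$ is always allowed. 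I would then argue that these pairwise conditions force the combinatorial conditions in the third family of \Cref{thickening_lem_1}: the ordering $u_1 < \dots < u_a \le v_1 < \dots < v_b$ can be arranged (the only possible coincidence between a $u$ and a $v$ is $u_a = v_1$, since any other coincidence or order violation would break a mixed edge), $[u_1\dots u_a]$ and $[v_1\dots v_b]$ are cliques in $\mathcal{K}$ by hypothesis, and the "if $u_a = v_1$" / "if $u_a \neq v_1$" clique conditions on the union follow because the mixed edges guarantee $\{u_1,\dots,u_a\} \cup \{v_1,\dots,v_b\}$ is a clique in $\mathcal{K}$ (here using crucially that $\mathcal{K}$ is a clique complex, so a set of pairwise-adjacent vertices is a simplex). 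Hence the clique is a simplex of $\mathcal{L}$, so $\mathcal{L}$ is 2-determined.

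The main obstacle is the bookkeeping in step (b) around the single allowed coincidence $u_a = v_1$ and the strictness of the inequalities: one must check that the edge conditions of $\mathcal{G}$ (mixed edges require $u < v$ strictly, or the "vertical" edge $u = v$) align exactly with the case split "$u_a = v_1$ vs. $u_a \neq v_1$" in the definition of $\mathcal{L}$, and that no clique of $\mathcal{G}$ can contain two level-0 vertices and two level-1 vertices in a configuration that violates the total preorder $u_1 < \dots < u_a \le v_1 < \dots < v_b$. This is a finite case analysis on the shape of a clique, and invoking that $\mathcal{K}$ is a clique complex (so that pairwise adjacency upgrades to a simplex) is what makes it go through cleanly.
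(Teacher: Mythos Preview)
Your proposal is correct and is exactly the verification the paper leaves implicit: the paper's entire proof is the single sentence ``It is straightforward to check that the cliques in $\mathcal{G}$ are precisely those listed in \Cref{thickening_lem_1},'' and your two-step argument (match 1-skeletons, then show $\mathcal{L}$ is 2-determined using that $\mathcal{K}$ is a clique complex) is a clean way to carry out that check. One small slip: where you write ``means $u_i \le v_j$ or ($u_i \neq v_j$ and $(u_i,v_j)\in\mathcal{K}^1$)'' the correct disjunction is ``$u_i = v_j$, or $u_i < v_j$ and $(u_i,v_j)\in\mathcal{K}^1$''; you effectively fix this in the next sentence, so it does not affect the argument.
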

\begin{proof}
It is straightforward to check that the cliques in $\mathcal{G}$ are precisely those listed in \Cref{thickening_lem_1}.
\end{proof}

\subsection{Perturbation of subspaces} \label{subspace_perturbation_sec}

In this section, we introduce a notion of a perturbation of a subspace. This will be useful in stating a central lemma in the argument, \Cref{spec_seq_lemma}, and investigating its consequences.

\begin{definition}
Consider a subspace $\mathcal{U} \subseteq \mathcal{V}$ of a complex vector space $\mathcal{V}$. Let $\mathcal{U}_\lambda \subseteq \mathcal{V}$ be a family of subspaces indexed by the continuous parameter $\lambda \in [0,1]$. We say that $\mathcal{U}_\lambda$ is a $\mathcal{O}(\lambda)$-perturbation of $\mathcal{U}$ if there exists orthonormal bases $\{|u\rangle\}_u$ for $\mathcal{U}$ and $\{|u,\lambda\rangle\}_u$ for each $\mathcal{U}_\lambda$ such that
\begin{equation} \label{perturbation_condition}
\left|\left| |u,\lambda\rangle - |u\rangle \right|\right| = \mathcal{O}(\lambda) \ \forall u
\end{equation}
\end{definition}

Using this definition, we prove a two-part lemma which will be useful later

\begin{lemma} \label{subspace_perturbation_lem}
Suppose $\mathcal{U}_\lambda$ is a $\mathcal{O}(\lambda)$-perturbation of subspace $\mathcal{U} \subseteq \mathcal{V}$. Let $\Pi$ be the orthogonal projection onto $\mathcal{U}$ and $\Pi_\lambda$ orthogonal projection onto $\mathcal{U}_\lambda$ for each $\lambda$. Then
\begin{enumerate}
    \item $||\Pi_\lambda - \Pi||_{\text{\emph{op}}} \leq \mathcal{O}(\lambda)$ in operator norm.
    \item If $|\psi_\lambda\rangle$ is a parametrized family of states such that $\langle\psi_\lambda|\Pi|\psi_\lambda\rangle = \mathcal{O}(\lambda^2)$, then $\langle\psi_\lambda|\Pi_\lambda|\psi_\lambda\rangle = \mathcal{O}(\lambda^2)$.
\end{enumerate}
\end{lemma}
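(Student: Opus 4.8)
The plan is to work directly with the orthonormal bases guaranteed by the $\mathcal{O}(\lambda)$-perturbation hypothesis and control the projectors by writing them as sums of rank-one terms. Let $\{|u\rangle\}_u$ be the orthonormal basis for $\mathcal{U}$ and $\{|u,\lambda\rangle\}_u$ the orthonormal bases for $\mathcal{U}_\lambda$ with $\||u,\lambda\rangle - |u\rangle\| = \mathcal{O}(\lambda)$ for every $u$. Then $\Pi = \sum_u |u\rangle\langle u|$ and $\Pi_\lambda = \sum_u |u,\lambda\rangle\langle u,\lambda|$. For part 1, I would write the difference of each rank-one term as a telescoping expression
\begin{equation*}
|u,\lambda\rangle\langle u,\lambda| - |u\rangle\langle u| = (|u,\lambda\rangle - |u\rangle)\langle u,\lambda| + |u\rangle(\langle u,\lambda| - \langle u|),
\end{equation*}
so that each term has operator norm $\mathcal{O}(\lambda)$ since $\||u,\lambda\rangle\| = 1$ and $\||u,\lambda\rangle - |u\rangle\| = \mathcal{O}(\lambda)$. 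Summing over the (constant, or at worst fixed-dimension) index set $u$ gives $\|\Pi_\lambda - \Pi\|_{\text{op}} = \mathcal{O}(\lambda)$. One subtlety worth a remark: the implicit constant depends on $\dim\mathcal{U}$, which is harmless here since $\mathcal{U}$ will have bounded dimension in all applications, but I would state the bound as $\mathcal{O}(\lambda)$ with the understanding that the constant may absorb $\dim\mathcal{U}$.

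For part 2, the idea is to compare the two quadratic forms using part 1 together with the smallness of $\langle\psi_\lambda|\Pi|\psi_\lambda\rangle$. Write
\begin{equation*}
\langle\psi_\lambda|\Pi_\lambda|\psi_\lambda\rangle = \langle\psi_\lambda|\Pi|\psi_\lambda\rangle + \langle\psi_\lambda|(\Pi_\lambda - \Pi)|\psi_\lambda\rangle.
\end{equation*}
The first term is $\mathcal{O}(\lambda^2)$ by hypothesis. For the second term, a naive bound via part 1 only gives $\mathcal{O}(\lambda)$, which is not good enough, so I need to be more careful. The trick is to use that $\Pi_\lambda$ is a projector: $\langle\psi_\lambda|\Pi_\lambda|\psi_\lambda\rangle = \|\Pi_\lambda|\psi_\lambda\rangle\|^2$. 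Now $\Pi_\lambda|\psi_\lambda\rangle = \Pi|\psi_\lambda\rangle + (\Pi_\lambda - \Pi)|\psi_\lambda\rangle$, and by the triangle inequality $\|\Pi_\lambda|\psi_\lambda\rangle\| \leq \|\Pi|\psi_\lambda\rangle\| + \|\Pi_\lambda - \Pi\|_{\text{op}}\,\||\psi_\lambda\rangle\|$. Here $\|\Pi|\psi_\lambda\rangle\| = \sqrt{\langle\psi_\lambda|\Pi|\psi_\lambda\rangle} = \mathcal{O}(\lambda)$ by hypothesis, and $\|\Pi_\lambda - \Pi\|_{\text{op}} = \mathcal{O}(\lambda)$ by part 1. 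Provided $|\psi_\lambda\rangle$ stays bounded in norm (which I would either assume as implicit in "family of states" meaning unit vectors, or note explicitly), both contributions are $\mathcal{O}(\lambda)$, hence $\|\Pi_\lambda|\psi_\lambda\rangle\| = \mathcal{O}(\lambda)$ and squaring gives $\langle\psi_\lambda|\Pi_\lambda|\psi_\lambda\rangle = \mathcal{O}(\lambda^2)$.

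The main obstacle is the one flagged above: the crude estimate $|\langle\psi_\lambda|(\Pi_\lambda-\Pi)|\psi_\lambda\rangle| \le \|\Pi_\lambda - \Pi\|_{\text{op}} = \mathcal{O}(\lambda)$ loses a factor of $\lambda$ relative to what part 2 claims, so the proof must exploit the projector structure (writing the quadratic form as a squared norm and applying the triangle inequality at the level of vectors rather than bounding the quadratic form directly). Once that observation is in place the argument is routine. I would also take care to record the standing assumption that the states $|\psi_\lambda\rangle$ are normalized (or at least uniformly bounded in norm), since without it part 2 is false.
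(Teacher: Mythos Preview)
Your proof is correct. Part 1 matches the paper's argument: both telescope each rank-one term $|u,\lambda\rangle\langle u,\lambda| - |u\rangle\langle u|$ and sum, absorbing a $\dim\mathcal{U}$ factor into the constant.

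For Part 2 you take a genuinely different route. The paper expands $\Pi_\lambda = \Pi + \mathcal{O}(\lambda)\sum_u |u\rangle\langle\tilde u_\lambda| + \mathcal{O}(\lambda)\sum_u |\tilde u_\lambda\rangle\langle u| + \mathcal{O}(\lambda^2)$ and bounds the cross terms $\mathcal{O}(\lambda)\sum_u \langle\psi_\lambda|u\rangle\langle\tilde u_\lambda|\psi_\lambda\rangle$ via Cauchy--Schwarz, producing a factor $(\langle\psi_\lambda|\Pi|\psi_\lambda\rangle)^{1/2} = \mathcal{O}(\lambda)$. You instead use the projector identity $\langle\psi_\lambda|\Pi_\lambda|\psi_\lambda\rangle = \|\Pi_\lambda|\psi_\lambda\rangle\|^2$ and apply the triangle inequality at the level of vectors, bounding $\|\Pi_\lambda|\psi_\lambda\rangle\| \le \|\Pi|\psi_\lambda\rangle\| + \|\Pi_\lambda-\Pi\|_{\text{op}} = \mathcal{O}(\lambda)$ before squaring. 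Both arguments exploit the same key observation---that the hypothesis gives an $\mathcal{O}(\lambda)$ bound at the square-root level, which then combines with the $\mathcal{O}(\lambda)$ from Part 1---but your packaging is shorter and avoids introducing the auxiliary vectors $|\tilde u_\lambda\rangle$. The paper's expansion, on the other hand, makes the structure of $\Pi_\lambda - \Pi$ more explicit, which can be handy if one later needs finer information about the perturbation.
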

\begin{proof}
Using the condition \Cref{perturbation_condition}, we can write
\begin{equation*}
|u,\lambda\rangle = |u\rangle + \mathcal{O}(\lambda) |\tilde{u}_\lambda\rangle
\end{equation*}
for some normalized vector $|\tilde{u}_\lambda\rangle$. Then
\begin{align*}
\Pi_\lambda &= \sum_u |u,\lambda\rangle \langle u,\lambda| \\
&= \sum_u \big(|u\rangle + \mathcal{O}(\lambda)|\tilde{u}_\lambda\rangle\big) \big(\langle u| + \mathcal{O}(\lambda)\langle \tilde{u}_\lambda|\big) \\
&= \Big(\sum_u |u\rangle \langle u|\Big) + \mathcal{O}(\lambda) \sum_u |u\rangle\langle \tilde{u}_\lambda| + \mathcal{O}(\lambda) \sum_u |\tilde{u}_\lambda\rangle\langle u| + \mathcal{O}(\lambda^2) \\
&= \Pi + \mathcal{O}(\lambda) \sum_u |u\rangle\langle \tilde{u}_\lambda| + \mathcal{O}(\lambda) \sum_u |\tilde{u}_\lambda\rangle\langle u| + \mathcal{O}(\lambda^2)
\end{align*}
We can immediately read off Part 1, that $||\Pi_\lambda - \Pi|| \leq \mathcal{O}(\lambda)$ in operator norm. As for Part 2,
\begin{align*}
\langle\psi_\lambda|\Pi_\lambda|\psi_\lambda\rangle &= \langle\psi_\lambda|\Pi|\psi_\lambda\rangle + \mathcal{O}(\lambda) \sum_u \langle\psi_\lambda|u\rangle\langle \tilde{u}_\lambda|\psi_\lambda\rangle + \mathcal{O}(\lambda) \sum_u \langle\psi_\lambda|\tilde{u}_\lambda\rangle\langle u|\psi_\lambda\rangle + \mathcal{O}(\lambda^2) \\
&= \mathcal{O}(\lambda^2) + \mathcal{O}(\lambda) \sum_u \langle\psi_\lambda|u\rangle\langle \tilde{u}_\lambda|\psi_\lambda\rangle + \mathcal{O}(\lambda) \sum_u \langle\psi_\lambda|\tilde{u}_\lambda\rangle\langle u|\psi_\lambda\rangle \\
&\leq \mathcal{O}(\lambda^2) + \mathcal{O}(\lambda) \Big(\sum_u |\langle u|\psi_\lambda\rangle|^2\Big)^{\frac{1}{2}} \Big(\sum_u |\langle \tilde{u}_\lambda|\psi_\lambda\rangle|^2\Big)^{\frac{1}{2}} \\
&\leq \mathcal{O}(\lambda^2) + \mathcal{O}(\lambda) \Big(\langle\psi_\lambda|\Pi|\psi_\lambda\rangle\Big)^{\frac{1}{2}} \\
&\leq \mathcal{O}(\lambda^2)
\end{align*}
We used the assumption $\langle\psi_\lambda|\Pi|\psi_\lambda\rangle = \mathcal{O}(\lambda^2)$, and then Cauchy-Schwarz, and finally the assumption again.
\end{proof}

\bigskip
\section{Hamiltonian to homology gadgets}\label{sec:construction}

Given an instance of quantum $m$-SAT $H$ on $n$ qubits, we are aiming to find a weighted graph $\mathcal{G}$ and some $k$ such that the ground energy of the $k^{\text{th}}$-order Laplacian $\Delta^k$ is related to the ground energy of $H$. We will do this by first constructing a \emph{qubit graph} $\mathcal{G}_n$ whose harmonic states can be identified with qubit states $|\psi\rangle \in (\mathbb{C}^2)^{\otimes n}$.\footnote{Here we do not use the same graph as in \cite{crichigno2022clique} because in order to maintaing a link between the spectrum of the Laplacian and the spectrum of the $\qfsat$ Hamiltonian we are reducing from we need our basis states to correspond to orthogonal holes in the clique complex, not just distinct homology classes.}
This means the $k$-homology of $\mathcal{G}_n$ will be $2^n$-dimensional. The clique complex of this graph will have no $k+1$-simplices, so $\Delta^{\uparrow k} = 0$.

The $m$-local Hamiltonian $H$ is a sum of rank-1 projectors
\begin{equation*}
H = \sum_{i=1}^t \phi_i
\end{equation*}
where each $\phi_i = |\phi_i\rangle\langle\phi_i|$ for a state $|\phi_i\rangle$ on at most $m$ qubits.
\begin{definition} \label{integer_state_def}
$|\phi\rangle$ is an \emph{integer state} if it can be written as
\begin{equation*}
|\phi\rangle = \frac{1}{\mathcal{Z}} \sum_{z \in \{0,1\}^m} a_z |z\rangle
\end{equation*}
where $a_z \in \mathbb{Z}$ are integers, and $\mathcal{Z} = (\sum_z |a_z|^2)^{\frac{1}{2}}$ is a normalization factor.
\end{definition}
We will assume $\{|\phi_i\rangle\}$ are integer states. For each term $\phi_i$ we will add a \emph{gadget} $\mathcal{T}_i$ to the graph which aims to implement the effect of this term on the groundspace of harmonic states. $\mathcal{T}_i$ will consist of additional vertices and edges which add $k+1$-simplices to the clique complex so that $\Delta^{\uparrow k}$ implements $\phi_i$. Topologically, this is achieved by designing $\mathcal{T}_i$ to be a triangulation of a $k+1$-manifold whose \emph{boundary} is the cycle corresponding to $|\phi_i\rangle$.

In our construction, the clique complex will be \emph{weighted} as described in \Cref{weighting_sec}. Recall that this is done by assigning weights to the vertices of the underlying graph. The vertices of the original qubit graph $\mathcal{G}_n$ will all have weight $1$. The vertices of the gadgets $\mathcal{T}_i$ will have weight $\frac{1}{\poly{n}} \leq \lambda \ll 1$. The weights are much smaller than 1, although only polynomially so.

\subsection{Qubit graph}

Let $\mathcal{G}_1$ be the \emph{bowtie graph} (see \Cref{fig:qubit_complex}). 
The clique complex has no 2-simplices, and the 1-homology is isomorphic to $\mathbb{C}^2$, spanned by equivalence classes of the left and right loops. Further, the loops themselves form an orthonormal basis of the harmonic states $\ker{\Delta^1} \subseteq \mathcal{C}^1$. Identify $|0\rangle$ with the left loop, (i.e. the cycle $[xa_3] + [a_3a_2] + [a_2a_4] + [a_4x]$) and $|1\rangle$ with the right loop (i.e. the cycle $[xb_3] + [b_3b_2] + [b_2b_4] + [b_4x]$). Note that $\mathcal{C}^1 \cong \mathbb{C}^8$ since there are 8 1-simplices (edges). The other 6 states have some higher energies on $\Delta^1$.

To construct $\mathcal{G}_n$, we employ the \emph{join} operation.
\begin{equation*}
\mathcal{G}_n = \mathcal{G}_1 \ast \dots \ast \mathcal{G}_1 \quad \text{($n$ times)}
\end{equation*}
Using the Kunneth formula \Cref{Kunneth_lem}, we see that $H^{2n-1} \cong (\mathbb{C}^2)^{\otimes n}$. Moreover, the computational qubit states $|z_1\rangle \otimes \dots \otimes |z_n\rangle$ for $z_i \in \{0,1\}$ form a natural orthonormal basis for the groundspace $\ker{\Delta^{2n-1}} \subseteq \mathcal{C}^{2n-1}$. Note this means that we take $k = 2n-1$. Denote
\begin{equation*}
\mathcal{H}_n := \ker{\Delta^{2n-1}} \cong (\mathbb{C}^2)^{\otimes n}
\end{equation*}

What is the cycle corresponding to a computational basis state $|z\rangle , z \in \{0,1\}^n$? $|z\rangle$ is the join of $n$ copies of the square loop. The square loop is a two-fold join $\mathfrak{g}_2 = \mathfrak{g}_1 \ast \mathfrak{g}_1$ from \Cref{octahedra_sec}. Thus $|z\rangle$ is a copy of the $(2n-1)$-dimensional octahedron $\mathfrak{g}_{2n}$. $|z\rangle$ is topologically homeomoprhic to $S^{2n-1}$.
\begin{figure}[H]
\begin{center}
\includegraphics[scale=0.6]{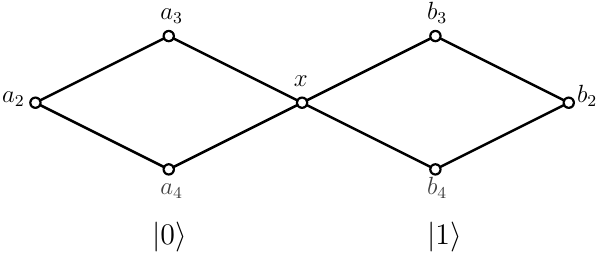}
\caption{The clique-complex that encodes a single-qubit.}
\label{fig:qubit_complex}
\end{center}
\end{figure}

\subsection{Single gadget} \label{single_gadget_sec}

First we focus on implementing a single term $\phi$ where $\phi=\ketbra{\phi}$. For this section we drop the subscript $i$ labelling the terms.

Let's first work on implementing the Hamiltonian term $\phi$ on the harmonic subspace $\mathcal{H}_m \subseteq \mathcal{C}^{2m-1}(\mathcal{G}_m)$, $\mathcal{H}_m \cong (\mathbb{C}^2)^{\otimes m}$ of the $m$-qubit graph. This will involve adding vertices and edges to $\mathcal{G}_m$, and these additional vertices and edges make up the gadget. Let the final graph be denoted $\hat{\mathcal{G}}_m$. Eventually we can then add the remaining $n-m$ qubits and find the relevant gadget for $\mathcal{G}_n$ by joining $\mathcal{G}_{n-m}$
\begin{equation*}
\hat{\mathcal{G}}_n = \hat{\mathcal{G}}_m \ast \mathcal{G}_{n-m}
\end{equation*}

Motivated by the equation $\langle\psi|\Delta^{\uparrow k}|\psi\rangle = ||d^k|\psi\rangle||^2$, we would like the gadget to be a \emph{triangulation} of a \emph{manifold} whose \emph{boundary} is precisely the cycle $|\phi\rangle$. This can be viewed topologically as \emph{filling in} the cycle $|\phi\rangle$. 
Throughout this section $\J$ refers to the cycle we want to fill in with the gadget (i.e. if we are constructing a gadget for the projector $\ketbra{\phi}{\phi}$ then $\J$ is the cycle that corresponds to the state $\ket{\phi}$), and $\mathcal{J}^0$ refers to the vertices in $\mathcal{J}$.
The procedure is as follows.
\begin{enumerate}
    \item \label{it:identify} Construct a clique complex $\mathcal{K}$ which is a triangulation of $S^{2m-1}$ and a relation:
    \begin{equation*}
        R \subseteq \{(z_i,z_j)| z_i,z_j \in \mathcal{K}^0\}
    \end{equation*}
    such that the map
    \begin{equation} \label{eq:f}
    \begin{split}
        &f:\Kcyc^0 \rightarrow \J^0 \\
       &f(z_i) = z_j \textrm{\ \ where \ \ } (z_i,z_j) \in R
        \end{split}
    \end{equation}
        is a surjective function, and the simplicial complex given by:
        \begin{equation}\label{eq:f complex}
        \{f(\sigma) \cap \J^0 | \sigma \in \Kcyc \}
        \end{equation}
        is a copy $\mathcal{J}$ of the cycle $|\phi\rangle$.\footnote{See the end of this section for a an example of how the function $f$ is applied and conditions on $R$ for $f$ to be a surjective function.} 
    \item Let $\mathcal{L}$ be the thickening of $\mathcal{K}$ described by \Cref{thickening_lem_1}. The vertices of $\mathcal{L}$ are $\mathcal{L}^0 = \mathcal{K}^0 \times \{0,1\}$. By \Cref{thickening_lem_2}, $\mathcal{L}$ is the clique complex of some graph. This creates a thickened $S^{2m-1}$ shell with outside layer $\mathcal{K}^0 \times \{0\}$ and inside layer $\mathcal{K}^0 \times \{1\}$.
    \item Add a central vertex $v_0$ which connects to all vertices of the inside layer $\mathcal{K}^0 \times \{1\}$. By the definition of clique complexes (\Cref{sec:clique}), this automatically introduces into the complex all the simplices which contain these new edges. We will refer to this process as \emph{coning off the cycle}.\footnote{The terminology arises from the concept of constructing a \emph{mapping cone} -- the cycle $\mathcal{L}$ can be represented as a map of a sphere onto the complex the process of adding this central vertex and the associated simplices can be seen as constructing the mapping cone.} Denote the simplicial complex at this step by $\hat\Kcyc$.
    \item Let $\mathcal{V} = \{\J^0 \cup \Kcyc^0 \times \{1\} \cup v_0\}$. Apply $f(\cdot)$ to $\mathcal{K}^0 \times \{0\}$ and construct the cycle:
    \begin{equation}\label{eq:f complex 2}
    \tilde \Kcyc = \{f(\sigma) \cap \mathcal{V} | \sigma \in \hat \Kcyc  \}
    \end{equation}
    as set out in \cref{eq:f complex} to get $\mathcal{J}$, a copy of the cycle $|\phi\rangle$.
    \item The weights of the vertices in the outer layer $\mathcal{J}^0$, which will be identified with vertices in the qubit graph $\mathcal{G}_m$, remain set to $1$.
    \item The weights of the inside layer $\mathcal{K}^0 \times \{1\}$ and the central vertex $v_0$ are set to $\lambda$ where $\frac{1}{\poly{n}} \leq \lambda \ll 1$.
    \item\label{final_step} The above forms the gadget. It remains to simply glue $\mathcal{J}$ onto the cycle $|\phi\rangle \in \Cl(\mathcal{G}_m)$ by identifying the vertices in $\J^0$ with the equivalent vertices in $\Cl(\mathcal{G}_m)$. 
    In \cref{sec:gadget proofs} we prove that the resulting complex is the clique complex of some new graph.
    Let the new graph with the gadget glued in be denoted $\hat{\mathcal{G}}_m$.
\end{enumerate}

The above procedure provides a method to fill-in any cycle, provided it is possible to complete \Cref{it:identify}.
For basis states \Cref{it:identify} is trivial -- the cycles themselves are already triangulations of $S^{2m-1}$ and clique complexes.
For arbitrary states we do not have a general method for carrying out the procedure.
However, for integer states (see \Cref{integer_state_def}) we can construct a general method for \Cref{it:identify}.
The method relies on the simplicial surgery techniques, first introduced in \cite{crichigno2022clique}.

\textbf{Note on the function $f(\cdot)$:} The function $f(\cdot)$ acts on the \emph{vertices} of a simplicial complex.
We then construct a new simplicial complex out of the new set of vertices, according to the method set out in \cref{eq:f complex}.
To see how this works explicitly we consider a simple example.
Take a simplicial complex, specified by its maximal faces: 
\begin{equation}
K = \{[x_0x_1x_2x_3],[x_0'x_1'x_2'x_3],[x_2x_3x_4x_5],[x_0x'_1x'_2,x_3],[x_0'x_1'x_1x_2'x_2x_3x_4x_5]\}
\end{equation}
Define a relation:
\begin{equation}
R = \{(x_i',x_i)| i \in [0,2] \} \cup \{(x_i,x_i)| i \in [0,5] \}
\end{equation}
Acting with $f(\cdot)$ on $K$ then gives a simplicial complex defined by maximal faces:
\begin{equation}
f(K) = \{[x_0x_1x_2x_3],[x_2x_3x_4x_5],[x_0x_1x_2,x_3],[x_0x_1x_2x_3x_4x_5]\}
\end{equation}
Note, that any simplices $\sigma \in K$ which do not contain any pairs from the relation $R$ map to simplices of the same dimension under $f(\cdot)$ (the vertices may have changed, but the number of vertices in the simplex is unchanged).
Simplices $\sigma \in K$ which contain one or more pairs of vertices from $R$ will map under $f(\cdot)$ to lower dimensional simplices, because simplices cannot contain two of the same vertex, so the number of vertices remaining in the simplex has decreased.

\textbf{Note on the relation R}: In order for $f(\cdot)$ to be a surjective function $f:\mathcal{K}^0 \rightarrow J^0$ as required the relation $R$ must satisfy:
\begin{itemize}
\item $R$ must be functional, i.e. for all $x,y,z \in \mathcal{K}^0$, $(x,y) \in R$ and $(x,z) \in R$ implies $y=z$
\item for all $y \in \mathcal{J}^0$ there must exist $x \in \mathcal{K}^0$ such that $(x,y) \in R$
\end{itemize}

\subsection{Constructing the gadgets for integer states} \label{constructing_cycles_sec}

Before describing the gadgets for the states in \Cref{table:states} we will outline some examples covering the cases up to $m=4$.
The only gadgets we actually need are those for $m=3$ and $m=4$. 
But we include the cases $m=1$ and $m=2$ for the reader to gain intuition about the construction before we tackle the more complicated situations.
The proof that the constructions we have outlined can be applied to all the states in \Cref{table:states} and fill in the desired states is given in \Cref{sec:gadget proofs}.

In \Cref{app:gadgets} we outline a general method to construct gadgets for arbitrary integer states. 
Throughout this section, and in \Cref{app:gadgets} we will use the notation from \Cref{fig:qubit_complex} to label the vertices of the copy of $\mathcal{G}_1$ corresponding to the first qubit.
For states of more than one qubit we will denote the vertices of the second copy of $\mathcal{G}_1$ (corresponding to the second qubit) by $x',a_1',a_2' \dots b_4'$, the vertices of the third copy of $\mathcal{G}_1$ by $x'',a_1'',a_2'' \dots b_4''$, etc.

\subsubsection{Single-qubit basis states}

As already stated, constructing the cycles for computational basis states is trivial.
We simply take $\mathcal{K} = \mathcal{J}$ where for single-qubits $\mathcal{J}$ is either the left or right cycle of the bow-tie.
To lift the state we could then add a single mediator - as in \Cref{fig:qubit_lift_zero_simple}.
While this method works perfectly for lifting basis states (in any dimension) it does not generalise to lifting more complicated states.
This is because when filling in more complicated cycles it is possible to fill in additional unwanted cycles using this simple method.
To avoid this we apply the \emph{thickening} method before adding the central vertex, as outlined in \Cref{single_gadget_sec}.

\begin{figure}[H]
\begin{center}
\includegraphics[scale=0.6]{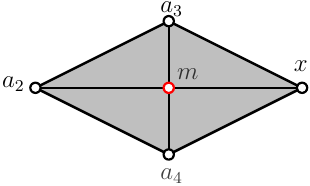}
\caption{The simplest method for `lifting' the $|0\rangle$ state.}
\label{fig:qubit_lift_zero_simple}
\end{center}
\end{figure}

In the case of the $\ketbra{0}{0}$ state the full method gives the clique complex shown in \Cref{fig:qubit_lift_zero_reall}.
The graph which results in this clique complex is simply the 1-skeleton of the complex shown in \Cref{fig:qubit_lift_zero_reall}.

\begin{figure}[H]
\begin{center}
\includegraphics[scale=0.4]{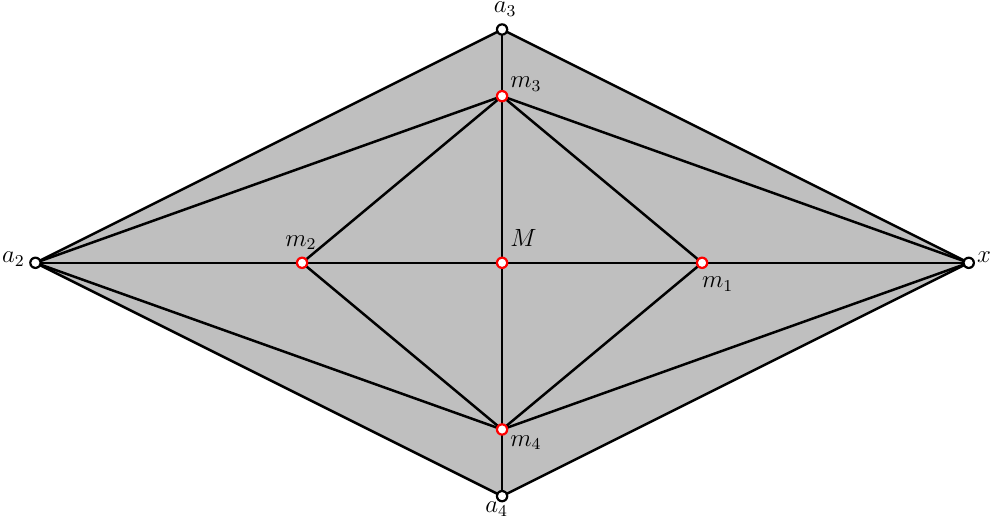}
\caption{The method we actually use for `lifting' the $|0\rangle$ state. This figure corresponds to applying the `thickening' procedure (outlined in \Cref{sec:thickening}) before adding the central vertex, as detailed in \Cref{single_gadget_sec}. This standard procedure works for all cycles (whereas the method used in \Cref{fig:qubit_lift_zero_simple} only works for basis states). Using a single method that works for arbitrary cycles simplifies the spectral sequence analysis in the proof of \Cref{spec_seq_lemma}.}
\label{fig:qubit_lift_zero_reall}
\end{center}
\end{figure}

\subsubsection{Superposition of two single-qubit basis states}

Now we consider slightly more complicated single-qubit states which are a superposition of two single-qubit basis states.
We will take $|-\rangle = |0\rangle - |1\rangle$ as an example (ignoring normalization).
We need to construct a triangulation of $S^1$, $\mathcal{K}$, where we can identify certain vertices to give two copies of $S^1$, concatenated with opposite orientation.
The two copies of $S^1$ share a single vertex, $x$ (\emph{all} single-qubit basis states contain the vertex $x$, so this cutting procedure works for \emph{any} single-qubit basis state).
So to construct the cycle $\Kcyc$ we split open the $|0\rangle$ and $
|1\rangle$ cycles by breaking the $x$ vertex into two -- one labeled $x$ and one labeled $x_1$.
Since we want to add the cycles with opposite orientations we choose opposite labelling for the $x$ and $x_1$ vertices in the two cycles (see \Cref{fig:filling_in_minus_copy_2}).

\begin{figure}[H]
\begin{center}
\includegraphics[scale=0.4]{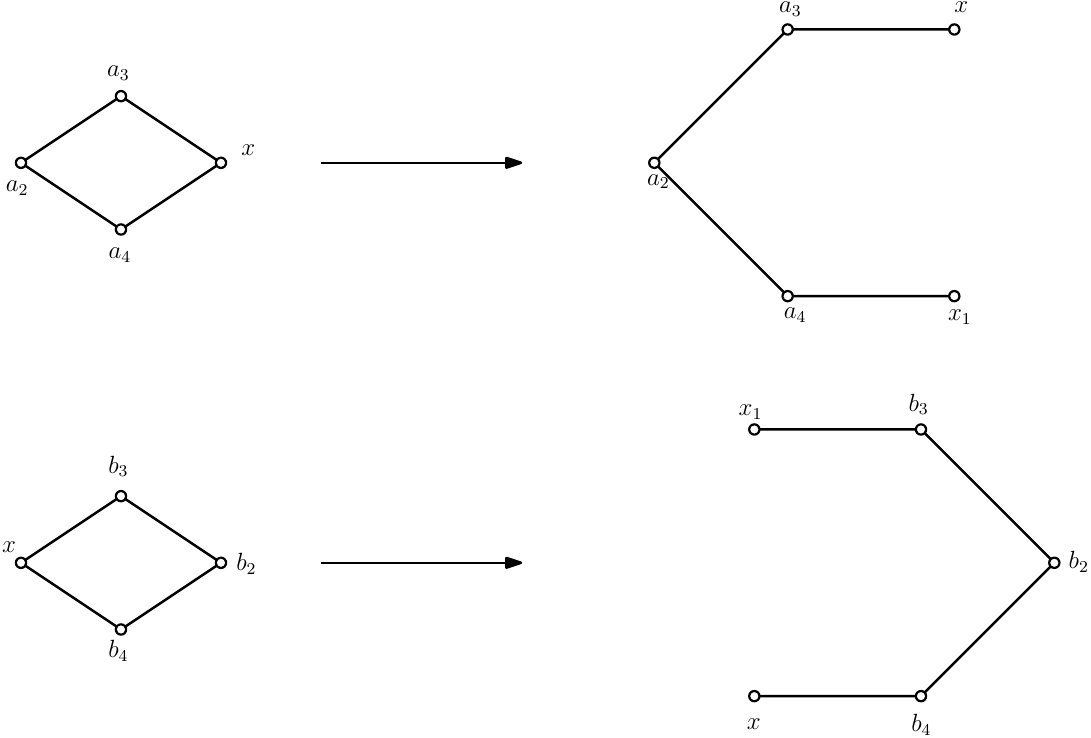}
\caption{To construct $\Kcyc$ for the $\minusST$}
\label{fig:filling_in_minus_copy_2}
\end{center}
\end{figure}

We then glue together the two cycles by identifying the $x$ and $x_1$ vertices from the different cycles with each other.
This gives the cycle $\Kcyc$ shown in \Cref{fig:qubit_lift_minus 1}.
Crucially the cycle $\Kcyc$ is a copy of $S^1$, so we can apply the method from \Cref{single_gadget_sec} to fill it in.

\begin{figure}[H]
\begin{center}
\includegraphics[scale=0.4]{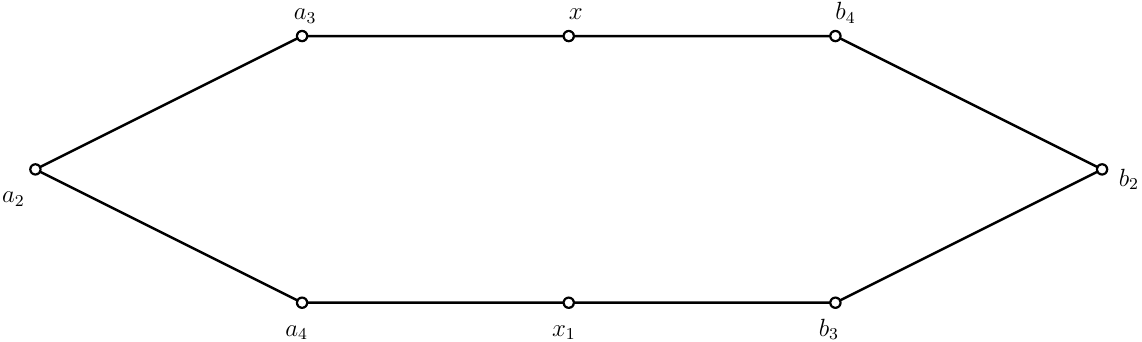}
\caption{This complex $\Kcyc$ is a triangulation of $S^1$ that gives the $|-\rangle$ cycle when the function $f(\cdot)$ is applied to it with the relation from \Cref{eq:R}.}
\label{fig:qubit_lift_minus 1}
\end{center}
\end{figure}

Applying the thickening and coning off procedure from \Cref{single_gadget_sec} to $\Kcyc$ gives the simplicial complex shown in \Cref{fig:qubit_lift_minus}.
To complete the gadget we have to construct a relation $R$ such that applying the function $f(\cdot)$ from \Cref{eq:f} to the outer vertices gives a copy of the cycle corresponding to the $\ket{-}$ state.
The relation is given by:
\begin{equation}\label{eq:R}
R = \{(x_1,x)\} \cup \{(v,v)|v\in \J^0\}
\end{equation}
It is straightforward to check that with this $R$, applying the function from \Cref{eq:f} to $\mathcal{K}^0$ gives the required cycle.

\begin{figure}[H]
\begin{center}
\includegraphics[scale=0.4]{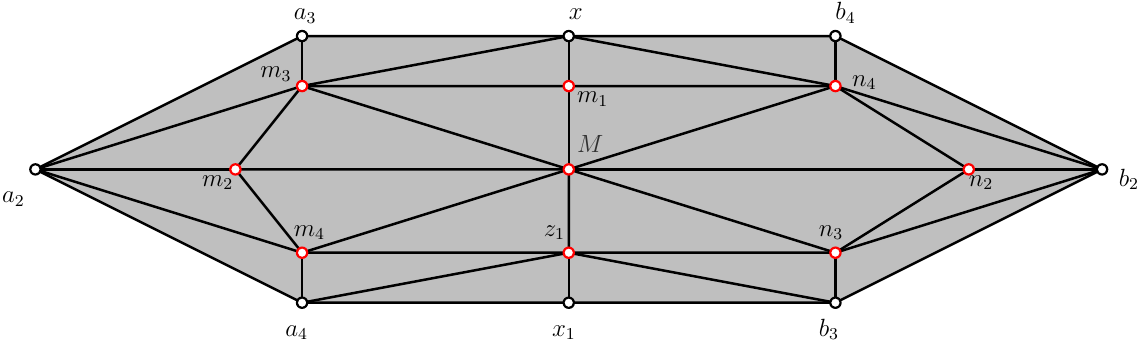}
\caption{The complex that fills in the $|- \rangle$ state \emph{before} we apply the function $f(\cdot)$ to the outer layer.}
\label{fig:qubit_lift_minus}
\end{center}
\end{figure}

\subsubsection{Superposition of more than two single-qubit basis states}

The final single-qubit example we will consider involves adding three computational basis states.
For concreteness we will consider the state $|0\rangle + 2|1\rangle$, but the method applies equally well to any other state involving three or more computational basis states.
To construct $\Kcyc$ for the state $|0\rangle + 2|1\rangle$ we now need to cut open two copies of the $\1$ cycle.
We label the $b_i$ vertices of the second copy of the $\1$ cycle by $b_{i,1}$.
When we cut open the $\z$ cycle and $\1$ cycles now we label the $x$ vertices of the $\z$ cycle by $x$ and $x_2$, and the $x$ vertices of the two $\1$ cycles are labelled by $x$, $x_1$ and $x_1$, $x_2$ respectively.
To glue the cycles together we identify the $x$, $x_1$ and $x_2$ vertices from the different cyclces, as shown in \Cref{fig:qubit_lift_minus minus}.
We can then apply the thickening procedure to $\Kcyc$, before adding a central vertex (as outlined in \Cref{single_gadget_sec}) to fill in $\Kcyc$. 
The final step in constructing the gadget is then to construct the relation $R$ and apply the function $f(\cdot)$ from \Cref{eq:f}.
For this example $R$ is given by:
\begin{equation}
R = \{(x_i,x)|i \in [1,2])\} \cup \{(b_{i,1},b_i)| i \in [2,4]\} \cup \{(v,v)| v\in \J^0\}
\end{equation}

\begin{figure}[H]
\begin{center}
\includegraphics[scale=0.4]{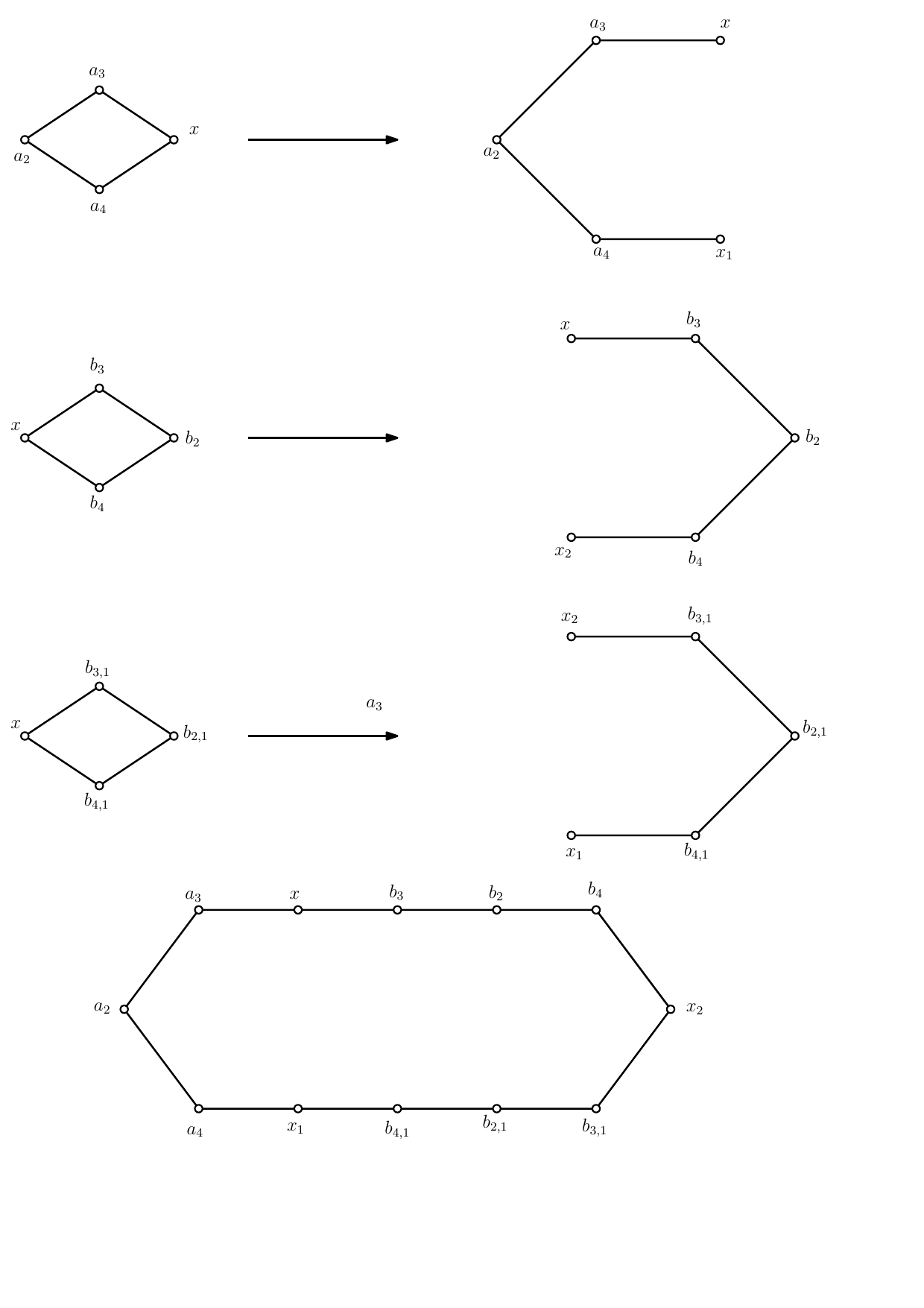}
\caption{Constructing a triangulation of $S^1$ that gives a copy of the cycle $|0\rangle + 2|1\rangle$ when the function $f(\cdot)$ is applied.}
\label{fig:qubit_lift_minus minus}
\end{center}
\end{figure}

\subsubsection{Two-qubit basis states}

The two-qubit basis states are triangulations of $S^3$, \Cref{fig:00cycle}.
As in the single-qubit case, the two-qubit basis states can be filled in by applying the thickening and coning off procedure from \Cref{single_gadget_sec} directly to the state itself.

\begin{figure}[H]
\begin{center}
\includegraphics[scale=0.4]{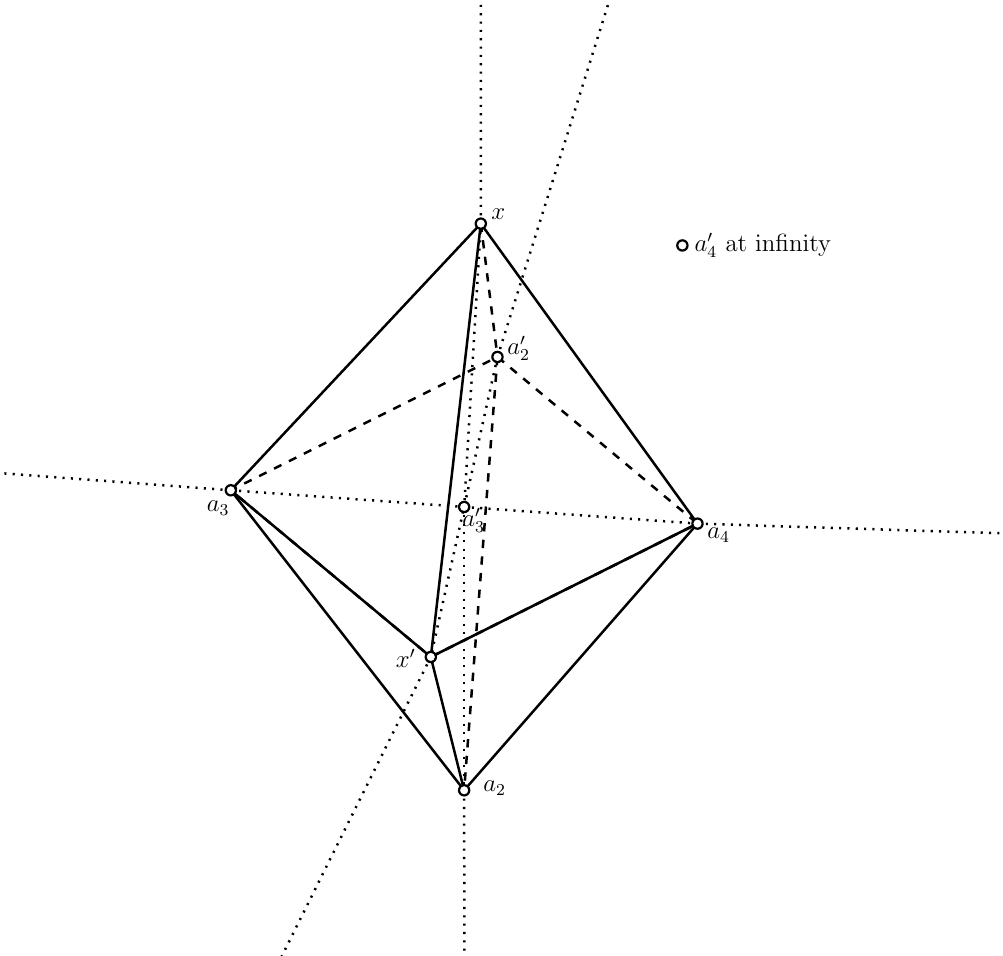}
\caption{The two-qubit basis states are 3-cycles. More specifically, they are the boundaries of 4-dimensional cross-polytopes (also known as 16-cells). These are 4-dimensional generalisations of octahedron, with eight vertices - equivalently they are bipiramids with an octahedron base. Their connectivity can be visualised by taking an octahedron, and adding one of the additional vertices in the centre, and the other additional vertex at infinity. The two additional vertices are connected to all the original octahedron vertices, but not to each other.}
\label{fig:00cycle}
\end{center}
\end{figure}

\subsubsection{Superposition of two two-qubit basis states}
\label{sec:2 qubit cutting}
Moving on to more complicated two-qubit states.
For concreteness we will consider the state $\ket{00}-\ket{11}$, but the same cutting and gluing procedure applies to any state comprised of two two-qubit computational basis states.
We now need to construct a triangulation of $S^3$ where certain vertices can be identified to give two copies of $S^3$.

The two copies of $S^3$ share the edge $[xx']$. In fact, \emph{all} two-qubit basis states contain the edge $[xx']$, so this cutting procedure works for \emph{any} two-qubit basis state.
In order to glue together the two copies of $S^3$ we will cut open this edge, and glue the two copies along the cut.
The portion of the  $|00\rangle$ cycle that includes the edge $[xx']$ is shown in \Cref{fig:singlet_one}.

\begin{figure}[H]
\begin{center}
\includegraphics[scale=0.4]{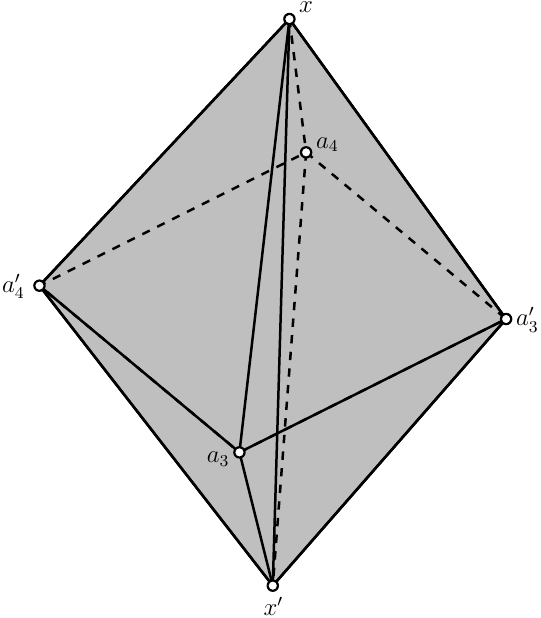}
\caption{All two-qubit cycles share the 1-simplex $[xx']$. We will `glue together' two-qubit cycles to form entangled states by cutting open this edge, and gluing along the cut. In this figure we visualise the portion of the $|00\rangle$ cycle that includes the 1-simplex $[xx']$.}
\label{fig:singlet_one}
\end{center}
\end{figure}

In order to glue together two 3-cycles along the $[xx']$ edge we need to cut the edge into an open 2-cycle.
We do that by introducing 4 new dummy vertices, $x_i$ for $i \in [1,4]$ as shown in \Cref{fig:singlet_two}.
We have to take care to connect the dummy vertices up to the $a_j$ or $b_j$ vertices surrounding the $[xx']$ edge correctly in order to make sure that the two cycles are glued together correctly, and no additional holes are introduced.
To see how we do that we will take the $
|00\rangle$ state as an example.
As can be seen in \Cref{fig:singlet_two} there is a square loop $[a_3a_4']+[a_4'a_4]+[a_4a_3']+[a_3'a_3]$ (a $\mathfrak{g}_2$) that surrounds $[xx']$ where each vertex of the vertices that make up the $\mathfrak{g}_2$ are connected to both $x$ and $x'$.
To complete our cutting open of the $|00\rangle$ cycle we associate each of the $x_i$ with one of the edges of the $\mathfrak{g}_2$, and connect the $x_i$ up to the two vertices that make up that edge (see \Cref{fig:singlet_three} and \Cref{fig:singlet_final} for an illustration).

So far we have been using the $|00\rangle$ cycle as an illustration.
We now apply exactly the same procedure to the $|11\rangle$ cycle - see \Cref{fig:singlet_final really}.\footnote{In this example we are adding the two cycles with opposite signs, so to add them with the same sign we reverse the orientation of the $[x_1x_2]+[x_2x_3]+[x_3x_4]+[x_4x_1]$ $\mathfrak{g}_2$ for one of the cycles}.
We then glue the two cycles together by identifying the $x_i$ from the two cycles.
This gives us our $\mathcal{K}$ cycle.
We then apply the thickening and coning off procedure from \Cref{single_gadget_sec}.
The final step in constructing the gadget is to construct the relation $R$ and apply the function $f(\cdot)$ from \Cref{eq:f} to the outer vertices to give the $\J$ cycle, which can be glued onto the original graph.
The relation for this example is given by:
\begin{equation}\label{eq:R1}
R = \{(x_i,x)|i \in [1,4]\} \cup \{(v,v)| v\in \J^0\}
\end{equation}

\begin{figure}[H]
\begin{center}
\includegraphics[scale=0.4]{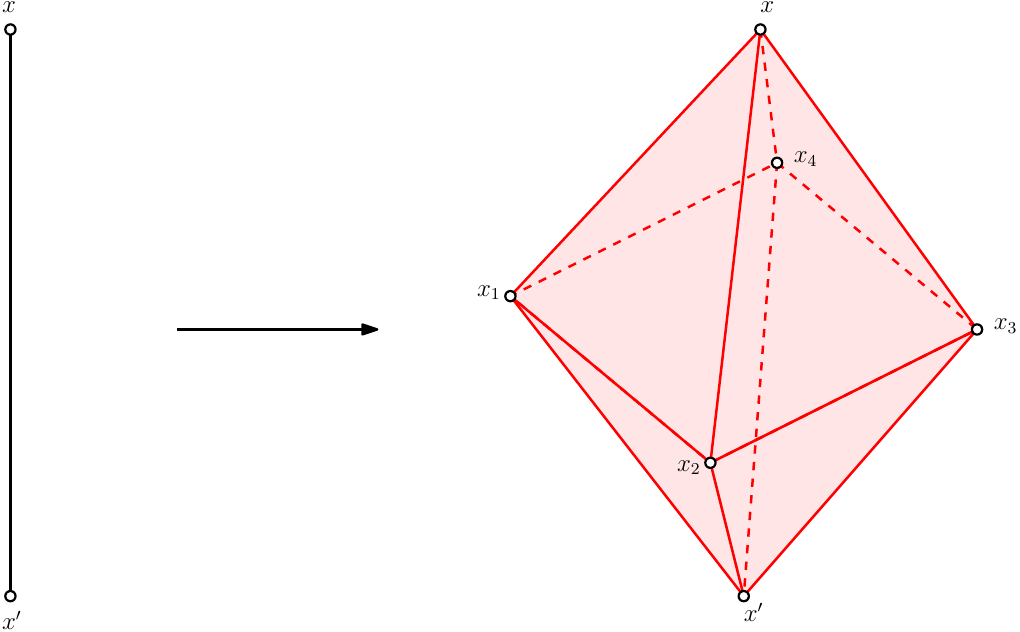}
\caption{In order to glue two 3-cycles along the 1-simplex $[xx']$ we need to cut $[xx']$ open to form an open 2 cycle. We do that by introducing four new vertices - $x_i$ for $i \in [1,4]$, and forming an octahedron by taking a square formed of the vertices $x_i$ and taking this as the base of a bipiramid with $x$, $x'$ being the final two vertices. We will later identify all the dummy vertices $x_i$ with the original vertex $x$ by applying the function $f(\cdot)$ with the relation from \Cref{eq:R1}.}
\label{fig:singlet_two}
\end{center}
\end{figure}

\begin{figure}[H]
\begin{center}
\includegraphics[scale=0.4]{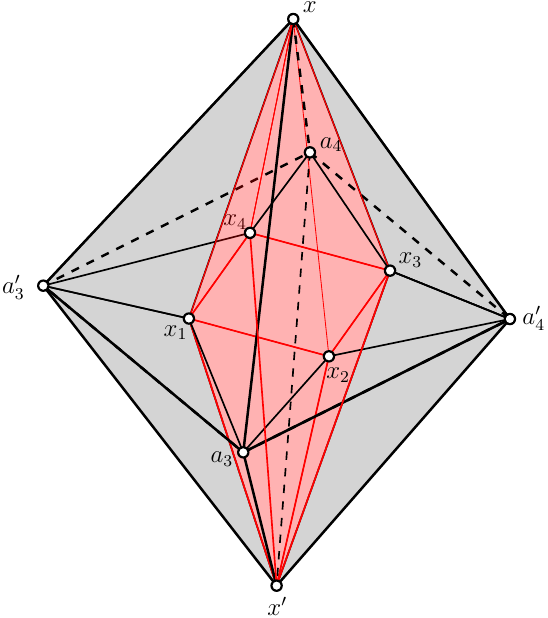}
\caption{When we cut open the 1-simplex $[xx']$ using the dummy vertices $x_i$ for $i \in [1,4]$ we have to take care to connect up the dummy vertices to real vertices so that we do not introduce extra holes. Each dummy vertex is associated to one of the four 3-simplexes that included the 1-simplex $[xx']$ in the original complex. We then connect the dummy vertex to all vertices from the associated 3-simplex. }
\label{fig:singlet_three}
\end{center}
\end{figure}

\begin{figure}[H]
\begin{center}
\includegraphics[scale=0.4]{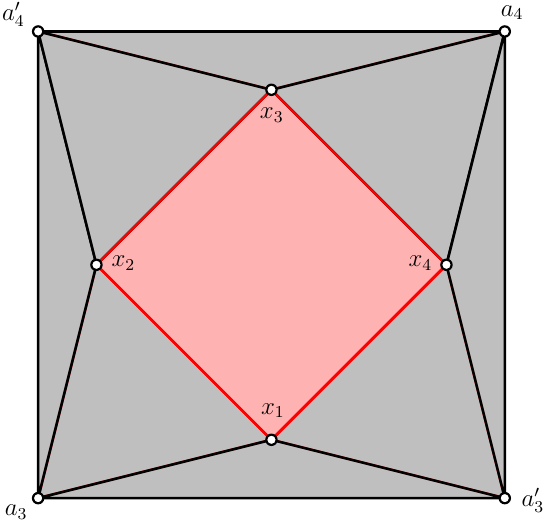}
\caption{A cross section of \Cref{fig:singlet_three} showing more clearly how the `dummy' vertices $x_i$ are connected to the original qubit vertices. Note - we use the $|00\rangle$ state as an example, but the procedure would be identical for cutting open any other basis state (simply replace $a$ ($a'$) vertices with $b$ ($b'$) vertices to change the state of the first (second) qubit to $|1\rangle$).}
\label{fig:singlet_final}
\end{center}
\end{figure}

\begin{figure}[H]
\begin{center}
\includegraphics[scale=0.4]{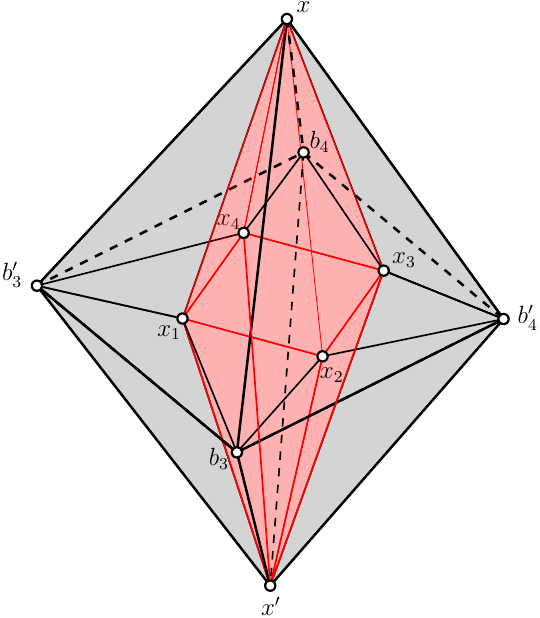}
\caption{To glue together the $|00\rangle$ and $|11\rangle$ cycles with opposite orientations we cut open the $|11\rangle$ cycle using the same method, with the same dummy vertices in the same orientation (if we wanted to construct the projector onto the state $|00\rangle + |11\rangle$ we would reverse the orientation of the dummy vertices for one of the cycles by swapping the positions of the $x_2$ and $x_4$ vertices).}
\label{fig:singlet_final really}
\end{center}
\end{figure}

\subsubsection{Superposition of three two-qubit basis states}

To construct the $\Kcyc$ for the state $|00\rangle + 2|11\rangle$ we now need to take two copies of the $|11\rangle$ cycle.
We label the $b_i$ vertices in the second cycle by $b_{i,1}$.
All the cycles are again cut open by introducing 4 dummy vertices to open the $[xx']$ edge into an open 2-cycle.
The dummy vertices in the $|00\rangle$ cycle are labelled by $x_1,x_2,x_3,x_4$.
The dummy vertices in the first $|11\rangle$ cycle are labelled by $x_2,x_3,x_5,x_6$. 
The dummy vertices in the second $|11\rangle$ cycle are labelled by $x_5,x_6,x_1,x_4$.
As usual, we then glue the cycles together by identifying the vertices with the same labels from the different cycles \Cref{fig: 00+ 2 11 main}.

To visualise what is happening when we glue the cycles together it is instructive to consider just the $\mathfrak{g}_2$ composed of the dummy vertices in each cycle.
For illustrative purposes we can consider these as open 1-cycles on the surface of a 2-sphere.
By identifying the vertices in the way outlined in the previous section we would be opening up a `viewing platform' that connects the three 2-spheres, as shown in \Cref{fig:multiples 1}.
What we are in fact doing is opening up a `viewing platform' in one dimension higher -- with the extra dimension coming from the $x,x'$ vertices that are common to all the cycles.

As usual, we complete the gadget applying the thickening and coning off procedure from \Cref{single_gadget_sec}, before applying the function $f(\cdot)$ from \Cref{eq:f} with the relation:
\begin{equation}\label{eq:R2}
R = \{(x_i,x)|i \in [1,6] \} \cup \{(b_{i,1},b_i)|i \in [2,4] \}\cup \{(b'_{i,1},b'_i)|i \in [2,4] \} \cup \{(v,v)| v\in \J^0 \}
\end{equation}

\begin{figure}[H]

\begin{minipage}{.5\linewidth}
\centering
\subfloat[]{\label{main:a}\includegraphics[scale=0.4]{figures/two_qubit_splitting_even_more_detail.pdf}}
\end{minipage}%
\begin{minipage}{.5\linewidth}
\centering
\subfloat[]
{\label{main:b}\includegraphics[scale=0.4]{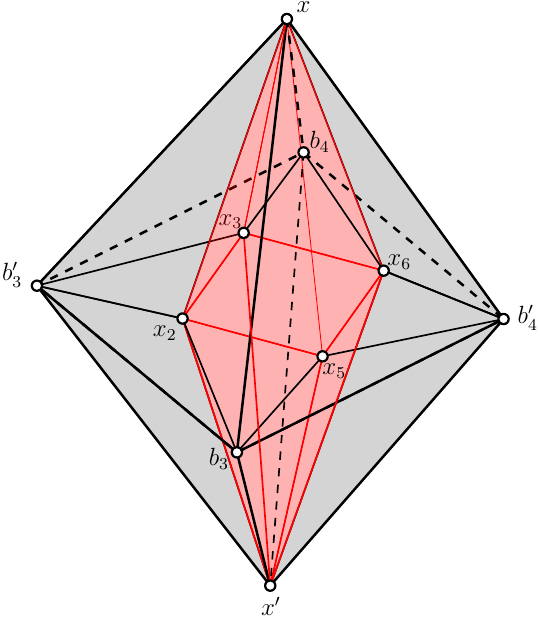}}
\end{minipage}\par\medskip
\centering
\subfloat[]{\label{main:c}\includegraphics[scale=0.4]{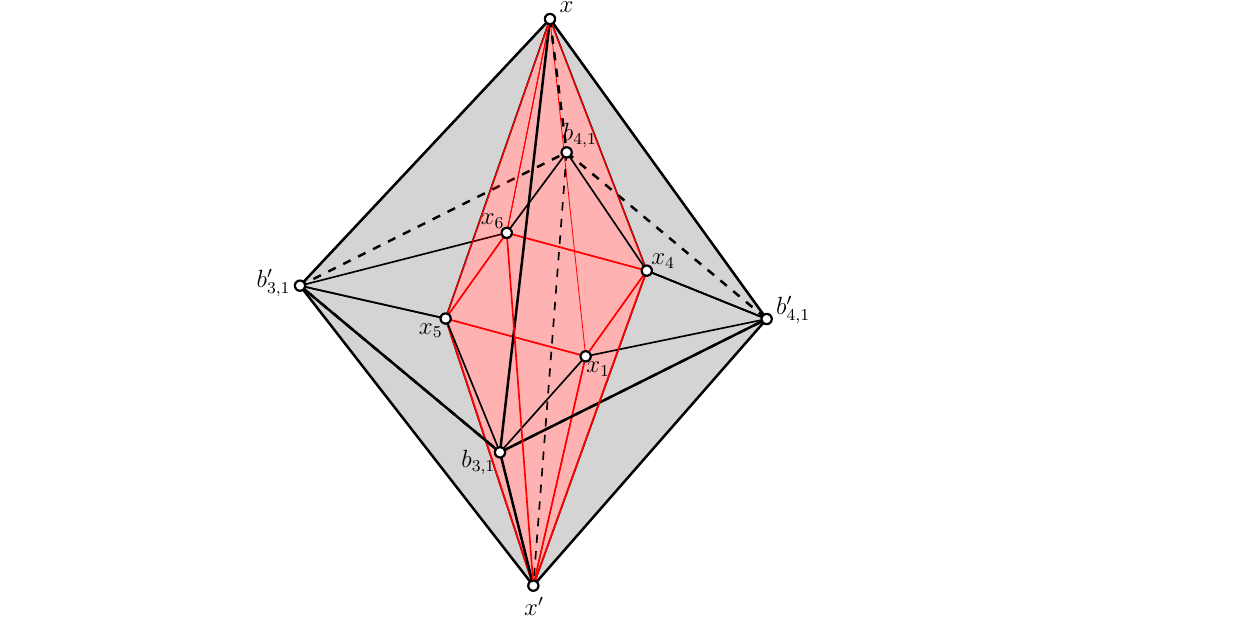}}

\caption{To construct the $\Kcyc$ for the $|00\rangle + 2|11\rangle$ cycle we now need to take two copies of the $|11\rangle$ cycle. We label the dummy vertices as outlined in the main text, and label the $b_i$ and $b_i'$ vertices of the second copy of the $|11\rangle$ cycle by $b_{i,1}$ and $b_{i,1}'$. To construct the $\J$ cycle from the $\Kcyc$ we apply the function $f(\cdot)$ with the relation in \Cref{eq:R2}.}
\label{fig: 00+ 2 11 main}
\end{figure}

\begin{figure}[H]
\begin{center}
\includegraphics[scale=0.6]{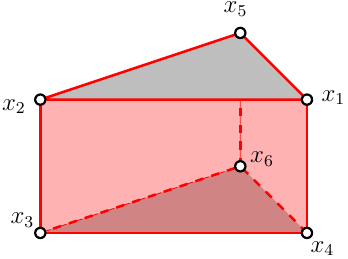}
\caption{To visualise what is happening when we glue together the three copies of $S^3$ it is useful to consider things in one dimension lower. We can do that by considering gluing together three copies of $S^2$ by opening up $\mathfrak{g}_2$ cycles in each of them. If we label the $\mathfrak{g}_2$ in the same way as we do the dummy vertices from \Cref{fig: 00+ 2 11 main} then we construct a wedge shaped `viewing platform' as shown here, which connects the three copies of $S^2$. In reality of course we are doing this in one dimension higher - and the extra dimension is coming from the $x,x'$ vertices which are common to all the cycles.}
\label{fig:multiples 1}
\end{center}
\end{figure}

\subsubsection{Superposition of more than three two-qubit basis states}

For the final example with two qubits we consider the state $|00\rangle + 3|11\rangle$.
The method is largely the same as in the previous section, where now we have three copies of the $|11\rangle$ cycle, and the labelling of the vertices is updated accordingly (see \Cref{fig: 00+ 3 11 main}).

However, there is a slight subtlety which appears when adding together four copies of $S^3$ (and remains for any larger number of copies).
Now when we construct the `viewing platform' to glue together the many copies, the roof and ceiling of the viewing platform are no longer triangles, they are now squares \Cref{fig:multiples 2}.
Therefore, we need to add two additional dummy vertices, and connect these dummy vertices up to the already existing dummy vertices as shown in \Cref{fig:multiples 2} to finish the gluing procedure.
We label these final dummy vertices by $x_9,x_{10}$.
As before, we apply the thickening and coning off procedure from \Cref{single_gadget_sec} before applying the function $f(\cdot)$ from \Cref{eq:f} with the relation:
\begin{equation}\label{eq:R3}
\begin{split}
R =& \{(x_i,x)|i \in [1,10] \} \cup \{(b_{i,j},b_i)|i \in [2,4], j \in [1,2] \}\\
&\cup \{(b'_{i,j},b'_i)|i \in [2,4], j \in [1,2] \} \cup \{(v,v)| v\in \J^0\}
\end{split}
\end{equation}

\begin{figure}[H]
\begin{minipage}{.5\linewidth}
\centering
\subfloat[]{\label{main2:a}\includegraphics[scale=0.4]{figures/two_qubit_splitting_even_more_detail.pdf}}
\end{minipage}%
\begin{minipage}{.5\linewidth}
\centering
\subfloat[]
{\label{main2:b}\includegraphics[scale=0.4]{figures/11_reverse.pdf}}
\end{minipage}\par\medskip
\begin{minipage}{.5\linewidth}
\centering
\subfloat[]{\label{main2:c}\includegraphics[scale=0.4]{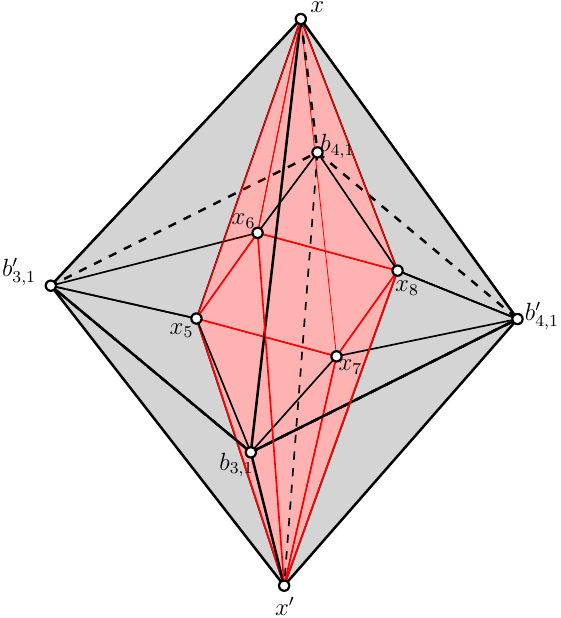}}
\end{minipage}%
\begin{minipage}{.5\linewidth}
\centering
\subfloat[]{\label{main2:d}\includegraphics[scale=0.4]{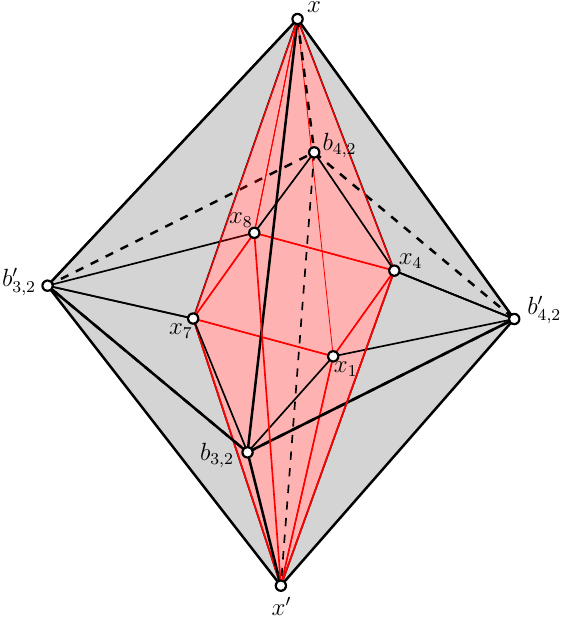}}
\end{minipage}
\caption{To construct the $\Kcyc$ for the $|00\rangle + 3|11\rangle$ cycle we now need to take three copies of the $|11\rangle$ cycle. 
The dummy vertices of the second $|11\rangle$ cycle are now labelled by $x_5,x_6,x_7,x_8$ while those of the third cycle are labelled by $x_7,x_8,x_1,x_4$.
We label the $b_i$ and $b_i'$ vertices of the third copy of the $|11\rangle$ cycle by $b_{i,2}$ and $b_{i,2}'$. To construct the $\J$ cycle from the $\Kcyc$ we apply the function $f(\cdot)$ with the relation in \Cref{eq:R3}.}
\label{fig: 00+ 3 11 main}
\end{figure}

\begin{figure}[H]
\begin{center}
\includegraphics[scale=0.6]{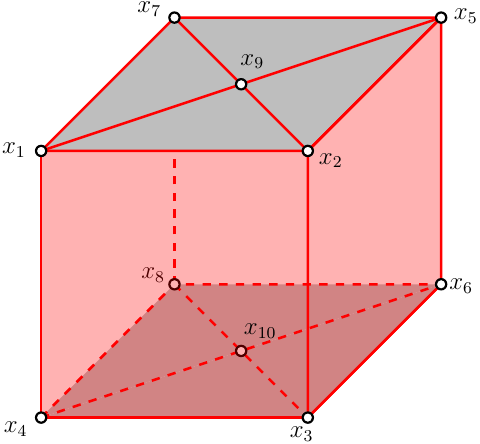}
\caption{As in the previous section, we can visualise what is occurring when we add the cycles together by considering adding together $S^2$ cycles which are being cut and glued along copies of $\mathfrak{g}_2$. When we have four or more cycles to glue together the `viewing platform' no longer has triangles for its floor and ceiling, instead it has squares. Therefore we need to introduce two final dummy vertices to close these squares and complete the gluing procedure.}
\label{fig:multiples 2}
\end{center}
\end{figure}

\subsubsection{Superposition of two three-qubit basis states}\label{sec:3 qubit cutting}

The three-qubit basis states are now triangulations of $S^5$.
The basis states themselves are, again, easy to handle.
To fill them in we simply apply the thickening and coning off procedure to the cycles themselves.
Tackling general integer states requires us to glue together triangulations of $S^5$ by cutting open $4$-cycles and gluing along them.

All the three-qubit basis states share the 2-simplex $[xx'x'']$ as shown in \Cref{fig:3 qubit initial}.
To cut this open we first cut open the $xx'$ edge by adding a dummy vertex $x_1$ in the middle as shown in \Cref{fig:3 qubit 2}.
This gives an open 1-cycle.
Before going further, we add edges between $x_1$ and every vertex that is connected to \emph{both} $x$ and $x'$.
For the $|000\rangle$ state this is the $a_2'',a_3'',a_4''$, the $a_3,a_4$ and the $a_3',a_4'$ vertices (shown in \Cref{fig:3 qubit 3}).

We will go from an open 1-cycle to an open 4-cycle by constructing an open 2-cycle.
We then use this 2-cycle as the base of a bi-pyramid with peaks $x$ and $x'$, giving an open 3-cycle.
Finally we use this bi-pyramid as the base of a second bi-pyramid with peaks $x''$ and $x_1$, giving an open 4-cycle - see \Cref{fig:3 qubit 4} 

\begin{figure}[H]
\begin{center}
\includegraphics[scale=0.4]{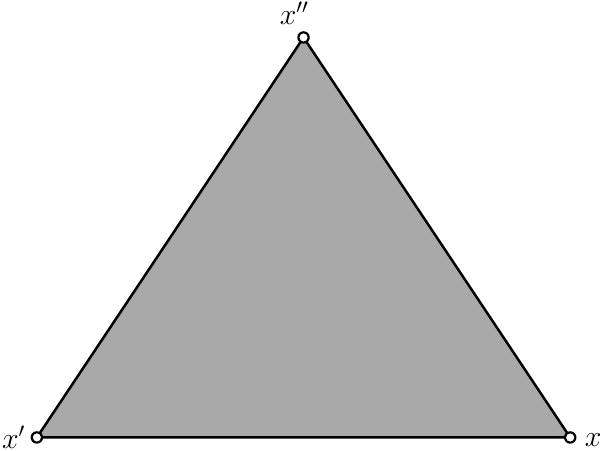}
\caption{All of the 5-cycles which encode the three-qubit basis states share this triangle.}
\label{fig:3 qubit initial}
\end{center}
\end{figure}

\begin{figure}[H]
\begin{center}
\includegraphics[scale=0.4]{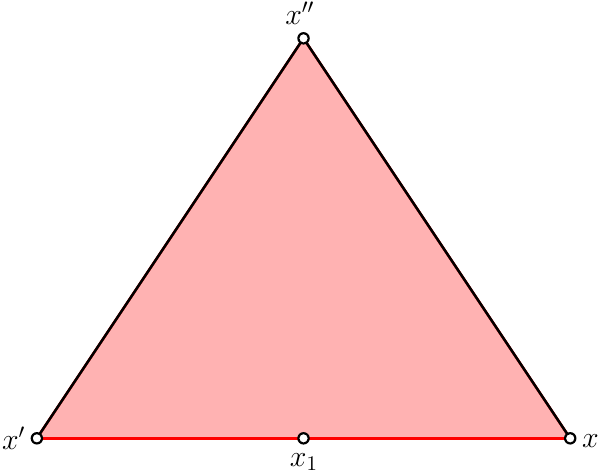}
\caption{To glue together two different three-qubit basis states we must cut open the common 2-simplex and create an open 4-cycle which we can glue the states along. The first step in this is to introduce a dummy vertex $x_1$ and use it to open the 2-simplex into a 1-cycle as shown.}
\label{fig:3 qubit 2}
\end{center}
\end{figure}

\begin{figure}[H]
\begin{center}
\includegraphics[scale=0.4]{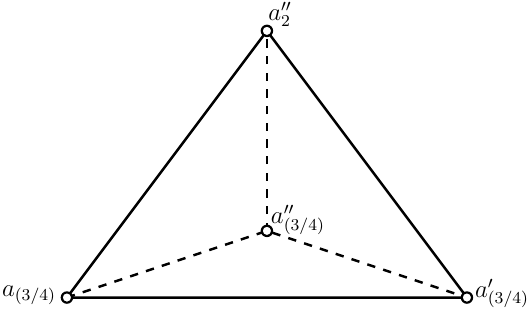}
\caption{The dummy vertex $x_1$ needs to be connected to every vertex that is connected to both $x$ and $x'$. The vertices that are connected to both form the 8 tetrahedron shown in this figure.}
\label{fig:3 qubit 3}
\end{center}
\end{figure}

\begin{figure}[H]
\begin{center}
\includegraphics[scale=0.4]{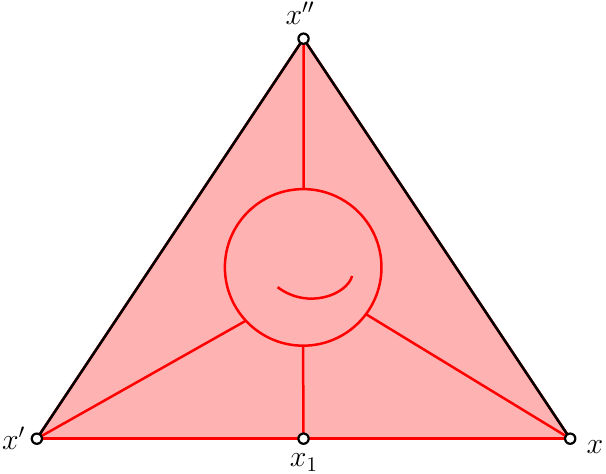}
\caption{To go from the open 1-cycle shown in \Cref{fig:3 qubit 2} to an open 4-cycle we will first construct an open 2-cycle. We will then use this 2-cycle as the base of a bi-pyramid with vertices $x$ and $x'$ as the peaks. This bi-pyramid is a 3-cycle. We will then use this 4-d bi-pyramid as the base of a second bi-pyramid that has vertices $x''$ and $x_1$ as the peaks. This gives an open 4-cycle as required.}
\label{fig:3 qubit 4}
\end{center}
\end{figure}

\begin{figure}[H]
\begin{center}
\includegraphics[scale=0.4]{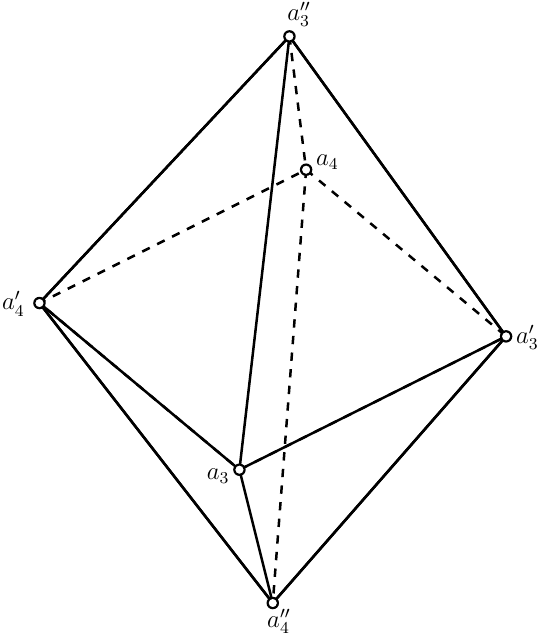}
\caption{To pick what 2-cycle should be used as the base of the first bi-pyramid we look at the vertices that are connected to all three of $x$, $x'$ and $x''$ in the original cycle. These vertices form an octahedron.}
\label{fig:3 qubit 5a}
\end{center}
\end{figure}

To pick what 2-cycle we should use as the base of the first bi-pyramid we look at the vertices that are connected to all three of $x$, $x'$ and $x''$ in the original cycle.
These will all need to be connected to the 2-cycle we use.
These vertices form an octahedron -- this octahedron is shown for the $|000\rangle$ state in \Cref{fig:3 qubit 5a}.

In analogy with the two-qubit case, we will pick a 2-cycle which is the dual of the octahedron shown in \Cref{fig:3 qubit 5a}. 
This gives a cube as shown in \Cref{fig:3 qubit 7}.
We can then connect up to the original vertices by assigning each vertex of the cube to one of the 2-simplices that triangulates the octahedron  see \Cref{fig:3 qubit 6}. 
The relationship between the cube and the octahedron can also be visualised in \Cref{fig:3 qubit 8}.

So far we have completed this process for the cycle corresponding to the $\ket{000}$ state.
If we wanted to construct the projector for the state $\ket{000} + \ket{010}$ we would then apply the same cutting procedure to the state $\ket{010}$.
Where we would label the vertices corresponding to the first qubit in the $\ket{010}$ state by $a_{i,1}$, the second qubit by $b'_i$ and the third qubit by $a'_{i,1}$.
The $x,x',x''$ and the dummy ($x_i$) vertices in the $\ket{010}$ cycle would have the same labels as those vertices in the $\ket{000}$ cycle.
We would then glue the two cycles together by identifying vertices with the same labels from the two cycles.
As usual, we complete the gadget by applying the thickening and coning off procedure, before applying the function $f(\cdot)$ from \Cref{eq:f} with the relation:
\begin{equation}\label{eq:R4}
\begin{split}
R = &  \{(x_i,x)|i \in [1,9] \} \cup \{(a_{i,1},a_i)|i \in [2,4] \} \\
&  \cup 
 \{(a''_{i,1},a''_i)|i \in [2,4] \}
 \cup \{(v,v)| v\in \J^0 \}
\end{split}
\end{equation}

\begin{figure}[H]
\begin{center}
\includegraphics[scale=0.4]{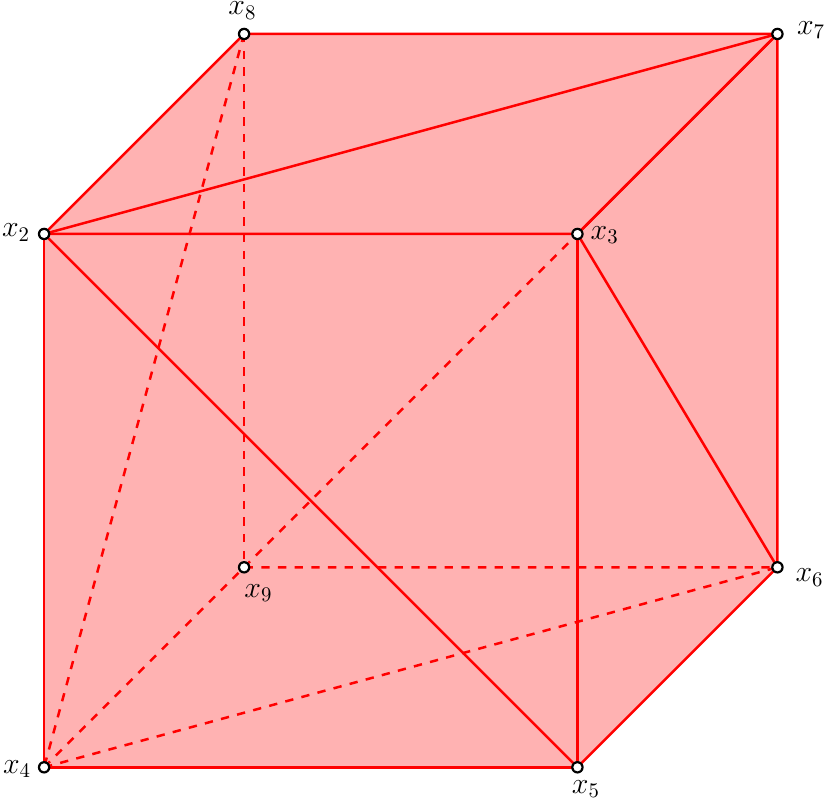}
\caption{We will introduce a 2-cycle which is the dual of the octahedron from \Cref{fig:3 qubit 5a}. This gives a cube as shown here. Using the dual polytope allows us to match up one dummy vertex with each 2-simplex that bounds the octaedron, in analogy with the two-qubit case.}
\label{fig:3 qubit 7}
\end{center}
\end{figure}

\begin{figure}[H]
\begin{minipage}{.5\linewidth}
\centering
\subfloat[]{\label{main8:a}\includegraphics[scale=0.4]{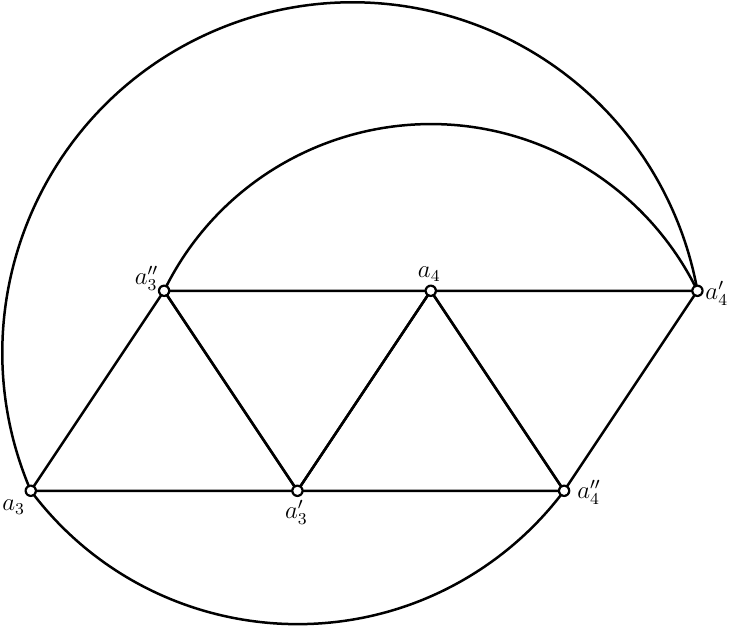}}
\end{minipage}%
\begin{minipage}{.5\linewidth}
\centering
\subfloat[]
{\label{main8:b}\includegraphics[scale=0.4]{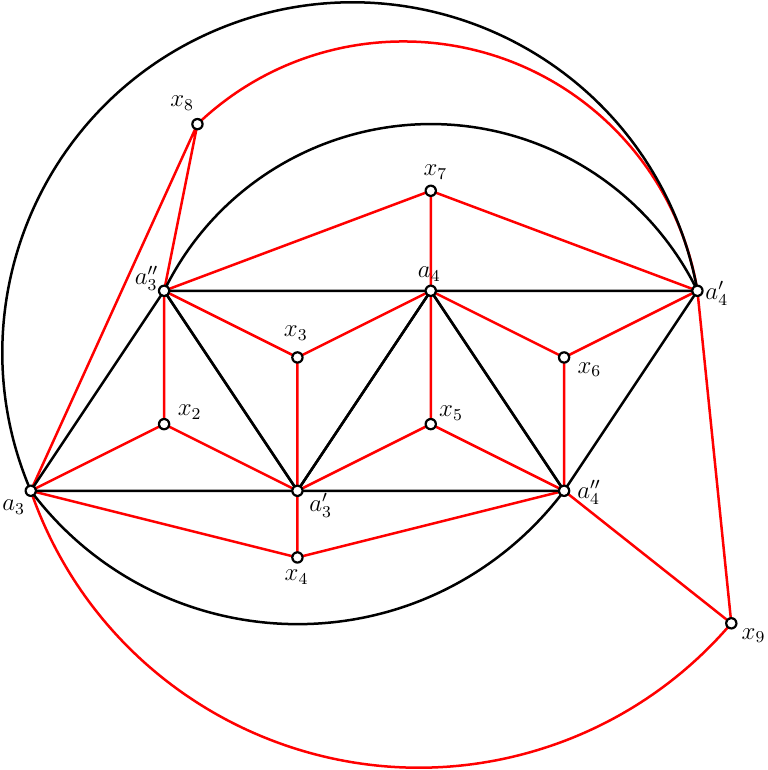}}
\end{minipage}\par
\caption{To see how the dummy vertices connect to the original vertices it is useful to look at a projection of the octahedron from \Cref{fig:3 qubit 5a}. Since we have chosen to use the dual of the octahedron as our open 2-cycle we can simply assign one vertex to each face of the octahedron.}
\label{fig:3 qubit 6}
\end{figure}

\begin{figure}[H]
\begin{center}
\includegraphics[scale=0.4]{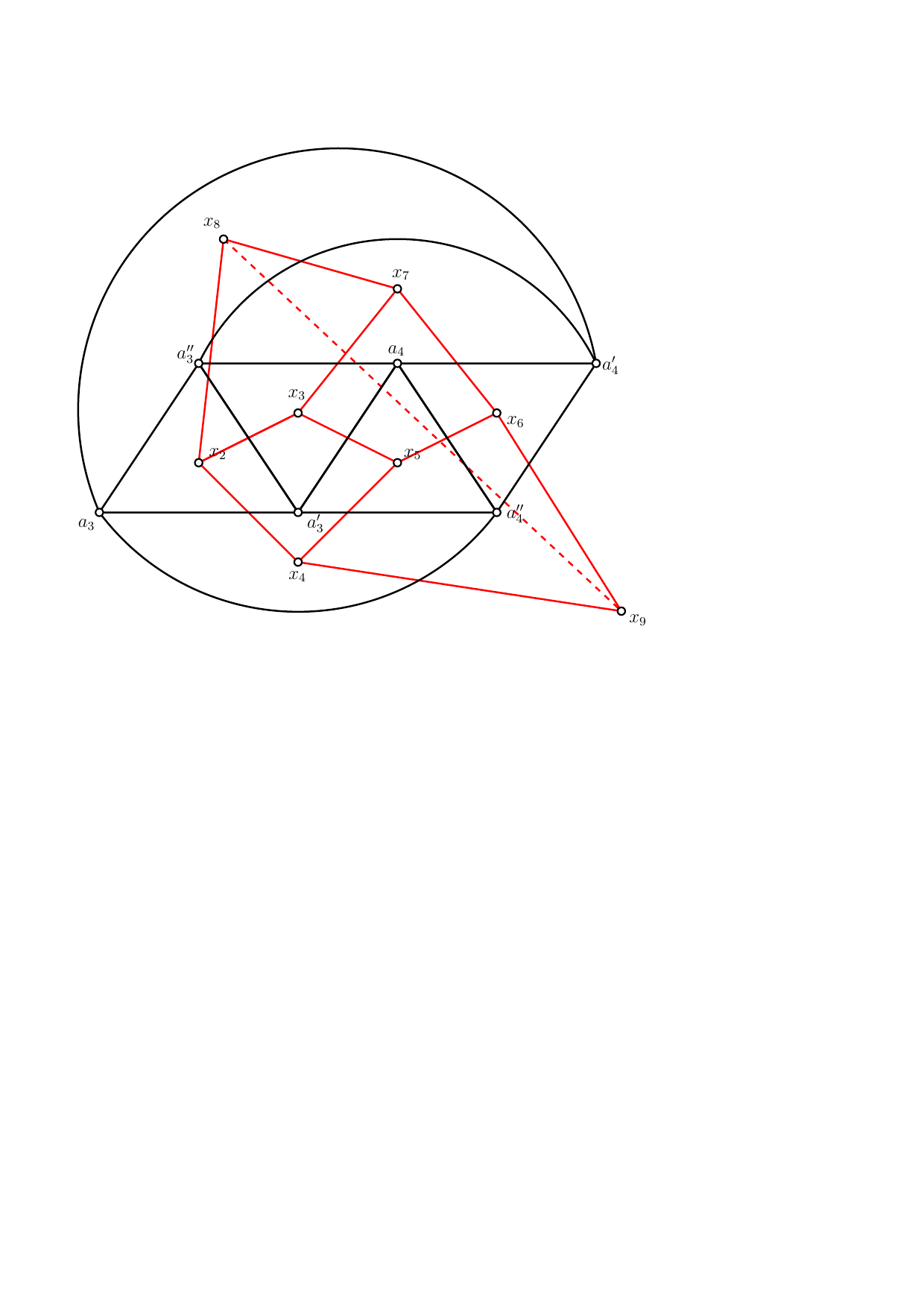}
\caption{This figure demonstrates how the cube of \Cref{fig:3 qubit 7} appears on the projection from \Cref{fig:3 qubit 6}. In the figure we have omitted the edges betwee dummy vertices and original vertices, and also omitted the diagonal lines from each face of the cube for clarity. }
\label{fig:3 qubit 8}
\end{center}
\end{figure}

\subsubsection{Superposition of three three-qubit basis states}

We will construct the projector onto three-qubit integer states with more than two cycles in analogy with the two-qubit case.
Recall that in the two-qubit case we glued together multiple cycles by cutting open the cycles, and sharing some of the vertices between different cycles.
We will do the same in the three-qubit case.
Here, the $x$, $x'$ and $x''$ vertices have to be the same for every cycle, since these are real vertices (not `dummy' vertices).
We will also keep the $x_1$ vertex common to every cycle.
So it is just the vertices that make up the cube that will only be partially shared between the different cycles.

In \Cref{fig:3 qubit 9} we demonstrate how the three cubes needed to cut open the three cycles for this integer state are formed. The first two cubes share half their vertices, and the final cube shares half its vertices with the first cube, and half its vertices with the second cube. 
As in the two-qubit case, when we glue these cycles together we have to take care to close any additional holes that are opened up.
In the three-qubit case there are four extra holes that are opened up.
In the case that there are three cycles to add together, we can close these extra holes with four vertices -- see \Cref{fig:3 qubit 10} for details.

To complete the construction of $\Kcyc$ we must attach the cubes to the other vertices from the computational basis states that we are adding together -- this is done in exactly the same manner as in the previous section for each cube / each computational basis state.
We then fill in $\Kcyc$ by applying the thickening and coning off procedure from \Cref{single_gadget_sec} to $\Kcyc$, before applying the function $f(\cdot)$ from \Cref{eq:f}.
The exact form of the relation will depend on which computational basis states are involved in the cycle $\J$, but as in previous examples all dummy vertices will be related to the vertices from $\J^0$ that they are copies of, while the non-dummy vertices are related to themselves.
For the states required in \Cref{table:states} the relations are given explicitly in \Cref{sec:gadget proofs}.

\begin{figure}[H]
\begin{center}
\includegraphics[scale=0.4]{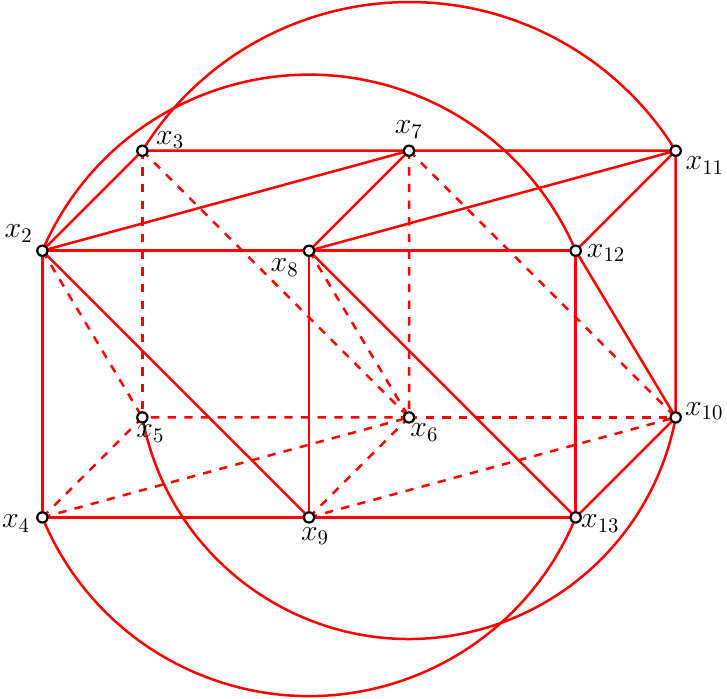}
\caption{Showing how the cubes for each cycle from \Cref{fig:3 qubit 7} are glued together. You may notice that in addition to the required open cycle bounded by the cubes, there are four additional open two cycles in this figure (above, below, and to either side of the cubes). See \Cref{fig:3 qubit 10} for an illustration of these holes, and details of how we close these them.}
\label{fig:3 qubit 9}
\end{center}
\end{figure}

\begin{figure}[H]
\begin{minipage}{.5\linewidth}
\centering
\subfloat[]{\label{main10:a}\includegraphics[scale=0.4]{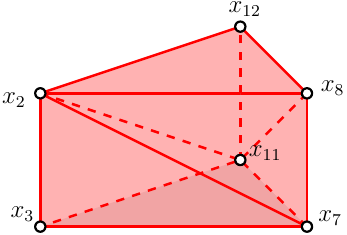}}
\end{minipage}%
\begin{minipage}{.5\linewidth}
\centering
\subfloat[]
{\label{main10:b}\includegraphics[scale=0.4]{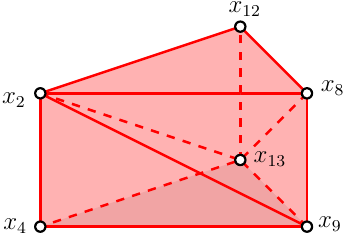}}
\end{minipage}\par\medskip
\begin{minipage}{.5\linewidth}
\centering
\subfloat[]{\label{main10:c}\includegraphics[scale=0.4]{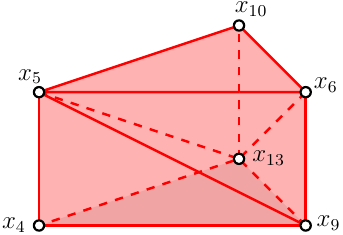}}
\end{minipage}%
\begin{minipage}{.5\linewidth}
\centering
\subfloat[]{\label{main10:d}\includegraphics[scale=0.4]{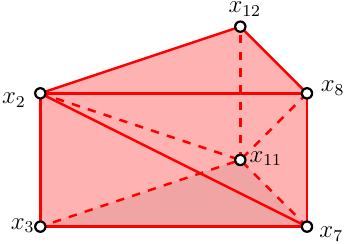}}
\end{minipage}
\caption{When we join together three of the three-qubit basis states as in \Cref{fig:3 qubit 9} we introduce four additional open 2-cycles, shown in \Cref{main10:a,main10:b,main10:c,main10:d} which must be closed. Each of the open cycles is closed by introducing an extra dummy vertex which is connected to all the vertices in that open cycle, as shown in \Cref{fig:3 qubit 10a}.}
\label{fig:3 qubit 10}
\end{figure}

\begin{figure}[H]
\begin{minipage}{.5\linewidth}
\centering
\subfloat[]{\label{main11:a}\includegraphics[scale=0.4]{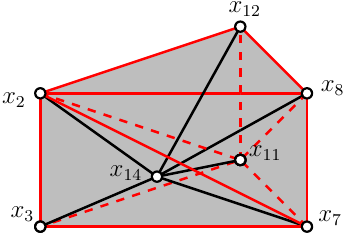}}
\end{minipage}%
\begin{minipage}{.5\linewidth}
\centering
\subfloat[]
{\label{main11:b}\includegraphics[scale=0.4]{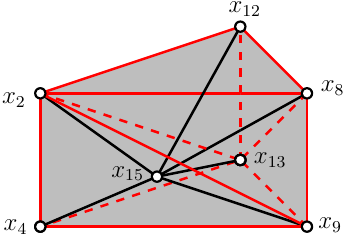}}
\end{minipage}\par\medskip
\begin{minipage}{.5\linewidth}
\centering
\subfloat[]{\label{main11:c}\includegraphics[scale=0.4]{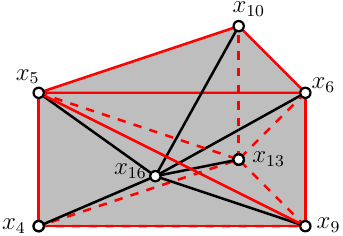}}
\end{minipage}%
\begin{minipage}{.5\linewidth}
\centering
\subfloat[]{\label{main11:d}\includegraphics[scale=0.4]{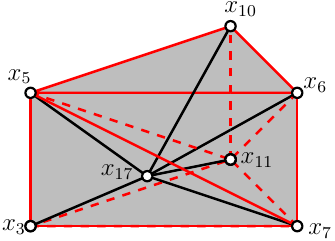}}
\end{minipage}
\caption{To fill in the additional open 2-cycles from \Cref{fig:3 qubit 9} we add one additional dummy vertex per cycle, and connect to all the vertices in the cycle.}
\label{fig:3 qubit 10a}
\end{figure}

\subsubsection{Superpositions of more than three three-qubit basis states} \label{sec:3 qubit construction}

The process for adding together more than three of the three-qubit basis states is very similar.
In \Cref{fig:3 qubit 11} we show how the four cubes that are used to cut open the four cycles are glued together.
As in the case where we only have three cycles, we glue the cubes together in a circular fashion, and each cube shares half its vertices with the cube preceding it, and half its vertices with the cube following it.
The only difference here is that we now have eight additional open 1-cycles which we need to fill in.
This occurs because the `floor' and the `ceiling' of the open 2-cycle we need to fill in are no longer 2-simplices, instead they are open 1-cycles -- see \Cref{fig:3 qubit 12}.
To fill these in we add one dummy vertex for each 1-cycle, and connect it to every vertex in the 1-cycle. 
We then add one vertex for every 2-cycle, and connect it to every vertex in the 2-cycle -- details are shown in \Cref{fig:3 qubit 13}.

Note that for the case of adding together four of the three-qubit basis states we could have closed the open 1-cycles by just adding one extra line per cycle, and no extra vertices.
However that method does not generalise to more than 4 basis states, whereas the current method generalises to arbitrary numbers of basis states.

To complete the construction of $\Kcyc$ we must attach the cubes to the other vertices from the computational basis states that we are adding together -- this is done in exactly the same manner as in \Cref{sec:3 qubit cutting} for each cube / each computational basis state.
This completes the construction of the $\Kcyc$.
We then fill in $\Kcyc$ by applying the thickening and coning off procedure from \Cref{single_gadget_sec} to $\Kcyc$, before applying the function $f(\cdot)$ from \Cref{eq:f}.
The exact form of the relation will depend on which computational basis states are involved in the cycle $\J$, but as in previous examples all dummy vertices will be related to the vertices from $\J^0$ that they are copies of, while the non-dummy vertices are related to themselves -- for the states required in \Cref{table:states} the relations are given explicitly in \Cref{sec:gadget proofs}.

\begin{figure}[H]
\begin{center}
\includegraphics[scale=0.4]{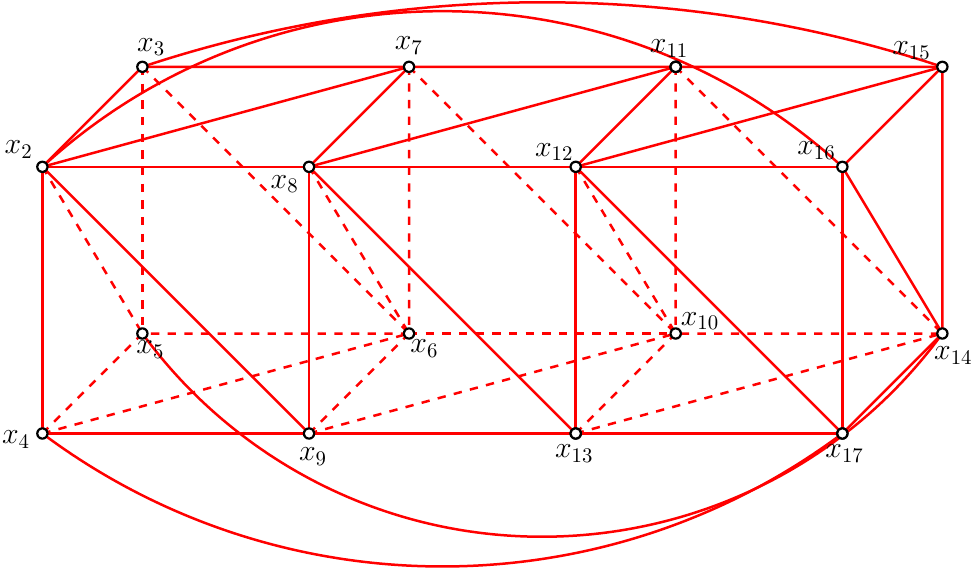}
\caption{Showing how the cubes for each cycle from \Cref{fig:3 qubit 7} are glued together when we have four cycles. In this figure there are additional one and two cycles that need to be filled in -- see \Cref{fig:3 qubit 12} for details.}
\label{fig:3 qubit 11}
\end{center}
\end{figure}

\begin{figure}[H]
\begin{minipage}{.5\linewidth}
\centering
\subfloat[]{\label{12:a}\includegraphics[scale=0.2]{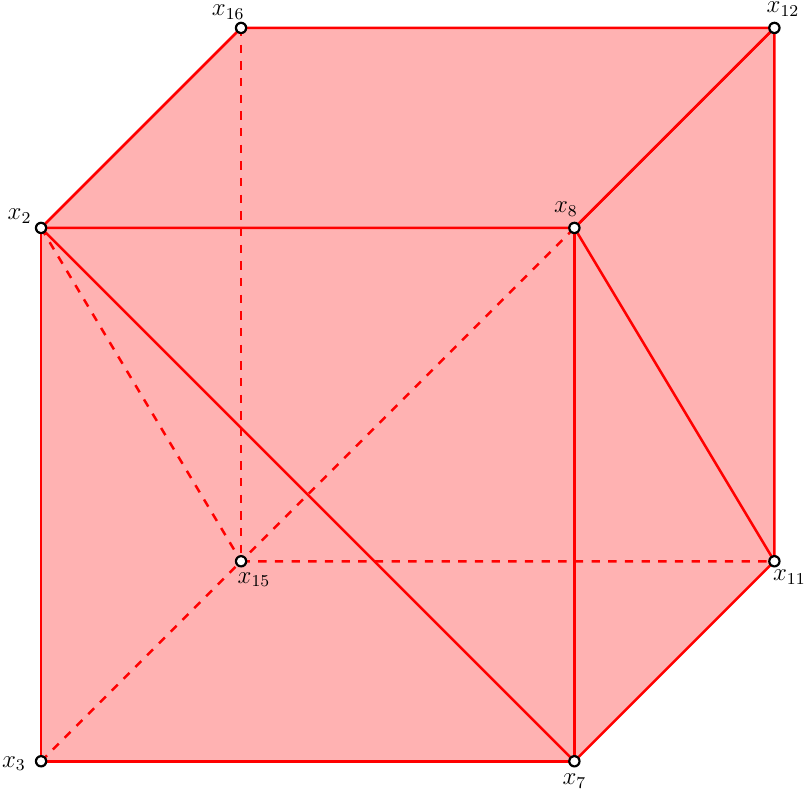}}
\end{minipage}%
\begin{minipage}{.5\linewidth}
\centering
\subfloat[]
{\label{12:b}\includegraphics[scale=0.2]{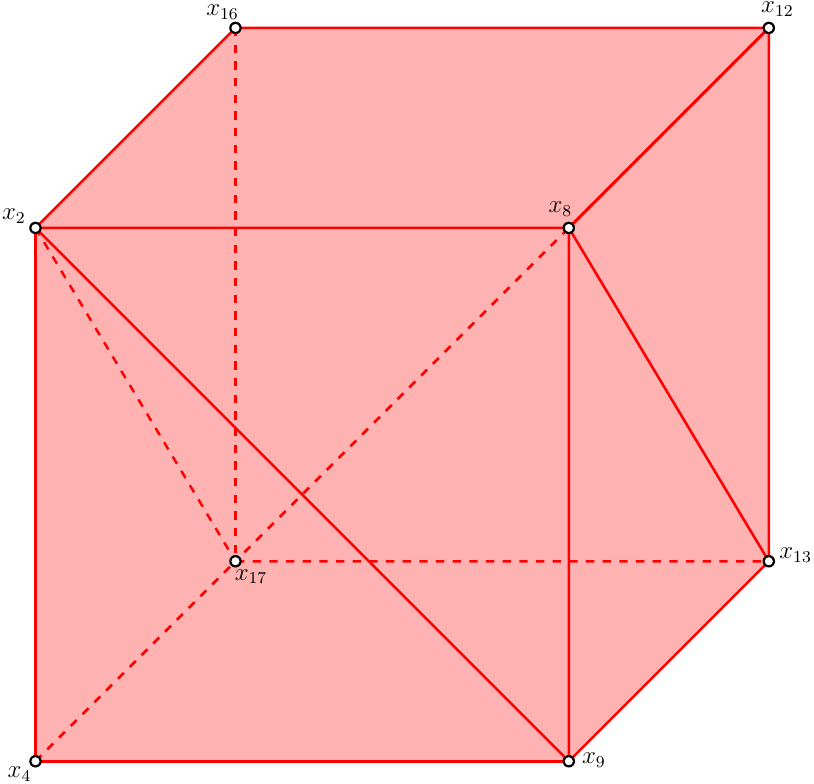}}
\end{minipage}\par\medskip
\begin{minipage}{.5\linewidth}
\centering
\subfloat[]{\label{12:c}\includegraphics[scale=0.2]{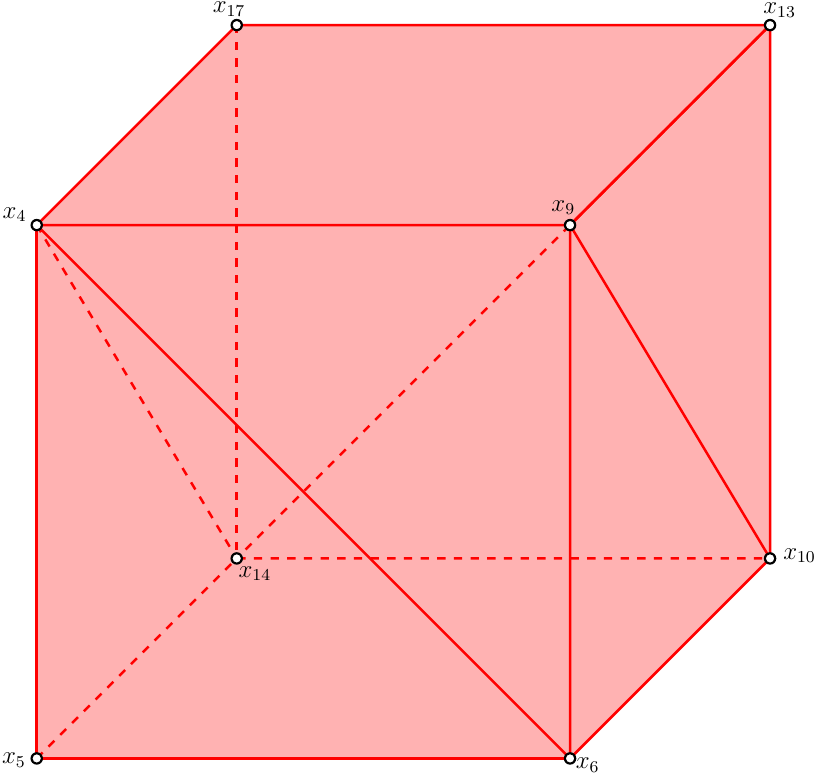}}
\end{minipage}%
\begin{minipage}{.5\linewidth}
\centering
\subfloat[]{\label{12:d}\includegraphics[scale=0.2]{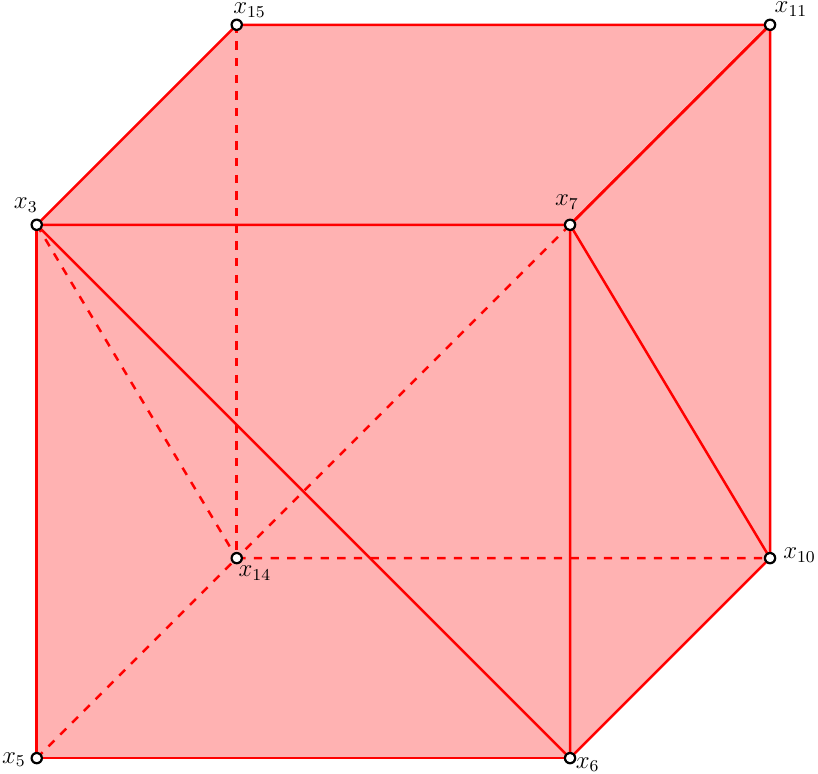}}
\end{minipage}
\caption{When joining together four (or more) three-qubit basis states, as well as the open 2-cycles that occurred in the previous case, we also find four open 1-cycles -- the `floors' and `ceilings' of the open 2-cycles shown here. For this case we use eight additional dummy vertex -- one to close each open 2-cycle, and one to close each open 1-cycle (see \Cref{fig:3 qubit 13}).}
\label{fig:3 qubit 12}
\end{figure}

\begin{figure}[H]
\begin{minipage}{.5\linewidth}
\centering
\subfloat[]{\label{13:a}\includegraphics[scale=0.2]{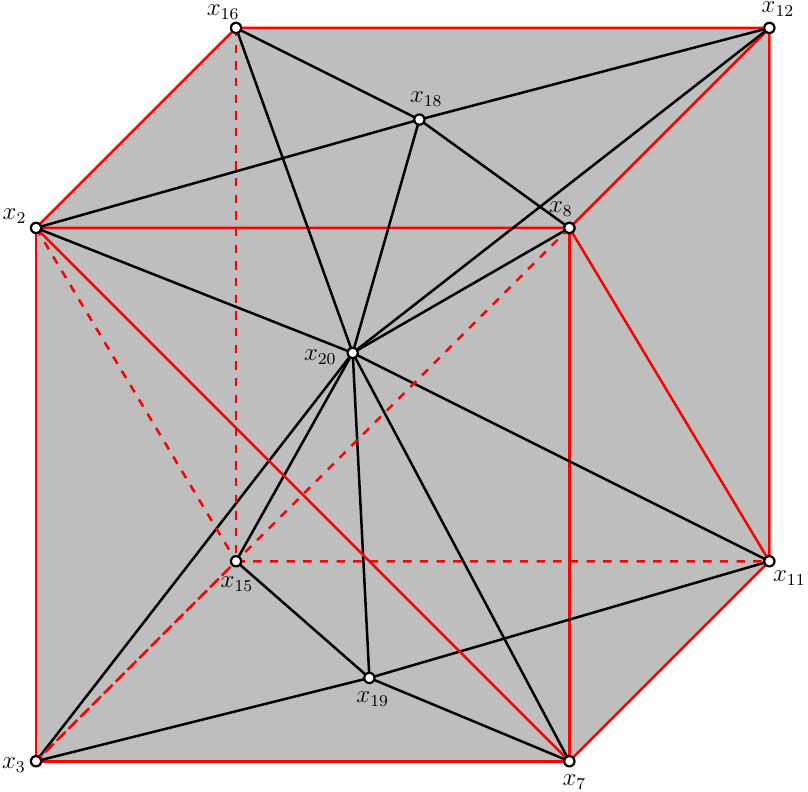}}
\end{minipage}%
\begin{minipage}{.5\linewidth}
\centering
\subfloat[]
{\label{13:b}\includegraphics[scale=0.2]{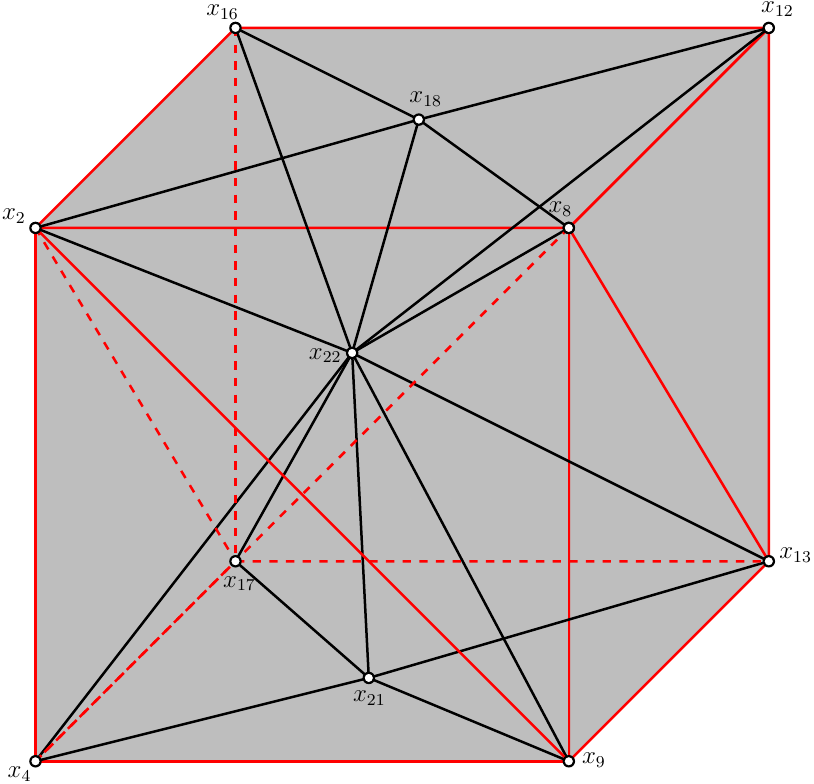}}
\end{minipage}\par\medskip
\begin{minipage}{.5\linewidth}
\centering
\subfloat[]{\label{13:c}\includegraphics[scale=0.2]{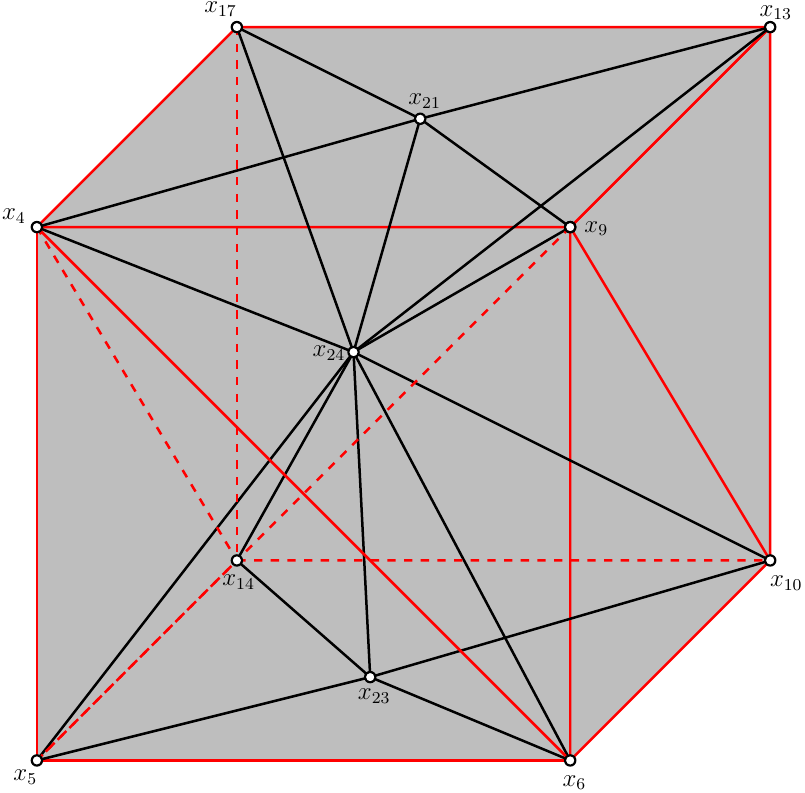}}
\end{minipage}%
\begin{minipage}{.5\linewidth}
\centering
\subfloat[]{\label{13:d}\includegraphics[scale=0.2]{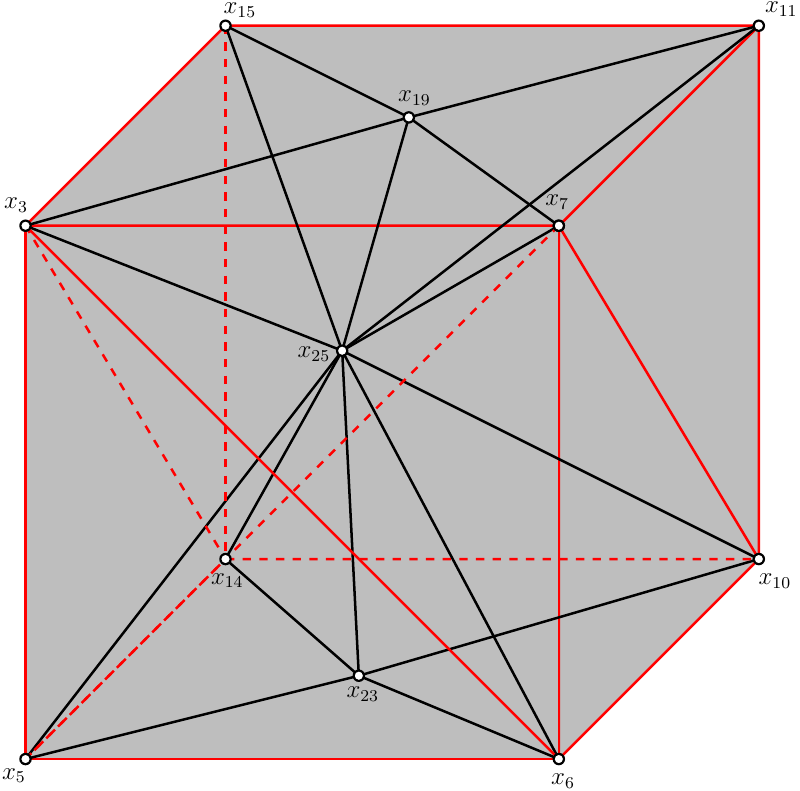}}
\end{minipage}
\caption{We close the open 1- and 2-cycles from \Cref{fig:3 qubit 12} using eight more dummy vertices. Note that this method for closing the cycles works for arbitrary many three-qubit basis states (if we had instead chosen to close the floor and ceilings by separating them into two triangles by adding one edge that would work for this example, but would not generalise to other three-qubit basis states).}
\label{fig:3 qubit 13}
\end{figure}

\subsubsection{Superpositions of two four-qubit basis states} \label{sec:4_qubit_construction}

The four-qubit basis states are triangulations of $S^7$.
The basis states themselves are, again, easy to handle.
To fill them in we simply apply the thickening and coning off procedure to the cycles themselves.
Tackling general integer states requires us to glue together triangulations of $S^7$ by cutting open $6$-cycles and gluing along them.
All of the four-qubit basis states required in \Cref{table:states} are either basis states, or superpositions of two computational basis states.
Therefore for the four-qubit case we will not consider states which are combinations of more than two computational basis states.

\begin{figure}[H]
\begin{center}
\includegraphics[scale=0.6]{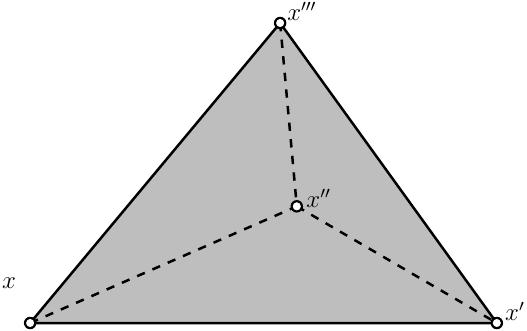}
\caption{Every four-qubit computational basis state shares the 4-simplex $[xx'x''x''']$ shown here.}
\label{fig:4_qubit_initial}
\end{center}
\end{figure}

All the four-qubit basis states share the 3-simplex $[xx'x''x''']$ as shown in \Cref{fig:4_qubit_initial}.
To cut this open we first form an open 3-cycle by:
\begin{enumerate}
\item cutting open the edge $[xx']$ by adding a vertex $x_1$ which is connected to $x$ and $x'$
\item adding an edge between $x_1$ and $x'''$
\item adding a vertex $x_2$ and connecting it to the vertices $x$, $x'$, $x''$ and $x_1$
\end{enumerate}
This gives the octahedron shown in \Cref{fig:4_qubit_2}.

Next we have to connect up the dummy vertices $x_1$ and $x_2$ to the original vertices outside of the simplex $[xx'x''x''']$ to ensure that we have not added any other extra holes.
To see how to do this we will consider each step in the construction of the octahedron from \Cref{fig:4_qubit_2}.
In the first step we have simply cut open the edge $[xx']$.
So to avoid creating any additional holes in the complex we need to connect $x_1$ up to every vertex which is connected to both $x$ and $x'$.
In the case of the $\ket{0000}$ cycle this is the vertices $\{a_3,a_4,a'_3,a'_4,a_2'',a''_3,a''_4,a_2''',a_3''',a_4'''\}$.\footnote{For other computational basis states if the state of a qubit changes from $\ket{0}$ to $\ket{1}$  we use the $b$ vertices from that qubit rather than the $a$ vertices. So, for example the $\ket{0101}$ cycle it would be the vertices $\{a_3,a_4,b'_3b'_4,a_2'',a''_3,a''_4,b_2''',b_3''',b_4'''\}$.}

The next step in the construction is to add an edge between $x_1$ and $x'''$.
We can see this step as using the vertex $x_1$ to cut open the 2-simplex $[xx'x''']$.
So we need to add edges between $x_1$ and every vertex that is connected to all three of $x,x',x'''$.
But this is a subset of the vertices which we have already connected $x_1$ to, so we do not have to do anything for this step.

Finally, we have added the $x_2$ vertex and connected it to $x,x',x''$.
We can see this step as cutting open the 2-simplex $[xx'x'']$.
So we have to add edges between $x_2$ and every vertex that is connected to all three of $x,x',x''$.
For the $\ket{0000}$ cycle this is the vertices $\{a_3,a_4,a'_3,a'_4,a''_3,a''_4,a_2''',a_3''',a_4'''\}$.\footnote{As before, for computational basis states if qubit $i$ is in the state $\ket{1}$ we use the $b$ vertices rather than the $a$ vertices for that qubit.}

It should be noted that which of $x_1$ and $x_2$ we consider adding first, and therefore which is connected to more vertices is arbitrary.
The vertices $x''$ and $x'''$ are equivalent in the original 4-simplex, but the action of breaking open the 4-simplex in this way breaks the symmetry.

\begin{figure}[H]
\begin{center}
\includegraphics[scale=0.6]{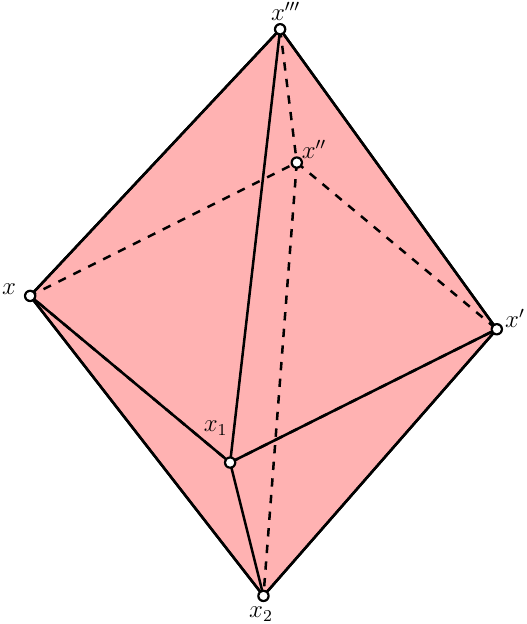}
\caption{The first step in combining two four-qubit basis states is to cut open the 4-simplex they share to form an octahedron, as shown here.}
\label{fig:4_qubit_2}
\end{center}
\end{figure}

We have now cut the closed 4-simplex shared by the two four-qubit basis states into an open 3-cycle.
But this is not a high enough dimensional hole to glue two copies of $S^7$ together through.
We need to extend this to an open 6-cycle.
We will do this by constructing a second open 3-cycle. 
We will then use this 3-cycle as the base of a bi-pyramid which has vertices $x,x'$ as its base.
This gives an open 4-cycle.
We then use this open 4-cycle as the base of a second bi-pyramid which has vertices $x'',x_1$ as its base.
This gives an open 5-cycle.
Finally we use this open 5-cycle as the base of a third bi-pyramid which has vertices $x''',x_2$ as its base.
This gives us our open 6-cycle as required. 

To pick which open 3-cycle we should construct to use as the base of the first bi-pyramid we lok at the vertices that are connected to all four of $x$, $x'$, $x''$, $x'''$ in the original 7-cycle.
These will all need to be connected to the 3-cycle we use in order to avoid creating extra holes in the complex.
These vertices form a 4-d cross-polytope, as shown in \Cref{fig:4_qubit_3}.
For our open 3-cycle we will construct another 4-d cross polytope, using eight dummy vertices $\{x_3,\cdots,x_{10}\}$, as shown in \Cref{fig:4_qubit_4}.
We will connect these eight dummy vertices to the original vertices from \Cref{fig:4_qubit_3} using the thickening procedure from \Cref{sec:thickening}.

Therefore, by introducing a total of 10 dummy vertices we can cut an open 6-cycle in the 7-cycles which form the four-qubit basis states.
Then, to glue two four-qubit basis states together we simply cut holes in both of them, and identify the dummy vertices with the same labels from the different cycles with each other.
This completes the construction of the $\Kcyc$.
To fill in $\Kcyc$ we apply the thickening and coning off procedure from \Cref{single_gadget_sec}.
Then we apply the function $f(\cdot)$ from \Cref{eq:f}.
The form of the relation depends on the exact projector we are constructing -- for the states required in \Cref{table:states} the relations are given explicitly in \Cref{sec:gadget proofs}.
 
\begin{figure}[H]
\begin{center}
\includegraphics[scale=0.6]{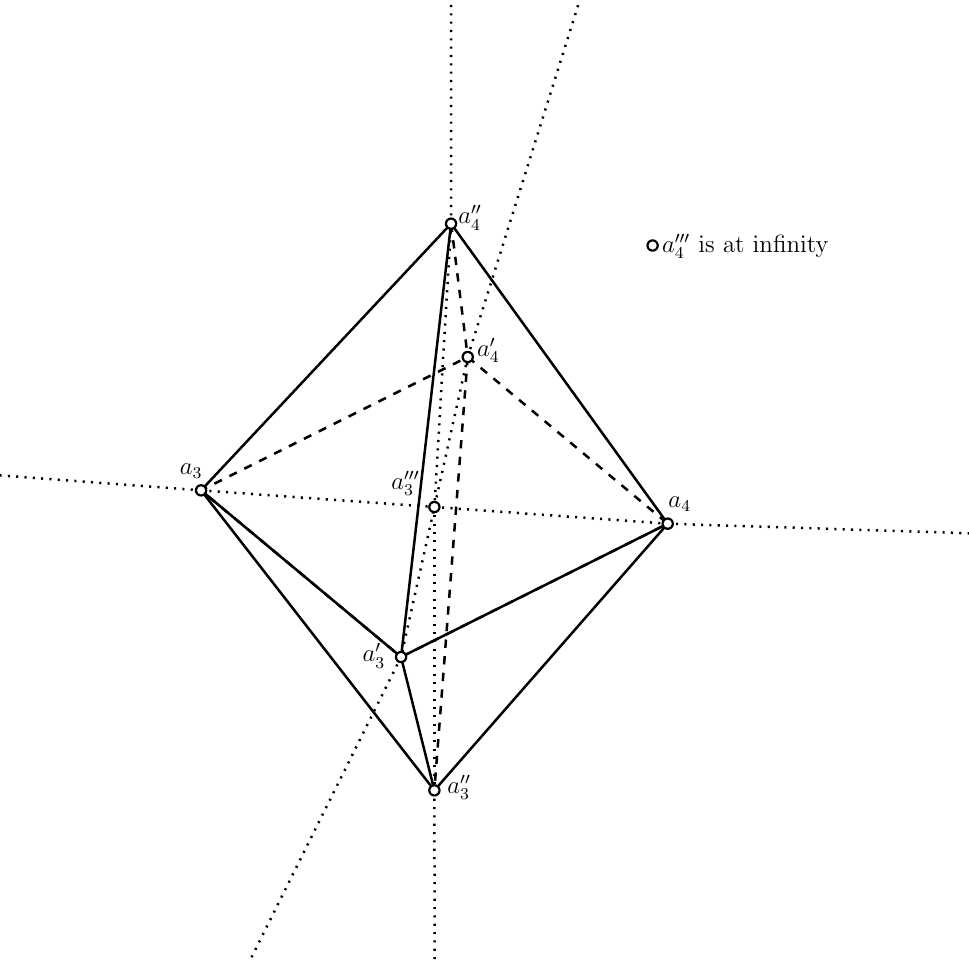}
\caption{The vertices that are connected to every vertex in the 4-simplex $[xx'x''x''']$ form a 4-d cross polytope, as shown here.}
\label{fig:4_qubit_3}
\end{center}
\end{figure}

\begin{figure}[H]
\begin{center}
\includegraphics[scale=0.6]{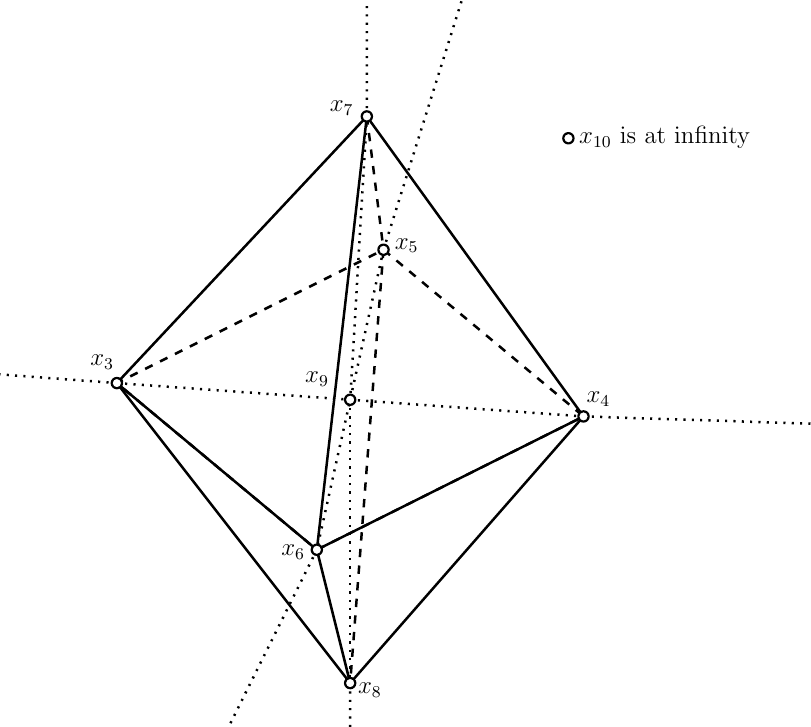}
\caption{We will introduce eight new dummy vertices to construct a 4-d cross-polytope, which will form the base for the first bi-pyramid needed to construct an open 6-cycle to glue the 7-cycles corresponding to two four-qubit computational basis states together.}
\label{fig:4_qubit_4}
\end{center}
\end{figure}

\subsection{Proof that the gadgets lift the desired states} \label{sec:gadget proofs}

From \Cref{table:states} are three types of gadget we have to consider:
\begin{enumerate}
    \item Gadgets for projectors onto computational basis states
    \item Gadgets for projectors onto the three-qubit entangled states arising from the Pythagorean gate
    \item Gadgets for projectors onto four-qubit entangled states
\end{enumerate}

The simplicial complexes involved are too big for us to compute the homology directly, instead we will prove that the gadgets have the desired behaviour. 
We will tackle each type of gadget in turn, but first we prove some general lemmas.
The first three geeral lemmas demonstrate that the thickening, coning off and applying $f(\cdot)$ procedure from \Cref{single_gadget_sec} has the required behaviour -- that is, it closes the hole in a $(2m-1)$-sphere, it does not introduce spurious homology classes, and it results in a clique complex (as opposed to a more general simplicial complex).
The final general lemma puts everything together to show that the method of constructing a triangulation $\Kcyc$ of $S^{2m-1}$ and filling it in before identifying vertices to give the desired cycle $\J$, does in fact remove the cycle $\J$ from homology.

Throughout this section we will use the following notation:
\begin{itemize}
\item $\J$ refers to the cycle we want to fill in with the gadget (i.e. if we are constructing a gadget for the projector $\ketbra{\phi}{\phi}$ then $\J$ is the cycle that corresponds to the state $\ket{\phi}$.
\item $\Kcyc$ refers to a simplicial complex which triangulates an $(2m-1)$-sphere.
\item $\hat \Kcyc$ refers to a simplicial complex obtained from $\Kcyc$ by applying the thickening and coning off procedure from \Cref{single_gadget_sec}. 
\item $\tilde \Kcyc$ refers to a simplicial complex obtained from $\hat\Kcyc$ by applying the function $f(\cdot)$ from \Cref{eq:f} with respect to some relation $R$ and constructing the complex as set out in \cref{eq:f complex 2}
\end{itemize}

\begin{lemma}\label{lem:thickening and coning}
$\hat \Kcyc$ is a triangulation of the closed $2m$-ball.
\end{lemma}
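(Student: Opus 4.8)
The plan is to build up $\hat{\mathcal{K}}$ in the same two stages used to define it, and track the topology at each stage. First I would recall the setup: $\mathcal{K}$ is a clique complex triangulating $S^{2m-1}$, $\mathcal{L}$ is its thickening (so by \Cref{thickening_lem_1} $\mathcal{L}$ triangulates $\mathcal{K} \times I \cong S^{2m-1} \times [0,1]$, a closed $2m$-dimensional manifold with boundary consisting of two disjoint copies of $S^{2m-1}$, namely the outer layer $\mathcal{K}^0 \times \{0\}$ and the inner layer $\mathcal{K}^0 \times \{1\}$), and $\hat{\mathcal{K}}$ is obtained from $\mathcal{L}$ by adding a single cone vertex $v_0$ joined to every vertex of the inner layer $\mathcal{K}^0 \times \{1\}$. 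Because the inner layer forms a full subcomplex isomorphic to $\mathcal{K}$ (a clique complex triangulating $S^{2m-1}$), adding $v_0$ together with all simplices it creates is precisely forming the (simplicial) cone $v_0 \ast \mathcal{K}_{\text{inner}}$, which is a triangulation of the closed $2m$-ball $D^{2m}$ whose boundary is $S^{2m-1}$. So $\hat{\mathcal{K}} = \mathcal{L} \cup_{\mathcal{K}_{\text{inner}}} (v_0 \ast \mathcal{K}_{\text{inner}})$, i.e. we are gluing a $2m$-ball onto the $S^{2m-1} \times I$ shell along one of its two boundary spheres.

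The key topological step is then: gluing a closed $2m$-ball $D^{2m}$ to the collar $S^{2m-1} \times [0,1]$ along $S^{2m-1} \times \{1\} = \partial D^{2m}$ yields a space homeomorphic to $D^{2m}$ again, with boundary the other sphere $S^{2m-1} \times \{0\}$ (the outer layer). One sees this because $S^{2m-1}\times[0,1]$ is itself homeomorphic to a collar neighbourhood of the boundary of $D^{2m}$, so capping it off just thickens the ball slightly. I would phrase this cleanly by exhibiting a PL homeomorphism, or by citing the standard fact that attaching an external collar to a manifold-with-boundary does not change its homeomorphism type. The upshot is that $\hat{\mathcal{K}}$ is homeomorphic to $D^{2m}$ with $\partial \hat{\mathcal{K}}$ the outer layer $\mathcal{K}^0 \times \{0\} \cong S^{2m-1}$ — exactly the cycle we will later apply $f$ to.

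I anticipate the main obstacle to be the verification that adding $v_0$ really produces the honest simplicial cone $v_0 \ast \mathcal{K}_{\text{inner}}$ with no extra or missing simplices — i.e. that the clique-complex closure of the new edges $\{(v_0, (u,1)) : u \in \mathcal{K}^0\}$ introduces exactly the simplices $v_0 \cup \tau$ for $\tau$ a simplex of the inner layer, and nothing that reaches "around" into the outer layer. This needs the inner layer to be an \emph{induced} subgraph with no vertex outside it adjacent to all of some inner face — which follows from the explicit edge list in \Cref{thickening_lem_2}: the only edges incident to an inner-layer vertex $(v,1)$ go to other inner vertices or to outer vertices $(u,0)$ with $u < v$, so $v_0$'s neighbourhood is exactly the inner layer and a clique containing $v_0$ is $v_0$ together with a clique of the inner layer. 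Once that combinatorial check is in place, the homeomorphism type follows from the collar argument above. I would also remark that, since each of $\mathcal{L}$ and the cone is a triangulation (the cone of a triangulated sphere is a triangulated ball), their union along a common full subcomplex is again a simplicial complex, so "triangulation" is justified, not merely "homotopy equivalent to $D^{2m}$."
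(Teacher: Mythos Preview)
Your proposal is correct and takes essentially the same two-stage approach as the paper: identify the thickening $\mathcal{L}$ with a triangulation of $S^{2m-1}\times I$, then cone off the inner boundary sphere to fill it in. You are more explicit than the paper about the homeomorphism type (via the collar argument) and about the clique-complex combinatorics ensuring the cone step adds exactly $v_0 * \mathcal{K}_{\text{inner}}$; the paper's proof is terser, phrasing the cone step as a mapping cone and arguing more at the level of homology.
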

\begin{proof}
The thickening procedure from \Cref{sec:thickening} is designed to triangulate $\Kcyc \times I$ where $I = [0,1]$, the proof that the procedure achieves this is provided in \Cref{sec:thickening}.
The final step, of adding the central vertex and connecting it to every vertex in $\mathcal{K}\times \{1\}$ clearly closes the hole in $\Kcyc$ since it provides a state for which $\Kcyc \times 1$ is the boundary, and by the results of \Cref{sec:thickening}, $\Kcyc \times \{0\}$ is homologous to $\Kcyc \times \{1\}$. 
It is straightforward to see that it does so without adding any spurious homology classes.
More formally, since the cycle $\Kcyc$ can be represented as a map of a sphere into the complex, the process of adding the central vertex can be seen as constructing the mapping cone.
\end{proof}

\begin{lemma}\label{lem:identifying}
$\tilde \Kcyc$ has trivial homology.
\end{lemma}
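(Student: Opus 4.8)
The plan is to show something slightly stronger than the stated conclusion, namely that $\tilde{\mathcal{K}}$ is contractible. Recall from \Cref{single_gadget_sec} that $\hat{\mathcal{K}}$ is assembled from the thickening $\mathcal{L}$ of $\mathcal{K}$ — a triangulation of the cylinder $\mathcal{K}\times[0,1]$ with outer boundary $\mathcal{K}\times\{0\}$ and inner boundary $\mathcal{K}\times\{1\}$ (\Cref{thickening_lem_1,thickening_lem_2}) — together with the star of the cone vertex $v_0$ over $\mathcal{K}\times\{1\}$. By \Cref{lem:thickening and coning}, $\hat{\mathcal{K}}$ is a triangulation of the closed $2m$-ball, and in particular it deformation retracts onto $v_0$ via the composite ``collapse the collar onto $\mathcal{K}\times\{1\}$, then cone inwards to $v_0$''. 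The idea is to transport this picture through the quotient induced by $f$.

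The key step is to record the effect of applying $f$ at the level of the three pieces of $\hat{\mathcal{K}}$. Since the relation $R$ only identifies vertices within $\mathcal{K}^0$ (and fixes $v_0$), and since the thickening and coning-off constructions are built from $\mathcal{K}$ by operations treating all vertices on an equal footing, the quotient descends piece by piece: the image of $\mathcal{L}$ is a complex $\tilde{\mathcal{L}}$ whose geometric realization is the cylinder $\mathcal{J}\times[0,1]$ (the identification lives purely in the $\mathcal{K}$-factor), and the image of $v_0\ast(\mathcal{K}\times\{1\})$ is the cone $v_0\ast(\mathcal{J}\times\{1\})$. Hence
\[
\tilde{\mathcal{K}} \;=\; \tilde{\mathcal{L}}\ \cup_{\mathcal{J}\times\{1\}}\ \bigl(v_0\ast(\mathcal{J}\times\{1\})\bigr),
\]
and $\tilde{\mathcal{L}}\simeq\mathcal{J}\times[0,1]$ deformation retracts onto its inner face $\mathcal{J}\times\{1\}$. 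Extending this retraction by the identity on the cone yields a deformation retraction of $\tilde{\mathcal{K}}$ onto the contractible cone $v_0\ast(\mathcal{J}\times\{1\})$, so $\tilde{\mathcal{K}}$ is contractible and $\tilde H_k(\tilde{\mathcal{K}})=0$ for all $k$. Equivalently one may observe that $\tilde{\mathcal{K}}$ is the homotopy pushout of $\mathcal{J}\xleftarrow{\,f\,}\mathcal{K}\to\{v_0\}$, i.e.\ the mapping cone of $f|_{\mathcal{K}}$, and that since $\mathcal{K}\cong S^{2m-1}$ bounds the ball $\hat{\mathcal{K}}$ sitting inside $\tilde{\mathcal{K}}$, the generator of $\tilde H_{2m-1}(\mathcal{K})$ is killed; chasing the long exact sequence of the cofibration, using $\tilde H_*(\mathcal{K})=\tilde H_*(S^{2m-1})$, then forces all reduced homology of $\tilde{\mathcal{K}}$ to vanish.

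The main obstacle — and where the real work lies — is justifying the claim that applying $f$ is ``clean'': that the simplices of $\hat{\mathcal{K}}$ map under $f$ onto simplices of the asserted shape without accidentally collapsing a simplex onto a lower-dimensional one that was meant to survive, and without two distinct simplices colliding in a way that alters the homotopy type. Concretely one must check compatibility of $R$ with the vertex ordering used in \Cref{thickening_lem_1} (it suffices to order $\mathcal{K}^0$ so that $f$-equivalent vertices are consecutive), and to invoke the constraints on $R$ from \Cref{single_gadget_sec} — that $R$ is functional and surjective onto $\mathcal{J}^0$ — together with the companion lemma that $\tilde{\mathcal{K}}$ is a clique complex. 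Once this local combinatorial bookkeeping is in place, the homotopy-theoretic conclusion above is immediate.
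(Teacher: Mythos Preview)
Your argument rests on a misreading of the construction in \Cref{single_gadget_sec}: the identification $f$ is applied \emph{only to the outer layer} $\mathcal{K}^0\times\{0\}$; see step~4, where the post-quotient vertex set is explicitly $\mathcal{V}=\mathcal{J}^0\cup(\mathcal{K}^0\times\{1\})\cup\{v_0\}$. The inner layer remains an intact copy of $\mathcal{K}$, and the cone is over $\mathcal{K}\times\{1\}$, not over any copy of $\mathcal{J}$. Thus $\tilde{\mathcal{L}}$ is (at best) the mapping cylinder of $f\colon\mathcal{K}\to\mathcal{J}$, not the product $\mathcal{J}\times[0,1]$, and it does not deformation retract onto the inner face; your first contractibility argument therefore does not go through.

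Your alternative identification of $\tilde{\mathcal{K}}$ with the mapping cone $C_f$ is closer to correct, but the long-exact-sequence step has a genuine gap. From
\[
\cdots\to\tilde H_k(\mathcal{K})\xrightarrow{\,f_*\,}\tilde H_k(\mathcal{J})\to\tilde H_k(C_f)\to\tilde H_{k-1}(\mathcal{K})\to\cdots
\]
together with $\tilde H_*(\mathcal{K})\cong\tilde H_*(S^{2m-1})$ you can only conclude $\tilde H_*(C_f)=0$ if you also know that $\tilde H_k(\mathcal{J})=0$ for $k\ne 2m-1$ and that $f_*$ is an isomorphism in degree $2m-1$. For a non-basis integer state the support complex $\mathcal{J}$ is a union of several copies of $S^{2m-1}$ glued along a common lower-dimensional simplex, so typically $\dim H_{2m-1}(\mathcal{J})>1$; then $f_*\colon\mathbb{Z}\to\mathbb{Z}^r$ cannot be surjective and your conclusion fails. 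The ``bookkeeping'' you defer to the final paragraph is not merely tedious---it is where the entire obstruction lives, and no choice of vertex ordering repairs it.

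By contrast, the paper argues at the chain level rather than via homotopy type: it reduces (by induction) to the case where $R$ identifies a single pair of vertices, shows directly that cycles and bounding chains in $\hat{\mathcal{K}}$ push forward to cycles and bounding chains in $\tilde{\mathcal{K}}$, and then gives a separate argument for cycles in $\tilde{\mathcal{K}}$ newly created by the identification. This avoids any appeal to the global topology of $\mathcal{J}$.
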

\begin{proof}
Without loss of generality we demonstrate the result for the relation:
\begin{equation}\label{eq:R5}
R = \{(x_1,x)\} \cup \{(w,w)| w \in \hat\Kcyc^0, w\neq x_1\}
\end{equation}
where $x,x_1 \in \hat\Kcyc^0$ and $x \neq x_1$.
The result follows for more general relations since we can always choose to apply the function $f(\cdot)$ in stages, where at each stage we only identify two vertices.

First we show that every cycle $c \in \hat\Kcyc$ maps to a cycle $\tilde{c} \in \tilde{\Kcyc}$.
Let $c = \sum_{i=1}^{2m-1} \sigma_i$ where each $\sigma_i = [w_0 \dots w_k]$.
Then we have:
\begin{equation}
\partial \ket{c} = \sum_i \partial \ket{\sigma_i} = \sum_i\sum_j \ket{\sigma_i \setminus w_j}= 0
\end{equation}

Consider $\tilde{\sigma}_i = f(\sigma_i)$ for each $i$. 
If $\sigma_i$ only contains one of $x$ or $x_1$ then $\tilde{\sigma}_i$ is still a $k$-simplex.
However, any $\sigma_i$ which contains both $x$ and $x_1$ will map to a $k-1$-simplex. 
Consider one such simplex $\sigma_i = [xx_1w_2 \dots w_k]$.
We have that $\ket{\sigma_i \setminus x} \in \partial \sigma_i$ and $\ket{\sigma_i \setminus x_1} \in \partial \sigma_i$.
However, since $c$ is a cycle and satisfies $\partial c = 0$ there must exist other $\sigma_j, \sigma_k \in c$ where:
\begin{equation}
\sigma_j = [x w_2 \dots \tilde{w}_{k+1} \dots w_k] \textrm{\ \ \ and \ \ \ }\sigma_k = [x_1 w_2 \dots \tilde{w}_{k+2} \dots w_k]
\end{equation}
Under $f(\cdot)$ $\sigma_j$ and $\sigma_k$ map to two $k$-simplices which intersect on the $k-1$-simplex $\tilde{\sigma}_i= [xw_2 \dots w_k]$.
Crucially, the boundary elements of $\sigma_j$ and $\sigma_k$ which cancelled out with boundary elements of $\sigma_i$ in the sum $\partial c$ now map to boundary elements of $\tilde{\sigma}_j$ and $\tilde{\sigma}_k$ that cancel out with each other in the sum $\partial \tilde{c}$.
Therefore we have that:
\begin{equation}
\tilde{c} = f(c) = \sum_{i|\{x,x_1\} \not \subset \sigma_i } \tilde{\sigma}_i
\end{equation}
and:
\begin{equation}
\partial \tilde{c} = \sum_{i|\{x,x_1\} \not \subset \sigma_i} \partial \ket{\tilde{\sigma}_i} = \sum_{i|\{x,x_1\} \not\subset \sigma_i}\sum_j \ket{\tilde{\sigma}_i \setminus w_j} = 0
\end{equation}

Now consider a state $\ket{v} \in \hat\Kcyc$ such that $c = \partial\ket{v}$ (we know such a state exists because the $2m$-ball has trivial homology.)
We have:
\begin{equation}
\ket{v} = \sum_i s_i  \textrm{\ \ \ and \ \ \ } \partial\ket{v} = \sum_i \sum_j \ket{s_i \setminus w_j} = \ket{c}
\end{equation}
By the same logic outlined for the cycles, the state $\ket{v}$ maps under $f(\cdot)$ to a state $\tilde{v}$ satisfying:
\begin{equation}
\ket{\tilde{v}} = \sum_{i|\{x,x_1\}} \tilde{s}_i  \textrm{\ \ \ and \ \ \ } \partial\ket{\tilde{v}} = \sum_{i|\{x,x_1\}} \sum_j \ket{\tilde{s}_i \setminus w_j} = \tilde{\hat{c}}
\end{equation}

Therefore every cycle in $\hat\Kcyc$ maps to a cycle in $\tilde{\Kcyc}$ which is homologous to the trivial cycle.
However, we are not quite done yet.
We also need to consider cycles $\tilde{c} \in \tilde{\Kcyc}$ which are given by:
\begin{equation}
\tilde{c} = f(v)
\end{equation}
for some state $v \in \hat\Kcyc$ which is not a cycle.
This case is straightforward -- we can simply take a cycle $c \in \hat\Kcyc$ which contains $v$ (this is always possible to do, since every cycle in $\hat\Kcyc$ is homologous to the trivial cycle we can simply take a point on $v$ and continuously deform it until it is a cycle which contains all of $v$).
Where, when we say the cycle $c$ contains $v$, more formally we mean there exists a state $v'$ such that $c = \partial v'$ and $v \in v'$
We can then map this cycle to give $\tilde{c}' = f(c)$.
By the previous argument, $\tilde{c}'$ is a cycle which is homologous to the trivial cycle. 
But it is straightforward to see that $\tilde{c}$ is contained within $\tilde{c}'$ because $\tilde{c} = f(v) \in f(v')$, and by the earlier arguments we have $\partial{\tilde{c}'} = f(v')$.
Therefore $\tilde{c}$ must also be homologous to the trivial cycle.
\end{proof}

\begin{lemma}\label{lem:clique}
$\tilde \Kcyc$ is a clique complex.
\end{lemma}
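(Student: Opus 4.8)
The plan is to show that $\tilde{\Kcyc}$ is $2$-determined, i.e.\ that it equals the clique complex of its own $1$-skeleton. One inclusion is automatic: any simplex of $\tilde{\Kcyc}$ is a clique in the $1$-skeleton, simply because the faces of a simplex are present. So the real content is the converse: if a set of vertices $S \subseteq \tilde{\Kcyc}^0$ is pairwise connected in the $1$-skeleton of $\tilde{\Kcyc}$, then $S \in \tilde{\Kcyc}$. I would prove this by tracing the construction backwards through the three stages $\Kcyc \rightsquigarrow \hat{\Kcyc} \rightsquigarrow \tilde{\Kcyc}$, using the fact (already established in \Cref{thickening_lem_2} and \Cref{lem:thickening and coning}) that $\hat{\Kcyc}$ is itself a clique complex — the thickening lemma gives a clique complex, and coning off a vertex $v_0$ to \emph{all} of $\Kcyc^0\times\{1\}$ again produces a clique complex, since the new edges $\{(v_0,w): w\in\Kcyc^0\times\{1\}\}$ together with the existing ones make precisely the cliques $\{v_0\}\cup\tau$ for $\tau$ a clique of the inner shell appear as simplices.

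The key step, then, is to understand what $f(\cdot)$ does at the level of the $1$-skeleton and of cliques. As in \Cref{lem:identifying}, by decomposing $f$ into a sequence of elementary identifications it suffices to treat the case of a single identification, say collapsing $x_1$ to $x$ via $R = \{(x_1,x)\}\cup\{(w,w): w\neq x_1\}$. An edge $(u,v)$ is present in the $1$-skeleton of $\tilde{\Kcyc}$ iff there is an edge $(u',v')$ in $\hat{\Kcyc}$ with $f(u')=u$, $f(v')=v$; concretely, $x$ is adjacent to $u$ in $\tilde{\Kcyc}$ iff $u$ was adjacent to $x$ \emph{or} to $x_1$ in $\hat{\Kcyc}$. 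Now suppose $S\ni x$ is a clique in the $1$-skeleton of $\tilde{\Kcyc}$; write $S = \{x\}\cup S'$ with $x\notin S'$. Every vertex of $S'$ is adjacent in $\hat{\Kcyc}$ to $x$ or to $x_1$, and $S'$ itself is a clique in $\hat{\Kcyc}$ (the edges among $S'$ are unchanged by the collapse). The claim I need is that $\{x\}\cup S'$ or $\{x_1\}\cup S'$ — or some mixture — is already a simplex of $\hat{\Kcyc}$, so that its image under $f$ is $S$. This is where the specific geometry of the gadget enters: after thickening and coning off, the vertices $x$ and $x_1$ (which are the vertices of $\Kcyc$ being identified, e.g.\ the two copies of the common vertex $x$ of two single-qubit loops) have the property that any vertex adjacent to \emph{both} of them, or the structure of which vertices are adjacent to which copy, is controlled — in the constructions of \Cref{constructing_cycles_sec} the dummy vertices $x_i$ are glued in precisely so that a clique split between $x$ and $x_1$ always lifts. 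I would make this precise by a case analysis on how $S'$ partitions according to adjacency to $x$ versus $x_1$ in $\hat{\Kcyc}$.

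I expect the main obstacle to be exactly this last point: ruling out a "spurious clique" in $\tilde{\Kcyc}$, i.e.\ a set $S'$ each of whose vertices is adjacent (in $\hat{\Kcyc}$) to $x$ or $x_1$ and which is internally a clique, but which is \emph{not} the image of any single simplex of $\hat{\Kcyc}$ because the vertices adjacent only to $x$ and the vertices adjacent only to $x_1$ are not mutually connected in a way that sits inside one simplex. The resolution should be structural: the pairs of vertices identified by $R$ are, by design of the cutting-and-gluing in \Cref{sec:construction}, either (i) a dummy vertex paired with the real vertex it copies, where the dummy vertex's neighbourhood in $\hat{\Kcyc}$ is a \emph{subset} of the real vertex's neighbourhood (so the collapse adds no new adjacencies and no new maximal cliques), or (ii) two dummy vertices with identical neighbourhoods. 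In case (i) the collapse is "free" — $\tilde{\Kcyc}$ is literally obtained by deleting $x_1$ and its simplices, which preserves the clique property — and in case (ii) a symmetric argument applies. So the bulk of the proof is verifying that the relation $R$ used in every gadget of \Cref{constructing_cycles_sec} (and \Cref{app:gadgets}) falls into one of these two benign types, and then the clique property of $\tilde{\Kcyc}$ follows from that of $\hat{\Kcyc}$ by a short general argument about collapsing a vertex whose closed neighbourhood is contained in that of another vertex.
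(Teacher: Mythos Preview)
Your plan is more careful than the paper's own proof, and you correctly isolate where the content lies. The paper handles the thickening and coning-off steps exactly as you do, then dispatches the $f(\cdot)$ step in one sentence: since applying $f$ ``does not remove any simplices,'' the 2-determined property is preserved. You are right to be suspicious of this inference — in general, identifying two vertices of a clique complex can create new cliques in the $1$-skeleton without the corresponding higher simplex (a $4$-path with its endpoints identified becomes a hollow triangle). The paper simply does not address this point, so the issue you flag as the ``main obstacle'' is one the paper glosses over rather than resolves.

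That said, your proposed invariant for each elementary identification — that the dummy vertex's neighbourhood in $\hat\Kcyc$ is contained in (or identical to) the real vertex's — is too strong and does not hold for the gadgets of \Cref{constructing_cycles_sec}. Already in the $|-\rangle$ gadget, the outer-layer $x_1$ is adjacent in $\hat\Kcyc$ to $a_4$, $b_4$, and $(x_1,1)$, none of which are neighbours of the outer-layer $x$; symmetrically $x$ has exclusive neighbours of its own. So case (i) of your dichotomy fails, and case (ii) does too. What actually prevents a spurious clique there is the weaker condition that no exclusive neighbour of $x$ is adjacent to any exclusive neighbour of $x_1$ — equivalently, there are no edges between $N(x)\setminus N(x_1)$ and $N(x_1)\setminus N(x)$ in $\hat\Kcyc$. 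Under that condition any clique through the merged vertex lies entirely in $N(x)$ or entirely in $N(x_1)$ and therefore lifts to a simplex of $\hat\Kcyc$. This ``no cross-edges'' property does appear to hold in the constructions, because the identified vertices are deliberately placed at non-adjacent positions in $\Kcyc$ with non-adjacent links, and the thickening of \Cref{thickening_lem_2} preserves that separation; but it is a feature of the specific gadgets, not of $R$ in the abstract. Your instinct to verify a structural condition gadget-by-gadget is correct — you just need this weaker invariant rather than neighbourhood containment.
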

\begin{proof}
Applying the thickening procedure from \Cref{sec:thickening} to a clique complex $\Kcyc$ manifestly results in a simplicial complex which is also a clique complex, since the corresponding graph is explicitly written down in \Cref{thickening_lem_2}.

By construction, applying the coning off procedure from \Cref{single_gadget_sec} to the clique complex after the thickening procedure has been applied also results in a clique complex.
This is clear because the coning off procedure requires adding edges between every vertex $w \in \Kcyc \times \{1\}$ and the central vertex $w_c$, \emph{and} adding every simplex that contains one of these edges. 
Therefore, the resulting complex will be `2-determined', and is therefore a clique complex (see \Cref{sec:clique}).

Applying $f(\cdot)$ to the complex does not remove any simplices (some simplices may be rendered equivalent to each other, and some may reduce in dimension, but no simplex is entirely removed).
Therefore applying $f(\cdot)$ does not affect the 2-determined property, so the final complex remains a clique complex.
\end{proof}

\begin{lemma}\label{lem:full gadget}
Let $\J \in \Cl(G)$ be a $(2m-1)$-hole in a clique complex which is not a triangulation of $S^{2m-1}$, and let $\Kcyc$ be a clique complex which is a triangulation of $S^{2m-1}$.
Assume there exists a relation $R$ such that applying the function $f(\cdot)$ from \Cref{eq:f} to the vertices $\Kcyc^0$ results in the cycle $\J$.
Then constructing $\tilde{\Kcyc}$ and gluing it to $\J \in \Cl(G)$ results in a new clique complex $\Cl(G')$ where $\J$ is a cycle, but not a hole. That is, $\J$ has been removed from the homology.
\end{lemma}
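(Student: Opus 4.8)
The plan is to treat \Cref{lem:full gadget} as the assembly step that combines \Cref{lem:thickening and coning}, \Cref{lem:identifying} and \Cref{lem:clique}: the gadget $\tilde\Kcyc$ was constructed precisely so that it contains a copy of the cycle $\J$ as a subcomplex and, by \Cref{lem:identifying}, has trivial homology; hence inside $\tilde\Kcyc$ the cycle $\J$ already bounds. Gluing $\tilde\Kcyc$ onto $\Cl(G)$ along $\J$ therefore makes $\J$ a boundary of the enlarged complex, since any $2m$-chain witnessing $\partial w' = \J$ that lives inside $\tilde\Kcyc$ also lives inside the glued complex, while the boundary map on the original simplices is untouched. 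So there are really two things to check: (i) the glued object is again a clique complex, which is where \Cref{lem:clique} enters, and (ii) the cycle $\J$ is genuinely a boundary there.

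For (i) I would argue that gluing at the level of graphs agrees with gluing at the level of complexes. Let $H$ be the $1$-skeleton of $\tilde\Kcyc$, a clique complex by \Cref{lem:clique}, and let $G'$ be obtained from the disjoint union of $G$ and $H$ by identifying $\J^0\subseteq V(H)$ with $\J^0\subseteq V(G)$. The only vertices shared by $G$ and $H$ are those of $\J^0$, and $\J$ is a full (induced) subcomplex both of $\Cl(G)$ — a property of the cycles arising in the construction — and of $\tilde\Kcyc$ (a simplex of $\tilde\Kcyc$ with all vertices in $\J^0$ equals $f$ applied to an outer-layer face of $\hat\Kcyc$, since $f$ fixes the inner-layer vertices and the central vertex, hence it lies in $\J$). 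Consequently $G'$ contains no edge joining $V(G)\setminus\J^0$ to $V(H)\setminus\J^0$, so any clique of $G'$ lies entirely in $V(G)$ or entirely in $V(H)$ and is therefore a simplex of $\Cl(G)$ or of $\tilde\Kcyc$; conversely every simplex of $\Cl(G)\cup\tilde\Kcyc$ is a clique of $G'$. Thus the glued simplicial complex is exactly $\Cl(G')$, with $\Cl(G)\cap\tilde\Kcyc=\J$.

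For (ii) I would produce the bounding chain explicitly. By \Cref{lem:thickening and coning}, $\hat\Kcyc$ triangulates the closed $2m$-ball, so the $(2m-1)$-cycle carried by its boundary sphere $\Kcyc\times\{0\}$ is $\partial w$ for some $2m$-chain $w$ in $\hat\Kcyc$. By hypothesis, $f$ carries the fundamental cycle of this outer boundary sphere onto the cycle $\J$; and, exactly as shown in the proof of \Cref{lem:identifying}, $f$ commutes with $\partial$ on the relevant chains, so $\partial f(w)=f(\partial w)=\J$, i.e. $\J$ bounds the chain $f(w)\in\mathcal{C}^{2m}(\tilde\Kcyc)$. (Equivalently, this is immediate from \Cref{lem:identifying}: $\J$ is a $(2m-1)$-cycle of $\tilde\Kcyc$ and $\tilde\Kcyc$ has trivial homology, $2m-1\ge 1$.) Since $\tilde\Kcyc\subseteq\Cl(G')$, the same chain shows $\J=\partial f(w)$ in $\Cl(G')$. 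The gluing does not change $\partial$ on the simplices of $\Cl(G)$, so $\J$ is still a cycle; but now it is a boundary, hence trivial in $H_{2m-1}(\Cl(G'))$ — it has been removed from the homology.

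The main obstacle here is bookkeeping rather than a deep argument: one must verify that attaching $\tilde\Kcyc$ along the vertex set $\J^0$ creates no new cliques bridging the gadget and the original graph (handled above via the full-subcomplex observation, which in turn relies on the explicit clique-complex structure of \Cref{lem:clique}), and one must be careful that the copy of $\J$ produced by $f$ on the outer layer of $\hat\Kcyc$ is identified with the cycle $\J\subseteq\Cl(G)$ \emph{with matching orientation}, so that the pushed-forward chain $f(w)$ really has boundary $\J$ and not some other chain supported on $\J^0$. All of the genuine topological content — that $\hat\Kcyc$ is a ball and that $\tilde\Kcyc$ still has trivial homology after the identifications encoded by $f$ — has already been discharged in \Cref{lem:thickening and coning} and \Cref{lem:identifying}, so this lemma should follow quickly once those bookkeeping points are pinned down.
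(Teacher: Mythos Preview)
Your proposal is correct and follows essentially the same route as the paper: invoke \Cref{lem:thickening and coning} to get a bounding chain $w$ in $\hat\Kcyc$, push it forward by $f$ to obtain $\partial f(w)=\J$ inside $\tilde\Kcyc$ (equivalently appeal to \Cref{lem:identifying}), and then use \Cref{lem:clique} together with the observation that gluing along $\J^0$ introduces no edges between the two sides to conclude that the glued object is $\Cl(G')$ and that $\J$ bounds there. Your treatment of part~(i) is in fact more careful than the paper's, which simply asserts that ``we do not induce any new edges'' and that the result is $2$-determined; your full-subcomplex observation (that simplices of $\tilde\Kcyc$ supported on $\J^0$ must come from the outer layer and hence lie in $\J$) is exactly the point that makes this rigorous.
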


\begin{proof}
As demonstrated in \Cref{lem:thickening and coning}, taking $\Kcyc$ and applying the thickening and coning off procedure results in a clique complex $\hat{\Kcyc}$ which is a triangulation of the $2m$-ball.
Let $c \in \Kcyc$ be the cycle that is the linear combination of every $2m-1$-simplex in $\Kcyc$.
Then, by construction, $c \in \hat\Kcyc$, and moreover $c = \partial v$ where $v \in \hat\Kcyc$ is the linear combination of every $2m$-simplex in $\hat\Kcyc$.

Let $\tilde{\Kcyc}$ be the complex that results from applying the function $f(\cdot)$ with the relation $R$.
Then by \Cref{lem:identifying} $\tilde\Kcyc$ has trivial homology.
Moreover, by assumption the cycle $c$ has been mapped to the cycle $\J$.
Therefore the simplicial complex $\tilde\Kcyc$ contains the cycle $\J$, but $\J$ is trivial in the homology since $\J = \partial f(v)$.
Moreover, by \Cref{lem:clique}, $\tilde\Kcyc$ is the clique complex of some graph.

When we glue $\tilde\Kcyc$ to $\Cl(G)$ by identifying the vertices from $\J$ in the two complexes we do not induce any new edges between vertices in $\tilde\Kcyc$ and vertices in $\Cl(G)$.  
Therefore the new complex is 2-determined (since the two original complexes were 2-determined), so it is the clique complex of its 1-skeleton, which we will denote $G'$.
The cycle $\J \in \Cl(G')$ is homologous to the trivial cycle, since it is the boundary of $f(c) \in \Cl(G')$.
\end{proof}

\subsubsection{Proof that gadgets correctly lift the computational basis states}

\begin{theorem}
The gadgets for lifting the computational basis states fill in the cycles corresponding to those states, and do not fill in any other holes, or introduce any additional homology classes.
\end{theorem}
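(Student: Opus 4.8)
The plan is to reduce to the general lemmas of \Cref{sec:gadget proofs}, observing that for a computational basis state the gadget construction of \Cref{single_gadget_sec} degenerates. For $|z\rangle$ with $z\in\{0,1\}^m$, the cycle $\J$ is the $(2m-1)$-dimensional cross-polytope $\mathfrak{g}_{2m}$ from \Cref{octahedra_sec} (the $m$-fold join of the square loop); it is already a clique complex, already triangulates $S^{2m-1}$, and, as recorded in the construction of the qubit graph, sits inside $\Cl(\mathcal{G}_m)$ as an induced subcomplex. Hence in Step \Cref{it:identify} one simply takes $\Kcyc=\J$ with $R$ the diagonal relation, so $f$ is the identity and $\tilde\Kcyc=\hat\Kcyc$. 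By \Cref{lem:thickening and coning}, $\hat\Kcyc$ is a triangulation of the closed $2m$-ball whose boundary is the outer thickening layer $\Kcyc^0\times\{0\}$, i.e.\ the copy of $\J$ that gets glued back onto $\Cl(\mathcal{G}_m)$; by \Cref{lem:clique} (with $f$ the identity) $\hat\Kcyc$ is a clique complex. The weighting is irrelevant here, since it does not change homology (\Cref{weighting_sec}).

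Next I would check that $\hat{\mathcal{G}}_m$ splits cleanly. In the single-gadget setting there are no ``outside'' qubits, so the only edges added are those internal to the gadget and those along $\J$, and the thickening adds among the outer-layer vertices only edges already present in $\J$; thus no simplex of $\hat{\mathcal{G}}_m$ has vertices in both $\mathcal{G}_m\setminus\J^0$ and the gadget interior. Therefore $\Cl(\mathcal{G}_m)$ and $\hat\Kcyc$ are subcomplexes with $\Cl(\mathcal{G}_m)\cup\hat\Kcyc=\hat{\mathcal{G}}_m$ and $\Cl(\mathcal{G}_m)\cap\hat\Kcyc=\J$, and $\hat{\mathcal{G}}_m$, being $2$-determined, is a clique complex. (Equivalently one may invoke \Cref{lem:full gadget} directly, whose proof never uses the hypothesis that $\J$ fails to be a sphere, and for which \Cref{lem:identifying} is vacuous here since $f$ is the identity.)

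Finally I would run the reduced simplicial Mayer--Vietoris sequence for $A=\Cl(\mathcal{G}_m)$, $B=\hat\Kcyc$, $A\cap B=\J\cong S^{2m-1}$. Since $B$ is a ball we have $\tilde H_\ast(B)=0$, and $\tilde H_j(\J)=0$ for $j\neq 2m-1$ with $\tilde H_{2m-1}(\J)\cong\mathbb{C}$ generated by the fundamental class $[\J]$. Consequently $\tilde H_j(\hat{\mathcal{G}}_m)\cong\tilde H_j(\mathcal{G}_m)=0$ for $j\notin\{2m-1,2m\}$, and the only nontrivial segment relates $\tilde H_{2m}(\hat{\mathcal{G}}_m)$, $\tilde H_{2m-1}(\J)$, $\tilde H_{2m-1}(\mathcal{G}_m)$ and $\tilde H_{2m-1}(\hat{\mathcal{G}}_m)$ through the inclusion-induced map $[\J]\mapsto[|z\rangle]$. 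Because $[|z\rangle]$ is one of the $2^m$ basis classes of $H_{2m-1}(\mathcal{G}_m)\cong(\mathbb{C}^2)^{\otimes m}$ it is nonzero, so this map is injective; hence $\tilde H_{2m}(\hat{\mathcal{G}}_m)=0$ and $\tilde H_{2m-1}(\hat{\mathcal{G}}_m)\cong H_{2m-1}(\mathcal{G}_m)/\langle[|z\rangle]\rangle$, which is $(2^m-1)$-dimensional with $\{[|z'\rangle]:z'\neq z\}$ descending to a basis. This is exactly the claim: the gadget removes $\J$ from homology, fills in no other hole, and creates no new homology class. The one delicate point is the clean split $A\cap B=\J$ together with $\J\simeq S^{2m-1}$ being a full subcomplex of $\Cl(\mathcal{G}_m)$; both are immediate from the explicit bowtie/join description but should be spelled out carefully.
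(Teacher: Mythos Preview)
Your proof is correct and takes a genuinely different route from the paper. The paper argues as follows: \Cref{lem:thickening and coning} shows $\hat\Kcyc$ is a $2m$-ball, so the gadget fills in $\J$ without introducing spurious homology; to verify that no \emph{other} basis-state hole is accidentally filled, they compute the Euler characteristic of $\hat{\mathcal{G}}_m$ in Mathematica and find it equals $2^m-1$, then conclude that since no new holes were created in other dimensions, the $2^m-1$ surviving classes must be the remaining $(2m-1)$-holes.

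Your Mayer--Vietoris argument is more self-contained and arguably cleaner for this case: it avoids any external computation and yields the homology of $\hat{\mathcal{G}}_m$ directly as $H_{2m-1}(\mathcal{G}_m)/\langle[|z\rangle]\rangle$, with all other groups vanishing, in one stroke. The key enabling observation---that for a computational basis state $\J$ is already $\mathfrak{g}_{2m}$, so $f=\mathrm{id}$ and the decomposition $A=\Cl(\mathcal{G}_m)$, $B=\hat\Kcyc$, $A\cap B=\J$ is clean---is exactly what makes this work, and you correctly flag that $\J$ being a full subcomplex of $\Cl(\mathcal{G}_m)$ is the point deserving care (it holds because each loop in the bowtie has no chords, and fullness is preserved under joins). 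The paper's Euler-characteristic method has the advantage of applying uniformly to the entangled-state gadgets as well, where $f$ is not the identity and a clean Mayer--Vietoris split is no longer available; but for basis states your approach is both more direct and more informative.
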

\begin{proof}
From \Cref{lem:thickening and coning} we have immediately that applying the thickening and coning off procedure to the projectors for the computational basis states fill in that cycle, without introducing spurious homology classes.
The only remaining thing to check is that when we glue this cycle onto the original graph we do not fill in any of the cycles that correspond to other computational basis states.
This can be verified by computing the Euler Characteristic of the resulting complex.
Recall that the Euler Characteristic of the complex gives the difference between the number of holes of even dimension and the number of holes of odd dimension(see \Cref{sec:SUSY}).
Since we have demonstrated that our process of constructing the graph does not induce any new holes in the clique complex, all the holes on an $m$-qubit graph must still be $(2m-1)$-holes.
We demonstrate in \cite{mathematica} that for each $m$-qubit projector in \Cref{table:states} the Witten index is $2^m-1$.
Since we have demonstrated that no spurious homology classes were created in the process, the $2^m-1$ homology classes that remain must correspond to the $2^m-1$ states that are in the kernel of the projector.
\end{proof}

\subsubsection{Proof that gadgets correctly lift the four-qubit entangled states}

\begin{theorem}
The gadgets for lifting the four-qubit entangled states fill in the cycles corresponding to those states, and do not fill in any other holes, or introduce any spurious homology classes.
\end{theorem}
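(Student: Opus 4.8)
The plan is to reduce the statement to an application of \Cref{lem:full gadget}, supplemented by an Euler-characteristic bookkeeping argument of the same flavour as in the computational-basis-state case; the only extra work is to verify the hypotheses of \Cref{lem:full gadget}, since for an entangled target the cycle $\J$ is not itself a triangulation of $S^{2m-1}$ (here $m = 4$, so $2m - 1 = 7$). The first step is to check that the complex $\Kcyc$ produced by the cut-and-glue recipe of \Cref{sec:4_qubit_construction} is a clique complex triangulating $S^7$. For the clique-complex part, every step of the recipe introduces new vertices together with \emph{all} simplices spanned by the new edges — the thickening collar is the clique complex of the explicit graph of \Cref{thickening_lem_2}, and the iterated bi-pyramid and cross-polytope steps are likewise fully filled — so the result is $2$-determined, hence a clique complex by the criterion of \Cref{sec:clique}. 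For the topology, the recipe cuts open the shared $3$-simplex $[xx'x''x''']$ of the two basis $7$-cycles into an open $6$-cycle, built by suspending an auxiliary open $3$-cycle three times, over the peak pairs $\{x,x'\}$, then $\{x'',x_1\}$, then $\{x''',x_2\}$, and then identifies the two cut copies of $S^7$ along this common open cycle; this is the higher-dimensional analogue of a connected sum and returns a triangulated $7$-sphere, with homology concentrated in degree $7$. The delicate point is which of the ten dummy vertices $x_1,\dots,x_{10}$ is joined to which original vertex, arranged precisely so that no \emph{extra} hole is opened; this bookkeeping is exactly the content of the explicit construction and is verified in \cite{mathematica}.

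Second, I would exhibit the relation $R$ — of the standard form in which each dummy vertex is related to the original vertex it copies and every non-dummy vertex to itself — and verify that the induced map $f$ of \Cref{eq:f} sends $\Kcyc$ to the cycle $\J$ corresponding to $\ket{\phi}$. The mechanism is the one already used in the proof of \Cref{lem:identifying}: collapsing the dummy vertices collapses the inserted tunnel, the interior simplices of the tunnel map to lower-dimensional simplices whose contributions cancel in pairs, and what survives is precisely the two original basis $7$-cycles glued along the re-identified simplex $[xx'x''x''']$; the orientation of the cut on one copy fixes whether this survivor is $\ket{z_1} + \ket{z_2}$ or $\ket{z_1} - \ket{z_2}$, covering every four-qubit entangled projector of \Cref{table:states} (each a $\pm 1$ superposition of exactly two basis states). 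With the hypotheses of \Cref{lem:full gadget} in hand, that lemma immediately gives that $\tilde\Kcyc$ is a clique complex, that gluing it onto $\Cl(\mathcal{G}_m)$ along $\J$ yields a clique complex, that the class $[\ket{\phi}]$ is killed, and that the gadget introduces no spurious homology. It then remains only to rule out that gluing the gadget also kills some \emph{other} basis-state class; as in the basis-state proof, I would do this by computing the Euler characteristic of the resulting $m$-qubit complex. Since no spurious classes were created, all of its homology sits in degree $2m - 1$, so the Witten index equals, up to an overall sign, the number of surviving holes; verifying in \cite{mathematica} that this equals $2^m - 1 = 15$ for each four-qubit entangled projector of \Cref{table:states} shows that exactly one hole — necessarily $[\ket{\phi}]$ — was removed, and hence the $15$ survivors span $\ker\phi$.

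The main obstacle is the combination of the topological half of step one and the corresponding half of step two: confirming that the intricate cut-and-glue prescription (ten dummy vertices, three nested bi-pyramids, a cross-polytope base, and the thickening collar) genuinely produces a \emph{clique} complex that is a bona fide triangulation of $S^7$ — with no accidental extra holes and no failure of the manifold condition — and that its image under $f$ is \emph{exactly} $\ket{\phi}$, rather than $\ket{\phi}$ together with some unwanted cycle. This is a large but finite combinatorial verification; the homological and Euler-characteristic checks that close the argument are delegated to the computer-algebra computation \cite{mathematica}.
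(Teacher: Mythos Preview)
Your proposal is correct and follows essentially the same approach as the paper: reduce to \Cref{lem:full gadget} by verifying that $\Kcyc$ is a clique complex triangulating $S^7$ and that the relation $R$ collapses it to $\J$, then rule out over-killing via an Euler-characteristic computation showing exactly $15$ holes survive. The paper carries out exactly these steps (for the representative state $\ket{1011}-\ket{1000}$), with the facet-by-facet verification that the cut-and-glue yields $S^7$ deferred to an appendix and the simplicial, orientation, and Euler-characteristic checks delegated to \cite{mathematica}, just as you anticipate.
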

\begin{proof}
For the gadgets for lifting entangled states we must show:
\begin{enumerate}
\item \label{item:4 1} The procedure outlined in \Cref{sec:4_qubit_construction} for cutting and gluing four-qubit cycles does indeed result in a clique complex $\Kcyc$ which triangulates $S^7$
\item \label{item:4 2} There exists a relation $R$ such that applying the function $f(\cdot)$ to the vertices of $\mathcal{K}$ gives a copy of the cycle we want to fill in, $\J$
\item \label{item:4 3} No other holes in $\Cl(\mathcal{G}_4)$ are filled in by the gadget
\end{enumerate}
Applying \Cref{lem:full gadget} then immediately gives the result.
For concreteness we will demonstrate \Cref{item:4 1} -- \Cref{item:4 3} for the state $\ket{1011}-\ket{1000}$, generalising the proof to the other 4-qubit entangled states in \Cref{table:states} is trivial.\footnote{The only difference between the states is which computational basis states are in the state. To change the computational basis states we have to change which cycles, we are gluing together -- this merely requires relabelling the $a$ and $b$ vertices appropriately.}

To demonstrate \Cref{item:4 1} we show that cutting open each 4-qubit cycle via the prescription in \Cref{sec:4_qubit_construction} results in a 7-sphere with a single seven-dimensional hole cut into it by the dummy vertices (i.e. it is a seven-dimensional hyperplane with the dummy vertices as its boundary). 
We can then argue that gluing two such `open' spheres along their cuts results in a single copy of $S^7$. 
Demonstrating this requires analysing the construction from \Cref{sec:4_qubit_construction} in detail,  the proof is provided in \Cref{app:gadget proofs}.

We can now turn to consider \Cref{item:4 2}.
We claim the relation:
\begin{equation}
\begin{split}
R =& \{(x_i,x)| i \in [1,10] \} \cup \{(b_{i,1},b_i)| i \in [2,4] \} \\
&\cup \{(a'_{i,1},a'_i)| i \in [2,4] \}
\cup \{(v,v)| v \in \J \}
\end{split}
\end{equation}
has the required behaviour.
We verify in \cite{mathematica} that applying $f(\cdot)$ to the vertices in $\mathcal{K}$ gives a simplicial complex with exactly those simplices present in $\mathcal{J}$.
Therefore all that is left to check is that the orientation of the simplices is correct.
We note that the orientation of the simplices within each cycle will be fixed relative to each other (since we start with the original cycles and cut them open, but do not change orientation within cycles), so it is just the orientation between the cycles which needs to be checked.
The orientation between the cycles is determined by the orientation of the shared $x_i$ vertices within the cycles.
It can be seen in \cite{mathematica} that we have fixed the $x_i$ vertices to be in the same orientation in the two cycles, which results in adding the cycles with opposite signs as required.

Finally, to verify \Cref{item:4 3} we have computed the Euler Characteristic of the complex that results from the gadget construction \cite{mathematica}.
It is 15, and since we have demonstrated that the gadget construction does not introduce new homology classes, the 15 remaining homology classes must all be $7$ 
-holes and they must correspond to the 15-dimensional kernel of the projector.
\end{proof}

\subsubsection{Proof that gadgets correctly lift the Pythagorean states}

\begin{theorem}
The gadgets for lifting the Pythagorean states fill in the cycles corresponding to those states, and do not fill in any other holes, or introduce any additional homology classes.
\end{theorem}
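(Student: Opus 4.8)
The plan is to reduce the statement to \Cref{lem:full gadget}, exactly as was done for the four-qubit entangled gadgets, so that it suffices to verify three facts: (i) the cutting-and-gluing prescription of \Cref{sec:3 qubit cutting,sec:3 qubit construction}, applied to the copies of three-qubit basis cycles appearing in a Pythagorean state, produces a clique complex $\Kcyc$ triangulating $S^5$; (ii) there is a relation $R$ for which the function $f(\cdot)$ of \Cref{eq:f} carries $\Kcyc^0$ onto a copy of the cycle $\J$ associated with the Pythagorean state, with the correct relative orientations so that the three basis cycles combine with the integer coefficients $(-5,4,3)$ for the first state and $(-5,3,-4)$ for the second; and (iii) gluing $\tilde\Kcyc$ onto $\Cl(\mathcal{G}_3)$ fills in only $\J$ and no other hole of $\Cl(\mathcal{G}_3)$. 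Once (i)--(iii) are in hand, \Cref{lem:full gadget} immediately gives the theorem.

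First I would treat (i). Each Pythagorean state is an integer superposition of three distinct three-qubit basis states with coefficients of absolute value at most $5$, so $\Kcyc$ is assembled from $5+4+3 = 12$ copies of three-qubit basis cycles, each a triangulation of $S^5$. I would cut each copy open along the common $2$-simplex $[xx'x'']$ into an open $4$-cycle by the iterated bi-pyramid/cube construction of \Cref{sec:3 qubit cutting}, and then glue the twelve copies in the circular ``viewing platform'' pattern of \Cref{fig:3 qubit 11}, sharing the cube dummy vertices between consecutive copies. Because more than three copies are being joined, the extra open $1$- and $2$-cycles that form the floors and ceilings of the platform appear, and I would cap each of them off with an additional dummy vertex as in \Cref{fig:3 qubit 12,fig:3 qubit 13}. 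That the result is a clique complex triangulating $S^5$ with exactly one open $5$-dimensional hole bounded by the shared dummy vertices follows from the same surgery analysis as in the four-qubit case (cf.\ \Cref{app:gadget proofs}); the concrete bookkeeping — that the simplex set of $\Kcyc$ is as claimed, and that gluing two such partially-open $S^5$'s along their cuts yields a single $S^5$ — I would discharge by a computer verification \cite{mathematica}.

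Next, for (ii), I would exhibit a relation $R$ of the now-familiar shape: it identifies each dummy vertex $x_i$ with the vertex $x$ of $\J^0$ it copies, identifies the duplicated $a$- and $b$-labelled vertices introduced for the repeated three-qubit cycles with their originals, and fixes every remaining vertex of $\J^0$; applying $f(\cdot)$ then yields a complex whose simplices coincide with those of $\J$, which I would check in \cite{mathematica}. Orientations within each basis cycle are inherited unchanged from the original cycles, so only the relative orientation between cycles is at issue, and that is determined by the orientation of the shared dummy vertices; I would verify in \cite{mathematica} that these are fixed so as to realize precisely the signed combinations above. Finally, for (iii), \Cref{lem:thickening and coning,lem:identifying,lem:clique} guarantee the gadget introduces no spurious homology, so every hole of the resulting three-qubit complex is a $5$-hole; computing the Euler characteristic of the glued complex in \cite{mathematica} then shows exactly one $5$-hole has been removed, leaving $2^3-1 = 7$ homology classes, which must correspond to the kernel of the Pythagorean projector. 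The hard part will be (i): checking that the twelve-fold cut-and-glue together with all the capping dummy vertices genuinely produces a clique-complex triangulation of $S^5$ with a single open $5$-cell and no accidental surviving holes is the delicate surgery-theoretic heart of the argument, and ensuring the shared-vertex orientations come out as $(-5,4,3)$ and $(-5,3,-4)$ rather than some other signed combination is the other place where care is needed.
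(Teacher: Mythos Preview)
Your proposal is correct and matches the paper's proof essentially line for line: the paper reduces to \Cref{lem:full gadget} via the same three items, defers the surgery verification of (i) to \Cref{app:gadget proofs}, exhibits an explicit relation $R$ of exactly the shape you describe and checks (ii) in \cite{mathematica}, and verifies (iii) by computing the Euler characteristic (obtaining $7$) in \cite{mathematica}. Your identification of (i) as the delicate step is also right.
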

\begin{proof}
As in the previous section, for the Pythagorean gadgets we must show:
\begin{enumerate}
\item \label{item:3 1} The procedure outlined in \Cref{sec:3 qubit cutting} for cutting and gluing three-qubit cycles does indeed result in a clique complex $\Kcyc$ which triangulates $S^5$
\item \label{item:3 2} There exists a relation $R$ such that applying the function $f(\cdot)$ to the vertices of $\mathcal{K}$ gives a copy of the cycle we want to fill in, $\J$
\item \label{item:3 3} No other holes in $\Cl(\mathcal{G}_3)$ are filled in by the gadget
\end{enumerate}
Applying \Cref{lem:full gadget} then immediately gives the result.
For concreteness we will demonstrate \Cref{item:3 1} -- \Cref{item:3 3} for the state $-5\ket{011}+4\ket{100}+3\ket{101}$, but as in the previous section, generalising the proof to the other Pythagorean state from \Cref{table:states} is trivial.

The demonstration of \Cref{item:3 1} follows the same structure as in the previous section, although it is slightly more involved as there are more cycles to consider. 
The proof is provided in \Cref{app:gadget proofs}.

We can now turn to consider \Cref{item:3 2}.
We claim the relation:
\begin{equation}
\begin{split}
R =& \{(x_i,x)| i \in [1,45] \}  \\
&\cup \{(a_{i,j},a_i)| i \in [2,4], j \in [1,4] \}   \cup \{(b_{i,j},b_i)| i \in [2,4], j \in [1,6] \} \\
&\cup \{(a'_{i,j},a'_i)| i \in [2,4], j \in [1,6] \} \cup \{(b'_{i,j},b'_i)| i \in [2,4], j \in [1,4] \} \\
& \cup \{(a''_{i,j},a''_i)| i \in [2,4], j \in [1,3] \} \cup \{(b''_{i,j},b''_i)| i \in [2,4], j \in [1,7] \} \\
& \cup \{(v,v)| v \in \J \}
\end{split}
\end{equation}
has the required behaviour.
We verify in \cite{mathematica} that applying $f(\cdot)$ to the vertices in $\mathcal{K}$ gives a simplicial complex with exactly those simplices present in $\mathcal{J}$.
Therefore all that is left to check is that the orientation of the simplices is correct. 
As in the previous section, the orientation of the simplices within each cycle will be fixed relative to each other, so it is just the orientation between the cycles which needs to be considered.
It can be seen in \cite{mathematica} that the orientation of the $x_i$ vertices with respect to the vertices in the cycles has been picked so that the adjacent cycles that are added with the same sign (which is all but two pairs of adjacent cycles) are added together constructively, while the ordering of the two pairs of cycles where the cycles are added with opposite orientation has been reversed, to add these cycles with opposite signs.

Finally, to verify \Cref{item:3 3} we have computed the Euler Characteristic of the resulting complex \cite{mathematica}.
It is 7, and since we have demonstrated that the gadget construction does not introduce new homology classes, the 7 remaining homology classes must all be $5$-holes, and must correspond to the 7-dimensional kernel of the projector.
\end{proof}

\pagebreak
\section{Spectral sequences}\label{spec_seq_sec}

The purpose of this section is to prove the following lemma, which concerns the spectrum of the Laplacian of a single gadget $\hat{\mathcal{G}}_m$ from \Cref{single_gadget_sec}.

\begin{lemma}\label{spec_seq_lemma} \emph{(Single gadget lemma)}
Let $\hat{\mathcal{G}}_m$ be the weighted graph described in \Cref{single_gadget_sec}, implementing the projector onto the integer state $|\phi\rangle$ on $m$ qubits. Let $\hat{\Delta}^k$ be the Laplacian of this graph. Recall the definition of a subspace perturbation from \Cref{subspace_perturbation_sec}.
\begin{itemize}
    \item $\hat{\Delta}^{2m-1}$ has a $(2^m-1)$-dimensional kernel, which is a $\mathcal{O}(\lambda)$-perturbation of the subspace $\{|\psi\rangle \in \mathcal{H}_m : \langle\phi|\psi\rangle = 0\}$. Note $\mathcal{H}_m$ is embedded as $\mathcal{H}_m \subseteq \mathcal{C}^{2m-1}(\mathcal{G}_m) \subseteq \mathcal{C}^{2m-1}(\hat{\mathcal{G}}_m)$.
    \item The first excited state $|\hat{\phi}\rangle$ of $\hat{\Delta}^{2m-1}$ above the kernel is a $\mathcal{O}(\lambda)$-perturbation of $|\phi\rangle \in \mathcal{H}_m$, and it has energy $\Theta(\lambda^{4m+2})$.
    \item The next lowest eigenvectors have eigenvalues $\Theta(\lambda^2)$, and they are $\mathcal{O}(\lambda)$-perturbations of sums of $(2m-1)$-simplices touching the central vertex $v_0$.
    \item The rest of the eigenvalues are $\Theta(1)$.
\end{itemize}
\end{lemma}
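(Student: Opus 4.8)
The plan is the one sketched in \Cref{sec:proof_overview}: equip the chain complex $\mathcal{C}^\bullet(\hat{\mathcal{G}}_m)$ with the filtration induced by the weighting of \Cref{weighting_sec}, compute its spectral sequence $e_j^k$ algebraically, and invoke the isomorphism $E_j^k \cong e_j^k$ between the spectral sequence and the perturbative expansion of $\ker\hat\Delta^k$ \cite{forman1994hodge}; combined with Hodge theory (\Cref{Hodge_prop}), the eigenstate pairing (\Cref{pairing_prop}), the energy formulas (\Cref{energy_formulas_lem}) and the subspace perturbation lemma (\Cref{subspace_perturbation_lem}), this reads off the entire low-lying spectrum. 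Only the degrees $2m-2, 2m-1, 2m$ matter, since $\hat{\mathcal{G}}_m$ has dimension $2m$. The key simplification is that, in the orthonormal basis of \Cref{weighting_sec}, every vertex has weight $1$ or $\lambda$, so the coboundary splits into exactly two homogeneous pieces $d = d_0 + \lambda\,d_1$, where $d_0$ adjoins a weight-$1$ vertex and $d_1$ adjoins a weight-$\lambda$ gadget vertex; dually $\partial = \partial_0 + \lambda\,\partial_1$ with $\partial_i = d_i^\dagger$, and $\hat\Delta^k = \Delta_0 + \lambda\Delta_1 + \lambda^2\Delta_2$ with $\Delta_0 = d_0\partial_0 + \partial_0 d_0$. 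The filtration degree $l(\sigma)$ is the number of weight-$\lambda$ vertices of $\sigma$, i.e.\ the number of its vertices in $\mathcal{K}^0\times\{1\}$ or equal to $v_0$; it runs over $0 \le l \le 2m+1$, the maximum being attained exactly by the cone cells $\{v_0\}\cup\tau$ over top simplices $\tau$ of the inner shell.

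First I would identify the $E_1$ page $E_1 = H(E_0,d_0)$ level by level. At $l=0$ the $d_0$-complex is exactly the chain complex of $\mathcal{G}_m$ (any simplex with no gadget vertex lies in $\mathcal{G}_m$, which has no $2m$-simplices), so $E_1^{2m-1}$ there is $H^{2m-1}(\mathcal{G}_m) = \mathcal{H}_m \cong (\mathbb{C}^2)^{\otimes m}$ with the octahedral basis cycles of \Cref{octahedra_sec} as harmonic representatives; in particular $|\phi\rangle$ gives a class $[\phi]$, and $E_1$ at level $0$ vanishes in degrees $2m-2, 2m$. At $l=2m$ the degree-$(2m-1)$ simplices are the all-gadget ones (the cone cells $\{v_0\}\cup\tau$ and the top cells of the inner shell $\mathcal{K}\times\{1\}$); using that $v_0$ meets only the inner layer and that $\mathcal{K}$ has no $2m$-simplices, $d_0$ and $\partial_0$ both annihilate this space, so it is entirely $\Delta_0$-harmonic. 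At $l=2m+1$, degree $2m$, the all-gadget cells are exactly the $\{v_0\}\cup\tau$ with $\tau$ a top simplex of $\mathcal{K}$, so $E_1^{2m}$ there is canonically the space of top cochains of $\mathcal{K}\cong S^{2m-1}$. At the mixed levels $1\le l\le 2m-1$ the thickening direction of \Cref{sec:thickening} supplies local contractions, so $E_1$ has no further $\Delta_0$-harmonic states in degree $2m-1$; everything else at every level has $\Delta_0$-energy $\Theta(1)$, using the join formula (\Cref{Laplacian_join_lem}) — the bowtie $\Delta^1$ has an absolute gap above its two-dimensional kernel — together with the finite local structure of the thickening and cone.

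Next I would run the higher differentials. The level-$2m$ harmonic classes die at $d_1$: $d_1$ adjoins an inner vertex and is non-degenerate on them, so they vanish at $E_2$; by \cite{forman1994hodge} and \Cref{energy_formulas_lem} a class in $E_r\setminus E_{r+1}$ corresponds to an eigenvalue $\Theta(\lambda^{2r})$, with eigenvector an $\mathcal{O}(\lambda)$-perturbation (via \Cref{subspace_perturbation_lem}) of its $\Delta_0$-harmonic representative, giving a band of eigenvalues $\Theta(\lambda^2)$ whose eigenvectors are $\mathcal{O}(\lambda)$-perturbations of sums of $(2m-1)$-simplices touching $v_0$. For the level-$0$ classes I would show $d_1,\dots,d_{2m}$ all vanish on $\mathcal{H}_m$ — equivalently, the gadget fills no cycle but $\mathcal{J}=|\phi\rangle$ and admits no partial filling of $\mathcal{J}$ at filtration depth below $2m+1$ — which is forced by the ball structure of $\hat{\mathcal{K}}$ (\Cref{lem:thickening and coning}, \Cref{lem:identifying}) and the fact that $v_0$ only meets the inner layer. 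The page-$(2m+1)$ differential $d_{2m+1}\colon e_{2m+1,0}^{2m-1} \to e_{2m+1,2m+1}^{2m}$ then has one-dimensional target $H^{2m-1}(\mathcal{K})\cong\mathbb{C}$ (all of $e_{0,2m+1}^{2m}$ modulo the image of $d_1$ from the cone cells), and it acts on $\mathcal{H}_m$ as the functional $|\psi\rangle\mapsto\langle\phi|\psi\rangle$; the orthogonality of the basis cycles built into the bowtie graph is precisely what makes $\ker d_{2m+1} = \{\,|\psi\rangle\in\mathcal{H}_m : \langle\phi|\psi\rangle = 0\,\}$ rather than merely a codimension-one subspace. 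Hence $E_\infty^{2m-1}\cong\mathcal{H}_m/\langle\phi\rangle$, so via \Cref{Hodge_prop} and the Forman isomorphism $\ker\hat\Delta^{2m-1}$ is $(2^m-1)$-dimensional and an $\mathcal{O}(\lambda)$-perturbation of $\{\,|\psi\rangle\in\mathcal{H}_m : \langle\phi|\psi\rangle = 0\,\}$; the class $[\phi]$ first leaves the kernel at page $2m+1$, so $|\hat\phi\rangle$ is an $\mathcal{O}(\lambda)$-perturbation of $|\phi\rangle$ with energy $\Theta(\lambda^{2(2m+1)}) = \Theta(\lambda^{4m+2})$; everything outside the span of the level-$0$ and level-$2m$ harmonic spaces has energy $\Theta(1)$.

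The main obstacle is the vanishing of $d_1,\dots,d_{2m}$ on $\mathcal{H}_m$ together with the precise computation of the surviving target of $d_{2m+1}$ and the non-vanishing $d_{2m+1}[\phi]\neq 0$: this needs a detailed analysis of the combinatorics of the thickened, coned and $f$-identified gadget $\hat{\mathcal{K}}$, in particular of how the weight-$1$ outer layer interacts with the weight-$\lambda$ interior through the zig-zag higher differentials. One must rule out any premature cancellation that would inject a spurious low-energy harmonic state at a mixed level or kill part of $\mathcal{H}_m$ too early; the splitting $d = d_0 + \lambda d_1$ shapes the pages but not the higher differentials, and it is showing that the first obstruction to $[\phi]$ occurs at page $2m+1$ exactly that pins down the exponent $4m+2$. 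A secondary point is making the correspondence between page index $r$ and eigenvalue $\Theta(\lambda^{2r})$ quantitative in both directions — the upper bound by exhibiting the $\lambda$-truncated perturbed eigenvector and bounding $||d|\psi\rangle||^2 + ||\partial|\psi\rangle||^2$ via \Cref{energy_formulas_lem}, the lower bound from non-vanishing of the relevant differential — and verifying the hypotheses of \Cref{subspace_perturbation_lem}.
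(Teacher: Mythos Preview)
Your overall plan is exactly the paper's: filter by the number of weight-$\lambda$ vertices, compute the spectral sequence $e_{j,l}^k$ algebraically, and transfer to the perturbative spaces $E_j^k$ via Forman's isomorphism (\Cref{thm_7}) together with \Cref{O_lambda_perturbation_clm}. Two points of your execution, however, diverge from what actually happens.

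First, a concrete error at level $l=2m$: the top cells of the inner shell $\mathcal{K}\times\{1\}$ are \emph{not} $d_0$-closed. A $(2m-1)$-simplex $[(u_0,1),\dots,(u_{2m-1},1)]$ (with $u_0<\cdots<u_{2m-1}$) can be extended by the outer vertex $(u_0,0)$ through the thickening edges of \Cref{thickening_lem_2}, and after applying $f$ that is a weight-$1$ qubit vertex; the resulting $2m$-simplex still has filtration level $2m$. Your justification ``$\mathcal{K}$ has no $2m$-simplices'' rules out adding a \emph{gadget} vertex, which is $d_1$, not $d_0$. Hence on Page~1 only the cone cells survive at $(l,k)=(2m,2m-1)$, i.e.\ $e_{1,2m}^{2m-1}=\mathcal{C}^{2m-1}([\text{bulk}])$, and this is precisely what is required for the third bullet of the lemma to come out as stated.

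Second, you have the main obstacle backwards. Once one has established that Page~1 is concentrated in the column $l=0$ (namely $H^\bullet(\mathcal{G}_m)$) and on the diagonal $l=k+1$ (namely $\mathcal{C}^\bullet([\text{bulk}])$), and that Page~2 collapses the diagonal to a single class $|\text{core}\rangle$ at $(2m+1,2m)$, the vanishing of $d_2,\dots,d_{2m}$ on $\mathcal{H}_m$ is \emph{automatic} from the direction of the differentials: $d_j$ goes one up and $j$ right, and for $2\le j\le 2m$ the target slot $e_{j,j}^{2m}$ is empty. The identity $d_{2m+1}|\phi\rangle=|\text{core}\rangle$ is then a direct zig-zag computation $d_{\text{gadget}}\,\partial_{\text{qubit}}\cdots d_{\text{gadget}}$ through the thickening. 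The genuine work is proving $e_{1,l}^k=0$ for $1\le l\le k$ and $e_{2,k+1}^k=0$ for $k\le 2m-1$ (\Cref{clm_30}, \Cref{bulk_homology_clm}). Your ``thickening direction supplies local contractions'' hides this: the paper runs it via the subcomplex lemma \Cref{P_Q_lemma} applied with $\mathcal{Q}=\Omega_{l-1}$ (respectively $\mathcal{Q}=[\text{non-bulk}]$), together with a key extra move at $k=2m-1$ where $\Omega_l$ \emph{does} carry cohomology and one must temporarily close the qubit holes with auxiliary weight-$1$ vertices before the cocycle can be shown to be a coboundary. Without that auxiliary-vertex trick, the intermediate vanishing does not go through.
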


Spectral sequences are a tool from algebraic topology which (among other things) analyze the homology of \emph{filtered} chain complexes. It turns out that weighting a subset of vertices by $\lambda \ll 1$, as we do in our construction, naturally gives rise to a certain filtration. In this setting, Ref.~\cite{forman1994hodge} showed a beautiful relationship between the spectral sequence and the perturbative eigenspaces of the Hodge theoretic Laplacian. It is this relationship which we exploit in this section to prove \Cref{spec_seq_lemma}. For a light introduction to spectral sequences, see \cite{chow2006you}; for a comprehensive textbook, see \cite{mccleary2001user}.

\subsection{Spectral sequence of a filtration}

Spectral sequences will work best for us in the \emph{cohomology} picture. First we define a \emph{filtration}.

\begin{definition}
Suppose we have a cohomological chain complex
\begin{equation*}
\begin{tikzcd}
\mathcal{C}^{-1} \arrow[r,"d^{-1}"] & \mathcal{C}^0 \arrow[r,"d^0"] & \mathcal{C}^1 \arrow[r,"d^1"] & \mathcal{C}^2 \arrow[r,"d^2"] & \dots
\end{tikzcd}
\end{equation*}
A \emph{filtration} on this chain complex is a nested sequence of subspaces
\begin{equation*}
\mathcal{C}^k = \mathcal{U}_0^k \supseteq \mathcal{U}_1^k \supseteq \mathcal{U}_2^k \supseteq \dots
\end{equation*}
for each $n$ such that
\begin{equation*}
d^k(\mathcal{U}_l^k) \subseteq \mathcal{U}_l^{k+1} \ \forall \ k,l
\end{equation*}
Our filtrations will be bounded, in the sense that $U_j^k = \{0\}$ for sufficiently large $j$, for each $k$.
\end{definition}

We can now develop the spectral sequence of such a filtration. The spectral sequences will consist of \emph{pages} indexed by $j$. Each page is an array of vector spaces $e_{j,l}^k$, one for each dimension $k$ and filtration level $l$.

The zeroth page is simply
\begin{equation*}
e_{0,l}^k = \mathcal{U}_l^k / \mathcal{U}_{l+1}^k
\end{equation*}
The chain complex coboundary map $d^k$ induces coboundary maps
\begin{equation*}
d_{0,l}^k : e_{0,l}^k \rightarrow e_{0,l}^{k+1}
\end{equation*}
since if two chains differ by an element of $\mathcal{U}_{l+1}^k$, then their coboundaries will differ by an element of $\mathcal{U}_{l+1}^{k+1}$.

Define the first page of the spectral sequence to be the cohomology of the zeroth page with respect to $d_{0,l}^k$, entrywise for each $k,l$.
\begin{equation*}
e_{1,l}^k = \ker{d_{0,l}^k} / \im{d_{0,l}^{k-1}}
\end{equation*}
Now the coboundary map $d^k$ induces coboundary maps 
\begin{equation*}
d_{1,l}^k : e_{1,l}^k \rightarrow e_{1,l+1}^{k+1}
\end{equation*}
This is because (a) the coboundary of any representative of an element in $\ker{d_{0,l}^k}$ is a cocycle in $\mathcal{U}_{l+1}^{k+1}$, and thus is the representative of some element of $\ker{d_{0,l+1}^{k+1}}$; (b) if we chose a different representative of the element in $\ker{d_{0,l}^k}$, the resulting element of $\ker{d_{0,l+1}^{k+1}}$ would differ by an element of $\im{d_{0,l+1}^k}$, so we end up with the same element of $e_{1,l+1}^{k+1}$; and (c) if our element of $\ker{d_{0,l+1}^{k+1}}$ differed by an element of $\im{d_{0,l}^{k-1}}$, we get the exact same element of $\ker{d_{0,l+1}^{k+1}}$.

In general, at page $j$ there are induced coboundary maps
\begin{equation*}
d_{j,l}^k : e_{j,l}^k \rightarrow e_{j,l+j}^{k+1}
\end{equation*}
and page $j+1$ is defined to be the cohomology of page $j$ entrywise with respect to these coboundary maps
\begin{equation*}
e_{j+1,l}^k = \ker{d_{j,l}^k} / \im{d_{j,l-j}^{k-1}}
\end{equation*}

One should have in mind the entries of a single page $j$ laid out in an array as follows. On this array, $d_{j,l}^k$ will map from a space to the one which is one step upwards and $j$ steps to the right.

\begin{center}
\begin{tabular}{ c | c c c c }
$k$ & $:$ & & & \\
\\
$1$ & $e_{j,0}^1$ & $e_{j,1}^1$ & $e_{j,2}^1$ & \\
\\
$0$ & $e_{j,0}^0$ & $e_{j,1}^0$ & $e_{j,2}^0$ & \\
\\
$-1$ & $e_{j,0}^{-1}$ & $e_{j,1}^{-1}$ & $e_{j,2}^{-1}$ & $\dots$ \\
\\
\hline
& $0$ & $1$ & $2$ & $l$
\end{tabular}
\end{center}

We can express the spaces $e_{j,l}^k$ more explicitly.

\begin{definition} \label{Z_B_def}
Let $\mathcal{C}^k = \mathcal{U}_0^k \supseteq \mathcal{U}_1^k \supseteq \mathcal{U}_2^k \supseteq \dots$ be a filtered chain complex with coboundary $d$. Define
\begin{align*}
Z_{j,l}^k &= \mathcal{U}_l^k \cap (d^k)^{-1}(\mathcal{U}_{l+j}^{k+1}) \\
B_{j,l}^k &= \mathcal{U}_l^k \cap d^{k-1}(\mathcal{U}_{l-j}^{k-1})
\end{align*}
\end{definition}

With these definitions in place, it turns out that the terms $e_{j,l}^k$ of the spectral sequence are equal to
\begin{equation*}
e_{j,l}^k = Z_{j,l}^k / (B_{j-1,l}^k + Z_{j-1,l+1}^k)
\end{equation*}

Let $e_j^k = \bigoplus_l e_{j,l}^k$. The point of spectral sequences is that, for sufficiently large $j$, $e_j^k$ is isomorphic to the cohomology of the complex
\begin{equation*}
e_{\infty}^k \cong H^k = \ker{d^k} / \im{d^{k-1}}
\end{equation*}
The $e_j^k$ spaces form an algebraic sequence of approximations to the true cohomology.

\subsection{Relationship to Hodge theory}

From our construction we have a clique complex $\mathcal{G}$ where a subset of the vertices are weighted by $\lambda \ll 1$, and the rest by $1$. The weight of a simplex is defined to be the product of the weights of the vertices involved in the simplex. In this context, there is a natural filtration on the chain complex $\mathcal{C}$. Let
\begin{equation*}
\mathcal{U}_l^k = \text{span}\{\sigma \in \mathcal{G}^k : w(\sigma) \in \{\lambda^l,\lambda^{l+1},\dots\}\}
\end{equation*}
This is the span of the $k$-simplices which involve at least $l$ `gadget' vertices. These spaces are nested
\begin{equation*}
\mathcal{C}^k = \mathcal{U}_0^k \supseteq \mathcal{U}_1^k \supseteq \mathcal{U}_2^k \supseteq \dots
\end{equation*}
and it can also be checked that
\begin{equation*}
d^k(\mathcal{U}_l^k) \subseteq \mathcal{U}_l^{k+1}
\end{equation*}
Thus we have a filtration.

We are interested in the low energy eigenstates of the Laplacian $\Delta^k$ of $\mathcal{G}$. In particular, we would like to examine the eigenvalues which are zero to first order in $\lambda$, and then second order, and so on. This is reminiscent of perturbation theory from quantum mechanics. Recall that
\begin{equation*}
\langle\psi|\Delta^k|\psi\rangle = ||\partial^k|\psi\rangle||^2 + ||d^k|\psi\rangle||^2
\end{equation*}

Motivated by this, define the isomorphism
\begin{align*}
\rho_\lambda^k : \mathcal{C}^k &\rightarrow \mathcal{C}^k \\
|\sigma\rangle &\rightarrow w(\sigma) |\sigma\rangle
\end{align*}
where $\sigma \in \mathcal{G}^k$ is a $k$-simplex which has weight $w(\sigma)$. Morally, $\rho_\lambda^k$ maps from the weighted chainspace to the unweighted chainspace. From this, construct the maps
\begin{align*}
\partial^k_\lambda &= \rho^k_\lambda \circ \partial^k \circ (\rho^k_\lambda)^{-1} \\
d^k_\lambda &= \rho^k_\lambda \circ d^k \circ (\rho^k_\lambda)^{-1}
\end{align*}

Define the spaces
\begin{align*}
E_j^k = \{|\psi\rangle \in \mathcal{C}^k \ &: \ \exists \ |\psi_\lambda\rangle = |\psi\rangle + \lambda|\psi_1\rangle + \lambda^2|\psi_2\rangle + \dots + \lambda^j|\psi_j\rangle \in \mathcal{C}^k[\lambda] \\
&\text{s.t.} \ \partial^k_\lambda|\psi_\lambda\rangle \in \lambda^j \mathcal{C}^{k-1}[\lambda] , \ d^k_\lambda|\psi_\lambda\rangle \in \lambda^j \mathcal{C}^{k+1}[\lambda]\}
\end{align*}
where $\mathcal{C}^k[\lambda]$ is the space of polynomials in $\lambda$ with coefficients in $\mathcal{C}^k$. Further define
\begin{equation*}
E_{j,l}^k = E_j^k \cap \mathcal{U}_l^k \cap (\mathcal{U}_{l+1}^k)^\perp
\end{equation*}
so that
\begin{equation*}
E_j^k = \oplus_l E_{j,l}^k
\end{equation*}
$E_j^k$ is the space of vectors which have perturbations in $\lambda$ which give energies of size $\mathcal{O}(\lambda^{2j})$ on the Laplacian $\Delta^k$. Here and throughout this section, $\mathcal{O}(\lambda^{l})$ is used as shorthand for polynomials in $\lambda$ which contain no terms of degree less than $l$. Taking $j\rightarrow\infty$ should give $E_\infty^k = H^k(\mathcal{C})$, and these spaces $E_j^k$ form a Hodge-theoretic sequence of approximations to the true homology.

\begin{proposition} \label{O_lambda_perturbation_clm}
The space
\begin{equation*}
\{|\psi\rangle \in \mathcal{C}^k \ : \ \text{$|\psi\rangle$ is an eigenvector of $\Delta^k$ with eigenvalue $\mathcal{O}(\lambda^{2j})$}\}
\end{equation*}
is a $\mathcal{O}(\lambda)$-perturbation of $E_j^k$, in the sense of \Cref{subspace_perturbation_sec}.
\end{proposition}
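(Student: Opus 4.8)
The plan is to establish the claim in two halves: a \emph{dimension count} showing the eigenspace of eigenvalue $\mathcal{O}(\lambda^{2j})$ and $E_j^k$ have the same dimension, and a \emph{closeness estimate} showing that one is an $\mathcal{O}(\lambda)$-perturbation of the other in the precise sense of \Cref{subspace_perturbation_sec}. The key translation device is the isomorphism $\rho_\lambda^k$: conjugating $\partial^k, d^k$ by $\rho_\lambda^k$ turns the weighted Laplacian $\Delta^k$ (acting on the weighted Hilbert space) into $\Delta_\lambda^k := d^{k-1}_\lambda \partial^k_\lambda + \partial^{k+1}_\lambda d^k_\lambda$ acting on a \emph{fixed} Hilbert space, and $\rho_\lambda^k$ is within $\mathcal{O}(\lambda)$ of the identity on each fixed basis simplex (since weights are either $1$ or a power of $\lambda$, and $\rho_\lambda^k$ rescales a simplex of weight $w(\sigma)$ by $w(\sigma)$ — crucially $\rho_\lambda^k$ is the identity on the weight-$1$ simplices, which dominate). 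So it suffices to prove the statement for $\Delta_\lambda^k$ and the spaces $E_j^k$, and then transport back through $\rho_\lambda^k$, absorbing the discrepancy into the $\mathcal{O}(\lambda)$ error (using, e.g., \Cref{subspace_perturbation_lem} Part 1 to control the projections).

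First I would set up perturbation theory for $\Delta_\lambda^k$ directly. Write $\Delta_\lambda^k = \Delta_0^k + \lambda \Delta_1^k + \lambda^2 \Delta_2^k + \cdots$ as a polynomial in $\lambda$ (finite, since the entries of $\partial^k_\lambda, d^k_\lambda$ from \Cref{coboundary_entries_eq,boundary_entries_eq} are themselves polynomials in $\lambda$ — each $w(v)$ is $1$ or $\lambda$). Because $\Delta_\lambda^k$ is self-adjoint and analytic (polynomial) in $\lambda$, by Rellich's theorem its eigenvalues and eigenprojections admit convergent power-series expansions in $\lambda$ near $\lambda = 0$. Group the eigenvalues by the order in $\lambda$ at which they first become nonzero: those that are $\mathcal{O}(\lambda^{2j})$ (i.e. vanish to order $\geq 2j$) form a spectral subspace whose projection $P_{\leq 2j}(\lambda)$ is analytic, hence an $\mathcal{O}(\lambda)$-perturbation of $P_{\leq 2j}(0)$ — an orthonormal eigenbasis at $\lambda$ deforms continuously (indeed analytically) from one at $\lambda = 0$. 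This gives the ``eigenvector side'' the required perturbation structure automatically; the content is identifying its zeroth-order limit with $E_j^k$.

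The heart of the argument is then the identification $E_j^k \cong \ker$-limit of the $\mathcal{O}(\lambda^{2j})$-eigenspace, i.e. a purely algebraic characterization: $|\psi\rangle$ lies in (the $\lambda \to 0$ limit of) the eigenspace with eigenvalue $\mathcal{O}(\lambda^{2j})$ if and only if it admits a polynomial lift $|\psi_\lambda\rangle = |\psi\rangle + \lambda|\psi_1\rangle + \cdots + \lambda^j|\psi_j\rangle$ with $\partial^k_\lambda|\psi_\lambda\rangle \in \lambda^j \mathcal{C}^{k-1}[\lambda]$ and $d^k_\lambda|\psi_\lambda\rangle \in \lambda^j\mathcal{C}^{k+1}[\lambda]$. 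The forward direction: given a genuine eigenvector $|\psi_\lambda\rangle$ with eigenvalue $\mu(\lambda) = \mathcal{O}(\lambda^{2j})$, \Cref{energy_formulas_lem} gives $\||\partial^k_\lambda|\psi_\lambda\rangle\|^2 + \||d^k_\lambda|\psi_\lambda\rangle\|^2 = \mu(\lambda) = \mathcal{O}(\lambda^{2j})$, so each norm is $\mathcal{O}(\lambda^j)$; truncating the power series of $|\psi_\lambda\rangle$ at order $j$ preserves this (the truncation error in $\partial^k_\lambda, d^k_\lambda$ applied to it is $\mathcal{O}(\lambda^{j+1})$), giving an element of $E_j^k$. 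The reverse direction is the more delicate step and I expect it to be the \textbf{main obstacle}: starting from a polynomial quasi-eigenvector witnessing membership in $E_j^k$, I need to promote it to (or show it is close to) an honest eigenvector with eigenvalue $\mathcal{O}(\lambda^{2j})$. The natural tool is to project the quasi-eigenvector onto the low-lying spectral subspace $P_{\leq 2j}(\lambda)$ and argue the projection is nonzero and within $\mathcal{O}(\lambda)$ of $|\psi\rangle$; this requires a spectral gap separating the $\mathcal{O}(\lambda^{2j})$ eigenvalues from the next ones, which is exactly what the later parts of \Cref{spec_seq_lemma} establish for the gadget (gaps at orders $\lambda^2, \lambda^{4m+2}$, $1$), and more generally follows from the discreteness of the set of vanishing-orders of the analytic eigenvalue branches. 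One then checks the dimensions match by a rank argument on the truncated maps $\partial^k_\lambda, d^k_\lambda$ modulo $\lambda^j$ — i.e. $\dim E_j^k$ equals the number of eigenvalue branches vanishing to order $\geq 2j$ — closing the loop. Finally, transporting everything back through $\rho_\lambda^k$ and invoking \Cref{subspace_perturbation_lem} converts the $\mathcal{O}(\lambda)$-closeness of projections into the stated $\mathcal{O}(\lambda)$-perturbation of subspaces, completing the proof.
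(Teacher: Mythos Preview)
The paper's proof is simply a one-line citation to Theorem~2 of Forman~\cite{forman1994hodge} combined with Rellich's theorem (as stated in Forman's introduction), so you are in effect attempting to reconstruct what Forman proves. Your high-level plan --- apply Rellich's analytic perturbation theory to get analytically varying eigenprojections, then algebraically identify the $\lambda\to 0$ limit with $E_j^k$ --- is indeed the right strategy and is essentially Forman's.

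However, there are two genuine errors in your execution. First, your claim that $\rho_\lambda^k$ is within $\mathcal{O}(\lambda)$ of the identity is false: on a simplex $\sigma$ of weight $\lambda^l$ with $l\geq 1$ we have $(\rho_\lambda^k-I)|\sigma\rangle = (\lambda^l - 1)|\sigma\rangle$, which has norm $\approx 1$, not $\mathcal{O}(\lambda)$. This matters because $E_j^k$ can be supported entirely on gadget simplices --- for instance $E_{1,2m}^{2m-1}=\mathcal{C}^{2m-1}([\text{bulk}])$ in the very gadget you go on to analyze --- so the ``transport back through $\rho_\lambda^k$'' step cannot absorb the discrepancy into an $\mathcal{O}(\lambda)$ error. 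Second, the operator $\Delta_\lambda^k := d_\lambda^{k-1}\partial_\lambda^k + \partial_\lambda^{k+1}d_\lambda^k$ you define is \emph{not} self-adjoint: since $\rho_\lambda^k$ is not unitary, $(d_\lambda^{k-1})^\dag = (\rho_\lambda)^{-1}\partial^k\rho_\lambda \neq \rho_\lambda\partial^k(\rho_\lambda)^{-1} = \partial_\lambda^k$, and Rellich does not apply to it as written. The correct move is to apply Rellich directly to $\Delta^k$ itself, which in the orthonormal basis of \Cref{weighting_sec} (Equations~\ref{coboundary_entries_eq},~\ref{boundary_entries_eq}) is already self-adjoint and polynomial in $\lambda$ on a fixed Hilbert space; the delicate part is then the algebraic identification of its analytic eigenbranches of order $\geq 2j$ with $E_j^k$, keeping careful track of filtration degree --- and that identification, done properly, is exactly the content of Forman's Theorem~2.
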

\begin{proof}
This follows from Theorem 2 of \cite{forman1994hodge}, combined with Rellich's theorem stated at the end of the introduction of \cite{forman1994hodge}.
\end{proof}

Theorem 7 from \cite{forman1994hodge} tells us remarkably that these $E_j^k$ spaces are \emph{isomorphic} to the $e_j^k$ spaces of the filtration. The $E_j^k$ spaces are our real objects of interest, but the $e_j^k$ spaces are tractable to calculate. This is reminiscent of a standard methodology in algebraic topology where we prove difficult topological and analytic statements by turning them into algebraic statements. This will become our strategy to prove \Cref{spec_seq_lemma}. The rest of this section is devoted to describing the isomorphism.

Recall \Cref{Z_B_def}. $\lambda^{-l}\rho^k_\lambda$ creates isomorphisms
\begin{align*}
Z_{j,l}^k \rightarrow \ &\widetilde{Z}_{j,l}^k \\
&:= \{|\psi_\lambda\rangle = |\psi_0\rangle + \lambda|\psi_1\rangle + \lambda^2|\psi_2\rangle + \dots \ : \ |\psi_i\rangle \in \mathcal{U}_{l+i}^k , \ d_\lambda|\psi_\lambda\rangle \in \lambda^j\mathcal{C}[\lambda]\} \\
Z_{j-1,l+1}^k \rightarrow \ &\lambda\widetilde{Z}_{j-1,l+1}^k \\
&:= \{|\psi_\lambda\rangle = \lambda|\psi_1\rangle + \lambda^2|\psi_2\rangle + \dots \ : \ |\psi_i\rangle \in \mathcal{U}_{l+i}^k , \ d_\lambda|\psi_\lambda\rangle \in \lambda^j\mathcal{C}[\lambda]\} \\
B_{j-1,l}^k \rightarrow \ &\widetilde{B}_{j-1,l}^k \\
&:= \lambda^{-j+1} d_\lambda \{|\psi_\lambda\rangle = |\psi_0\rangle + \lambda|\psi_1\rangle + \lambda^2|\psi_2\rangle + \dots \ : \ |\psi_i\rangle \in \mathcal{U}_{l-j+1+i}^{k-1} , \ d_\lambda|\psi_\lambda\rangle \in \lambda^{j-1}\mathcal{C}[\lambda]\} \\
&= \lambda^{-j+1} d_\lambda \widetilde{Z}_{j-1,l-j+1}^k
\end{align*}

Let $|\psi\rangle \in E_{j,l}^k$. Let
\begin{equation*}
|\psi_\lambda\rangle = |\psi\rangle + \lambda|\psi_1\rangle + \lambda^2|\psi_2\rangle + \dots
\end{equation*}
be the polynomial from the definition of $E_j^k$. Let $|\psi_{i,l}\rangle$ denote the projection of $|\psi_i\rangle$ onto $\mathcal{U}_l^k$, and write
\begin{align*}
|\psi_\lambda\rangle &= \bigg[|\psi\rangle + \sum_{i>0} \lambda^i|\psi_{i,l+i}\rangle\bigg] + \bigg[\sum_{i>0} \sum_{c\neq i} \lambda^i|\psi_{i,l+c}\rangle\bigg] \\
&= |\psi_\lambda^{(0)}\rangle + |\psi_\lambda^{(1)}\rangle
\end{align*}

\begin{proposition}\label{thm_7} \emph{(Theorem 7 from \cite{forman1994hodge})}

$|\psi_\lambda^{(0)}\rangle \in \widetilde{Z}_{j,l}^k$ and the map
\begin{align*}
E_{j,l}^k &\rightarrow \widetilde{Z}_{j,l}^k / (\lambda \widetilde{Z}_{j-1,l+1}^k + \widetilde{B}_{j-1,l}^k) \cong e_{j,l}^k \\
|\psi\rangle &\rightarrow [|\psi_\lambda^{(0)}\rangle]
\end{align*}
is an isomorphism. The corresponding map from $[|\psi\rangle] \in e_{j,l}^k$ to $E_{j,l}^k$ is to project the representative $|\psi\rangle$ onto $\mathcal{U}_l^k$.
\end{proposition}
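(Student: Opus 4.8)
\emph{Plan (following Theorem 7 of \cite{forman1994hodge}).} The first step is bookkeeping: after the substitutions $\partial^k_\lambda = \rho^k_\lambda \partial^k (\rho^k_\lambda)^{-1}$ and $d^k_\lambda = \rho^k_\lambda d^k (\rho^k_\lambda)^{-1}$, the operators $d_\lambda$ and $\partial_\lambda$ become polynomials in $\lambda$ with only finitely many terms, each homogeneous for the filtration grading. I would write $d_\lambda = D_0 + D_{>0}$ and $\partial_\lambda = \partial_0 + \partial_{<0}$, where $D_0$ preserves the filtration index, $D_{>0}$ strictly raises it (and carries positive powers of $\lambda$), $\partial_0$ preserves it, and $\partial_{<0}$ strictly lowers it; moreover $\partial_0 = D_0^\dag$ and $D_0^2 = \partial_0^2 = 0$, so there is a graded Laplacian $L_0 = D_0 \partial_0 + \partial_0 D_0$, block-diagonal in the filtration grading, with the Hodge decomposition $\mathcal{C}^k = \ker{L_0} \oplus \im{D_0} \oplus \im{\partial_0}$ on each graded block. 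Unwinding the definitions, the zeroth page $(e_{0,l}^k,d_{0,l}^k)$ is the associated-graded complex with differential induced by $D_0$, the first page is $e_1 \cong \ker{L_0}$, and the higher pages are the iterated cohomologies; this identification is routine but must be spelled out. I will also use repeatedly that if $|\psi\rangle \in E_{j,l}^k$ then $|\psi\rangle$ is a degree-exactly-$l$ vector that is $L_0$-harmonic: the $\lambda^0$-parts of the defining conditions $d_\lambda|\psi_\lambda\rangle \in \lambda^j\mathcal{C}[\lambda]$ and $\partial_\lambda|\psi_\lambda\rangle \in \lambda^j\mathcal{C}[\lambda]$ read $D_0|\psi\rangle = \partial_0|\psi\rangle = 0$ (using $j\geq 1$; the case $j=0$ is trivial since $e_{0,l}^k = \mathcal{U}_l^k/\mathcal{U}_{l+1}^k$).

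The forward map is manifestly linear, so it remains to check it is well-defined and to construct an inverse. Well-definedness has two parts. First, $|\psi_\lambda^{(0)}\rangle \in \widetilde{Z}_{j,l}^k$: the requirement that the $\lambda^i$-coefficient of $|\psi_\lambda^{(0)}\rangle$ lie in $\mathcal{U}_{l+i}^k$ holds by the very definition of $|\psi_\lambda^{(0)}\rangle$, and the requirement $d_\lambda|\psi_\lambda^{(0)}\rangle \in \lambda^j\mathcal{C}[\lambda]$ follows from $d_\lambda|\psi_\lambda\rangle \in \lambda^j\mathcal{C}[\lambda]$ by a count of filtration degree against $\lambda$-order, using that $d_\lambda$ only raises the filtration index, which shows that the discarded part $|\psi_\lambda^{(1)}\rangle$ also has coboundary in $\lambda^j\mathcal{C}[\lambda]$. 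Second, independence of the chosen witnessing polynomial $|\psi_\lambda\rangle$: two admissible choices for the same $|\psi\rangle$ differ by a polynomial with zero constant term whose $d_\lambda$- and $\partial_\lambda$-images are $\mathcal{O}(\lambda^j)$; splitting this difference order by order via the graded Hodge decomposition exhibits it, after passing to $(0)$-parts, as a sum of an element of $\widetilde{B}_{j-1,l}^k$ (the $\im{D_0}$ pieces) and an element of $\lambda\widetilde{Z}_{j-1,l+1}^k$ (the strictly-higher-filtration pieces), so the class in $e_{j,l}^k$ is unchanged.

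The inverse is a perturbative lift, and this is the main obstacle. Given a class in $e_{j,l}^k$, pick its unique $L_0$-harmonic, degree-exactly-$l$ representative $|\psi\rangle$ (harmonic representatives exist and are unique on $e_1 = \ker{L_0}$, hence on the subquotients $e_j$). I would build $|\psi_\lambda\rangle = |\psi\rangle + \lambda|\psi_1\rangle + \cdots + \lambda^j|\psi_j\rangle$ so that $d_\lambda|\psi_\lambda\rangle, \partial_\lambda|\psi_\lambda\rangle \in \lambda^j\mathcal{C}[\lambda]$ with the leading term fixed at $|\psi\rangle$, by induction on the order at which the two defining conditions currently fail. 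At each step the obstruction is a $D_0$-cocycle sitting in the appropriate filtration degree, and the hypothesis that $[|\psi\rangle]$ survives to page $j$ — equivalently, that all differentials $d_1,\dots,d_{j-1}$ annihilate it — is exactly what makes this obstruction a $D_0$-coboundary modulo strictly higher filtration degree, so it can be cancelled by a suitable $|\psi_i\rangle$; the remaining gauge freedom in $|\psi_i\rangle$ (its $L_0$-harmonic part) is fixed by insisting on the $\im{\partial_0}$-representative, which is precisely what keeps the $\partial_\lambda$-condition satisfied alongside the $d_\lambda$-condition. Coordinating the two conditions in this way — and correctly matching the step-$i$ obstruction to the vanishing of the step-$i$ spectral-sequence differential — is the delicate point, and is where the adjointness $\partial_0 = D_0^\dag$ and the Hodge decomposition of each graded block are indispensable. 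After $j$ steps, $|\psi\rangle$, now equipped with this witnessing polynomial, lies in $E_{j,l}^k$.

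Finally, the two composites are identities. Applying the forward map to the element produced by the backward map returns $[|\psi_\lambda^{(0)}\rangle]$, and since the leading term is the chosen $|\psi\rangle$ while the quotient defining $e_{j,l}^k$ only kills strictly-higher-filtration and $D_0$-exact parts, this is the class we started from; so $(\text{forward})\circ(\text{backward}) = \mathrm{id}_{e_{j,l}^k}$. Conversely, for $|\psi\rangle \in E_{j,l}^k$, by the observation of the first paragraph $|\psi\rangle$ is itself the $L_0$-harmonic degree-$l$ representative of its image class, so the backward map returns exactly the vector $|\psi\rangle$ (possibly with a different but equally valid witnessing polynomial); so $(\text{backward})\circ(\text{forward}) = \mathrm{id}_{E_{j,l}^k}$. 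Hence the forward map is an isomorphism $E_{j,l}^k \cong e_{j,l}^k$.
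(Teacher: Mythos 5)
The first thing to say is that the paper offers no proof of this proposition at all: it is imported verbatim as Theorem 7 of \cite{forman1994hodge} (just as \Cref{O_lambda_perturbation_clm} is imported from Forman's Theorem 2 plus Rellich's theorem), so there is no in-paper argument to compare yours against — what you have written is an attempted reconstruction of Forman's proof. Your skeleton is the right one: since the weights take only the values $1$ and $\lambda$, one can write $d_\lambda = D_0 + \lambda D_1$ and $\partial_\lambda = D_0^\dag + \lambda D_1^\dag$ with $D_0$ filtration-preserving and $D_1$ raising the filtration by one, use the blockwise Hodge decomposition $\mathcal{C}^k = \ker{L_0} \oplus \im{D_0} \oplus \im{D_0^\dag}$, and lift order by order. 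Your forward direction is essentially sound, and in fact easier than you make it: the degree count showing $d_\lambda|\psi_\lambda^{(0)}\rangle \in \lambda^j\mathcal{C}[\lambda]$ works as you indicate, and well-definedness needs no Hodge decomposition at all, because the diagonal parts of two witnessing polynomials for the same $|\psi\rangle$ differ by a polynomial with zero constant term, coefficients in the correct filtration levels, and coboundary in $\lambda^j\mathcal{C}[\lambda]$ — i.e.\ an element lying literally in $\lambda\widetilde{Z}_{j-1,l+1}^k$, with no $\widetilde{B}_{j-1,l}^k$ piece needed.

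The genuine gap is the inverse map, which is where the entire content of Forman's theorem sits, and your treatment of it is an assertion rather than a proof. At order $a$ you must solve simultaneously $D_0|\psi_a\rangle = -D_1|\psi_{a-1}\rangle$ and $D_0^\dag|\psi_a\rangle = -D_1^\dag|\psi_{a-1}\rangle$; the Hodge decomposition decouples these into two separate solvability conditions, $D_1|\psi_{a-1}\rangle \in \im{D_0}$ and $D_1^\dag|\psi_{a-1}\rangle \in \im{D_0^\dag}$, and whether they hold depends on the gauge choices made at every earlier order. Your two key claims — that survival of the class to page $j$ makes the first obstruction vanish order by order, and that fixing the gauge to the $\im{D_0^\dag}$-representative ``is precisely what keeps the $\partial_\lambda$-condition satisfied'' — are exactly what must be proved: the spectral-sequence hypothesis is a statement about $d$ alone, and showing that the adjoint condition can be maintained simultaneously is the nontrivial Hodge-theoretic input that separates $E_{j,l}^k \cong e_{j,l}^k$ from a purely cohomological statement about $Z_{j,l}^k$. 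In addition, your verification that the two composites are identities presupposes that each page-$j$ class has a unique $L_0$-harmonic, degree-exactly-$l$ representative; this is not automatic for $j \geq 2$, since the constant term of an element of $\widetilde{B}_{j-1,l}^k$ has the form $D_0|\beta_{j-1}\rangle + D_1|\beta_{j-2}\rangle$ and the second summand can carry a nonzero harmonic component, so uniqueness (equivalently, injectivity of your map) requires its own orthogonality argument rather than the appeal to ``hence on the subquotients $e_j$''. As it stands, the proposal is a plausible outline of Forman's argument with its two hardest steps left unproved, whereas the paper's route is simply to cite the result.
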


\subsection{Example spectral sequence}

In this section, we get some practice with spectral sequences by calculating the spectral sequence of the weighted complex shown below. The vertices on the perimeter have weight 1, and the vertices in the interior have weight $\lambda \ll 1$. This is a filling in of a hexagon, and we are concerned with the $k=1$ homology. This complex does not correspond to any Hamiltonian on any number of qubits, but rather it is like a rank-1 projector on a 1-dimensional Hilbert space. Regardless, let's refer to the weight 1 vertices on the boundary as \emph{qubit} vertices, and the weight $\lambda$ vertices in the bulk as \emph{gadget} vertices. It will serve as a simple example which highlights some key features which will be present in the general gadget construction.

\begin{figure}[H]
\begin{center}
\includegraphics[scale=0.4]{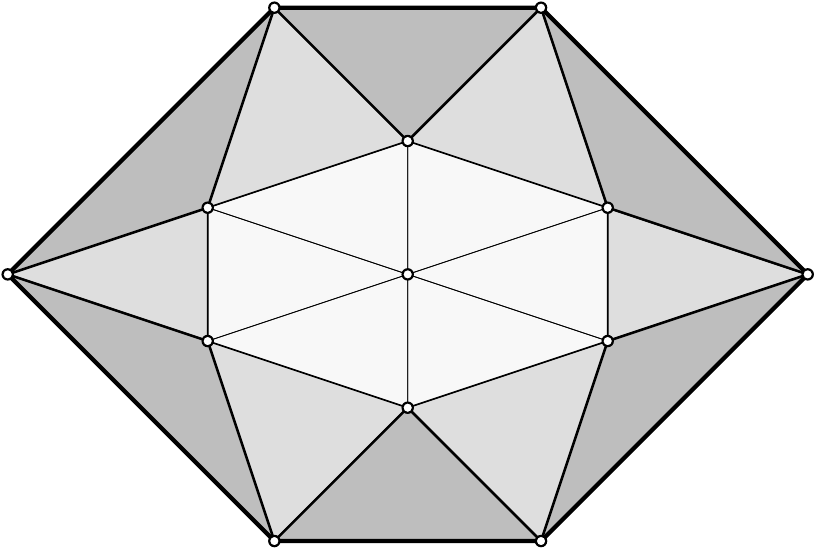}
\caption{The complex for the example spectral sequence. 2-simplices are shown darker and bolder if they are more heavily weighted.}
\label{fig:spec seq 1}
\end{center}
\end{figure}
The relevant chainspaces are
\begin{align*}
\mathcal{C}^2 &= \mathcal{U}_0^2 \supseteq \mathcal{U}_1^2 \supseteq \mathcal{U}_2^2 \supseteq \mathcal{U}_3^2 \\
\mathcal{C}^1 &= \mathcal{U}_0^1 \supseteq \mathcal{U}_1^1 \supseteq \mathcal{U}_2^1 \\
\mathcal{C}^0 &= \mathcal{U}_0^0 \supseteq \mathcal{U}_1^0
\end{align*}
The chain $\mathcal{C}^2$ is truncated at $\mathcal{U}_3^2$ since there are no triangles with more than 3 gadget vertices, and similarly for edges $\mathcal{C}^1$ and vertices $\mathcal{C}^0$.

The zeroth page of the spectral sequence is $e_{0,l}^k$ for $k=-1,0,1,2$, $l=0,\dots,k+1$. $e_{0,l}^k$ can be thought of as the space spanned by the $k$-simplices of weight $\lambda^l$. There are no triangles consisting only of qubit vertices, so $e_{0,0}^{2} = \{0\}$. Below we see pictorial representations of Page 0.

\begin{table}[H]
\begin{center}
\caption*{Page 0}
\begin{tabular}{ C{0.5cm} | C{3.5cm} C{3.5cm} C{3cm} C{3cm} C{2cm} }
$k$ & & & & & \\
\\
$2$ & & 
\includegraphics[scale=0.2]{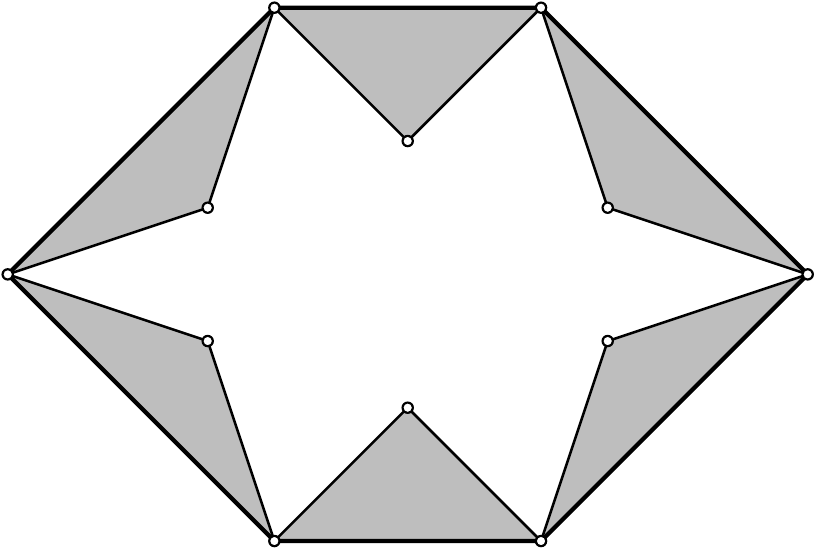}
& \includegraphics[scale=0.2]{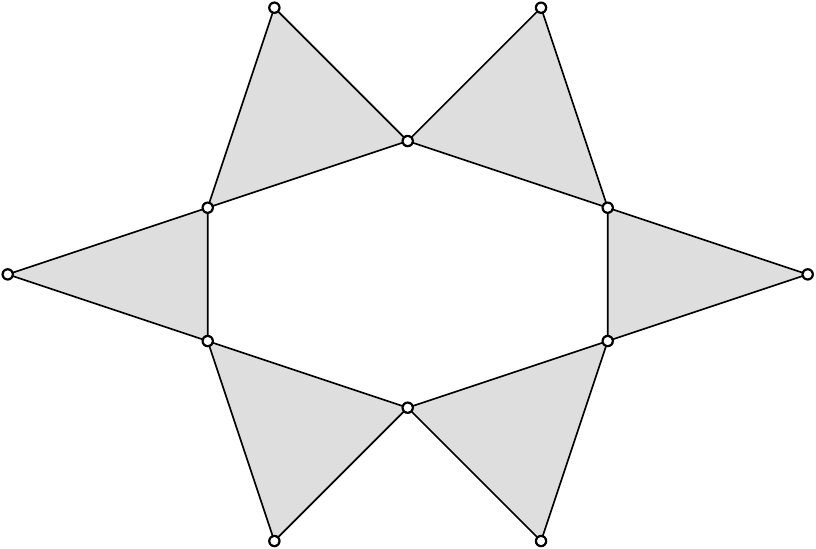} & \includegraphics[scale=0.2]{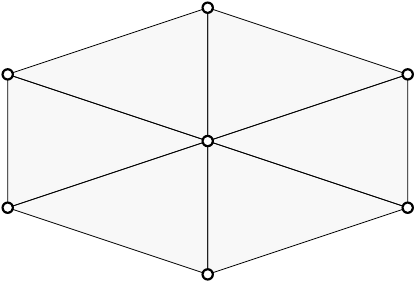} & \\
\\
$1$ & \includegraphics[scale=0.2]{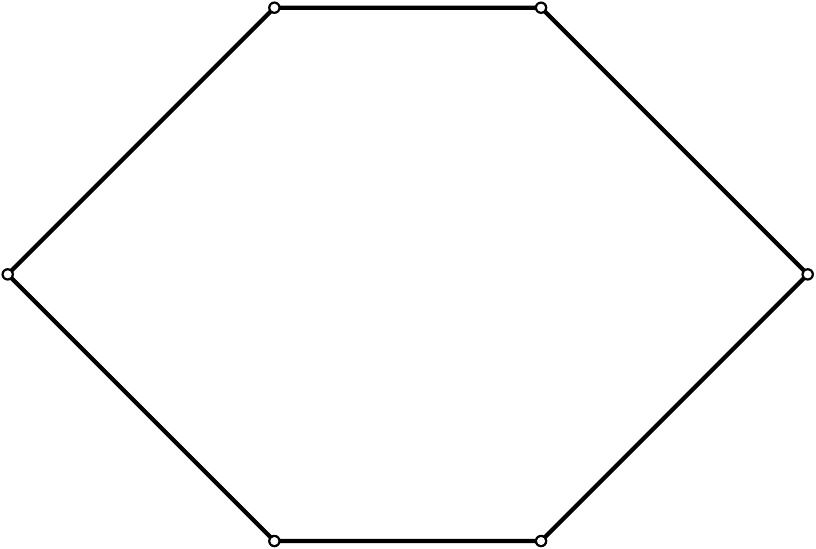} & \includegraphics[scale=0.2]{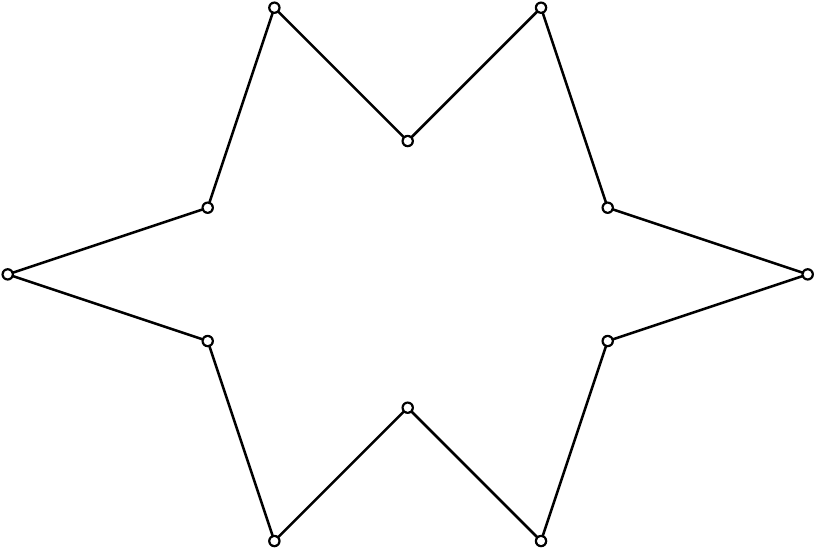} & \includegraphics[scale=0.2]{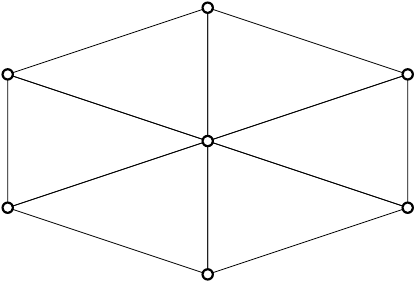} & & \\
\\
$0$ & \includegraphics[scale=0.2]{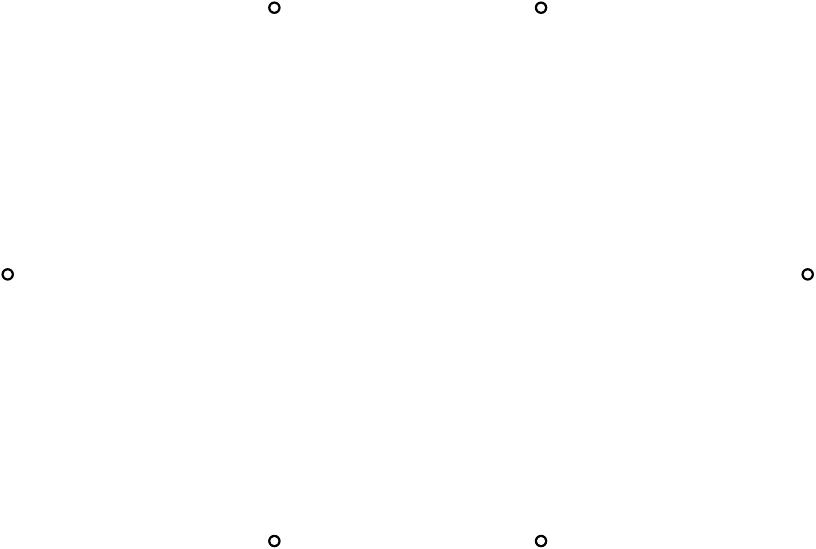} & \includegraphics[scale=0.2]{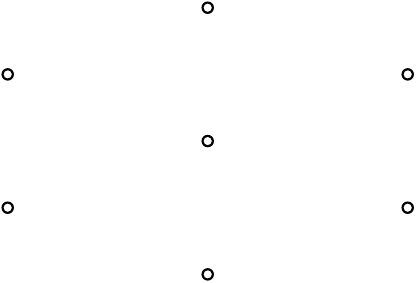} & & & \\
\\
$-1$ & \includegraphics[scale=0.5]{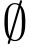}  & & & & \\
\\
\hline
& $0$ & $1$ & $2$ & $3$ & $l$
\end{tabular}
\end{center}
\end{table}

The coboundary map of the zeroth page maps
\begin{equation*}
d_{0,l}^k : e_{0,l}^k \rightarrow e_{0,l}^{k+1}
\end{equation*}
It maps upwards one step in the diagram, and it can be thought of as a qubit vertex coboundary map, which adds a qubit vertex to the simplex. If no qubit vertex can be added to the $k$-simplex $\sigma$, then $d_{0,l}^k |\sigma\rangle = 0$.

We are now in a position to compute the first page $e_{1,l}^k$ of the spectral sequence, which is defined as
\begin{equation*}
e_{1,l}^k = \ker{d_{0,l}^k} / \im{d_{0,l}^{k-1}}
\end{equation*}
It may be more intuitive to bear in mind that this is the same as
\begin{equation*}
e_{1,l}^k = \ker{\partial_{0,l}^k} / \im{\partial_{0,l}^{k+1}}
\end{equation*}
where
\begin{equation*}
\partial_{0,l}^k : e_{0,l}^k \rightarrow e_{0,l}^{k-1}
\end{equation*}
is the qubit boundary map, which removes qubit vertices. (Formally, we can define $\partial_{0,l}^k = (d_{0,l}^{k-1})^\dag$.)

The first column $l=0$, we simply get the homology of the qubit complex, which is topologically a single loop $S^1$. Thus $e_{1,0}^0 = e_{1,0}^2 = \{0\}$, but $e_{1,0}^1$ is the 1-dimensional space spanned by the loop of qubit vertices. For the purposes of this section, denote this state by $|\text{loop}\rangle$. The $e_{0,1}^1$ looks like it also has a loop; that is, a 1-chain without boundary. However, this 1-chain is in fact the boundary of the uniform superposition of triangles in $e_{0,1}^2$, so the homology at this position is zero $e_{1,1}^1 = \{0\}$. The triangle chainspaces $e_{0,1}^2$, $e_{0,2}^2$ have no cycles, so their homologies are zero $e_{1,1}^2 = e_{1,2}^2 = \{0\}$. In $e_{0,1}^0$, the only gadget vertex which is \emph{not} the boundary of an edge in $e_{0,1}^1$ is the central vertex, so $e_{1,1}^0$ is the 1-dimensional space spanned by this vertex. Similarly, the edges on the `outside' of $e_{0,2}^1$ are the boundaries of triangles in $e_{0,2}^2$, so the ones which survive in the homology $e_{1,2}^2$ are the spokes of the `star' touching the central vertex. For $e_{1,3}^2$, both boundary maps involved are zero, so $e_{1,3}^2 = e_{0,3}^2$ i.e. all the central triangles. See the below pictorial representations of the first page.

\begin{table}[H]
\begin{center}
\caption*{Page 1}
\begin{tabular}{ C{0.2cm} |C{0.1cm} C{2.5cm} C{0.1cm} C{1cm} C{0.1cm} C{2.5cm} C{0.2cm} C{2.5cm} C{0.1cm} C{2cm} }
$k$ & & & & & & & & & & \\
\\
$2$ & & & & & & & & \includegraphics[scale=0.2]{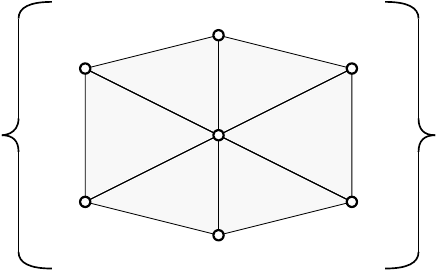} & &  \\
\\
 &  & &\includegraphics[scale=0.4]{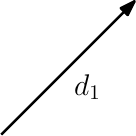} & & & & \includegraphics[scale=0.4]{figures/fig_B_copy_4.pdf} &  & &  \\
$1$ & & \includegraphics[scale=0.2]{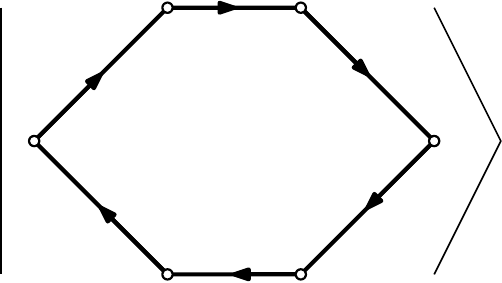} & & & & \includegraphics[scale=0.2]{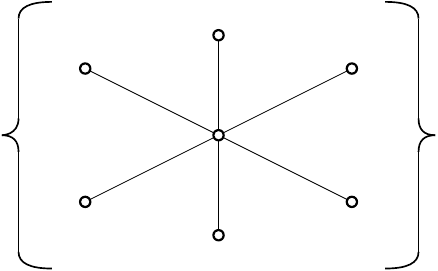}& & & & \\
\\
 & \includegraphics[scale=0.3]{figures/fig_B_copy_4.pdf} & & & &\includegraphics[scale=0.4]{figures/fig_B_copy_4.pdf} & &  &  & &  \\
$0$ & & & & \includegraphics[scale=0.2]{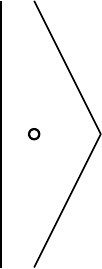} & & & & & & \\
 &  & & \includegraphics[scale=0.3]{figures/fig_B_copy_4.pdf} & & & &  &  & &  \\
\\
$-1$ & & & & & & & & & & \\
\\
\hline
& & $0$ & & $1$ & & $2$ & & $3$ & & $l$
\end{tabular}
\end{center}
\end{table}

The coboundary map of the first page maps
\begin{equation*}
d_{1,l}^k : e_{1,l}^k \rightarrow e_{1,l+1}^{k+1}
\end{equation*}
It maps `diagonally' one step up and one step to the right in the diagram. It can be thought of as a gadget vertex coboundary map, which adds a gadget vertex to the simplex. If no gadget vertex can be added to the $k$-simplex $\sigma$, then $d_{1,l}^k |\sigma\rangle = 0$. As before, it may be more intuitive to visualize the gadget vertex boundary map $\partial_{1,l}^k = (d_{1,l-1}^{k-1})^\dag$.

Using this, we aim to compute the second page $e_{2,l}^k$ of the spectral sequence, which is defined as
\begin{equation*}
e_{2,l}^k = \ker{d_{1,l}^k} / \im{d_{1,l-1}^{k-1}} = \ker{\partial_{1,l}^k} / \im{\partial_{1,l+1}^{k+1}}
\end{equation*}
Both boundary maps acting on $e_{1,0}^1$ are zero, so $e_{2,0}^1 = e_{1,0}^1$, which recall is the span of the state $|\text{loop}\rangle$. Apart from $e_{1,0}^1$, we have a 3-chain
\begin{equation*}
\begin{tikzcd}
\includegraphics[scale=0.2]{figures/fig_B_copy.pdf} \arrow[r,"d_{1,1}^0"] & \includegraphics[scale=0.2]{figures/fig_B_copy_2.pdf} \arrow[r,"d_{1,2}^1"] & \includegraphics[scale=0.2]{figures/fig_B_copy_3.pdf}
\end{tikzcd}
\end{equation*}
$e_{1,1}^0$ contains only the central vertex $v_0$, which is \emph{not} in the kernel of $d_{1,1}^0$, so $e_{2,1}^0 = \{0\}$. $e_{1,2}^1$ consists of the star of edges touching $v_0$. The coboundaries of these edges has support on the triangles wedged in between them. The only way for a superposition of these edges to have coboundaries cancelling on each of these triangles is to be proportional to the uniform superposition. But this state is precisely the coboundary of the central vertex $d_{1,1}^0 |v_0\rangle$. Thus the middle term has no homology and $e_{2,2}^1 = \{0\}$. It remains to calculate the space $e_{2,3}^2$. By counting dimensions, there should be a single state in the homology of $e_{1,3}^2$. This is because there are the same number of edges spanning $e_{1,2}^1$ as triangles spanning $e_{1,3}^2$, and precisely one state in $e_{1,2}^1$ was killed by $d_{1,2}^1$, namely the uniform superposition $d_{1,1}^0 |v_0\rangle$. Thus there is a unique state in $e_{2,3}^2$, which is the state in $e_{1,3}^2$ orthogonal to the image of $d_{1,2}^1$. But $(\im{d_{1,2}^1})^\perp = \ker{\partial_{1,3}^2}$, so we are looking for the unique state in the kernel of the gadget vertex boundary map $\partial_{1,3}^2$. We can now see that this state is the uniform superposition over the triangles, with matching orientations. Denote this state by $|\text{core}\rangle$.


\begin{table}[H]
\begin{center}
\caption*{Page 2}
\begin{tabular}{ C{0.2cm} | C{2cm} C{2cm} C{2cm} C{2cm} C{2cm} }
$k$ & & & & & \\
\\
$2$ & & \includegraphics[scale=0.4]{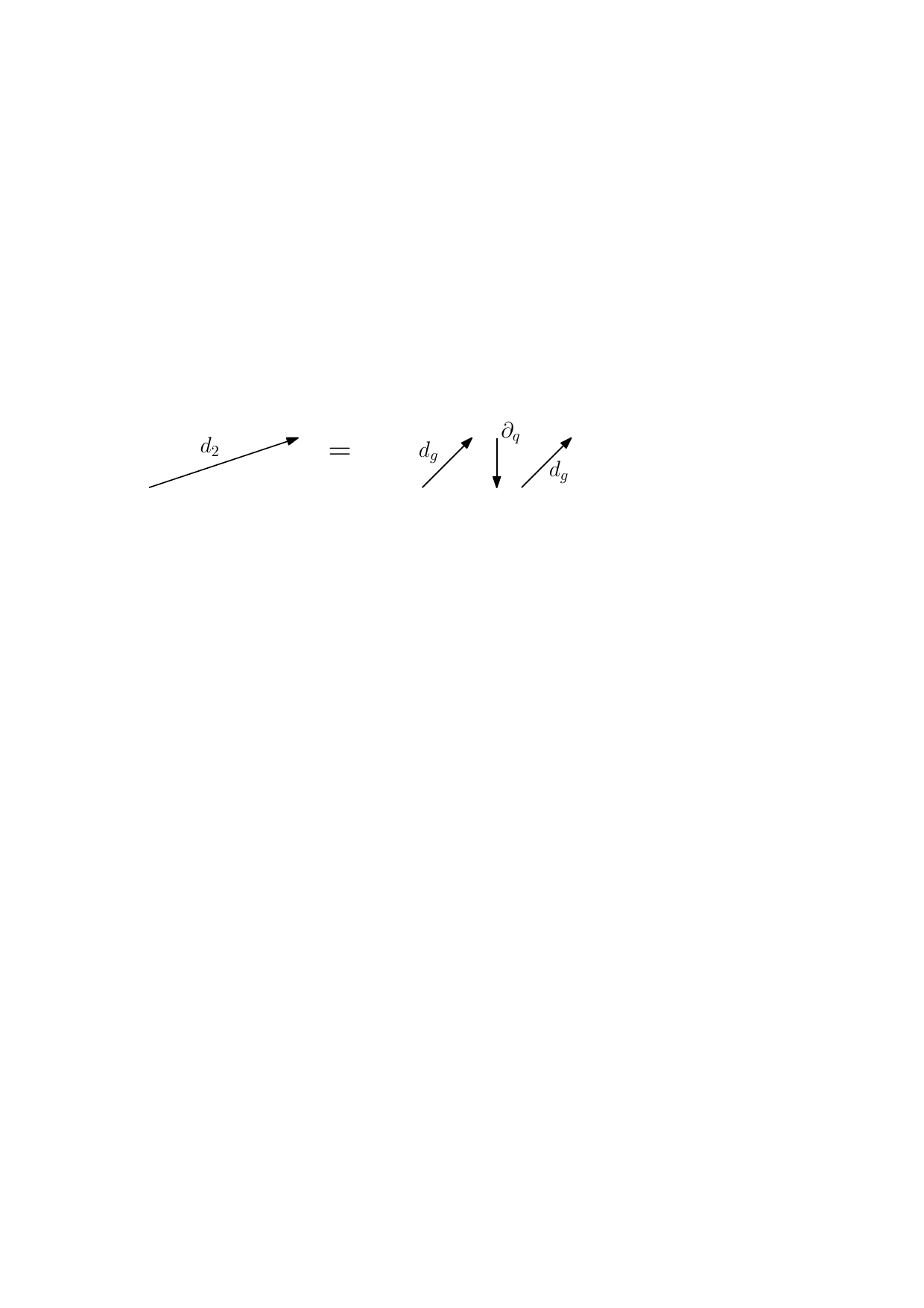} & & \includegraphics[scale=0.2]{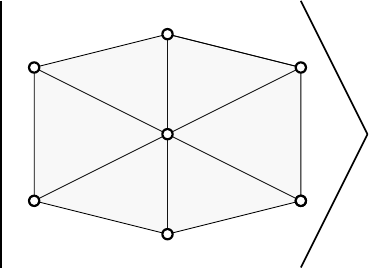} & \\
\\
$1$ & \includegraphics[scale=0.2]{figures/fig_B.pdf} & & \includegraphics[scale=0.4]{figures/arrow_d_2.pdf} & & \\
\\
$0$ & & & & & \\
\\
$-1$ & & & & & \\
\\
\hline
& $0$ & $1$ & $2$ & $3$ & $l$
\end{tabular}
\end{center}
\end{table}

We have seen that page 2 contains only two states: $|\text{loop}\rangle \in e_{2,0}^1$ and $|\text{core}\rangle \in e_{2,3}^2$. It turns out that Page 3 of the spectral sequence must be identical to page 2. To see why this is the case, consider the coboundary map of page 2 $d_{2,l}^k$. This maps via the `knight move' one step up and two steps to the right - see \Cref{fig:knight move}.
\begin{equation*}
d_{2,l}^k : e_{2,l}^k \rightarrow e_{2,l+2}^{k+1}
\end{equation*}
Thus $|\text{loop}\rangle$ is mapped into $e_{2,2}^2 = \{0\}$, and there is nothing in $e_{2,1}^1 = \{0\}$ to map to $|\text{core}\rangle$. Since all relevant coboundary maps are zero, taking the homology leaves the page unchanged.

\begin{figure}[H]
\centering
\includegraphics[scale=0.6]{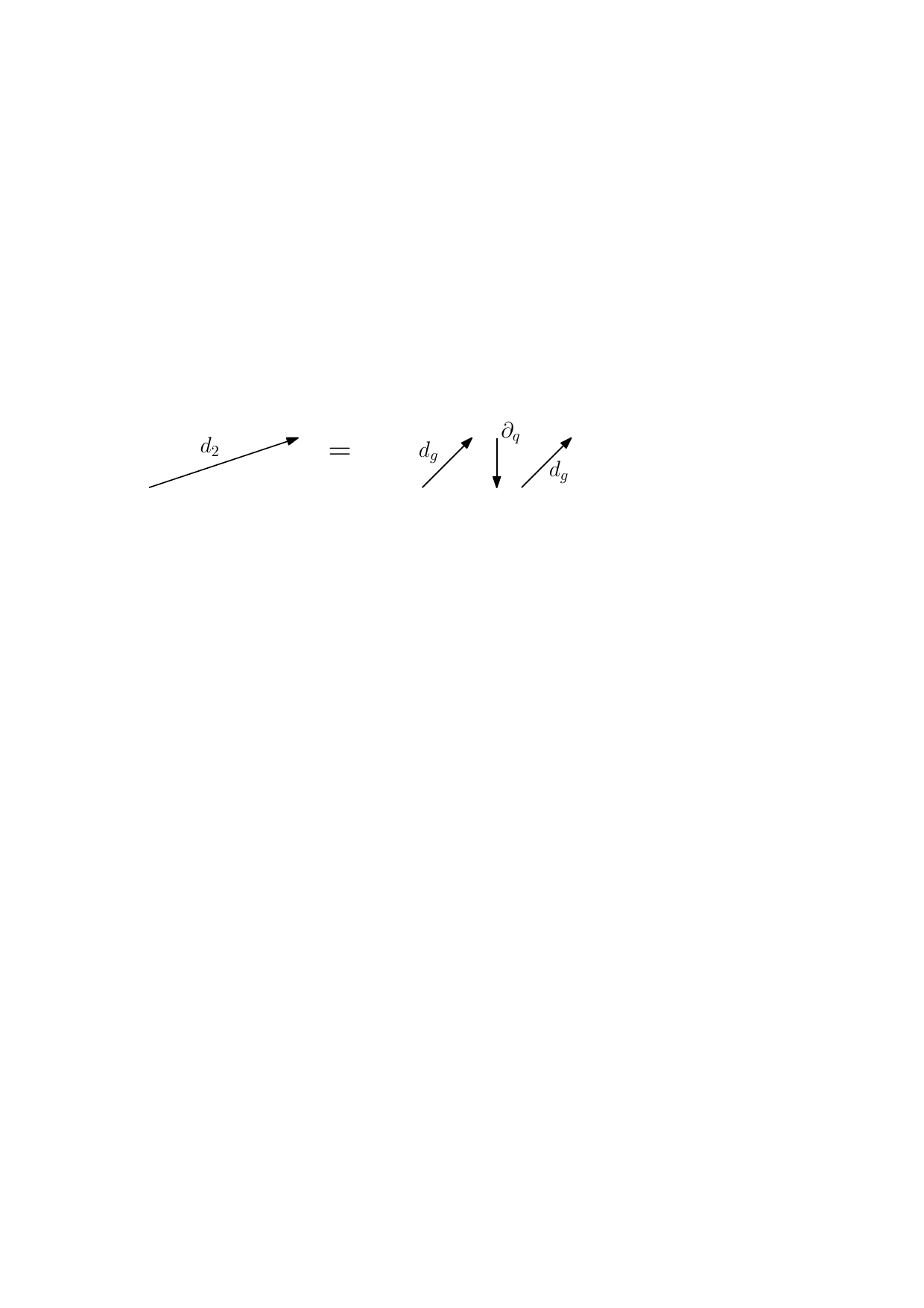}
\caption{The coboundary map of page 2, $d_2 = d_g\cdot \partial_q \cdot d_g$. This acts as a `knight move' as shown above.} \label{fig:knight move}
\end{figure}

\begin{table}[H]
\begin{center}
\caption*{Page 3}
\begin{tabular}{ C{0.2cm} | C{2cm} C{4cm} C{2cm} C{2cm} }
$k$ & & & & \\
\\
$2$ & & & \includegraphics[scale=0.2]{figures/fig_B_copy_5.pdf} & \\
& & \includegraphics[scale=0.4]{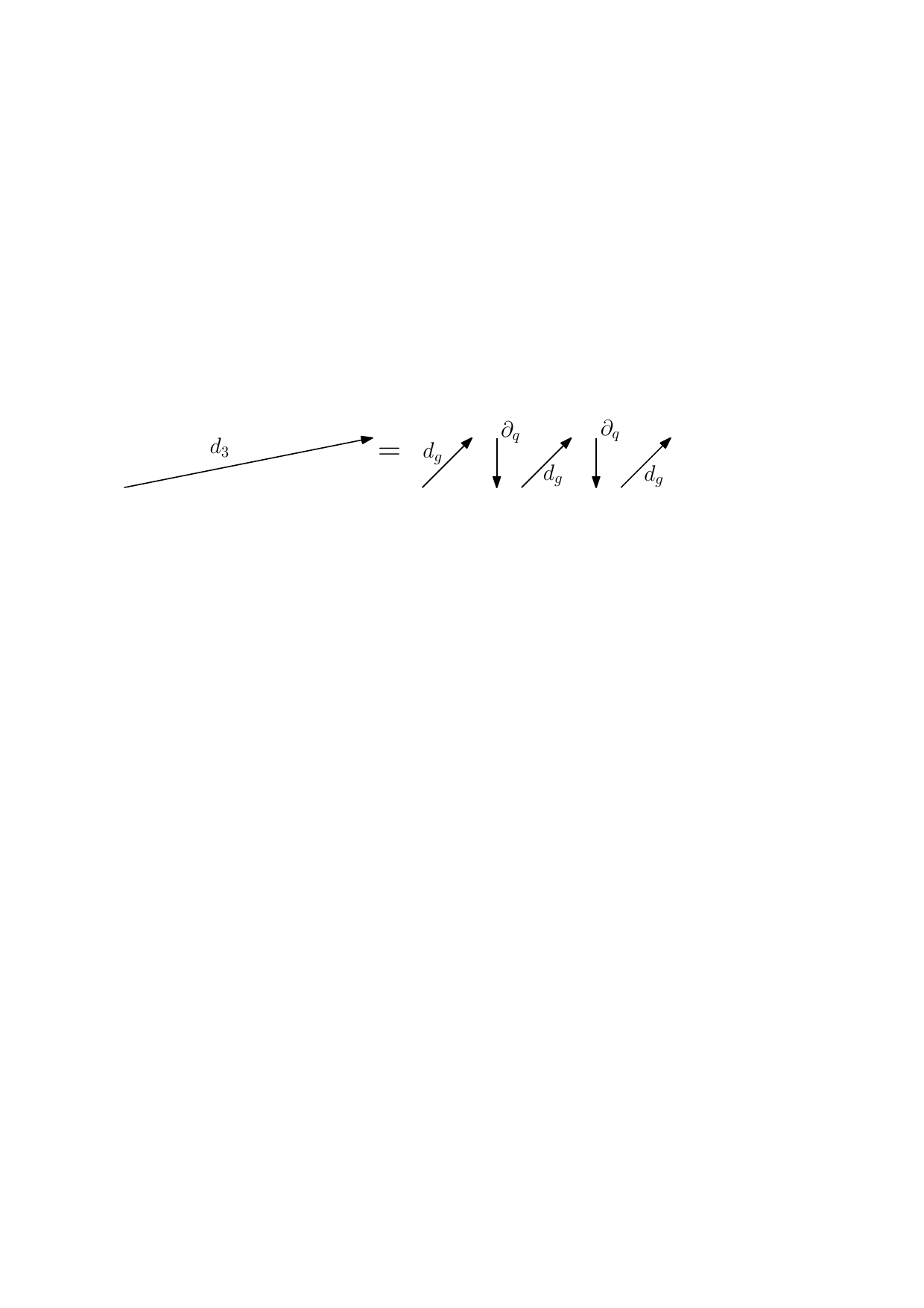} & & \\
$1$ & \includegraphics[scale=0.2]{figures/fig_B.pdf} & & & \\
\\
$0$ & & & & \\
\\
$-1$ & & & & \\
\\
\hline
& $0$ & $1 \qquad\qquad\qquad 2$ & $3$ & $l$
\end{tabular}
\end{center}
\end{table}

This argument no longer applies when we look at page 4. Now, the coboundary map of page 3 $d_{3,l}^k$ maps one step up and \emph{three} steps to the right. In particular, $d_{3,0}^1$ maps
\begin{equation*}
d_{3,0}^1 : e_{3,0}^1 \rightarrow e_{3,3}^2
\end{equation*}
We will argue that in fact (ignoring normalizations)
\begin{equation*}
d_{3,0}^1 \ |\text{loop}\rangle \ = \ |\text{core}\rangle
\end{equation*}
and these two states cancel each other out when we take the homology, leaving page 4 completely empty. To see this, we need to examine the coboundary map $d_{3,0}^1$ more closely. Returning to the zeroth page $e_{0,l}^k$, denote by $\partial_{\text{qubit},l}^k = \partial_{0,l}^k$ the `qubit boundary map'
\begin{equation*}
\partial_{\text{qubit},l}^k : e_{0,l}^k \rightarrow e_{0,l}^{k-1}
\end{equation*}
which acts by removing qubit vertices. Further, denote by $d_{\text{gadget},l}^k$ the `gadget coboundary map'
\begin{equation*}
d_{\text{gadget},l}^k : e_{0,l}^k \rightarrow e_{0,l+1}^{k+1}
\end{equation*}
which acts by adding gadget vertices. Now we can think of the map $d_{3,0}^1$ as acting by
\begin{equation*}
d_{3,0}^1 = d_{\text{gadget},2}^1 \circ \partial_{\text{qubit},2}^2 \circ d_{\text{gadget},1}^1 \circ \partial_{\text{qubit},1}^2 \circ d_{\text{gadget},0}^1
\end{equation*}
as shown in \Cref{fig:d3 fig}
With this new understanding, let's examine $d_{3,0}^1 |\text{loop}\rangle$. From \Cref{fig:sinking}, we can see that indeed $d_{3,0}^1 |\text{loop}\rangle = |\text{core}\rangle$.

\begin{figure}[H]
\centering
\includegraphics[scale=0.6]{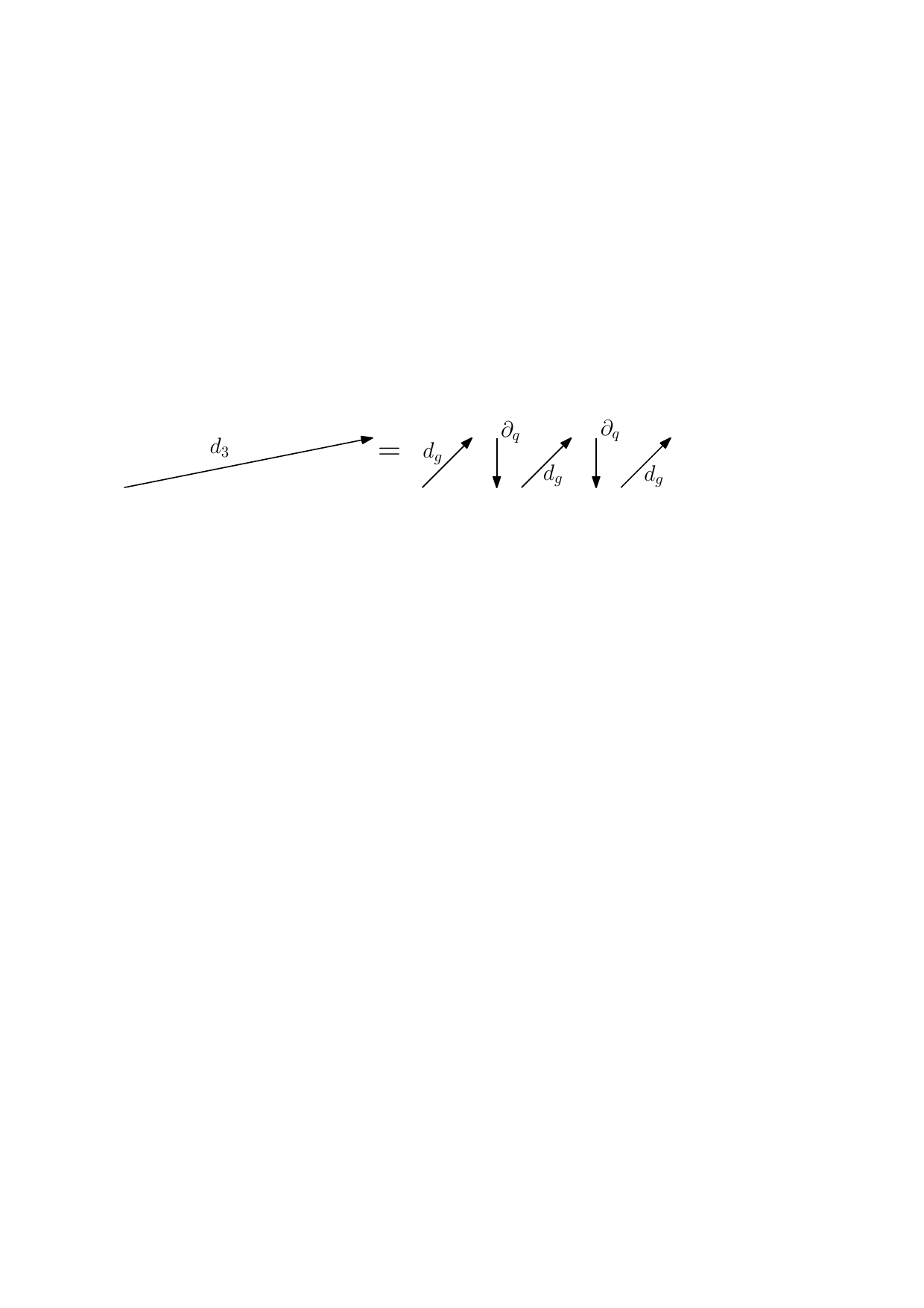}
\caption{The coboundary map of page 2, $d_3 = d_g\cdot \partial_q \cdot d_g \cdot \partial_q \cdot d_g$. This acts by moving one step up and three to the right as shown.} \label{fig:d3 fig}
\end{figure} 

\begin{figure}[H]
\centering
\includegraphics[scale=0.4]{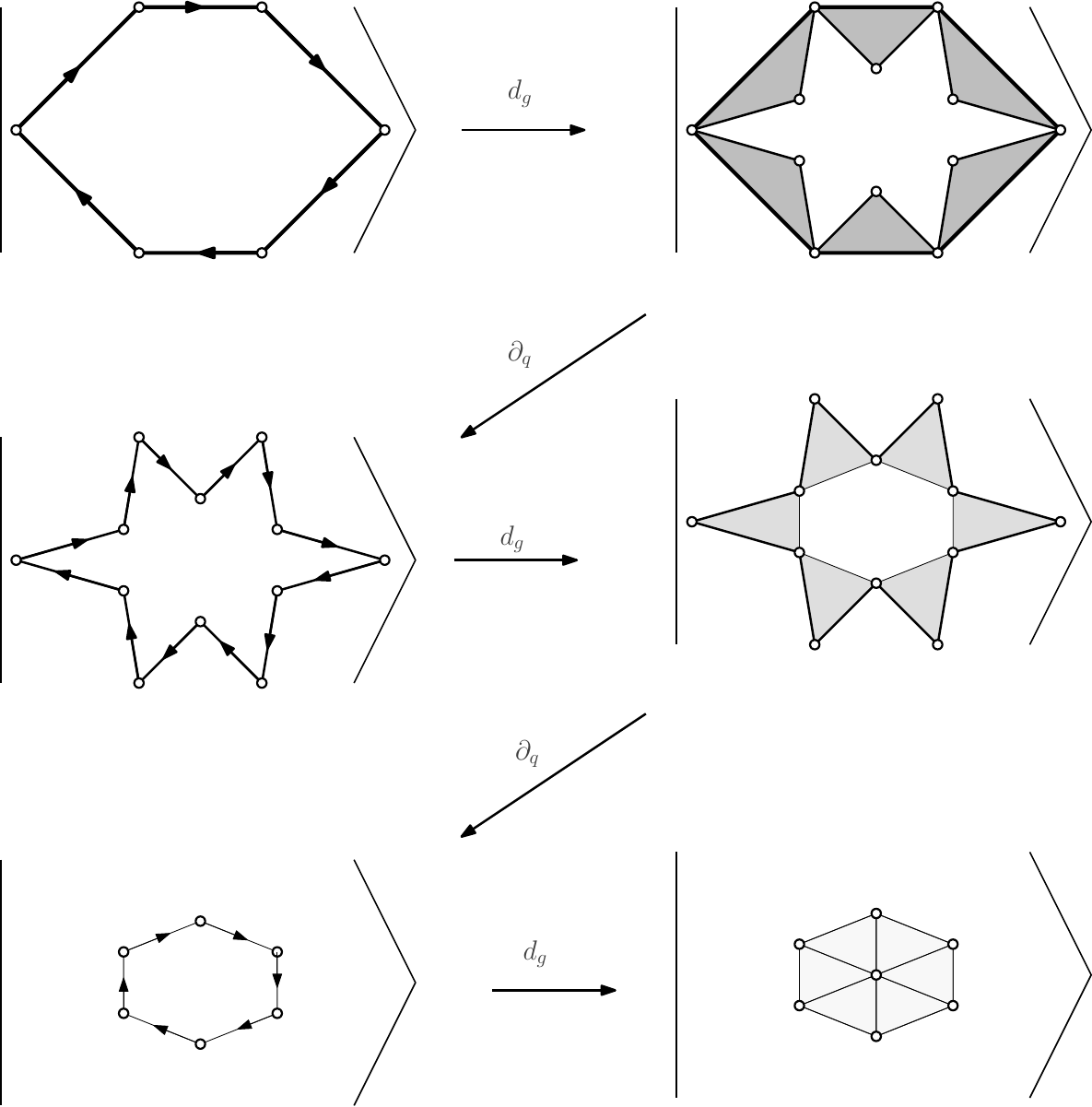}
\caption{Applying the $d_3$ map to the $\ket{\text{loop}}$ gives the $\ket{\text{core}}$ state as claimed.} \label{fig:sinking}
\end{figure}

\begin{table}[H]
\begin{center}
\caption*{Page 4}
\begin{tabular}{ c | c c c c c }
$k$ & & & & & \\
\\
$2$ & & & & & \\
\\
$1$ & & & & & \\
\\
$0$ & & & & & \\
\\
$-1$ & & & & & \\
\\
\hline
& $0$ & $1$ & $2$ & $3$ & $l$
\end{tabular}
\end{center}
\end{table}

\subsection{Proof of \Cref{spec_seq_lemma}}

We will apply the machinery of spectral sequences to the construction $\hat{\mathcal{G}}_m$ described in \Cref{single_gadget_sec}. Our real objects of interest for \Cref{spec_seq_lemma} are the spaces $E_j^k$. Our plan is to calculate the spaces $e_{j,l}^k$ and use the isomorphism in \Cref{thm_7} to analyze $E_j^k$. The isomorphism in \Cref{thm_7} allows us to learn about the analytical spaces $E_j^k$ using easier algebraic techniques.

In \Cref{single_gadget_sec}, we were concerned with the $(2m-1)$-homology, and we were filling in a $(2m-1)$-cycle $|\phi\rangle \in H^{2m-1}(\mathcal{G}_m)$ by adding a gadget of $2m$-simplices. Recall $\mathcal{G}_m^k \subseteq \hat{\mathcal{G}}_m^k$ is the original qubit complex before adding the gadget. Equivalently, $\mathcal{G}_m^k$ is the simplicial subcomplex consisting of simplices of weight 1. For this section, we will drop the $m$ subscripts $\mathcal{G}^k \subseteq \hat{\mathcal{G}}^k$. The original vertices $\mathcal{G}^0$ have weight 1, and will be referred to as \emph{qubit vertices}. The added vertices $\hat{\mathcal{G}}^0 \setminus \mathcal{G}^0$ have weight $\lambda$, and will be referred to as \emph{gadget vertices}. The weight of a simplex is defined to be the product of the weights of the vertices involved in the simplex. Recalling the construction of the gadgets from \Cref{single_gadget_sec}, introduce the following notations:
\begin{itemize}
    \item Let $[\text{bulk}]$ denote the set of simplices which involve the central vertex $v_0$, and let $[\text{non-bulk}]$ be the complement of this set. It can be checked that $[\text{non-bulk}] \subseteq \hat{\mathcal{G}}$ is in fact a simplicial subcomplex, but $[\text{bulk}]$ is \emph{not}.
    \item Let $\Omega_j^k \subseteq \hat{\mathcal{G}}^k$ be the $k$-simplices of weights $\{1,\lambda,\dots,\lambda^j\}$. It can be checked that $\Omega_j \subseteq \hat{\mathcal{G}}$ is a simplicial subcomplex for each $j$, and $\mathcal{C}^k(\Omega_j) = (\mathcal{U}_{j+1}^k)^\perp$. Tautologically, $\Omega_0^k = \mathcal{G}^k$ is the qubit complex.
\end{itemize}

\subsubsection{Spectral sequence of general gadget}

We now calculate the spectral sequence of the general gadget construction from \Cref{single_gadget_sec}. The following lemma, stated in generality, will prove useful.
\begin{lemma}\label{P_Q_lemma}
Let $\mathcal{P}$ be a simplicial complex and $\mathcal{Q} \subseteq \mathcal{P}$ a simplicial subcomplex. If $\mathcal{Q}$ has no $(k-1)$-cohomology, then
\begin{equation*}
\mathcal{C}^k(\mathcal{Q})^\perp \cap d^{k-1}(\mathcal{C}^{k-1}(\mathcal{Q})^\perp) = \mathcal{C}^k(\mathcal{Q})^\perp \cap \im{d^{k-1}}
\end{equation*}
\end{lemma}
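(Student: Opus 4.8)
The plan is a short diagram chase that exploits the fact that a simplicial subcomplex is closed under passing to faces. Write $d = d^{k-1}$ for the coboundary map of the ambient complex $\mathcal{P}$, and let $\Pi$ denote the orthogonal projection of $\mathcal{C}^\bullet(\mathcal{P})$ onto $\mathcal{C}^\bullet(\mathcal{Q})$; since $\mathcal{Q}$ is a subcomplex this projection simply deletes every simplex not in $\mathcal{Q}$, and the coboundary of $\mathcal{Q}$ is exactly $d_{\mathcal{Q}} = \Pi d$ restricted to $\mathcal{C}^\bullet(\mathcal{Q})$. The inclusion $\subseteq$ is immediate since $d^{k-1}(\mathcal{C}^{k-1}(\mathcal{Q})^\perp) \subseteq \im d^{k-1}$, so all the content is in the inclusion $\supseteq$.

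For that direction, the key structural observation I would record first is that $d$ maps $\mathcal{C}^{k-1}(\mathcal{Q})^\perp$ into $\mathcal{C}^k(\mathcal{Q})^\perp$: if $\tau$ is a $(k-1)$-simplex of $\mathcal{P}$ not in $\mathcal{Q}$ and $\tau \cup \{v\}$ is a $k$-simplex of $\mathcal{P}$, then $\tau \cup \{v\}$ cannot lie in $\mathcal{Q}$, for otherwise its face $\tau$ would, so every basis term of $d|\tau\rangle$ lies outside $\mathcal{Q}$. Now take $x \in \mathcal{C}^k(\mathcal{Q})^\perp$ with $x = d y$ for some $y \in \mathcal{C}^{k-1}(\mathcal{P})$, and split $y = y_{\mathcal{Q}} + y_\perp$ along $\mathcal{C}^{k-1}(\mathcal{Q}) \oplus \mathcal{C}^{k-1}(\mathcal{Q})^\perp$. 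Applying $\Pi$ to $x = d y_{\mathcal{Q}} + d y_\perp$, using that $\Pi d y_\perp = 0$ by the observation and that $\Pi x = 0$ by hypothesis, yields $\Pi d y_{\mathcal{Q}} = d_{\mathcal{Q}} y_{\mathcal{Q}} = 0$; that is, $y_{\mathcal{Q}}$ is a $(k-1)$-cocycle of $\mathcal{Q}$. This is exactly where the hypothesis enters: since $H^{k-1}(\mathcal{Q}) = 0$, we may write $y_{\mathcal{Q}} = d_{\mathcal{Q}} z = \Pi d^{k-2} z$ for some $z \in \mathcal{C}^{k-2}(\mathcal{Q})$.

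To finish I would feed this back through the cochain identity $d^{k-1} d^{k-2} = 0$, writing
\[ d y_{\mathcal{Q}} = d\,\Pi\, d^{k-2} z = d\, d^{k-2} z - d\,(1-\Pi) d^{k-2} z = -\, d\big((1-\Pi) d^{k-2} z\big) . \]
Since $(1-\Pi) d^{k-2} z \in \mathcal{C}^{k-1}(\mathcal{Q})^\perp$, this exhibits $d y_{\mathcal{Q}}$ as an element of $d^{k-1}(\mathcal{C}^{k-1}(\mathcal{Q})^\perp)$, and hence
\[ x = d y_{\mathcal{Q}} + d y_\perp = d\big(y_\perp - (1-\Pi) d^{k-2} z\big) \in d^{k-1}(\mathcal{C}^{k-1}(\mathcal{Q})^\perp), \]
which, together with $x \in \mathcal{C}^k(\mathcal{Q})^\perp$, is the required membership. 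I do not expect a genuine obstacle here; the only thing to be careful about is the bookkeeping — consistently distinguishing the coboundary $d$ of $\mathcal{P}$ from $d_{\mathcal{Q}} = \Pi d$, and tracking which orthogonal complement ($(k-1)$- or $k$-dimensional) each chain lives in. The degenerate small-$k$ cases where $\mathcal{C}^{k-2}$ is trivial are harmless, since then the cocycle condition already forces $y_{\mathcal{Q}} = 0$.
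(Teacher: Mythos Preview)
Your proof is correct and follows essentially the same route as the paper's: split the preimage along $\mathcal{C}^{k-1}(\mathcal{Q}) \oplus \mathcal{C}^{k-1}(\mathcal{Q})^\perp$, use the vanishing of $H^{k-1}(\mathcal{Q})$ to write the $\mathcal{Q}$-component as a restricted coboundary of some $z$, then invoke $d^{k-1}d^{k-2}=0$ to produce the element $y_\perp - (1-\Pi)d^{k-2}z \in \mathcal{C}^{k-1}(\mathcal{Q})^\perp$. The paper arrives at the identical element (in restriction-subscript notation rather than your projection $\Pi$), and in fact your write-up is slightly more explicit in isolating the structural fact that $d$ carries $\mathcal{C}^{k-1}(\mathcal{Q})^\perp$ into $\mathcal{C}^k(\mathcal{Q})^\perp$, which the paper uses implicitly.
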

\begin{proof}
It is clear that
\begin{equation*}
\mathcal{C}^k(\mathcal{Q})^\perp \cap d^{k-1}(\mathcal{C}^{k-1}(\mathcal{Q})^\perp) \subseteq \mathcal{C}^k(\mathcal{Q})^\perp \cap \im{d^{k-1}}
\end{equation*}
It remains to show the opposite inclusion.

Let $|\alpha\rangle \in \mathcal{C}^k(\mathcal{Q})^\perp \cap \im{d^{k-1}}$. We must show that $|\alpha\rangle \in \mathcal{C}^k(\mathcal{Q})^\perp \cap d^{k-1}(\mathcal{C}^{k-1}(\mathcal{Q})^\perp)$. We know there is a $|\beta\rangle \in \mathcal{C}^{k-1}(\mathcal{P})$ with $|\alpha\rangle = d^{k-1}|\beta\rangle$. Now
\begin{equation*}
d^{k-1}(|\beta\rangle_{\mathcal{C}^{k-1}(\mathcal{Q})})_{\mathcal{C}^k(\mathcal{Q})} = 0
\end{equation*}
where the subscript denotes \emph{restriction}. Here comes the key step: $\mathcal{Q}$ has no $(k-1)$-cohomology, which implies that there exists a $|\gamma\rangle \in \mathcal{C}^{k-2}(\mathcal{Q})$ with
\begin{equation*}
|\beta\rangle_{\mathcal{C}^{k-1}(\mathcal{Q})} = (d^{k-2} |\gamma\rangle)_{\mathcal{C}^{k-1}(\mathcal{Q})}
\end{equation*}
Now $(d^{k-1} \circ d^{k-2}) |\gamma\rangle = 0$, so
\begin{equation*}
d^{k-1}(d^{k-2}|\gamma\rangle)_{\mathcal{C}^{k-1}(\mathcal{Q})} + d^{k-1}(d^{k-2}|\gamma\rangle)_{\mathcal{C}^{k-1}(\mathcal{Q})^\perp} = 0
\end{equation*}
Consider
\begin{equation*}
|\beta\rangle_{\mathcal{C}^{k-1}(\mathcal{Q})^\perp} - (d^{k-2} |\gamma\rangle)_{\mathcal{C}^{k-1}(\mathcal{Q})^\perp} \in \mathcal{C}^{k-1}(\mathcal{Q})^\perp
\end{equation*}
This has
\begin{equation*}
d^{k-1}\Big(|\beta\rangle_{\mathcal{C}^{k-1}(\mathcal{Q})^\perp} - (d^{k-2} |\gamma\rangle)_{\mathcal{C}^{k-1}(\mathcal{Q})^\perp}\Big) = |\alpha\rangle
\end{equation*}
\end{proof}
\Cref{P_Q_lemma} is telling us that, if $\mathcal{Q} \subseteq \mathcal{P}$ is a simplicial subcomplex with no $(k-1)$-homology, then all $k$-coboundaries in $\mathcal{C}^k(\mathcal{Q})^\perp$ are in fact the coboundaries of chains in $\mathcal{C}^{k-1}(\mathcal{Q})^\perp$.

\bigskip
The zeroth page of the spectral sequence is simply $e_{0,l}^k = \mathcal{U}_l^k / \mathcal{U}_{l+1}^k$. By choosing the representative orthogonal to $\mathcal{U}_{l+1}^k$ in each equivalence class, we can think of $e_{0,l}^k$ as $\mathcal{U}_l^k \cap (\mathcal{U}_{l+1}^k)^\perp$, which are the simplices of weight $\lambda^l$. This will be a general trick in this section, to replace a quotient by an intersection with the orthogonal space. There are no simplices of dimension higher than $2m$, so there is nothing above the row $k=2m$; that is, $e_{0,l}^k = \{0\} \ \forall \ k>2m$. There are no $k$-simplices with more than $k+1$ gadget vertices, so there is nothing below the diagonal $l=k+1$; that is, $e_{0,l}^k = \{0\} \ \forall \ l>k+1$. There are no $2m$-simplices consisting only of qubit vertices, so $e_{0,0}^{2m} = \{0\}$. The first column $l=0$ is simply the chainspaces of the qubit complex $\mathcal{C}^k(\mathcal{G})$. The diagonal $l=k+1$ consists simply of $e_{0,k+1}^k = \mathcal{U}_{k+1}^k$, which are the simplices involving no qubit vertices. Note these are \emph{not} the same as the bulk chainspaces $\mathcal{C}^k([\text{bulk}])$. The spaces $e_{0,l}^k$ for $0<l<k+1$ consist of simplices which are a mixture of qubit and gadget vertices, which make up the \emph{thickening} of the gadget construction.

\begin{table}[H]
\begin{center}
\caption*{Page 0}
\begin{tabular}{ c | c c c c c c c }
$k$ & & & & & & & \\
\\
$2m$ & \{0\} & $\dots$ & $\dots$ & $e_{0,2m-1}^{2m}$ & $e_{0,2m}^{2m}$ & $\mathcal{U}_{2m+1}^{2m}$ & \\
\\
$2m-1$ & $\mathcal{C}^{2m-1}(\mathcal{G})$ & $\dots$ & $\dots$ & $e_{0,2m-1}^{2m-1}$ & $\mathcal{U}_{2m}^{2m-1}$ & & \\
\\
$2m-2$ & $\mathcal{C}^{2m-2}(\mathcal{G})$ & $\dots$ & $\dots$ & $\mathcal{U}_{2m-1}^{2m-2}$ & & & \\
\\
$\vdots$ & $\vdots$ & & $\udots$ & & & & \\
\\
$\vdots$ & $\vdots$ & $\udots$ & & & & & \\
\\
\hline
& $0$ & $\dots$ & $\dots$ & $2m-1$ & $2m$ & $2m+1$ & $l$
\end{tabular}
\end{center}
\end{table}

The coboundary map $d_{0,l}^k$ of the zeroth page maps upwards one step from $e_{0,l}^k$ to $e_{0,l}^{k+1}$. When acting on the representative in $\mathcal{U}_l^k \cap (\mathcal{U}_{l+1}^k)^\perp$, it can be thought of as a qubit vertex coboundary map, which adds a qubit vertex to the simplex.

We are now in a position to compute the first page $e_{1,l}^k$ of the spectral sequence, which is defined as $e_{1,l}^k = \ker{d_{0,l}^k} / \im{d_{0,l}^{k-1}}$. For the first column $l=0$, this simply gives us the cohomology of the qubit complex $e_{1,0}^k = H^k(\mathcal{G})$. These cohomology groups are zero except for $H^{2m-1}(\mathcal{G})$. Next let's look at the diagonal $l = k+1$. The claim is that these spaces are isomorphic to the bulk chainspaces $e_{1,k+1}^k = \mathcal{C}^k([\text{bulk}])$. The spaces $e_{1,k+1}^{k-1}$ are zero, so $\im{d_{0,k+1}^{k-1}} = \{0\}$, and $e_{1,k+1}^k$ are simply the cocycles $e_{1,k+1}^k = \ker{d_{0,k+1}^k}$. The simplices in $\mathcal{U}_{k+1}^k$ which are \emph{not} in $\mathcal{C}^k([\text{bulk}])$ do not vanish under $d_{0,k+1}^k$, and thus are not cocycles. It remains to look at $e_{1,l}^k$ for $1<l<k+1$. It turns out that these spaces are all zero.

\begin{claim} \label{clm_30}
$e_{1,l}^k = \{0\}$ for $1 \leq l \leq k$, $k = 1,\dots,2m$.
\end{claim}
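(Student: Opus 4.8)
The plan is to decompose the zeroth page of the spectral sequence according to which gadget vertices appear in a simplex, identify each summand with the reduced cochain complex of a link, and observe that each of these links is either empty or a cone.

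First I would use that $d_{0,l}^k$ adds a \emph{qubit} vertex, and hence does not change the set of gadget vertices of a simplex. Fix a simplex $T$ consisting only of gadget vertices with $|T|=l$, and let $L_T$ be the full subcomplex of $\hat{\mathcal{G}}$ on the set of qubit vertices adjacent to every vertex of $T$; since $\hat{\mathcal{G}}$ is a clique complex, a qubit simplex $R$ lies in $L_T$ exactly when $T\cup R\in\hat{\mathcal{G}}$. Every weight-$\lambda^l$ $k$-simplex is uniquely of the form $T\cup R$ with $|T|=l$ and $R\in L_T^{\,k-l}$, and restricted to this $T$-block $d_{0,l}^k$ acts (up to an overall sign) as the reduced coboundary of $L_T$, with $\sigma=T$ playing the role of the empty simplex in degree $-1$. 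This gives a direct-sum decomposition
\begin{equation*}
e_{1,l}^k \;\cong\; \bigoplus_{\substack{T\ \text{gadget simplex}\\ |T|=l}} \widetilde{H}^{k-l}(L_T),
\end{equation*}
and since $k-l\ge 0$ whenever $1\le l\le k$, it remains to show $\widetilde{H}^{d}(L_T)=0$ for all $d\ge 0$ and all nonempty gadget simplices $T$.

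Next I would split into two cases. If $v_0\in T$, I would use that the central vertex $v_0$ is adjacent only to the inner-layer vertices $\mathcal{K}^0\times\{1\}$ and that neither the identification $f(\cdot)$ (which only renames outer-layer vertices) nor the gluing onto $\mathcal{G}_m$ creates an edge from $v_0$ to a qubit vertex, so $L_T\subseteq L_{\{v_0\}}=\{\emptyset\}$, whose reduced cohomology is supported in degree $-1$. If $v_0\notin T$, then $T=S\times\{1\}$ for a nonempty simplex $S\in\mathcal{K}$, and reading off the conditions in \Cref{thickening_lem_1} the full subcomplex $A$ of the thickening $\mathcal{L}$ on the outer-layer vertices adjacent to $T$ is, under $\rho\leftrightarrow\rho\times\{0\}$,
\begin{equation*}
A \;=\; \{\,\rho\in\mathcal{K} \;:\; \rho\cup S\in\mathcal{K},\ \max\rho\le\min S\,\}.
\end{equation*}
This is a cone with apex $\min S$ --- that vertex lies in $A$, and for every $\rho\in A$ the set $\rho\cup\{\min S\}$ again satisfies both conditions --- so it has trivial reduced cohomology. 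I would then argue $L_T=f(A)$: any simplex of the final complex containing the inner-layer simplex $T$ must lie in $\tilde{\mathcal{K}}$, hence is the $f$-image of a simplex of $\hat{\mathcal{K}}$ containing $T$, and its qubit part is the $f$-image of a simplex of $A$; the reverse inclusion is immediate. Finally $f$ transports the cone structure, since $f(\rho\cup\{\min S\})=f(\rho)\cup\{f(\min S,0)\}$, so $L_T$ is a cone and $\widetilde{H}^{d}(L_T)=0$ for all $d$, which finishes this case.

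The hard part will be the bookkeeping in the second case: checking that after $f$ and the gluing the qubit link of an inner-layer simplex $T$ is exactly $f(A)$ with $f(\min S,0)$ still an apex --- i.e. that neither operation produces spurious simplices in the link nor kills the cone point. This should follow by combining the explicit descriptions in \Cref{thickening_lem_1} and \Cref{thickening_lem_2} with the facts, established in \Cref{single_gadget_sec} and \Cref{lem:clique}, that the final complex is a clique complex and that $f$ and the gluing add no edges between gadget vertices and $\mathcal{G}_m$. An essentially equivalent, slightly more algebraic route is to unwind the formula $e_{1,l}^k=Z_{1,l}^k/(B_{0,l}^k+Z_{0,l+1}^k)$ from \Cref{Z_B_def} and correct a weight-$\lambda^l$ cocycle, block by block over the simplices $T$, by the coboundary of a weight-$\lambda^l$ chain; the vanishing $\widetilde{H}^{d}(L_T)=0$ for $d\ge 0$ is exactly what makes each correction possible.
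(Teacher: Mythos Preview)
Your link–decomposition approach is correct and is genuinely different from the paper's argument. The paper works globally: for $|\alpha\rangle\in\ker d_{0,l}^k$ it views $|\alpha\rangle$ as a cocycle in the subcomplex $\Omega_l$ of simplices of weight $\le\lambda^l$, uses that $\Omega_l$ has no $k$-cohomology to conclude $|\alpha\rangle$ is a coboundary, and then invokes \Cref{P_Q_lemma} with $\mathcal P=\Omega_l$, $\mathcal Q=\Omega_{l-1}$ to produce a preimage in $e_{0,l}^{k-1}$. Because $\Omega_l$ \emph{does} have $(2m-1)$-cohomology, the paper needs a separate trick at $k=2m-1$ (fictitiously coning off the qubit cycles with auxiliary vertices). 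Your argument, by contrast, is purely local: you fix the gadget part $T$ and reduce to the reduced cohomology of the qubit link $L_T$, which you show is a cone. This is more elementary, uses only the explicit combinatorics of the thickening, and treats all $k$ uniformly with no special case at $k=2m-1$. The paper's route has the advantage that the same lemma (\Cref{P_Q_lemma}) is reused verbatim for \Cref{bulk_homology_clm}.

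There is one real gap you should patch. The claim ``any simplex of the final complex containing $T$ must lie in $\tilde{\mathcal K}$'' (and hence $L_T=f(A)$) is not correct in general: for an entangled $|\phi\rangle$ the outer cycle $\mathcal J$ need not be a \emph{full} subcomplex of $\mathcal G_m$ (for instance, for $|\phi\rangle=|00\rangle-|11\rangle$ the edge $[a_2\,b_2']$ lies in $\mathcal G_m$ but not in $\mathcal J$), so a qubit clique $R\subseteq L_T^0$ can pick up edges from $\mathcal G_m$ that are absent from $\tilde{\mathcal K}$, and then $T\cup R$ is a simplex of $\hat{\mathcal G}_m$ that is not in $\tilde{\mathcal K}$. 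Fortunately you do not need $L_T=f(A)$: it suffices that $f(\min S,0)$ is a cone point of the \emph{full} subcomplex $L_T$. This is immediate once you note $L_T^0=f(A^0)$ at the vertex level (since $\{w\}\cup T$ has all its edges in the $1$-skeleton of $\tilde{\mathcal K}$, hence lies in $\tilde{\mathcal K}$ and is an $f$-image), and then for any $w\in L_T^0$ the edge $(w,(\min S,1))$ forces $w=f(u)$ with $u\le\min S$ and $\{u,\min S\}\in\mathcal K$, so $\{(u,0),(\min S,0)\}\in\mathcal L$ and $w\sim f(\min S,0)$ in $\hat{\mathcal G}_m$. Since $L_T$ is the full subcomplex on $L_T^0$, this makes $L_T$ a cone with apex $f(\min S,0)$, and your conclusion stands.
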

\begin{proof}
We will split this up into two cases: $k \neq 2m-1$ and $k = 2m-1$. The first case $k \neq 2m-1$ will be easier, and the case $k = 2m-1$ will require an extra idea. Note the first case $k \neq 2m-1$ includes $k < 2m-1$ and $k = 2m$. The strategy in this proof is to apply \Cref{P_Q_lemma} to show that all states in the kernel of the relevant outgoing coboundary map are also in the image of the relevant incoming coboundary map.

Let's begin with the rows $k \neq 2m-1$. Recalling the interpretation $e_{0,l}^k = \mathcal{U}_l^k \cap (\mathcal{U}_{l+1}^k)^\perp$, let $|\alpha\rangle \in \mathcal{U}_l^k \cap (\mathcal{U}_{l+1}^k)^\perp$ with $|\alpha\rangle \in \ker{d_{0,l}^k}$. This tells us that $d^k |\alpha\rangle \in \mathcal{U}_{l+1}^{k+1}$. Restricted to the subcomplex $\Omega_l$, this says that $(d^k |\alpha\rangle)_{\mathcal{C}^{k+1}(\Omega_l)} = 0$. That is, $|\alpha\rangle$ is a \emph{cocycle} in $\Omega_l$. (Here and throughout, the subscript refers to restriction or orthogonal projection onto this subspace.) By considering the gadget construction from \Cref{single_gadget_sec}, $\Omega_l$ has no $k$-homology for $k \neq 2m-1$. ($\Omega_l$ \emph{does} in fact have $(2m-1)$-homology, so for this case we will need an extra trick. But for now, $k \neq 2m-1$ and $\Omega_l$ has no $k$-homology.) This means that $|\alpha\rangle$ is not only a \emph{cocycle} in $\Omega_l$ but necessarily also a \emph{coboundary}. Now apply \Cref{P_Q_lemma} with $\mathcal{P} = \Omega_l$ and $\mathcal{Q} = \Omega_{l-1}$. Note $|\alpha\rangle \in \mathcal{U}_l^k$ so indeed $|\alpha\rangle \perp \mathcal{C}^k(\Omega_{l-1})$. We get that $|\alpha\rangle = (d^{k-1} |\beta\rangle)_{\mathcal{C}^k(\Omega_l)}$ for some $|\beta\rangle \in \mathcal{C}^{k-1}(\Omega_l) \cap \mathcal{C}^{k-1}(\Omega_{l-1})^\perp$. But $\mathcal{C}^{k-1}(\Omega_l) \cap \mathcal{C}^{k-1}(\Omega_{l-1})^\perp = \mathcal{U}_l^{k-1} \cap (\mathcal{U}_{l-1}^{k-1})^\perp$, so $|\beta\rangle \in e_{0,l}^{k-1}$. Noting that $(d^{k-1} |\beta\rangle)_{\mathcal{C}^k(\Omega_l)}$ is precisely $(d^{k-1} |\beta\rangle)_{(\mathcal{U}_{l+1}^k)^\perp} = d_{0,l}^{k-1} |\beta\rangle$, we get that $|\alpha\rangle \in \im{d_{0,l}^{k-1}}$. The conclusion is that $\ker{d_{0,l}^k} = \im{d_{0,l}^k}$ and $e_{0,l}^k$ has no homology, so $e_{1,l}^k = \{0\}$.

Now we move onto the case $k = 2m-1$. As before, suppose $|\alpha\rangle \in \mathcal{U}_l^{2m-1} \cap (\mathcal{U}_{l+1}^{2m-1})^\perp$ with $|\alpha\rangle \in \ker{d_{0,l}^{2m-1}}$. Here, it is \emph{not} true that $\Omega_l$ has no $(2m-1)$-homology. However, since $|\alpha\rangle$ has no support on the qubit complex $\mathcal{G}$, the idea is that \emph{it cannot tell} that the complex has homology. We are able to imagine closing all the $(2m-1)$-cycles with fictitious auxiliary vertices, and \Cref{P_Q_lemma} will tell us that this move is not detected by $|\alpha\rangle$.

Let's formalize this. Recalling \Cref{single_gadget_sec}, $\mathcal{G} \subseteq \hat{\mathcal{G}}$ consists of $2^m$ copies of $S^{2m-1}$ corresponding to the $m$-bit strings. For each copy of $S^{2m-1}$ introduce an extra `auxiliary' vertex (of weight $1$) and connecting it with all vertices in this $S^{2m-1}$, thus `filling in the hole'. Denote the new objects after this operation with a star such as $\hat{\mathcal{G}}^\ast$. The effect of the auxiliary vertices is that $\hat{\mathcal{G}}^\ast$ has no $(2m-1)$-cohomology. This includes $\Omega_l^\ast$ for each $l$: these subcomplexes no longer have any $(2m-1)$-cohomology.

Continuing the argument, $|\alpha\rangle \in \mathcal{U}_l^{2m-1} \cap (\mathcal{U}_{l+1}^{2m-1})^\perp$ with $|\alpha\rangle \in \ker{d_{0,l}^{2m-1}}$, which tells us that $d^{2m-1} |\alpha\rangle \in \mathcal{U}_{l+1}^{2m}$ and, restricting to the subcomplex $\Omega_l$, $(d^{2m-1} |\alpha\rangle)_{\mathcal{C}^{2m}(\Omega_l)} = 0$. That is, $|\alpha\rangle$ is a \emph{cocycle} in $\Omega_l$. Now $|\alpha\rangle \in \mathcal{U}_l^{2m-1}$ so $|\alpha\rangle$ has no support on the qubit complex $\mathcal{G}$, $|\alpha\rangle_{\mathcal{G}} = 0$. Thus $|\alpha\rangle$ is likewise a \emph{cocycle} in $\Omega_l^\ast$. But $\Omega_l^\ast$ has no $(2m-1)$-homology, so $|\alpha\rangle$ is in fact a \emph{coboundary} in $\Omega_l^\ast$. Now apply a \Cref{P_Q_lemma} with $\mathcal{P} = \Omega_l^\ast$ and $\mathcal{Q} = \Omega_{l-1}^\ast$. We get that $|\alpha\rangle = (d^{2m-2} |\beta\rangle)_{\mathcal{C}^{2m-1}(\Omega_l^\ast)}$ for some $|\beta\rangle \in \mathcal{C}^{2m-2}(\Omega_l^\ast) \cap \mathcal{C}^{2m-2}(\Omega_{l-1}^\ast)^\perp$. But $\mathcal{C}^{2m-2}(\Omega_l^\ast) \cap \mathcal{C}^{2m-2}(\Omega_{l-1}^\ast)^\perp = \mathcal{U}_l^{2m-2} \cap (\mathcal{U}_{l-1}^{2m-2})^\perp$, so $|\beta\rangle \in e_{0,l}^{2m-2}$. Noting that $(d^{2m-2} |\beta\rangle)_{\mathcal{C}^k(\Omega_l^\ast)} = (d^{2m-2} |\beta\rangle)_{\mathcal{C}^k(\Omega_l)}$ is precisely $d_{0,l}^{2m-2} |\beta\rangle$, we get that $|\alpha\rangle \in \im{d_{0,l}^{2m-2}}$. The conclusion is that $\ker{d_{0,l}^{2m-1}} = \im{d_{0,l}^{2m-1}}$ and $e_{0,l}^{2m-1}$ has no homology, so $e_{1,l}^{2m-1} = \{0\}$.
\end{proof}

\begin{table}[H]
\begin{center}
\caption*{Page 1}
\begin{tabular}{ c | c c c c c c c }
$k$ & & & & & & & \\
\\
$2m$ & & & & & & $\mathcal{C}^{2m}([\text{bulk}])$ & \\
\\
$2m-1$ & $H^{2m-1}(\mathcal{G})$ & & & & $\mathcal{C}^{2m-1}([\text{bulk}])$ & & \\
\\
$2m-2$ & & & & $\mathcal{C}^{2m-2}([\text{bulk}])$ & & & \\
\\
$\vdots$ & & & $\udots$ & & & & \\
\\
$\vdots$ & & $\udots$ & & & & & \\
\\
\hline
& $0$ & $\dots$ & $\dots$ & $2m-1$ & $2m$ & $2m+1$ & $l$
\end{tabular}
\end{center}
\end{table}

The page 1 coboundary map $d_{1,l}^k$ maps diagonally upwards and to the right one step from $e_{0,l}^k$ to $e_{0,l}^{k+1}$. It can be thought of as a gadget vertex coboundary map, which adds a gadget vertex to the simplex. Acting on $\mathcal{C}^k([\text{bulk}])$, this has the same action as the regular coboundary map $\hat{d}^k$ of $\hat{\mathcal{G}}^k$.

Let's now calculate the second page of the spectral sequence, defined as $e_{2,l}^k = \ker{d_{1,l}^k} / \im{d_{1,l-1}^{k-1}}$. $H^{2m-1}(\mathcal{G})$ remains unchanged, since both the relevant coboundary maps are zero. Next we show that all terms on the diagonal $l=k+1$ vanish, except for $e_{2,2m+1}^{2m}$.

\begin{claim} \label{bulk_homology_clm}
$e_{2,k+1}^k = \{0\} \ \forall \ k = 0,\dots,2m-1$
\end{claim}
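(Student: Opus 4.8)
The plan is to recognize the diagonal $l=k+1$ of page $1$, with its page-$1$ differentials, as exactly the cochain complex of the subcomplex $[\text{bulk}]$, and then to compute the cohomology of $[\text{bulk}]$ topologically. On the diagonal the page-$1$ differential is $d_{1,k+1}^k : e_{1,k+1}^k \to e_{1,k+2}^{k+1}$, and by the discussion preceding the Page $1$ table $e_{1,k+1}^k \cong \mathcal{C}^k([\text{bulk}])$ with $d_{1,k+1}^k$ acting as the restriction of the ambient coboundary $\hat d$ of $\hat{\mathcal{G}}$ (this restriction makes sense because any vertex adjacent to $v_0$ lies in the inside layer, so $\hat d$ sends bulk cochains to bulk cochains). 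Since also $e_{1,k}^{k-1}\cong \mathcal{C}^{k-1}([\text{bulk}])$, we get $e_{2,k+1}^k = \ker d_{1,k+1}^k / \im d_{1,k}^{k-1} = H^k\big([\text{bulk}],\, \hat d|_{[\text{bulk}]}\big)$.

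Next I would identify $[\text{bulk}]$ topologically. By the construction in \Cref{single_gadget_sec} (the coning-off step), the neighbours of the central vertex $v_0$ are exactly the inside-layer vertices $\mathcal{K}^0\times\{1\}$, which are left untouched by $f(\cdot)$; by \Cref{thickening_lem_2} the induced subgraph on them is the $1$-skeleton of a copy $\mathcal{K}'$ of the sphere triangulation $\mathcal{K}\simeq S^{2m-1}$. Hence a simplex of $\hat{\mathcal{G}}$ contains $v_0$ iff it has the form $\{v_0\}\sqcup\sigma$ with $\sigma\in\mathcal{K}'$ (including $\sigma=\emptyset$); that is, $[\text{bulk}]$ is the open star of the cone point in the cone $v_0\ast\mathcal{K}'$, and $\hat d|_{[\text{bulk}]}$ coincides with the coboundary of that open star, since the up-neighbours of $\{v_0\}\sqcup\sigma$ inside $\hat{\mathcal{G}}$ are precisely the $w$ with $\sigma\cup\{w\}\in\mathcal{K}'$. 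The open-star cochains form a subcomplex of the augmented cochain complex of $v_0\ast\mathcal{K}'$, with quotient the augmented cochain complex of $\mathcal{K}'$; the resulting long exact sequence, together with $\widetilde H^\bullet(v_0\ast\mathcal{K}')=0$ (the cone is contractible), gives $H^k([\text{bulk}])\cong \widetilde H^{k-1}(\mathcal{K}')=\widetilde H^{k-1}(S^{2m-1})$, which vanishes unless $k-1=2m-1$. For $k=0,\dots,2m-1$ it therefore vanishes, establishing the claim; the same identification yields $e_{2,2m+1}^{2m}\cong\widetilde H^{2m-1}(S^{2m-1})=\mathbb{C}$, the ``core'' state, consistent with the text around the claim.

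The main thing to be careful about is purely bookkeeping: verifying that the page-$1$ differential restricted to the diagonal really is the honest simplicial coboundary of $[\text{bulk}]$, with no representative-dependence or stray sign affecting the isomorphism type of the cohomology (this is essentially asserted in the text), and checking that the up-neighbour set of a bulk simplex computed inside $\hat{\mathcal{G}}$ matches the one computed inside the abstract cone $v_0\ast\mathcal{K}'$ — which is immediate once $N(v_0)=\mathcal{K}^0\times\{1\}$ and the fact that the inside layer carries a clean copy of $\mathcal{K}$ are in hand. Beyond that it is just the long exact sequence of a cone.
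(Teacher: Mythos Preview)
Your proof is correct and takes a genuinely different route from the paper's. Both arguments begin from the same identification (established in the text just before the claim): on the diagonal $l=k+1$, page~1 is $\mathcal{C}^k([\text{bulk}])$ and the page-1 differential is the restriction of $\hat d$. From there the approaches diverge.

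The paper argues indirectly through the ambient complex. It takes a bulk cocycle $|\alpha\rangle$, observes it is a cocycle in all of $\hat{\mathcal{G}}$, and uses that $\hat{\mathcal{G}}$ has no $k$-cohomology (for $k\neq 2m-1$) to conclude $|\alpha\rangle$ is a global coboundary; then it invokes \Cref{P_Q_lemma} with $\mathcal{Q}=[\text{non-bulk}]$ to pull the preimage into the bulk. Since $\hat{\mathcal{G}}$ \emph{does} have $(2m-1)$-cohomology, the case $k=2m-1$ needs a separate trick: one passes to an augmented complex $\hat{\mathcal{G}}^\ast$ with auxiliary cone-vertices killing the remaining cycles, then reruns the same argument there.

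You instead compute the cohomology of the bulk complex directly and intrinsically: since $N(v_0)=\mathcal{K}^0\times\{1\}$ carries an untouched copy of the sphere triangulation $\mathcal{K}$, the map $\{v_0\}\cup\tau\mapsto\tau$ (equivalently, your long exact sequence for the pair $(v_0\ast\mathcal{K}',\mathcal{K}')$) identifies $H^k([\text{bulk}],\hat d)$ with $\widetilde H^{k-1}(S^{2m-1})$. This handles all $k$ uniformly---no case split at $k=2m-1$, no auxiliary vertices---and as a byproduct gives $e_{2,2m+1}^{2m}\cong\mathbb{C}$, which the paper establishes by a separate dimension count. The paper's approach has the virtue of reusing the \Cref{P_Q_lemma} machinery already built for \Cref{clm_30}, keeping the two claims stylistically parallel; yours is more self-contained and exposes the topological reason for the vanishing (the link of $v_0$ is a $(2m-1)$-sphere) rather than routing through the global cohomology of $\hat{\mathcal{G}}$.
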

\begin{proof}
This proof will be very similar to that of \Cref{clm_30}, except that we will apply \Cref{P_Q_lemma} with $\mathcal{P}$ being the entire complex and $\mathcal{Q}$ being $[\text{non-bulk}]$. We will again split up into the two cases $k < 2m-1$ and $k = 2m-1$, with the case $k = 2m-1$ requiring the same extra idea. The relevant coboundary maps are now those of the first page $d_{1,k+1}^k$, and recall $e_{2,k+1}^k$ is defined to be the homology of the chain
\begin{equation*}
\begin{tikzcd}
\mathcal{C}^{k-1}([\text{bulk}]) \arrow[r,"d_{1,k}^{k-1}"] & \mathcal{C}^k([\text{bulk}]) \arrow[r,"d_{1,k+1}^k"] & \mathcal{C}^{k+1}([\text{bulk}])
\end{tikzcd}
\end{equation*}

Let's again begin with $k < 2m-1$. Let $|\alpha\rangle \in \mathcal{C}^k([\text{bulk}])$ with $|\alpha\rangle \in \ker{d_{1,k+1}^k}$. Acting on $\mathcal{C}^k([\text{bulk}])$, $d_{1,k+1}^k$ simply acts as the original coboundary map $d^k$, so in fact $|\alpha\rangle \in \ker{d^k}$ is a \emph{cocycle}. But the complex has no $k$-homology for $k \neq 2m-1$, so $|\alpha\rangle$ must also be a \emph{coboundary}. Now apply \Cref{P_Q_lemma} with $\mathcal{P} = \hat{\mathcal{G}}$ and $\mathcal{Q} = [\text{non-bulk}]$. We get that $|\alpha\rangle \in d^{k-1}(\mathcal{C}^{k-1}([\text{non-bulk}])^\perp) = d_{1,k}^{k-1}(\mathcal{C}^{k-1}([\text{bulk}]))$. We conclude that $\ker{d_{1,k+1}^k} = \im{d_{1,k}^{k-1}}$ and $e_{2,k+1}^k = \{0\}$.

Now we move onto the case $k = 2m-1$. Recalling \Cref{single_gadget_sec}, $\mathcal{G} \subseteq \hat{\mathcal{G}}$ consists of $2^m$ copies of $S^{2m-1}$ corresponding to the $m$-bit strings. For each copy of $S^{2m-1}$ introduce an extra auxiliary vertex (of weight $1$) and connecting it with all vertices in this $S^{2m-1}$, thus filling in the hole. Denote the new objects after this operation with a star such as $\hat{\mathcal{G}}^\ast$. The effect of the auxiliary vertices is that $\hat{\mathcal{G}}^\ast$ has no $(2m-1)$-homology.

As before, suppose $|\alpha\rangle \in \mathcal{C}^{2m-1}([\text{bulk}])$ with $|\alpha\rangle \in \ker{d_{1,2m}^{2m-1}} = \ker{d^{2m-1}}$. Now $|\alpha\rangle$ has no support on the qubit complex $\mathcal{G}$, thus $|\alpha\rangle$ is likewise a cocycle in $\hat{\mathcal{G}}^\ast$. But $\hat{\mathcal{G}}^\ast$ has no $(2m-1)$-homology, so $|\alpha\rangle$ is in fact a \emph{coboundary} in $\hat{\mathcal{G}}^\ast$. Now apply \Cref{P_Q_lemma} with $\mathcal{P} = \hat{\mathcal{G}}^\ast$ and $\mathcal{Q} = [\text{non-bulk}]^\ast$. We get that $|\alpha\rangle \in d^{2m-2}(\mathcal{C}^{2m-2}([\text{non-bulk}]^\ast)^\perp)$. But $\mathcal{C}^{2m-2}([\text{non-bulk}]^\ast)^\perp$ is simply $\mathcal{C}^{2m-2}([\text{bulk}])$, so $|\alpha\rangle \in d^{2m-2}(\mathcal{C}^{2m-2}([\text{bulk}])) = d_{1,2m-1}^{2m-2}(\mathcal{C}^{2m-2}([\text{bulk}]))$. We conclude that $\ker{d_{1,2m}^{2m-1}} = \im{d_{1,2m-1}^{2m-2}}$ and $e_{2,2m}^{2m-1} = \{0\}$.

The interpretation of this move is the same as in the proof of \Cref{clm_30}. Since $|\alpha\rangle$ has no support on the qubit complex $\mathcal{G}$, \emph{it cannot tell} that the complex has some $(2m-1)$-homology. We are able to imagine closing all the $(2m-1)$-cycles with fictitious auxiliary vertices, and \Cref{P_Q_lemma} will tell us that this is not detected by $|\alpha\rangle$.
\end{proof}

Finally let's investigate $e_{2,2m+1}^{2m}$. The entire space $\mathcal{C}^{2m}([\text{bulk}])$ is in $\ker d_{1,2m+1}^{2m}$ since $e_{1,2m+2}^{2m+1} = \{0\}$, so we have $e_{2,2m+1}^{2m} = \mathcal{C}^{2m}([\text{bulk}]) / \im{d_{1,2m}^{2m-1}}$. As usual, we can pick the representative orthogonal to the space we are quotienting and write $e_{2,2m+1}^{2m} = \mathcal{C}^{2m}([\text{bulk}]) \cap (\im{d_{1,2m}^{2m-1}})^\perp$. But $(\im{d_{1,2m}^{2m-1}})^\perp = \ker{\partial_{1,2m+1}^{2m}}$ where $\partial_{1,l}^k = (d_{1,l-1}^{k-1})^\dag$ can be interpreted as a gadget vertex boundary map which removes a gadget vertex from the simplex. So $e_{2,2m+1}^{2m} = \mathcal{C}^{2m}([\text{bulk}]) \cap \ker{\partial_{1,2m+1}^{2m}}$. Since $[\text{bulk}]$ is isomorphic to the interior of a $2m$-ball, the only state in $\mathcal{C}^{2m}([\text{bulk}])$ which has no $\partial_{1,2m+1}^{2m}$-boundary is the uniform superposition over all the $2m$ simplices in $[\text{bulk}]^{2m}$, with appropriate orientations. Let's denote this state by $|\text{core}\rangle \in \mathcal{C}^{2m}([\text{bulk}])$.

\begin{table}[H]
\begin{center}
\caption*{Page 2}
\begin{tabular}{ c | c c c c c c }
$k$ & & & & & & \\
\\
$2m$ & & & & & $\{|\text{core}\rangle\}$ & \\
\\
$2m-1$ & $H^{2m-1}(\mathcal{G})$ & & & & & \\
\\
$\vdots$ & & & & & & \\
\\
$\vdots$ & & & & & & \\
\\
\hline
& $0$ & $\dots$ & $\dots$ & $\dots$ & $2m+1$ & $l$
\end{tabular}
\end{center}
\end{table}

Pages $3$ to $2m+1$ will remain unchanged. This can be seen purely from the direction in which the coboundaries $d_{j,l}^k$ map. $d_{j,l}^k$ moves one step up and $j$ steps to the right, so these must necessarily be zero maps.
\begin{equation*}
d_{j,l}^k : e_{j,l}^k \rightarrow e_{j,l+j}^{k+1}
\end{equation*}
Since all relevant coboundary maps are zero, taking the homology leaves the page unchanged.

This argument no longer applies when we look at page $2m+2$. Now, the coboundary map of page $2m+1$ $d_{2m+1,l}^k$ maps one step up and $2m+1$ steps to the right. In particular, $d_{2m+1,0}^1$ maps
\begin{equation*}
d_{2m+1,0}^1 : H^{2m-1}(\mathcal{G}) \rightarrow \{|\text{core}\rangle\}
\end{equation*}
We will argue that in fact (ignoring normalizations)
\begin{equation*}
d_{2m+1,0}^1 \ |\phi\rangle \ = \ |\text{core}\rangle
\end{equation*}
where $|\phi\rangle \in H^{2m-1}(\mathcal{G})$ is the cycle being filled by the gadget $\hat{\mathcal{G}}$. Thus $|\phi\rangle$ is lifted out of the homology at this page. To see this, we need to examine the coboundary map $d_{2m+1,0}^1$ more closely. Returning to the zeroth page $e_{0,l}^k$, denote by $\partial_{\text{qubit},l}^k = \partial_{0,l}^k$ the `qubit boundary map'
\begin{equation*}
\partial_{\text{qubit},l}^k : e_{0,l}^k \rightarrow e_{0,l}^{k-1}
\end{equation*}
which acts by removing qubit vertices. Further, denote by $d_{\text{gadget},l}^k$ the `gadget coboundary map'
\begin{equation*}
d_{\text{gadget},l}^k : e_{0,l}^k \rightarrow e_{0,l+1}^{k+1}
\end{equation*}
which acts by adding gadget vertices. Now we can think of the map $d_{2m+1,0}^1$ as acting by
\begin{equation*}
d_{2m+1,0}^1 = d_{\text{gadget},2m}^{2m-1} \circ \partial_{\text{qubit},2m}^{2m} \circ \dots \circ d_{\text{gadget},1}^{1} \circ \partial_{\text{qubit},1}^2 \circ d_{\text{gadget},0}^1
\end{equation*}
With this new understanding, we can see that $d_{2m+1,0}^1 |\phi\rangle = |\text{core}\rangle$. Recalling \Cref{single_gadget_sec}, the cycle $|\phi\rangle$ on the outside layer gets transported through the thickening of the gadget to the inside layer, at which point it gets sent to $|\text{core}\rangle$ by the final gadget vertex coboundary map $d_{\text{gadget},2m}^{2m-1}$.

\begin{table}[H]
\begin{center}
\caption*{Page 2}
\begin{tabular}{ c | c c c c c c }
$k$ & & & & & & \\
\\
$2m$ & & & & & & \\
\\
$2m-1$ & $\{|\psi\rangle \in H^{2m-1}(\mathcal{G}) : \langle\phi|\psi\rangle = 0\}$ & & & & & \\
\\
$\vdots$ & & & & & & \\
$\vdots$ & & & & & & \\
\\
\hline
& $0$ & $\dots$ & $\dots$ & $\dots$ & $2m+1$ & $l$
\end{tabular}
\end{center}
\end{table}

\subsubsection{Finishing the proof}

By Theorem 4 of \cite{forman1994hodge}, the eigenvalues of the Laplacian decay like $\Theta(\lambda^{2j})$ for some $j$. Thus if we take the eigenspace of eigenvalues $\mathcal{O}(\lambda^{2j})$ and intersect it with the orthogonal complement of the eigenspace of eigenvalues $\mathcal{O}(\lambda^{2j+2})$, we get the eigenspace of eigenvalues that are exactly $\Theta(\lambda^{2j})$.

\Cref{O_lambda_perturbation_clm} tells us that the Laplacian eigenspace of eigenvalues $\mathcal{O}(\lambda^{2j})$ is a $\mathcal{O}(\lambda)$-perturbation of $E_j^{2m-1}$, in the sense of \Cref{subspace_perturbation_sec}. Thus the Laplacian eigenspace of eigenvalues $\Theta(\lambda^{2j})$ is a $\mathcal{O}(\lambda)$-perturbation of $E_j^{2m-1} \cap (E_{j+1}^{2m-1})^\perp$.

Thus to complete the proof of \Cref{spec_seq_lemma}, we use the isomorphism in \Cref{thm_7} to derive the spaces $E_j^k$.
Recall
\begin{equation*}
E_j^k = \bigoplus_l E_{j,l}^k
\end{equation*}
and by \Cref{thm_7} we can get $E_{j,l}^k$ by taking $e_{j,l}^k$ and projecting a representative from each equivalence class onto $\mathcal{U}_l^k$. This gives:
\begin{itemize}
    \item $E_0^{2m-1} = \mathcal{C}^{2m-1}(\hat{\mathcal{G}})$, with $E_{0,l}^{2m-1} = \mathcal{U}_l^{2m-1} \cap (\mathcal{U}_l^{2m-1})^\perp$ for each $l$. (These are tautological.)
    \item $E_1^{2m-1} = H^{2m-1}(\mathcal{G}) \oplus \mathcal{C}^{2m-1}([\text{bulk}])$, with $E_{1,0}^{2m-1} = H^{2m-1}(\mathcal{G})$ and $E_{1,2m}^{2m-1} = \mathcal{C}^{2m-1}([\text{bulk}])$.
    \item $E_j^{2m-1} = E_{j,0}^{2m-1} = H^{2m-1}(\mathcal{G})$ for all $j= 2,\dots,2m+1$.
    \item $E_j^{2m-1} = E_{j,0}^{2m-1} = \{|\psi\rangle \in H^{2m-1}(\mathcal{G}) : \langle\phi|\psi\rangle = 0\}$ for all $j\geq 2m+2$.
\end{itemize}

(Recall that $H^{2m-1}(\mathcal{G}) \cong \ker{\Delta^{2m-1}}$ where $\Delta^k$ is the Laplacian of the qubit complex $\mathcal{G}^k$. When the space $H^{2m-1}(\mathcal{G})$ appears above, it is the harmonic representative from $\ker{\Delta^k} \subseteq \mathcal{C}^k(\mathcal{G}) \subseteq \mathcal{C}^k(\hat{\mathcal{G}})$ which is present.)

\pagebreak
\section{Combining gadgets}\label{sec:combining gadgets full}

In this section, we describe how to combine many gadgets together to simulate a local Hamiltonian. We prove our main theorem, \Cref{main_thm}.

Recall from \Cref{single_gadget_sec} $\hat{\mathcal{G}}_m$ denotes the complex of a single gadget. We must be careful to keep track of what we think of as the original qubit complex and the additional gadget complex. The \emph{vertices} $\hat{\mathcal{G}}_m^0$ can be partitioned into the vertices of the original qubit graph $\mathcal{G}_m^0$, which have weight 1, and the added gadget vertices, which have weight $\lambda$. Let these added gadget vertices be denoted $\mathcal{T}^0$. The \emph{$k$-simplices} $\hat{\mathcal{G}}_m^k$ can be partitioned into the $k$-simplices of the original qubit complex $\mathcal{G}_m^k$, and the $k$-simplices which contain at least one vertex from $\mathcal{T}^0$. Let these extra $k$-simplices be denoted $\mathcal{T}^k$. Note, however, that $\mathcal{T}$ is \emph{not} a simplicial complex in its own right, since $\mathcal{T}^k$ involve vertices outside of $\mathcal{T}^0$. We can decompose the chain space of $\hat{\mathcal{G}}_m$ as a direct sum
\begin{equation*}
\mathcal{C}^k(\hat{\mathcal{G}}_m) = \mathcal{C}^k(\mathcal{G}_m) \oplus \mathcal{C}^k(\mathcal{T})
\end{equation*}

In \Cref{spec_seq_sec} we used the powerful tool of spectral sequences to understand the spectrum of the Laplacian of a single gadget $\hat{\mathcal{G}}_m$, captured by \Cref{spec_seq_lemma}. This lemma will be an essential ingredient later when we come to analyze the spectrum of many gadgets combined.

\subsection{Padding with identity} \label{padding_sec}

In order to add the remaining $n-m$ qubits, we join the graph $\mathcal{G}_{n-m}$ to get $\hat{\mathcal{G}}_n = \hat{\mathcal{G}}_m \ast \mathcal{G}_{n-m}$. This is analogous to tensoring a Hamiltonian term with identity on all the qubits outside of its support.

There is another way to look at the final complex $\hat{\mathcal{G}}_n$. To get $\hat{\mathcal{G}}_n$, we implement the gadget $\mathcal{T}$ described in Sections \ref{single_gadget_sec}, \ref{constructing_cycles_sec} on the copies of $\mathcal{G}_1$ corresponding to the qubits on which $\phi$ is supported, and then connecting the vertices of the gadget $\mathcal{T}^0$ \emph{all to all} with the qubit vertices in the copies of $\mathcal{G}_1$ corresponding to qubits outside the support of $\phi$.

\begin{lemma} \label{spec_seq_lemma_padded}
Let $\hat{\Delta}'^{k}$ be the Laplacian of $\hat{G}_n$.
\begin{itemize}
    \item $\hat{\Delta}'^{2n-1}$ has a $(2^m-1) \cdot 2^{n-m}$-dimensional kernel, which is a $\mathcal{O}(\lambda)$-perturbation of the subspace $\{|\psi\rangle \in \mathcal{H}_m : \langle\phi|\psi\rangle = 0\}$ tensored with $\mathcal{H}_{n-m}$.
    \item The first excited eigenspace of $\hat{\Delta}'^{2n-1}$ above the kernel is the $2^{n-m}$-dimensional space $|\hat{\phi}\rangle \otimes \mathcal{H}_{n-m}$, where $|\hat{\phi}\rangle$ is a $\mathcal{O}(\lambda)$-perturbation of $|\phi\rangle \in \mathcal{H}_m$, and it has energy $\Theta(\lambda^{4m+2})$.
    \item The next lowest eigenvectors have eigenvalues $\Theta(\lambda^2)$, and they are $\mathcal{O}(\lambda)$-perturbations of sums of $(2m-1)$-simplices touching the central vertex $v_0$, tensored with $\mathcal{H}_{n-m}$.
    \item The rest of the eigenvalues are $\Theta(1)$.
\end{itemize}
\end{lemma}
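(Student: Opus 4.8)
The plan is to deduce \Cref{spec_seq_lemma_padded} from \Cref{spec_seq_lemma} (the single gadget lemma, already proven via spectral sequences) using the join structure $\hat{\mathcal{G}}_n = \hat{\mathcal{G}}_m \ast \mathcal{G}_{n-m}$ together with the two facts about joins established earlier: the Kunneth formula (\Cref{Kunneth_lem}) and, crucially, \Cref{Laplacian_join_lem}, which says that on a tensor factor $|\psi\rangle \otimes |\varphi\rangle$ with $|\psi\rangle \in \mathcal{C}^i(\hat{\mathcal{G}}_m)$, $|\varphi\rangle \in \mathcal{C}^j(\mathcal{G}_{n-m})$, $i+j = k-1$, the Laplacian acts as $\Delta^k(|\psi\rangle\otimes|\varphi\rangle) = (\Delta^i|\psi\rangle)\otimes|\varphi\rangle + |\psi\rangle\otimes(\Delta^j|\varphi\rangle)$. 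The point is that the padded Laplacian is a sum of two commuting operators acting on the two tensor factors, so its spectrum is obtained by adding eigenvalues.

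First I would set $k = 2n-1$ and recall that $\mathcal{G}_{n-m}$ is $\mathfrak{g}_{2(n-m)}$-like: by the Kunneth formula its reduced cohomology is concentrated in degree $2(n-m)-1$, where $H^{2(n-m)-1}(\mathcal{G}_{n-m}) \cong \mathcal{H}_{n-m} = (\mathbb{C}^2)^{\otimes(n-m)}$, and all its other Laplacian eigenvalues are $\Theta(1)$ (the single-qubit bowtie complex $\mathcal{G}_1$ has $\ker\Delta^1 \cong \mathbb{C}^2$ and six other states at constant energy, and the join formula gives the same qualitative picture after $n-m$ joins; the key quantitative point is a constant lower bound on the nonzero eigenvalues of $\Delta^j(\mathcal{G}_{n-m})$ for all relevant $j$, which follows since the graph has bounded size per qubit and weights all equal to $1$). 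Next, decompose $\mathcal{C}^{2n-1}(\hat{\mathcal{G}}_n) \cong \bigoplus_{i+j=2n-2}\mathcal{C}^i(\hat{\mathcal{G}}_m)\otimes\mathcal{C}^j(\mathcal{G}_{n-m})$ and diagonalize $\hat\Delta'^{2n-1}$ block-by-block: on the block $(i,j)$ it is $\Delta^i(\hat{\mathcal{G}}_m)\otimes I + I \otimes \Delta^j(\mathcal{G}_{n-m})$, so its eigenvalues are sums $\mu + \nu$ with $\mu \in \spec\Delta^i(\hat{\mathcal{G}}_m)$, $\nu \in \spec\Delta^j(\mathcal{G}_{n-m})$. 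Since every $\nu$ is either $0$ (only when $j = 2(n-m)-1$, eigenspace $\mathcal{H}_{n-m}$) or $\Theta(1)$, the only blocks producing eigenvalues that are $o(1)$ are those with $j = 2(n-m)-1$, forcing $i = 2m-1$; there the small eigenvalues of $\hat\Delta'^{2n-1}$ are exactly the small eigenvalues of $\Delta^{2m-1}(\hat{\mathcal{G}}_m)$ tensored with $\mathcal{H}_{n-m}$. Feeding in \Cref{spec_seq_lemma} for $\Delta^{2m-1}(\hat{\mathcal{G}}_m)$ gives all four bullet points: the $(2^m-1)$-dimensional kernel of $\hat\Delta^{2m-1}$ tensors with the $2^{n-m}$-dimensional $\mathcal{H}_{n-m}$ to give dimension $(2^m-1)2^{n-m}$; the energy-$\Theta(\lambda^{4m+2})$ state $|\hat\phi\rangle$ tensors up to $|\hat\phi\rangle\otimes\mathcal{H}_{n-m}$; the $\Theta(\lambda^2)$ states likewise; and all remaining eigenvalues (either from the $i=2m-1$ block with a $\Theta(1)$ contribution from $\hat\Delta^{2m-1}$, or from any block with $j \neq 2(n-m)-1$) are $\Theta(1)$.

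For the perturbation-of-subspace claims, I would note that an $\mathcal{O}(\lambda)$-perturbation of a subspace $\mathcal{U}\subseteq\mathcal{V}$ tensored with a fixed subspace $\mathcal{W}$ is an $\mathcal{O}(\lambda)$-perturbation of $\mathcal{U}\otimes\mathcal{W}$: taking the orthonormal bases $\{|u,\lambda\rangle\}$, $\{|u\rangle\}$ from the definition in \Cref{subspace_perturbation_sec} and a fixed orthonormal basis $\{|w\rangle\}$ of $\mathcal{W}$, the products $\{|u,\lambda\rangle\otimes|w\rangle\}$ and $\{|u\rangle\otimes|w\rangle\}$ satisfy $\||u,\lambda\rangle\otimes|w\rangle - |u\rangle\otimes|w\rangle\| = \||u,\lambda\rangle-|u\rangle\| = \mathcal{O}(\lambda)$. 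This handles "the subspace $\{|\psi\rangle\in\mathcal{H}_m:\langle\phi|\psi\rangle=0\}$ tensored with $\mathcal{H}_{n-m}$" and the $\Theta(\lambda^2)$ eigenvectors being $\mathcal{O}(\lambda)$-perturbations of (bulk simplices)$\otimes\mathcal{H}_{n-m}$; the identification of $\mathcal{H}_m \subseteq \mathcal{C}^{2m-1}(\mathcal{G}_m)$ inside $\mathcal{C}^{2n-1}(\hat{\mathcal{G}}_n)$ is via the Kunneth isomorphism applied to $\mathcal{H}_m \otimes \mathcal{H}_{n-m} \subseteq H^{2m-1}(\mathcal{G}_m)\otimes H^{2(n-m)-1}(\mathcal{G}_{n-m})$.

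The main obstacle I anticipate is not the algebra of tensor-summing spectra — that is routine given \Cref{Laplacian_join_lem} — but verifying the uniform $\Theta(1)$ lower bound on the nonzero eigenvalues of $\Delta^j(\mathcal{G}_{n-m})$ across all the degrees $j$ that appear, so that nothing from a "wrong" block sneaks down into the $o(1)$ regime and contaminates the four stated cases. One has to be careful that the join formula, when iterated, does not allow many small-but-nonzero contributions to add up; here it is essential that each $\mathcal{G}_1$ factor contributes either exactly $0$ or a constant bounded away from zero, and that there are only $n-m \leq n$ factors, so any sum involving at least one nonzero term is $\Omega(1)$. A secondary point of care is bookkeeping the degree shift in the reduced-cohomology convention (the $k-1$ in the Kunneth indices) so that the dimension count $(2^m-1)\cdot 2^{n-m}$ and the degree $2n-1$ come out consistently.
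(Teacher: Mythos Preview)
Your proposal is correct and follows essentially the same approach as the paper: invoke the join structure $\hat{\mathcal{G}}_n = \hat{\mathcal{G}}_m \ast \mathcal{G}_{n-m}$, use \Cref{Laplacian_join_lem} to write $\hat{\Delta}'^{2n-1}$ as a sum of two commuting tensor-factor operators, observe that $\Delta(\mathcal{G}_{n-m})$ has kernel $\mathcal{H}_{n-m}$ and all other eigenvalues $\Theta(1)$, and then read off the four bullets from \Cref{spec_seq_lemma}. Your treatment is in fact more careful than the paper's: the paper collapses the Kunneth direct sum to the single block $(i,j)=(2m-1,2(n-m)-1)$ without comment, whereas you explicitly argue that every other block contributes only $\Theta(1)$ eigenvalues because some $\Delta^j(\mathcal{G}_{n-m})$ with $j\neq 2(n-m)-1$ has trivial kernel and a uniform constant gap (which follows, as you note, from iterating the join formula on the fixed-size bowtie).
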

\begin{proof}
By \Cref{Kunneth_lem}, the new chainspace is
\begin{equation*}
\mathcal{C}^{2n-1}(\hat{\mathcal{G}}_n) = \mathcal{C}^{2m-1}(\hat{\mathcal{G}}_m) \otimes \mathcal{C}^{2(n-m)+1}(\mathcal{G}_{n-m})
\end{equation*}
By \Cref{Laplacian_join_lem}, the new Laplacian on $\hat{\mathcal{G}}_n$ is
\begin{equation*}
\hat{\Delta}'^{2n-1} = \hat{\Delta}^{2m-1} \otimes \mathbbm{1} + \mathbbm{1} \otimes \Delta^{2(n-m)+1}
\end{equation*}
where $\Delta^{2(n-m)+1}$ here is the Laplacian on $\mathcal{C}^{2(n-m)+1}(\mathcal{G}_{n-m})$. Notice that the two terms on the right-hand-side commute. $\Delta^{2(n-m)+1}$ has kernel $\mathcal{H}_{n-m}$ and its first excited eigenvalue is $\Theta(1)$. The conclusions follow from \Cref{spec_seq_lemma}.
\end{proof}

Define the new gadget simplices of $\hat{\mathcal{G}}_n$ to be $\mathcal{T}' = \mathcal{T} \ast \mathcal{G}_{n-m}$, and we have
\begin{equation*}
\mathcal{C}^{2n-1}(\mathcal{T}') = \mathcal{C}^{2m-1}(\mathcal{T}) \otimes \mathcal{C}^{2(n-m)-1}(\mathcal{G}_{n-m})
\end{equation*}

\subsection{Some facts about a single gadget} \label{facts_sec}

We now take the opportunity to prove some useful facts about the complex with a single gadget added. Suppose we implement the gadget corresponding to integer state $|\phi\rangle$, and let $|\hat{\phi}\rangle$ be as in \Cref{spec_seq_lemma_padded}. First we show that states in the subspace $|\hat{\phi}\rangle \otimes \mathcal{H}_{n-m}$ are \emph{cycles}. In other words, these states are \emph{paired up} in the language of \Cref{pairing_sec}.

\begin{claim} \label{paired_up_clm}
$\hat{\partial}^{2n-1} (|\hat{\phi}\rangle \otimes \mathcal{H}_{n-m}) = 0$
\end{claim}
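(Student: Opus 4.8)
The plan is to first establish the identity inside the single-gadget complex $\hat{\mathcal{G}}_m$, namely $\hat{\partial}^{2m-1}|\hat{\phi}\rangle = 0$, and then transport it to $\hat{\mathcal{G}}_n = \hat{\mathcal{G}}_m \ast \mathcal{G}_{n-m}$ via the join. For the transport step, \Cref{Kunneth_lem} gives $\mathcal{C}^{2n-1}(\hat{\mathcal{G}}_n) \cong \mathcal{C}^{2m-1}(\hat{\mathcal{G}}_m) \otimes \mathcal{C}^{2(n-m)-1}(\mathcal{G}_{n-m})$, and the computation in the proof of \Cref{Laplacian_join_lem}, extended by linearity, shows that for $|\psi\rangle \in \mathcal{C}^{2m-1}(\hat{\mathcal{G}}_m)$ and $|\chi\rangle \in \mathcal{C}^{2(n-m)-1}(\mathcal{G}_{n-m})$,
\begin{equation*}
\hat{\partial}^{2n-1}\big(|\psi\rangle \otimes |\chi\rangle\big) = \big(\hat{\partial}^{2m-1}|\psi\rangle\big) \otimes |\chi\rangle + (-1)^{2m}\, |\psi\rangle \otimes \big(\partial^{2(n-m)-1}|\chi\rangle\big).
\end{equation*}
Every $|\chi\rangle \in \mathcal{H}_{n-m}$ is harmonic, hence a cycle for $\mathcal{G}_{n-m}$ by \Cref{Hodge_prop}, so the second term vanishes and $\hat{\partial}^{2n-1}(|\hat{\phi}\rangle \otimes |\chi\rangle) = (\hat{\partial}^{2m-1}|\hat{\phi}\rangle) \otimes |\chi\rangle$. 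Thus it suffices to show $\hat{\partial}^{2m-1}|\hat{\phi}\rangle = 0$ for the single-gadget first excited state $|\hat{\phi}\rangle \in \mathcal{C}^{2m-1}(\hat{\mathcal{G}}_m)$ of \Cref{spec_seq_lemma}.

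For this I would exploit that $\hat{\Delta}$ commutes block-wise with $\hat{\partial}$, as noted in \Cref{pairing_sec}; concretely $\hat{\partial}^{2m-1}\hat{\Delta}^{2m-1} = \hat{\partial}^{2m-1}\hat{d}^{2m-2}\hat{\partial}^{2m-1} = \hat{\Delta}^{2m-2}\hat{\partial}^{2m-1}$, using $\hat{\partial}\hat{\partial} = 0 = \hat{d}\hat{d}$. Let $E = \Theta(\lambda^{4m+2})$ be the eigenvalue of $|\hat{\phi}\rangle$ on $\hat{\Delta}^{2m-1}$, and set $|\beta\rangle := \hat{\partial}^{2m-1}|\hat{\phi}\rangle \in \mathcal{C}^{2m-2}(\hat{\mathcal{G}}_m)$. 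Then $\hat{\Delta}^{2m-2}|\beta\rangle = E|\beta\rangle$, so if $|\beta\rangle \neq 0$ then $E$ is an eigenvalue of $\hat{\Delta}^{2m-2}$. (Equivalently, this says $|\hat{\phi}\rangle$ is \emph{paired up} rather than \emph{paired down} in the language of \Cref{pairing_prop}.) The task therefore reduces to ruling out that $\hat{\Delta}^{2m-2}$ has an eigenvalue as small as $\Theta(\lambda^{4m+2})$, in the perturbative regime $\lambda \ll 1$ assumed throughout.

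To do this I would read off the row $k = 2m-2$ of the spectral sequence already computed in the proof of \Cref{spec_seq_lemma}: the entry $e_{1,0}^{2m-2} = H^{2m-2}(\mathcal{G})$ vanishes since the qubit complex has cohomology only in degree $2m-1$; the entries $e_{1,l}^{2m-2}$ for $1 \leq l \leq 2m-2$ vanish by \Cref{clm_30}; the diagonal entry $e_{1,2m-1}^{2m-2} = \mathcal{C}^{2m-2}([\text{bulk}])$ has $e_{2,2m-1}^{2m-2} = 0$ by \Cref{bulk_homology_clm}; and $e_{1,l}^{2m-2} = 0$ for $l \geq 2m$ for dimension reasons. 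Hence $e_{2,l}^{2m-2} = \{0\}$ for every $l$, so $E_2^{2m-2} = \bigoplus_l E_{2,l}^{2m-2} = \{0\}$ by \Cref{thm_7}. By \Cref{O_lambda_perturbation_clm} the span of the eigenvectors of $\hat{\Delta}^{2m-2}$ with eigenvalue $\mathcal{O}(\lambda^4)$ is an $\mathcal{O}(\lambda)$-perturbation of $\{0\}$, hence is $\{0\}$ for $\lambda$ below a fixed threshold; i.e. there is a constant $c > 0$ with every eigenvalue of $\hat{\Delta}^{2m-2}$ exceeding $c\lambda^4$ in that regime. Since $4m+2 \geq 6 > 4$, for $\lambda$ small the value $E = \Theta(\lambda^{4m+2})$ lies below $c\lambda^4$ and so is not an eigenvalue of $\hat{\Delta}^{2m-2}$. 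Therefore $|\beta\rangle = 0$, that is $\hat{\partial}^{2m-1}|\hat{\phi}\rangle = 0$; combined with the first paragraph this gives $\hat{\partial}^{2n-1}(|\hat{\phi}\rangle \otimes \mathcal{H}_{n-m}) = 0$.

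The only genuine obstacle is this last step: it requires the spectral-sequence analysis of \Cref{spec_seq_sec} to control not only the row $k = 2m-1$ emphasized in \Cref{spec_seq_lemma} but also the adjacent row $k = 2m-2$, tightly enough to exclude an eigenvalue of order $\Theta(\lambda^{4m+2})$. Fortunately \Cref{clm_30} and \Cref{bulk_homology_clm} were proved for all the relevant $k$, so no new computation is needed — one only has to track the filtration indices carefully, and the special case $m=1$ (where $2m-2=0$ and $e_{1,0}^0$ is the reduced $0^{\text{th}}$ cohomology of the connected bowtie) is handled identically.
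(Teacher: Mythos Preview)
Your proof is correct but takes a different route from the paper. Both arguments first reduce to showing $\hat{\partial}^{2m-1}|\hat{\phi}\rangle = 0$ on the single-gadget complex $\hat{\mathcal{G}}_m$. You then argue via the spectrum of $\hat{\Delta}^{2m-2}$: if $|\hat{\phi}\rangle$ were paired down, $\hat{\Delta}^{2m-2}$ would inherit an eigenvalue $\Theta(\lambda^{4m+2})$, but reading off row $k=2m-2$ of the spectral sequence (using \Cref{clm_30} and \Cref{bulk_homology_clm}, which were indeed proved for all the relevant $k$) gives $E_2^{2m-2}=\{0\}$, forcing every eigenvalue of $\hat{\Delta}^{2m-2}$ to be $\Omega(\lambda^2)$ and ruling this out. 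The paper instead gives a short orthogonality argument: by \Cref{pairing_prop} the excited eigenstate $|\hat{\phi}\rangle$ is either paired up or paired down, and paired down would mean $|\hat{\phi}\rangle \in \im{\hat{d}^{2m-2}}$; but $|\hat{\phi}\rangle = |\phi\rangle + \mathcal{O}(\lambda)$ where $|\phi\rangle$ is a cycle and hence orthogonal to $\im{\hat{d}^{2m-2}}$, so for small $\lambda$ the state $|\hat{\phi}\rangle$ cannot lie entirely inside $\im{\hat{d}^{2m-2}}$. The paper's approach is more elementary, using only the $\mathcal{O}(\lambda)$-closeness already supplied by \Cref{spec_seq_lemma} and avoiding any further spectral-sequence bookkeeping; your approach buys the extra information that $\hat{\Delta}^{2m-2}$ has spectral gap $\Omega(\lambda^2)$, at the cost of tracking an additional row of the sequence.
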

\begin{proof}
$\mathcal{H}_{n-m}$ consists of cycles, so it is sufficient to show $\hat{\partial}^{2m-1} |\hat{\phi}\rangle = 0$. $|\hat{\phi}\rangle$ is an eigenstate of $\hat{\Delta}^{2m-1}$, thus by \Cref{pairing_prop} it must be paired up $|\hat{\phi}\rangle \in \im{\hat{\partial}^{2m}}$ or paired down $|\hat{\phi}\rangle \in \im{\hat{d}^{2m-2}}$. By \Cref{spec_seq_lemma}, $|\hat{\phi}\rangle$ is a $\mathcal{O}(\lambda)$ perturbation of $|\phi\rangle \in \mathcal{H}_m$. $|\phi\rangle$ is a cycle $\hat{\partial}^{2m-1} |\phi\rangle = 0$, so $|\phi\rangle$ is orthogonal to $\im{\hat{d}^{2m-2}}$. This guarantees that the $\mathcal{O}(\lambda)$-perturbation $|\hat{\phi}\rangle = |\phi\rangle + \mathcal{O}(\lambda)$ is \emph{not} contained in $\im{\hat{d}^{2m-2}}$. Thus $|\hat{\phi}\rangle$ is paired up and $\hat{\partial}^{2m-1} |\hat{\phi}\rangle = 0$.
\end{proof}

Next we will state some facts about the \emph{bulk} of the gadget. Let $[\text{bulk}]$ denote the simplices touching the central vertex $v_0$ of gadget $\mathcal{T}$ (see \Cref{single_gadget_sec}), with chainspaces $\mathcal{C}^k([\text{bulk}])$, and let $\Pi^{[k]}$ be the projection onto $\mathcal{C}^k([\text{bulk}])$.

\begin{claim} \label{lambda_clm}
All states $|\psi\rangle$ have $||\Pi^{[2n-2]} \hat{\partial}^{2n-1} |\psi\rangle|| = \mathcal{O}(\lambda) |||\psi\rangle||$ and $||\Pi^{[2n]} \hat{d}^{2n-1} |\psi\rangle|| = \mathcal{O}(\lambda) |||\psi\rangle||$.
\end{claim}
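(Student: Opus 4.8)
The plan is to pass to the Künneth decomposition of the padded complex and reduce both bounds to two structural observations: that the \emph{bulk} of a single gadget is an island of weight‑$\lambda$ vertices, and that harmonic chains are $\partial$‑ and $d$‑closed. Recall from \Cref{Kunneth_lem}, applied to $\hat{\mathcal{G}}_n = \hat{\mathcal{G}}_m \ast \mathcal{G}_{n-m}$, that
\begin{equation*}
\mathcal{C}^{K}(\hat{\mathcal{G}}_n) \;=\; \bigoplus_{i+j=K-1} \mathcal{C}^i(\hat{\mathcal{G}}_m) \otimes \mathcal{C}^j(\mathcal{G}_{n-m}),
\end{equation*}
and from the proof of \Cref{Laplacian_join_lem} that $\hat{\partial}^{2n-1}$ and $\hat{d}^{2n-1}$ act blockwise: on $|\sigma\rangle\otimes|\tau\rangle$ with $\sigma\in\hat{\mathcal{G}}_m^i$, $\tau\in\mathcal{G}_{n-m}^j$, $i+j=2n-2$, one has $\hat{\partial}^{2n-1}(|\sigma\rangle\otimes|\tau\rangle)=(\hat{\partial}_m^i|\sigma\rangle)\otimes|\tau\rangle\pm|\sigma\rangle\otimes(\partial_{n-m}^j|\tau\rangle)$, and likewise for $\hat{d}^{2n-1}$, where $\hat{\partial}_m,\hat{d}_m$ act on $\hat{\mathcal{G}}_m$ and $\partial_{n-m},d_{n-m}$ on $\mathcal{G}_{n-m}$. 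Here $\Pi^{[2n-2]}$ and $\Pi^{[2n]}$ are the projections onto $\mathcal{C}^{2m-2}([\text{bulk}])\otimes\mathcal{H}_{n-m}$ and $\mathcal{C}^{2m}([\text{bulk}])\otimes\mathcal{H}_{n-m}$, sitting in the Künneth summands of bidegree $(2m-2,2(n-m)-1)$ and $(2m,2(n-m)-1)$, where $[\text{bulk}]\subseteq\hat{\mathcal{G}}_m$ is the set of simplices of the single gadget containing $v_0$.

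Two facts will carry the argument. \textbf{(i) The bulk is a weight‑$\lambda$ island.} In $\hat{\mathcal{G}}_m$ the vertex $v_0$ is joined only to the inside layer $\mathcal{K}^0\times\{1\}$, and $v_0$ together with the inside‑layer vertices all have weight $\lambda$; hence every simplex in $[\text{bulk}]$ has the form $\{v_0\}\cup S$ with $S$ a clique in the inside layer, and so consists entirely of weight‑$\lambda$ vertices. By \Cref{coboundary_entries_eq} and \Cref{boundary_entries_eq}, any boundary or coboundary map connecting a chain to a simplex of $[\text{bulk}]$ does so by removing or adding a weight‑$\lambda$ vertex, so every nonzero matrix entry of $\Pi_{[\text{bulk}]}\hat{\partial}_m^{2m-1}$ and of $\Pi_{[\text{bulk}]}\hat{d}_m^{2m-1}$ equals $\pm\lambda$. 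Since these maps are $\poly(n)$‑sparse (as in \Cref{fact_Laplacian_entries}), this upgrades to the operator‑norm bounds $||\Pi_{[\text{bulk}]}\hat{\partial}_m^{2m-1}||_{\text{op}},\,||\Pi_{[\text{bulk}]}\hat{d}_m^{2m-1}||_{\text{op}}=\mathcal{O}(\lambda)$. \textbf{(ii) Harmonic chains are closed.} By \Cref{pairing_prop}, $\mathcal{H}_{n-m}=\ker{\Delta^{2(n-m)-1}}$ consists of singlets, hence $\mathcal{H}_{n-m}\subseteq\ker{\partial_{n-m}}\cap\ker{d_{n-m}}$; equivalently $\Pi_{\mathcal{H}_{n-m}}$ annihilates $\im{d_{n-m}^{2(n-m)-2}}$ and $\im{\partial_{n-m}^{2(n-m)}}$.

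It then remains to combine these via a bidegree count. Expanding $|\psi\rangle=\sum_{i+j=2n-2}|\psi_{ij}\rangle$ into its Künneth components and tracking degrees through the blockwise formula, the only components of $\hat{\partial}^{2n-1}|\psi\rangle$ that can meet the image of $\Pi^{[2n-2]}$ are $(\hat{\partial}_m^{2m-1}\otimes\mathbbm{1})|\psi_{2m-1,\,2(n-m)-1}\rangle$ and $(\mathbbm{1}\otimes\partial_{n-m}^{2(n-m)})|\psi_{2m-2,\,2(n-m)}\rangle$; the latter vanishes since $\mathcal{G}_{n-m}$ has no simplices of dimension $2(n-m)$ (alternatively, its $\Pi_{\mathcal{H}_{n-m}}$‑image vanishes by (ii)), and the former has $\Pi^{[2n-2]}$‑image $(\Pi_{[\text{bulk}]}\hat{\partial}_m^{2m-1}\otimes\Pi_{\mathcal{H}_{n-m}})|\psi_{2m-1,\,2(n-m)-1}\rangle$, of norm at most $\mathcal{O}(\lambda)\,|||\psi\rangle||$ by (i). This gives the first bound. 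Symmetrically, the only components of $\hat{d}^{2n-1}|\psi\rangle$ meeting the image of $\Pi^{[2n]}$ are $(\hat{d}_m^{2m-1}\otimes\mathbbm{1})|\psi_{2m-1,\,2(n-m)-1}\rangle$ — again of norm $\mathcal{O}(\lambda)\,|||\psi\rangle||$ by (i) — and $(\mathbbm{1}\otimes d_{n-m}^{2(n-m)-2})|\psi_{2m,\,2(n-m)-2}\rangle$, which lands in $\mathcal{C}^{2m}(\hat{\mathcal{G}}_m)\otimes\im{d_{n-m}^{2(n-m)-2}}$ and is killed by $\Pi_{\mathcal{H}_{n-m}}$ by (ii). Hence $||\Pi^{[2n]}\hat{d}^{2n-1}|\psi\rangle||=\mathcal{O}(\lambda)\,|||\psi\rangle||$.

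The one genuinely load‑bearing point is fact (i): that the gadget bulk is built entirely from weight‑$\lambda$ vertices, so that every boundary/coboundary edge into it costs a factor $\lambda$. Everything else — the degree bookkeeping above, the $\poly(n)$‑sparsity estimate that converts entry bounds into operator‑norm bounds (the polynomial prefactor being immaterial, as $\lambda$ will later be taken much smaller than any fixed inverse polynomial), and the closedness of harmonic chains — is routine given the results already in hand.
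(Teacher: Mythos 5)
Your load-bearing mechanism is the same one the paper leans on: every boundary/coboundary matrix element that moves a chain into the bulk of the single gadget carries a factor $w(v)=\lambda$, by the weighted entry formulas (Equations~\ref{coboundary_entries_eq}, \ref{boundary_entries_eq}) together with the observation that bulk simplices of $\hat{\mathcal{G}}_m$ consist of $v_0$ plus inside-layer vertices, all of weight $\lambda$. The paper's proof of \Cref{lambda_clm} is literally that one-liner. Where you genuinely diverge is in how the padded factor $\mathcal{G}_{n-m}$ is handled: you pass to the K\"unneth bidegree decomposition, reinterpret $\Pi^{[2n-2]}$ and $\Pi^{[2n]}$ as projections onto $\mathcal{C}^{2m-2}([\text{bulk}])\otimes\mathcal{H}_{n-m}$ and $\mathcal{C}^{2m}([\text{bulk}])\otimes\mathcal{H}_{n-m}$, and use closedness of harmonic chains to kill the $(\mathbbm{1}\otimes d_{n-m})$ contribution. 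That extra care is not cosmetic. Under the paper's literal definition of $\Pi^{[k]}$ (projection onto \emph{all} $k$-simplices of $\hat{\mathcal{G}}_n$ containing $v_0$) the $\mathcal{O}(\lambda)$ bound actually fails, because after the join $v_0$ is adjacent to the weight-$1$ vertices of $\mathcal{G}_{n-m}$: take a $(2n-1)$-simplex whose gadget part is a full bulk $2m$-simplex and whose $\mathcal{G}_{n-m}$ part has only $2(n-m)-1$ vertices; removing (or adding) a weight-$1$ vertex of $\mathcal{G}_{n-m}$ produces a bulk face (coface) with coefficient $\pm 1$. So the claim is only true for the restricted projectors you use, which are indeed the ones consistent with the downstream usage (cf.\ \Cref{spec_seq_lemma_padded} and the description of $\mathcal{B}_i$ in \Cref{proof_def}). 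You should state explicitly that you are proving the claim for those restricted projectors rather than presenting this as a transparent reading of the paper's $\Pi^{[k]}$; as written, someone holding you to the literal definition would judge the proof (and the statement) to not match, even though the defect originates in the statement rather than in your argument.

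Two smaller points. Your conversion from entrywise bounds to operator norms is correct, but not for the reason you give: $\lambda$ is later fixed to $c\,t^{-1}g$ and cannot be ``taken much smaller than any fixed inverse polynomial''; the honest justification is that the row and column sparsity of $\Pi_{[\text{bulk}]}\hat{\partial}_m^{2m-1}$ and $\Pi_{[\text{bulk}]}\hat{d}_m^{2m-1}$ is bounded by the number of gadget vertices, which is a constant for the fixed gate-set gadgets, so the prefactor is $\mathcal{O}(1)$ and the bound genuinely is $\mathcal{O}(\lambda)$. Also, in your fact (i) it is worth spelling out the case split for the coboundary: the vertex added to reach a bulk simplex is either an inside-layer vertex (if the source already contains $v_0$) or $v_0$ itself (if it does not); both have weight $\lambda$, which is exactly why every relevant entry is $\pm\lambda$.
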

\begin{proof}
All vertices touching $[\text{bulk}]$ have weight $\lambda$. Equations \ref{coboundary_entries_eq}, \ref{boundary_entries_eq} then give the conclusion.
\end{proof}

The proof of the following claim relies on a fact from \Cref{spec_seq_sec}.

\begin{claim} \label{bulk_clm}
A normalized state $|\psi\rangle \in \mathcal{C}^{2n-1}([\text{\emph{bulk}}])$ has $||\Pi^{[2n-2]} \hat{\partial}^{2n-1} |\psi\rangle|| = \Omega(\lambda)$ or $||\Pi^{[2n]} \hat{d}^{2n-1} |\psi\rangle|| = \Omega(\lambda)$.
\end{claim}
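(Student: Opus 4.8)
The plan is to show that the bulk chainspace $\mathcal{C}^{2n-1}([\text{bulk}])$ contains no state which is simultaneously (approximately) a cycle and a cocycle, to leading order in $\lambda$. Suppose for contradiction that some normalized $|\psi\rangle \in \mathcal{C}^{2n-1}([\text{bulk}])$ has both $||\Pi^{[2n-2]} \hat{\partial}^{2n-1} |\psi\rangle|| = o(\lambda)$ and $||\Pi^{[2n]} \hat{d}^{2n-1} |\psi\rangle|| = o(\lambda)$. First I would rescale: since every vertex touching the bulk has weight $\lambda$ (or $v_0$ itself has weight $\lambda$, and the simplices in $[\text{bulk}]$ all contain $v_0$), I apply the isomorphism $\rho_\lambda^k$ from \Cref{spec_seq_sec} to pass to the unweighted picture. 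In the unweighted picture, $\mathcal{C}^{2n-1}([\text{bulk}])$ is isomorphic to a chainspace of the bulk, which by the analysis in \Cref{spec_seq_sec} is topologically the interior of a $2m$-ball tensored with $\mathcal{G}_{n-m}$. The boundary maps $\hat\partial, \hat d$ restricted to the bulk, after rescaling, are (to leading order) the honest simplicial boundary/coboundary maps of this ball times identity, plus $\mathcal{O}(\lambda)$-corrections coming from the connections to non-bulk simplices.

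The key step is then a Hodge-theoretic argument internal to the bulk. By \Cref{Hodge_prop} applied to the rescaled bulk complex (or directly by \Cref{energy_formulas_lem}), a state annihilated by both the bulk boundary and bulk coboundary maps would be a harmonic representative, hence would live in $H^{2n-1}$ of the bulk complex. But the bulk is contractible (it is the interior of a $2m$-ball, joined with the $(2n-2m-1)$-sphere $\mathcal{G}_{n-m}$), so by the Kunneth formula \Cref{Kunneth_lem} the only potentially nonzero cohomology of the bulk sits in degree $2m + (2(n-m)-1) = 2n-1$... \emph{no} — one must be careful: the interior of the ball is contractible so contributes cohomology only in degree $-1$ (the empty set), and the $(2n-2m-1)$-sphere contributes in degree $2(n-m)-1$; the total degree $-1 + (2(n-m)-1) = 2n-2m-2 \neq 2n-1$ for $m \geq 1$. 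Hence the bulk has \emph{no} $(2n-1)$-cohomology, so there is no nonzero harmonic $(2n-1)$-chain supported on the bulk. More directly, I would argue: the unique-up-to-scalar state in $\mathcal{C}^{2m}([\text{bulk}])$ with no gadget-vertex boundary is $|\text{core}\rangle$ (this is exactly the computation from \Cref{spec_seq_sec}, the term $e_{2,2m+1}^{2m}$); there is no analogous statement one dimension down for $(2m-1)$-chains because the bulk in degree $2m-1$ is a \emph{thickened shell}, not a core — so any $|\psi\rangle \in \mathcal{C}^{2m-1}([\text{bulk}])$ with vanishing bulk coboundary must actually be exact. This gives the dichotomy: either $\hat d^{2n-1}|\psi\rangle$ has $\Omega(\lambda)$ weight on the bulk $2n$-simplices, or $|\psi\rangle$ is (to leading order) a coboundary of a bulk $(2n-2)$-chain, in which case $\hat\partial^{2n-1}|\psi\rangle$ picks up $\Omega(\lambda)$ weight on the bulk $(2n-2)$-simplices by the normalization and the fact that the relevant composite is an honest nonzero map.

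The main obstacle I anticipate is handling the $\mathcal{O}(\lambda)$ cross-terms carefully: $\hat\partial^{2n-1}$ acting on a bulk chain produces both bulk $(2n-2)$-simplices and non-bulk $(2n-2)$-simplices, and I need the projection $\Pi^{[2n-2]}$ of the output to be genuinely $\Omega(\lambda)$, not cancelled by the corrections. The cleanest route is probably to lower-bound $\langle\psi|\Delta^{\downarrow 2n-1} + \Delta^{\uparrow 2n-1}|\psi\rangle$ restricted to bulk-to-bulk transitions using \Cref{fact_Laplacian_entries}: the diagonal entry of $\hat\Delta^{2n-1}$ on a bulk $(2n-1)$-simplex $\sigma$ includes a term $\sum_{v\in\sigma} w(v)^2 \geq \lambda^2$ from the vertex $v_0 \in \sigma$ (weight $\lambda$), plus the contributions from up-neighbours. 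Summing the appropriate off-diagonal and diagonal pieces and comparing with the squared-norm formulas of \Cref{energy_formulas_lem} forces at least one of $||\Pi^{[2n-2]}\hat\partial^{2n-1}|\psi\rangle||$, $||\Pi^{[2n]}\hat d^{2n-1}|\psi\rangle||$ to be $\Omega(\lambda)$, since the only way to make both $o(\lambda)$ is to be harmonic on the bulk, which we have ruled out topologically. I would therefore structure the proof as: (i) reduce to the rescaled/unweighted bulk; (ii) invoke the contractibility/no-cohomology of the bulk complex (citing the $e_{2,2m+1}^{2m}$ computation and Kunneth); (iii) translate "not harmonic" into the quantitative $\Omega(\lambda)$ bound via \Cref{energy_formulas_lem} and \Cref{fact_Laplacian_entries}, keeping track of the weight factors.
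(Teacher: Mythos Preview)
Your proposal is essentially the paper's proof: reduce to the unweighted bulk, invoke the vanishing of bulk $(2n-1)$-cohomology, and deduce the dichotomy. The relevant spectral-sequence fact is \Cref{bulk_homology_clm} (i.e.\ $e_{2,2m}^{2m-1}=0$), not the $e_{2,2m+1}^{2m}$ term you cite; the latter identifies $|\text{core}\rangle$ one dimension up and is not what you need here.

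The divergence is in your step (iii), which is more elaborate than necessary. The paper sidesteps your cross-term worries with two clean observations. First, $\hat d^{2n-1}$ maps $\mathcal{C}^{2n-1}([\text{bulk}])$ \emph{entirely into} $\mathcal{C}^{2n}([\text{bulk}])$ (any simplex containing $v_0$ still contains $v_0$ after adding a vertex), so $\Pi^{[2n]}\hat d^{2n-1}|\psi\rangle = 0$ forces $\hat d^{2n-1}|\psi\rangle = 0$ on the nose, with no $\mathcal{O}(\lambda)$ remainder. Second, once \Cref{bulk_homology_clm} gives $|\psi\rangle = \hat d^{2n-2}|\chi\rangle$ for some bulk $|\chi\rangle$, adjointness $\hat\partial^{2n-1} = (\hat d^{2n-2})^\dagger$ yields $\langle\chi|\hat\partial^{2n-1}|\psi\rangle = \|\psi\|^2 \neq 0$, so $\Pi^{[2n-2]}\hat\partial^{2n-1}|\psi\rangle \neq 0$ immediately. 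The whole argument is thus a bare ``$\neq 0$'' in the unweighted world, and the $\Omega(\lambda)$ reappears uniformly when you undo the rescaling; there is no need for \Cref{fact_Laplacian_entries} or any perturbative bookkeeping.
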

\begin{proof}
Since all vertices relevant to these claims are weighted by $\lambda$, we can consider the \emph{unweighted} complex. Then it is sufficient to show that either $||\Pi^{[2n-2]} \hat{\partial}^{2n-1} |\psi\rangle|| \neq 0$ or $||\Pi^{[2n]} \hat{d}^{2n-1} |\psi\rangle|| \neq 0$. Returning to the weighted complex simply introduces a factor of $\lambda$.

Suppose $||\Pi^{[2n]} \hat{d}^{2n-1} |\psi\rangle|| = 0$. That is, $\Pi^{[2n]} \hat{d}^{2n-1} |\psi\rangle = 0$. This in fact tells us that $\hat{d}^{2n-1} |\psi\rangle = 0$, since $\hat{d}^{2n-1}$ maps $\mathcal{C}^{2n-1}([\text{bulk}])$ into $\mathcal{C}^{2n}([\text{bulk}])$.

\Cref{bulk_homology_clm} from \Cref{spec_seq_sec} is telling us that $\mathcal{C}^{2n-1}([\text{bulk}]) \cap \ker{\hat{d}^{2n-1}} = \hat{d}^{2n-2}(\mathcal{C}^{2n-2}([\text{bulk}]))$. Note the proof of \Cref{bulk_homology_clm} does \emph{not} rely on any spectral sequence machinery, but only \Cref{P_Q_lemma} and the argument in the proof of \Cref{bulk_homology_clm} where we close the homology with auxiliary vertices.

The result is that, from $||\Pi^{[2n]} \hat{d}^{2n-1} |\psi\rangle|| = 0$, we can deduce that $|\psi\rangle \in \hat{d}^{2n-2}(\mathcal{C}^{2n-2}([\text{bulk}]))$. Remembering $\hat{\partial}^{2n-1} = (\hat{d}^{2n-2})^\dag$, this guarantees that $\hat{\partial}^{2n-1} |\psi\rangle$ has some component in $\mathcal{C}^{2n-2}([\text{bulk}])$ and $||\Pi^{[2n-2]} \hat{\partial}^{2n-1} |\psi\rangle|| \neq 0$.
\end{proof}

\subsection{Combining gadgets} \label{combine_gadget_sec}

We have seen how to construct a gadget $\mathcal{T}$ to implement a single local rank-1 projector $\phi=|\phi\rangle\langle\phi|$ where $|\phi\rangle$ is an integer state. We would now like to implement a Hamiltonian
\begin{equation*}
H = \sum_{i=1}^t \phi_i
\end{equation*}
which is a sum of such terms. This will involve adding a gadget $\mathcal{T}_i$ for each term $\phi_i$. We would like to add these gadgets in an independent way. From \Cref{padding_sec}, we have a way of gluing in a gadget $\mathcal{T}'_i$ onto $\mathcal{G}_n$ which corresponds to $\phi_i$. To implement multiple terms $\phi_i$, we simply glue in the gadgets separately. One may wonder if we should include any edges between the gadget vertices if different gadgets; we do \emph{not} include any of these edges.

The full procedure is described as follows, which includes the padding-with-identity step from \Cref{padding_sec}:
\begin{enumerate}
    \item For each $i$, add gadget $\mathcal{T}_i$ to the copies of $\mathcal{G}_1$ corresponding to qubits in the support of $|\phi_i\rangle$.
    \item For each $i$, connect gadget vertices $\mathcal{T}_i^0$ \emph{all to all} with the qubit vertices in the copies of $\mathcal{G}_1$ corresponding to qubits outside the support of $|\phi_i\rangle$ to get $\mathcal{T}'_i$.
    \item Do \emph{not} connect any gadget vertices from different gadgets $\mathcal{T}_i^0 \leftrightarrow \mathcal{T}_j^0$.
\end{enumerate}

From now on, we drop the primes on $\mathcal{T}_i'$, $\hat{\Delta}_i'^k$. Let the final weighted graph after adding many gadgets in this way be denoted $\hat{\mathcal{G}}_n$, with Laplacian $\hat{\Delta}^k$. 

\begin{fact}
After adding many gadgets with this procedure, the total chain space can be decomposed as
\begin{equation*}
\mathcal{C}^{2n-1}(\hat{\mathcal{G}}_n) = \mathcal{C}^{2n-1}(\mathcal{G}_n) \oplus \mathcal{C}^{2n-1}(\mathcal{T}_1) \oplus \dots \oplus \mathcal{C}^{2n-1}(\mathcal{T}_t)
\end{equation*}
\end{fact}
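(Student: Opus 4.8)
The plan is to prove this purely combinatorially, at the level of the simplices that index the orthonormal basis of the chain space. Recall that $\mathcal{C}^{2n-1}(\hat{\mathcal{G}}_n)$ has an orthogonal basis consisting of the $(2n-1)$-simplices of $\hat{\mathcal{G}}_n$ (each with its chosen orientation), that the $(2n-1)$-simplices of a clique complex are exactly the $2n$-cliques of the $1$-skeleton, and that by construction the vertex set splits as a disjoint union $\hat{\mathcal{G}}_n^0 = \mathcal{G}_n^0 \sqcup \mathcal{T}_1^0 \sqcup \dots \sqcup \mathcal{T}_t^0$, where $\mathcal{T}_i^0$ denotes the gadget vertices of $\mathcal{T}_i$. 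So it is enough to show that the set of $2n$-cliques of $\hat{\mathcal{G}}_n$ partitions as: those contained entirely in $\mathcal{G}_n^0$, together with, for each $i$, those that meet $\mathcal{T}_i^0$ but meet no $\mathcal{T}_j^0$ with $j \neq i$. Taking spans and using orthogonality of distinct basis simplices then yields the orthogonal direct sum.

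First I would fix a $2n$-clique $\sigma$ and record which gadgets it draws vertices from, $S(\sigma) = \{\, i : \sigma \cap \mathcal{T}_i^0 \neq \emptyset \,\}$. The key step --- and the one place where Step 3 of the combining procedure (``do not connect any gadget vertices from different gadgets'') enters --- is to show $|S(\sigma)| \leq 1$: if $i \neq j$ both lay in $S(\sigma)$, then picking $u \in \sigma \cap \mathcal{T}_i^0$ and $v \in \sigma \cap \mathcal{T}_j^0$ forces $\{u,v\}$ to be an edge of $\hat{\mathcal{G}}_n$ (since $\sigma$ is a clique), contradicting the absence of edges between $\mathcal{T}_i^0$ and $\mathcal{T}_j^0$. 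Given this, $\sigma$ falls into exactly one of the two announced types: if $S(\sigma) = \emptyset$ then $\sigma \subseteq \mathcal{G}_n^0$, and since every edge added while gluing in the gadgets has an endpoint in some $\mathcal{T}_i^0$, all edges of $\sigma$ already lie in $\mathcal{G}_n$, so $\sigma$ is a $2n$-clique of $\mathcal{G}_n$ and $|\sigma\rangle \in \mathcal{C}^{2n-1}(\mathcal{G}_n)$; if $S(\sigma) = \{i\}$ then $\sigma$ contains a vertex of $\mathcal{T}_i^0$ and no vertex of any other gadget, hence $\sigma \in \mathcal{T}_i$ by definition and $|\sigma\rangle \in \mathcal{C}^{2n-1}(\mathcal{T}_i)$. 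These alternatives are mutually exclusive, which upgrades the cover to a partition and finishes the argument.

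I do not expect a serious obstacle here: the content is essentially bookkeeping about which cliques are present, and the one substantive input --- that a clique cannot span two gadgets --- is immediate from Step 3 of the procedure. The subtle point worth stating explicitly is that one must check the families $\mathcal{G}_n^{2n-1}$ and $\mathcal{T}_1^{2n-1}, \dots, \mathcal{T}_t^{2n-1}$ are pairwise disjoint and exhaust all $(2n-1)$-simplices, not merely that they cover; the bound $|S(\sigma)| \leq 1$ is exactly what secures this, and it is the lone place where the ``no inter-gadget edges'' clause of the construction is used.
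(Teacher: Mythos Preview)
Your argument is correct. The paper states this result as a \textbf{Fact} without proof, so there is no ``paper's approach'' to compare against; your combinatorial argument---partitioning the $2n$-cliques according to which (at most one) gadget they touch, using Step~3 of the combining procedure to rule out cliques spanning two gadgets---is exactly the natural justification the paper leaves implicit.
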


The following claim says that the \emph{up} Laplacian respects this decomposition.

\begin{claim} \label{up_Laplacian_clm}
When acting on the entire chainspace $\mathcal{C}^{2n-1}(\hat{\mathcal{G}}_n)$, we can write the \emph{up} Laplacian $\hat{\Delta}^{\uparrow 2n-1}$ of the entire complex $\hat{\mathcal{G}}_n$ as the sum of the up Laplacians $\hat{\Delta}_i^{\uparrow 2n-1}$ of the individual gadgets $\mathcal{C}^{2n-1}(\mathcal{G}_n) \oplus \mathcal{C}^{2n-1}(\mathcal{T}_i)$.
\begin{equation*}
\hat{\Delta}^{\uparrow 2n-1} = \hat{\Delta}^{\uparrow 2n-1}_1 + \dots + \hat{\Delta}^{\uparrow 2n-1}_t
\end{equation*}
\end{claim}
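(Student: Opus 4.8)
\textbf{Proof proposal for \Cref{up_Laplacian_clm}.}

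The plan is to work directly with the coboundary operator $\hat{d}^{2n-1}$ of the full complex $\hat{\mathcal{G}}_n$ and show that it acts ``blockwise'' with respect to the decomposition $\mathcal{C}^{2n-1}(\hat{\mathcal{G}}_n) = \mathcal{C}^{2n-1}(\mathcal{G}_n) \oplus \mathcal{C}^{2n-1}(\mathcal{T}_1) \oplus \dots \oplus \mathcal{C}^{2n-1}(\mathcal{T}_t)$, in the sense that it does not mix a chain supported on gadget $i$ with simplices belonging to gadget $j$ for $i \neq j$. Recall that, by construction (the three-point procedure at the start of \Cref{combine_gadget_sec}), we never add any edge between a vertex of $\mathcal{T}_i^0$ and a vertex of $\mathcal{T}_j^0$. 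Hence there is no $2n$-simplex of $\hat{\mathcal{G}}_n$ that simultaneously contains a gadget vertex from $\mathcal{T}_i$ and a gadget vertex from $\mathcal{T}_j$ with $i\neq j$. This is the crucial combinatorial observation.

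First I would fix notation: for each $i$, let $\hat{\mathcal{G}}_n^{(i)}$ be the subcomplex obtained from $\mathcal{G}_n$ by adding only gadget $\mathcal{T}_i$ (with the all-to-all padding of \Cref{padding_sec}), so that $\hat{\Delta}_i^{\uparrow 2n-1}$ is its up-Laplacian, extended by zero to the part of $\mathcal{C}^{2n-1}(\hat{\mathcal{G}}_n)$ living on the other gadgets. The $2n$-simplices $\hat{\mathcal{G}}_n^{2n}$ partition as $\mathcal{G}_n^{2n} \sqcup \bigsqcup_i \mathcal{T}_i^{2n}$, where $\mathcal{G}_n^{2n}=\emptyset$ since $\mathcal{G}_n$ has no $2n$-simplices, and each $\mathcal{T}_i^{2n}$ is the set of $2n$-simplices containing at least one vertex of $\mathcal{T}_i^0$; the no-edges-between-gadgets property guarantees this is genuinely a disjoint union (no $2n$-simplex is counted in two different $\mathcal{T}_i$). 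Next I would compute $\hat{d}^{2n-1}|\sigma\rangle$ for a $(2n-1)$-simplex $\sigma$, using \Cref{coboundary_entries_eq}: it is a sum over $v \in \text{up}(\sigma)$ of weighted $(2n)$-simplices $\sigma\cup\{v\}$. If $\sigma \in \mathcal{G}_n^{2n-1}$, every such $\sigma\cup\{v\}$ lies in exactly one $\mathcal{T}_i^{2n}$ or would-be in $\mathcal{G}_n^{2n}$ (impossible), so $\hat{d}^{2n-1}$ sends $\mathcal{C}^{2n-1}(\mathcal{G}_n)$ into $\bigoplus_i \mathcal{C}^{2n}(\mathcal{T}_i)$, and crucially for any fixed $\sigma\cup\{v\}\in\mathcal{T}_i^{2n}$ the adjoint $\hat{\partial}^{2n}$ sends it back only to $(2n-1)$-simplices lying in $\mathcal{G}_n^{2n-1}\cup\mathcal{T}_i^{2n-1}$ — never into $\mathcal{T}_j$ for $j\neq i$. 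If $\sigma \in \mathcal{T}_i^{2n-1}$, then $\sigma$ contains a vertex of $\mathcal{T}_i^0$, and so does every $\sigma\cup\{v\}$, placing it in $\mathcal{T}_i^{2n}$. Assembling these facts, $\hat{d}^{2n-1}$ restricted to $\mathcal{C}^{2n-1}(\mathcal{G}_n)\oplus\mathcal{C}^{2n-1}(\mathcal{T}_i)$ is exactly $\hat{d}_i^{2n-1}$, and the ``image legs'' $\mathcal{C}^{2n}(\mathcal{T}_i)$ are mutually orthogonal across $i$. Therefore $\hat{\Delta}^{\uparrow 2n-1} = \hat{\partial}^{2n}\hat{d}^{2n-1} = \sum_i \hat{\partial}_i^{2n}\hat{d}_i^{2n-1} = \sum_i \hat{\Delta}_i^{\uparrow 2n-1}$, where in the middle equality the cross terms vanish precisely because $\hat{d}^{2n-1}$ sends the $\mathcal{G}_n$ block into orthogonal $\mathcal{T}_i$-pieces and sends the $\mathcal{T}_i$ block into $\mathcal{C}^{2n}(\mathcal{T}_i)$ only.

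The only genuinely delicate point — and the one I would spell out carefully — is the behaviour on the shared summand $\mathcal{C}^{2n-1}(\mathcal{G}_n)$: a priori $\langle\tau|\hat{\Delta}^{\uparrow 2n-1}|\sigma\rangle$ for $\sigma,\tau\in\mathcal{G}_n^{2n-1}$ could pick up contributions routed through $\mathcal{T}_i$ for several $i$ at once, which would then be double-counted by $\sum_i\hat{\Delta}_i^{\uparrow 2n-1}$. This does not happen: the matrix element is $\sum_{v}w(v)^2\,[\sigma\cup\{v\}=\tau\cup\{v\}\in\hat{\mathcal{G}}_n^{2n}]$ summed over common up-vertices, and each contributing $(2n)$-simplex $\rho=\sigma\cup\{v\}$ lies in a \emph{unique} block ($\mathcal{G}_n^{2n}=\emptyset$, or a single $\mathcal{T}_i^{2n}$), so it is counted once in $\sum_i$. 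A clean way to phrase the whole argument is: $\hat{d}^{2n-1}=\bigoplus_i \hat{d}_i^{2n-1}$ after identifying the codomain $\mathcal{C}^{2n}(\hat{\mathcal{G}}_n)=\bigoplus_i\mathcal{C}^{2n}(\mathcal{T}_i)$ (an orthogonal direct sum), where the copies of the common domain summand $\mathcal{C}^{2n-1}(\mathcal{G}_n)$ are the ``diagonal'' — then $\hat{\Delta}^{\uparrow 2n-1}=(\hat{d}^{2n-1})^\dagger\hat{d}^{2n-1}$ is the sum of the pieces. I expect no real obstacle here beyond bookkeeping; the essential input is simply the design choice that gadget vertices of distinct gadgets are never adjacent, together with the fact that $\mathcal{G}_n$ itself carries no top-dimensional simplices.
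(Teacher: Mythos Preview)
Your proposal is correct and takes essentially the same approach as the paper: both rely on the key observation that no $2n$-simplex contains gadget vertices from two distinct gadgets (so $\mathcal{C}^{2n}(\hat{\mathcal{G}}_n)=\bigoplus_i \mathcal{C}^{2n}(\mathcal{T}_i)$ orthogonally, with $\mathcal{G}_n^{2n}=\emptyset$), and then read off the decomposition of $\hat{\Delta}^{\uparrow 2n-1}$. The paper packages this slightly more cleanly via the quadratic form, writing $\|\hat{d}^{2n-1}|\psi\rangle\|^2=\sum_i\|\hat{d}_i^{2n-1}|\psi_i\rangle\|^2$ (with $|\psi_i\rangle$ the projection onto $\mathcal{C}^{2n-1}(\mathcal{G}_n)\oplus\mathcal{C}^{2n-1}(\mathcal{T}_i)$) and invoking polarization, whereas you work at the level of simplices and matrix elements---but the content is the same.
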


\begin{proof}
It is sufficient to check
\begin{equation*}
\langle\psi|\hat{\Delta}^{\uparrow 2n-1}|\psi\rangle = \langle\psi_1|\hat{\Delta}^{\uparrow 2n-1}_1|\psi_1\rangle + \dots + \langle\psi_t|\hat{\Delta}^{\uparrow 2n-1}_t|\psi_t\rangle
\end{equation*}
for all states $|\psi\rangle \in \mathcal{C}^{2n-1}(\hat{\mathcal{G}}_n)$, where $|\psi_i\rangle$ is the component in $\mathcal{C}^{2n-1}(\mathcal{G}_n) \oplus \mathcal{C}^{2n-1}(\mathcal{T}_i)$. But
\begin{align*}
\langle\psi|\hat{\Delta}^{\uparrow 2n-1}|\psi\rangle &= ||\hat{d}^{2n-1}|\psi\rangle||^2 \\
&= ||\hat{d}_1^{2n-1}|\psi_1\rangle||^2 + \dots + ||\hat{d}_t^{2n-1}|\psi_t\rangle||^2 \\
&= \langle\psi_1|\hat{\Delta}^{\uparrow 2n-1}_1|\psi_1\rangle + \dots + \langle\psi_t|\hat{\Delta}^{\uparrow 2n-1}_t|\psi_t\rangle
\end{align*}
where in the second line, we used that the different gadgets do not share any $2n$-simplices. That is, there are no $2n$-simplices which contain vertices from more than one gadget.
\end{proof}

The hope is that, after this procedure to combine the gadgets, they will implement a version of the Hamiltonian $H$ on the simulated qubit subspace $\mathcal{H}_n$. This will be reflected in our main theorem \Cref{main_thm}, which states that a $1/\poly(n)$ lower bound on the spectrum of $H$ results in a $1/\poly(n)$ lower bound on the spectrum of the Laplacian $\hat{\Delta}^{2n-1}$.

The high level overview of the proof of \Cref{main_thm} is as follows. Assume that $\lambda_{\min}(H) \geq 1/\poly(n)$. We will argue by contradiction. Suppose there is a state $|\varphi\rangle$ of extremely low energy on $\hat{\Delta}^{2n-1}$. Consider a single gadget $\mathcal{T}_i$. $|\varphi\rangle$ is forced to have small overlap with the excited eigenspaces of the single-gadget Laplacian $\hat{\Delta}_i^{2n-1}$. Thus the restriction of $|\varphi\rangle$ to $\mathcal{C}^{2n-1}(\hat{\mathcal{G}}_n) \oplus \mathcal{C}^{2n-1}(\mathcal{T}_i)$ must lie close to $\ker{\hat{\Delta}_i^{2n-1}}$. \Cref{spec_seq_lemma_padded} tells us that $\ker{\hat{\Delta}_i^{2n-1}}$ is close to the space in the simulated qubit subspace $\mathcal{H}_n$ which is \emph{orthogonal to} the state $|\phi_i\rangle$ which is being filled in by gadget $\mathcal{T}_i$. Thus the restriction of $|\varphi\rangle$ to $\mathcal{C}^{2n-1}(\hat{\mathcal{G}}_n)$ is close to the subspace of $\mathcal{H}_n$ orthogonal to $|\phi_i\rangle$. This must hold for each gadget $\mathcal{T}_i$, so the restriction of $|\varphi\rangle$ to $\mathcal{C}^{2n-1}(\hat{\mathcal{G}}_n)$ must be in $\mathcal{H}_n$ and simultaneously orthogonal to all states $\{|\phi_i\rangle\}$. But the $1/\poly(n)$ spectrum lower bound on $H = \sum_i \phi_i$ forbids this.

\pagebreak
\begin{theorem} \label{main_thm} \emph{(Main theorem, formal)}
Suppose $H$ is a $m$-local Hamiltonian on $n$ qubits with $t$ terms, where each term is a rank-1 projector onto an integer state. Starting from $H$, let $\hat{\mathcal{G}}_n$ be the weighted graph described above, with Laplacian $\hat{\Delta}^k$. For any $g > 0$, there is a constant $c > 0$ sufficiently small such that setting
\begin{align*}
\lambda \ &= \ c t^{-1} g \\
E \ &= \ c \lambda^{4m+2} t^{-1} g
\end{align*}
gives
\begin{align*}
\lambda_{\min}(H) = 0 \ &\implies \ \lambda_{\min}(\hat{\Delta}^{2n-1}) = 0 \\
\lambda_{\min}(H) \geq g \ &\implies \ \lambda_{\min}(\hat{\Delta}^{2n-1}) \geq E
\end{align*}
\end{theorem}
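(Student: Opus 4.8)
The plan is to prove the two implications separately. The \YES direction ($\lambda_{\min}(H)=0 \implies \lambda_{\min}(\hat\Delta^{2n-1})=0$) is the easy one: if $H$ is frustration-free, there is a state $\ket\psi \in \mathcal{H}_n$ with $\langle\phi_i|\psi\rangle = 0$ for all $i$. Since $\mathcal{C}^{2n-1}(\hat{\mathcal{G}}_n)$ has no $2n$-simplices coming from the original qubit graph, and the gadget construction was shown in \Cref{sec:gadget proofs} to exactly fill in the cycle $\ket{\phi_i}$ without introducing spurious homology, the cycle corresponding to $\ket\psi$ survives in $H^{2n-1}(\hat{\mathcal{G}}_n)$. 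By \Cref{Hodge_prop}, $\ker\hat\Delta^{2n-1} \cong H^{2n-1}(\hat{\mathcal{G}}_n) \neq 0$, so $\lambda_{\min}(\hat\Delta^{2n-1}) = 0$. (One should also check that no \emph{new} holes are created — i.e. that $\dim\ker\hat\Delta^{2n-1}$ equals the dimension of the common kernel of the $\phi_i$ — but for the lower bound on the minimum eigenvalue only the nonvanishing is needed here.)

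The substance is the \NO direction, and the plan follows the high-level overview given just before the theorem statement. Suppose $\lambda_{\min}(H) \geq g$ and, for contradiction, that there is a normalized $\ket\varphi$ with $\langle\varphi|\hat\Delta^{2n-1}|\varphi\rangle < E$. First, by \Cref{energy_formulas_lem} and \Cref{up_Laplacian_clm}, $\langle\varphi|\hat\Delta^{\uparrow 2n-1}|\varphi\rangle = \sum_i \langle\varphi_i|\hat\Delta_i^{\uparrow 2n-1}|\varphi_i\rangle$ where $\ket{\varphi_i}$ is the restriction of $\ket\varphi$ to $\mathcal{C}^{2n-1}(\mathcal{G}_n)\oplus\mathcal{C}^{2n-1}(\mathcal{T}_i)$, and also $\langle\varphi|\hat\Delta^{\downarrow 2n-1}|\varphi\rangle < E$. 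For a fixed gadget $i$: because the total energy is tiny, $\ket{\varphi_i}$ has small energy on the single-gadget Laplacian $\hat\Delta_i^{2n-1}$ (one must be slightly careful since $\hat\Delta^{\downarrow}$ does not split over gadgets — but using \Cref{lambda_clm}/\Cref{bulk_clm} one controls the bulk contribution, showing the down-energy seen by $\ket{\varphi_i}$ is at most $E$ plus $\mathcal{O}(\lambda^2)$ corrections). By \Cref{spec_seq_lemma_padded}, the spectrum of $\hat\Delta_i^{2n-1}$ has a gap: kernel at $0$, next eigenspace at $\Theta(\lambda^{4m+2})$ (the space $\ket{\hat\phi_i}\otimes\mathcal{H}_{n-m}$), then $\Theta(\lambda^2)$ (the bulk), then $\Theta(1)$. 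Choosing $E = c\lambda^{4m+2}t^{-1}g$ with $c$ small forces $\ket{\varphi_i}$ to lie within distance $O(\sqrt{E/\lambda^{4m+2}}) = O(\sqrt{g/t})$ of $\ker\hat\Delta_i^{2n-1}$ — but I want a statement about the \emph{qubit} part $\ket{\varphi}_{\mathcal{G}_n}$. Here I use that $\ker\hat\Delta_i^{2n-1}$ is an $\mathcal{O}(\lambda)$-perturbation of $\{\ket\psi\in\mathcal{H}_m : \langle\phi_i|\psi\rangle=0\}\otimes\mathcal{H}_{n-m}$, and invoke \Cref{subspace_perturbation_lem} to transfer overlap bounds between the perturbed subspace and its limit. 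The conclusion of this step: $\langle\varphi|_{\mathcal{G}_n}\,\big(\mathbbm{1}-\Pi_{\mathcal{H}_n}\big)\,\ket\varphi_{\mathcal{G}_n} = O(E/\lambda^{4m+2})$ and $\langle\varphi|_{\mathcal{G}_n}\,\big(\ketbra{\phi_i}\otimes\mathbbm{1}\big)\,\ket\varphi_{\mathcal{G}_n} = O(E/\lambda^{4m+2})$ for every $i$.

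Assembling these over all $t$ gadgets: write $\ket{\chi} = \Pi_{\mathcal{H}_n}\ket\varphi_{\mathcal{G}_n}$, normalize, and observe $\langle\chi|H|\chi\rangle = \sum_i \langle\chi|\ketbra{\phi_i}|\chi\rangle = O(tE/\lambda^{4m+2})$. With $E = c\lambda^{4m+2}t^{-1}g$ this is $O(cg) < g$ for $c$ small, contradicting $\lambda_{\min}(H)\geq g$ — provided $\ket\chi$ has nonnegligible norm, which follows from $\|\ket\varphi_{\mathcal{G}_n}\|$ being bounded below (the part of $\ket\varphi$ living purely in the gadget chainspaces $\mathcal{C}^{2n-1}(\mathcal{T}_i)$ must itself carry $\Omega(1)$ or $\Omega(\lambda^2)$ energy per \Cref{bulk_clm}, so cannot dominate a sub-$E$ state). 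Finally, unwinding, $E = c\lambda^{4m+2}t^{-1}g$ with $\lambda = ct^{-1}g$ gives the claimed parameter settings. The main obstacle is the bookkeeping in the middle paragraph: $\hat\Delta^{\downarrow 2n-1}$ does \emph{not} decompose over gadgets (only $\hat\Delta^{\uparrow}$ does, by \Cref{up_Laplacian_clm}), so one cannot naively restrict the full low-energy assumption to each $\hat\Delta_i^{2n-1}$; the fix is to bound the cross-gadget down-Laplacian contributions using that all gadget vertices have weight $\lambda$ (\Cref{lambda_clm}) and the $\Theta(\lambda^2)$ bulk gap (\Cref{bulk_clm}), ensuring the effective per-gadget energy is still $o(\lambda^{4m+2})$ so that the spectral gap argument goes through. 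Keeping all the $O(\lambda)$-perturbation transfers (via \Cref{subspace_perturbation_lem}) and the triangle inequalities consistent across $t$ gadgets, with the final $c$ chosen after all constants are fixed, is where the care is needed.
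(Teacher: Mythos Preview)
Your outline matches the paper's strategy at the top level---\YES\ via Hodge, \NO\ by contradiction, the up-Laplacian splitting by \Cref{up_Laplacian_clm}, and the acknowledgment that the down-Laplacian does not split---but there is a genuine gap at the central step.

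You claim that after controlling cross-gadget down-Laplacian terms ``the effective per-gadget energy is still $o(\lambda^{4m+2})$ so that the spectral gap argument goes through.'' This is false. The cross-gadget boundary landing in the qubit complex, $\Pi_0^{2n-2}\hat\partial^{2n-1}\sum_{j\neq i}|\omega_j\rangle$, has norm only $O(\lambda)$ (each vertex removed to land in $\mathcal{G}_n$ has weight $\lambda$), so the best one can prove is $\langle\varphi_i|\hat\Delta_i^{2n-1}|\varphi_i\rangle = O(\lambda^2)$---this is exactly what the paper establishes in the proof of \Cref{A_terms_lem}, and you yourself note the correction is $O(\lambda^2)$ a few lines earlier. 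From an $O(\lambda^2)$ energy bound you cannot conclude small overlap with the $\Theta(\lambda^{4m+2})$-eigenspace $\im\hat\Phi_i$: the ratio $\lambda^2/\lambda^{4m+2}$ diverges, so your step ``$\ket{\varphi_i}$ lies within distance $O(\sqrt{E/\lambda^{4m+2}})$ of $\ker\hat\Delta_i^{2n-1}$'' does not follow.

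The paper fixes this by treating the three excited eigenspaces of $\hat\Delta_i^{2n-1}$ with \emph{different} tools rather than a single per-gadget energy bound. For $\hat\Phi_i$ (eigenvalue $\Theta(\lambda^{4m+2})$), the key missing idea is \Cref{paired_up_clm}: the states in $\im\hat\Phi_i$ are \emph{paired up}, so $\hat\Delta_i^{\uparrow\,2n-1}\succeq\Theta(\lambda^{4m+2})\,\hat\Phi_i$, and since the up-Laplacian \emph{does} split over gadgets, one gets $\langle\varphi|\hat\Phi_i|\varphi\rangle = O(E\,\lambda^{-(4m+2)})$ directly (\Cref{phi_terms_lem}). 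For $\Pi_i^{(\mathcal{A})}$ (eigenvalue $\Theta(1)$), the $O(\lambda^2)$ per-gadget energy bound suffices (\Cref{A_terms_lem}). For $\Pi_i^{(\mathcal{B})}$ (eigenvalue $\Theta(\lambda^2)$), that bound is borderline, and the paper instead argues via \Cref{bulk_clm} that bulk states must have $\Omega(\lambda)$ (co)boundary supported on $[\text{bulk}]_i$ \emph{of their own gadget}, where no cross-gadget interference can occur (\Cref{B_terms_lem}).

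Finally, rather than projecting onto $\mathcal{H}_n$ and contradicting $\lambda_{\min}(H)\geq g$ (which, as you note, requires separately bounding $\|\Pi^{(\mathcal{H})}_n\varphi\|$ from below), the paper expands $1=\langle\varphi|\varphi\rangle$ via the projector identities (\ref{complete_projectors_eq}) and (\ref{sing_gadg_eigenspaces_eq}); the Hamiltonian lower bound enters as $\sum_i\Phi_i\succeq g\,\Pi^{(\mathcal{H})}_n$ and the contradiction is $1<1$. This packaging automatically handles the normalization issue you flagged.
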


\bigskip
\begin{proof}
First we show that $\lambda_{\min}(H) = 0 \ \implies \ \lambda_{\min}(\hat{\Delta}^{2n-1}) = 0$. We first argue that the complex $\hat{\mathcal{G}}_n$ has some $(2n-1)$-homology, and then we invoke \Cref{Hodge_prop}.

Recall that the clique complex of the graph $\mathcal{G}_n$ (which is the $n$-fold join of the initial qubit graph $\mathcal{G}_1$) has a homology group with rank $2^n$. 
This graph corresponds to the zero Hamiltonian -- every state in the Hilbert space of $n$ qubits is a zero-energy ground state.
When we add the gadget for the term $\ketbra{\phi_i}{\phi_i}$ to the graph $\mathcal{G}_n$, we fill in the cycles $\ket{\phi_i} \otimes \mathbb{C}^{\otimes (n-m)}$ by rendering them the boundaries of some $2n$-dimensional objects.
Crucially, we do not fill in any cycles other than $\ket{\phi_i} \otimes \mathbb{C}^{\otimes (n-m)}$ by adding this gadget.
Thus any state $\ket{\psi}$ which is orthogonal to $\text{span}\{|\phi_i\rangle\}$ will still give an element of homology. 
The state $|\psi\rangle$ satisfying $H|\psi\rangle = 0$ corresponds to a cycle $|\psi\rangle \in \mathcal{H}_n$ which is orthogonal to $\text{span}\{|\phi_i\rangle\}$. Thus $\hat{\mathcal{G}}_n$ has non-trivial $(2n-1)$-homology $H^{2n-1}(\hat{\mathcal{G}}_n) \neq \{0\}$. Invoking \Cref{Hodge_prop} tells us that $\ker{\hat{\Delta}^{2n-1}} \neq \{0\}$. That is, $\lambda_{\min}(\hat{\Delta}^{2n-1}) = 0$.

\bigskip
It remains to tackle the case $\lambda_{\min}(H) \geq g$. We begin with some notation.
\begin{definition} \label{proof_def}
\begin{itemize}
\item Let $\hat{\partial}^k$, $\hat{d}^k$, $\hat{\Delta}^k$ be the boundary, coboundary maps and Laplacian on the entire complex with all gadgets $\mathcal{C}^k(\hat{\mathcal{G}}_n)$, with $\partial^k$, $d^k$, $\Delta^k$ still reserved for the original qubit complex $\mathcal{C}^k(\mathcal{G}_n)$.

\item Let $\hat{\partial}_i^k$, $\hat{d}_i^k$, $\hat{\Delta}_i^k$ be the boundary, coboundary maps and Laplacian on the single gadget complex $\mathcal{C}^k(\mathcal{G}_n) \oplus \mathcal{C}^k(\mathcal{T}_i)$.

\item We can decompose the chainspaces of the entire complex as
\begin{equation*} 
\mathcal{C}^k(\hat{\mathcal{G}}_n) = \mathcal{C}^k(\mathcal{G}_n) \oplus \mathcal{C}^k(\mathcal{T}_1) \oplus \dots \oplus \mathcal{C}^k(\mathcal{T}_t)
\end{equation*}
Let $\Pi_0^k$ be projection onto $\mathcal{C}^k(\mathcal{G}_n)$ and $\Pi_i^k$ projection onto $\mathcal{C}^k(\mathcal{T}_i)$. These are a complete set of projectors on $\mathcal{C}^k(\hat{\mathcal{G}})$:
\begin{equation} \label{complete_projectors_eq}
\Pi_0^k + \sum_i \Pi_i^k = \text{id}
\end{equation}

\item Let $[\text{\emph{bulk}}]_i$ denote the simplices touching the central vertex $v_0$ of gadget $\mathcal{T}_i$ (see \Cref{single_gadget_sec}), with chainspaces $\mathcal{C}^k([\text{\emph{bulk}}])$. Let $[\text{\emph{bulk}}] = \sqcup_i [\text{\emph{bulk}}]_i$, so $\mathcal{C}^k([\text{\emph{bulk}}]) = \bigoplus_i \mathcal{C}^k([\text{\emph{bulk}}]_i)$. Let $\Pi^{[k]}_i$ be the projection onto $\mathcal{C}^k([\text{\emph{bulk}}]_i)$, and $\Pi^{[k]} = \bigoplus_i \Pi^{[k]}_i$ projection onto $\mathcal{C}^k([\text{\emph{bulk}}])$.

\item \Cref{spec_seq_lemma_padded} gives us an orthogonal decomposition of $\mathcal{C}^{2n-1}(\mathcal{G}_n) \oplus \mathcal{C}^{2n-1}(\mathcal{T}_i)$ into four spaces, for each gadget $\mathcal{T}_i$.
\begin{enumerate}
    \item The space of eigenvectors of eigenvalues $\Theta(1)$ - call it $\mathcal{A}_i$. Let $\Pi^{(\mathcal{A})}_i$ be the projection onto $\mathcal{A}_i$.
    \item The space of eigenvectors of eigenvalues $\Theta(\lambda^2)$ - call it $\mathcal{B}_i$. Let $\Pi^{(\mathcal{B})}_i$ be the projection onto $\mathcal{B}_i$. $\mathcal{B}_i$ is a $\mathcal{O}(\lambda)$-perturbation of $\mathcal{C}^{2n-1}([\text{\emph{bulk}}]_i)$.
    \item The eigenspace with eigenvalue $\Theta(\lambda^{4m_i+2})$. Let $\hat{\Phi}_i$ project onto this space. $\im{\hat{\Phi}_i}$ is a $\mathcal{O}(\lambda)$-perturbation of $|\phi_i\rangle \otimes \mathcal{H}_{n-m} \subseteq \mathcal{H}_n$. Let $\Phi_i$ project onto $|\phi_i\rangle \otimes \mathcal{H}_{n-m}$.
    \item The kernel of $\hat{\Delta}^{2n-1}_i$. Let $\hat{\Phi}^\perp_i$ project onto $\ker{\hat{\Delta}^{2n-1}_i}$. $\im{\hat{\Phi}^\perp_i}$ is a $\mathcal{O}(\lambda)$-perturbation of $|\phi_i\rangle^\perp \otimes \mathcal{H}_{n-m} \subseteq \mathcal{H}_n$. Let $\Phi^\perp_i$ project onto $|\phi_i\rangle^\perp \otimes \mathcal{H}_{n-m}$.
\end{enumerate}
We have a complete set of projectors
\begin{equation} \label{sing_gadg_eigenspaces_eq}
\Pi^{(\mathcal{A})}_i + \Pi^{(\mathcal{B})}_i + \hat{\Phi}_i + \hat{\Phi}^\perp_i = \Pi^{2n-1}_0 + \Pi^{2n-1}_i
\end{equation}

\item Finally, let $\Pi^{(\mathcal{H})}_n$ be the projection onto $\mathcal{H}_n$, with image in $\mathcal{C}^{2n-1}(\mathcal{G}_n)$.
\end{itemize}
\end{definition}

\bigskip
Now for the proof. Assume $\lambda_{\min}(H) \geq g$, and assume for contradiction that the \emph{normalized} state $|\varphi\rangle \in \mathcal{C}^{2n-1}(\hat{\mathcal{G}}_n)$ has
\begin{equation*}
\langle\varphi|\hat{\Delta}^{2n-1}|\varphi\rangle < E
\end{equation*}
where $E$ is defined in the theorem statement. We will show a contradiction by deriving $\langle\varphi|\varphi\rangle < 1$.

Consider the following calculation.
\begin{align*}
\langle\varphi|\varphi\rangle &= \langle\varphi|\big(\Pi_0^{2n-1} + \sum_i \Pi_i^{2n-1}\big)|\varphi\rangle \\
&= \sum_i \langle\varphi|\big(\Pi_0^{2n-1} + \Pi_i^{2n-1}\big)|\varphi\rangle - (t-1) \langle\varphi|\Pi_0^{2n-1}|\varphi\rangle \\
&= \sum_i \langle\varphi|\big(\hat{\Phi}^\perp_i + \hat{\Phi}_i + \Pi^{(\mathcal{A})}_i + \Pi^{(\mathcal{B})}_i\big)|\varphi\rangle - (t-1) \langle\varphi|\Pi_0^{2n-1}|\varphi\rangle \\
&= \Bigg( \langle\varphi| \sum_i \hat{\Phi}^\perp_i|\varphi\rangle - (t-1) \langle\varphi|\Pi_0^{2n-1}|\varphi\rangle \Bigg) + \langle\varphi| \sum_i \hat{\Phi}_i|\varphi\rangle + \langle\varphi| \sum_i \Pi^{(\mathcal{A})}_i|\varphi\rangle + \langle\varphi| \sum_i \Pi^{(\mathcal{B})}_i|\varphi\rangle
\end{align*}
Here we used Equations \ref{sing_gadg_eigenspaces_eq} and \ref{complete_projectors_eq}.

Examining the term in brackets, we have
\begin{align*}
\langle\varphi| \sum_i \hat{\Phi}^\perp_i |\varphi\rangle &- (t-1) \langle\varphi|\Pi_0^{2n-1}|\varphi\rangle \\
&= \langle\varphi| \sum_i \big(\Phi^\perp_i + \mathcal{O}(\lambda)\big) |\varphi\rangle - (t-1) \langle\varphi|\Pi_0^{2n-1}|\varphi\rangle \\
&= \langle\varphi| \sum_i \Phi^\perp_i |\varphi\rangle - (t-1) \langle\varphi|\Pi_0^{2n-1}|\varphi\rangle + \mathcal{O}(\lambda t) \\
&= \langle\varphi| \sum_i \big(\Pi^{(\mathcal{H})}_n - \Phi_i\big) |\varphi\rangle - (t-1) \langle\varphi|\Pi_0^{2n-1}|\varphi\rangle + \mathcal{O}(\lambda t) \\
&= \langle\varphi|\Pi^{(\mathcal{H})}_n|\varphi\rangle - \langle\varphi| \sum_i \Phi_i |\varphi\rangle - (t-1) \langle\varphi|\big(\Pi_0^{2n-1} - \Pi^{(\mathcal{H})}_n\big)|\varphi\rangle + \mathcal{O}(\lambda t)
\end{align*}
In the second line we used a consequence of \Cref{spec_seq_lemma_padded}, combined with Part 1 of \Cref{subspace_perturbation_lem}: $\hat{\Phi}^\perp_i = \Phi^\perp_i + \mathcal{O}(\lambda)$.

Consider the term $\langle\varphi| \sum_i \Phi_i |\varphi\rangle$. $\sum_i \Phi_i$ is precisely the implementation of the Hamiltonian $H$ on the simulated $n$-qubit subspace $\mathcal{H}_n$. This is where we use the assumption on the minimum eigenvalue of $H$.
\begin{align*}
\langle\varphi| \sum_i \Phi_i |\varphi\rangle &= \big(\langle\varphi| \Pi^{(\mathcal{H})}_n\big) \sum_i \Phi_i \big(\Pi^{(\mathcal{H})}_n |\varphi\rangle\big) \\
&\geq g \ ||\Pi^{(\mathcal{H})}_n |\varphi\rangle||^2 \\
&= g \ \langle\varphi| \Pi^{(\mathcal{H})}_n |\varphi\rangle
\end{align*}

Putting this all together, we have
\begin{align*}
\langle\varphi|\varphi\rangle &\leq \langle\varphi|\Pi^{(\mathcal{H})}_n|\varphi\rangle \cdot (1-g) - (t-1) \langle\varphi|\big(\Pi_0^{2n-1} - \Pi^{(\mathcal{H})}_n\big)|\varphi\rangle + \mathcal{O}(\lambda t) \\
& \qquad + \langle\varphi| \sum_i \hat{\Phi}_i|\varphi\rangle + \langle\varphi| \sum_i \Pi^{(\mathcal{A})}_i|\varphi\rangle + \langle\varphi| \sum_i \Pi^{(\mathcal{B})}_i|\varphi\rangle
\end{align*}

The first term is strictly less than 1, and the second term is non-positive since $\Pi_0^{2n-1} - \Pi^{(\mathcal{H})}_n \succeq 0$. If we can somehow show that the final three terms are small, then this will give the contradiction $\langle\varphi|\varphi\rangle < 1$. And indeed, it is true that the low-energy assumption $\langle\varphi|\hat{\Delta}^{2n-1}|\varphi\rangle < E$ forces the terms involving $\{\hat{\Phi}^i\}$, $\{\Pi_i^{(\mathcal{A})}\}$, $\{\Pi_i^{(\mathcal{B})}\}$ to be small. This forms the content of Lemmas \ref{phi_terms_lem}, \ref{A_terms_lem}, \ref{B_terms_lem}, whose proofs are postponed to \Cref{postponed_proofs_sec}.

\begin{lemma} \label{phi_terms_lem}
For $m$-local Hamiltonian $H$ with $t$ terms, recall $\hat{\Delta}^k$ is the Laplacian of the corresponding graph $\hat{\mathcal{G}}_n$. Let $\{\hat{\Phi}_i\}$ be as defined in \Cref{proof_def}. If $|\varphi\rangle \in \mathcal{C}^{2n-1}(\hat{\mathcal{G}}_n)$ is a state with $\langle\varphi|\hat{\Delta}^{2n-1}|\varphi\rangle < E$, then
\begin{equation*}
\langle\varphi| \sum_i \hat{\Phi}_i |\varphi\rangle = \mathcal{O}(\lambda^{-(4m+2)} E t)
\end{equation*}
\end{lemma}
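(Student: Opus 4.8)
The plan is to exploit the single-gadget structure to give a lower bound on $\langle\varphi|\hat{\Delta}^{2n-1}|\varphi\rangle$ in terms of $\langle\varphi|\sum_i\hat{\Phi}_i|\varphi\rangle$, using \Cref{spec_seq_lemma_padded} and \Cref{up_Laplacian_clm}. First I would observe that $\hat{\Delta}^{2n-1} \succeq \hat{\Delta}^{\uparrow 2n-1}$, since $\hat{\Delta}^{2n-1} = \hat{\Delta}^{\downarrow 2n-1} + \hat{\Delta}^{\uparrow 2n-1}$ and $\hat{\Delta}^{\downarrow 2n-1} \succeq 0$. Then by \Cref{up_Laplacian_clm}, $\hat{\Delta}^{\uparrow 2n-1} = \sum_i \hat{\Delta}_i^{\uparrow 2n-1}$, so
\begin{equation*}
\langle\varphi|\hat{\Delta}^{2n-1}|\varphi\rangle \ \geq \ \sum_i \langle\varphi|\hat{\Delta}_i^{\uparrow 2n-1}|\varphi\rangle.
\end{equation*}
Now for each $i$, I would further bound $\hat{\Delta}_i^{\uparrow 2n-1}$ from below on the subspace $\mathcal{C}^{2n-1}(\mathcal{G}_n)\oplus\mathcal{C}^{2n-1}(\mathcal{T}_i)$: we need that $\hat{\Delta}_i^{\uparrow 2n-1} \succeq c' \lambda^{4m+2} \hat{\Phi}_i$ for some constant $c'>0$. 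The key point is that $\im\hat{\Phi}_i$ is an eigenspace of the full single-gadget Laplacian $\hat{\Delta}_i^{2n-1}$ with eigenvalue $\Theta(\lambda^{4m+2})$, and by \Cref{paired_up_clm} the states in this eigenspace are paired \emph{up}, i.e. they are annihilated by $\hat{\partial}_i^{2n-1}$, hence $\hat{\Delta}_i^{\downarrow 2n-1}$ vanishes on them and $\hat{\Delta}_i^{\uparrow 2n-1}$ carries the full eigenvalue $\Theta(\lambda^{4m+2})$ there. On the orthogonal complement within the single-gadget chainspace, $\hat{\Delta}_i^{\uparrow 2n-1} \succeq 0$ trivially. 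Since $\hat{\Phi}_i$ acts as zero outside $\mathcal{C}^{2n-1}(\mathcal{G}_n)\oplus\mathcal{C}^{2n-1}(\mathcal{T}_i)$ as well, we obtain on all of $\mathcal{C}^{2n-1}(\hat{\mathcal{G}}_n)$ that $\hat{\Delta}_i^{\uparrow 2n-1} \succeq c' \lambda^{4m+2}\hat{\Phi}_i$.

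Combining, $\langle\varphi|\hat{\Delta}^{2n-1}|\varphi\rangle \geq c'\lambda^{4m+2}\sum_i\langle\varphi|\hat{\Phi}_i|\varphi\rangle$, and the hypothesis $\langle\varphi|\hat{\Delta}^{2n-1}|\varphi\rangle < E$ gives
\begin{equation*}
\langle\varphi|\sum_i\hat{\Phi}_i|\varphi\rangle \ < \ \frac{E}{c'\lambda^{4m+2}} \ = \ \mathcal{O}(\lambda^{-(4m+2)}E),
\end{equation*}
which is even slightly stronger than claimed (the statement allows an extra factor of $t$). One subtlety to handle carefully: $m$ here should really be $\max_i m_i$, or one should track $m_i$ per gadget as in \Cref{proof_def}; since each $4m_i + 2 \leq 4m+2$ and $\lambda \ll 1$, the bound $\lambda^{4m_i+2} \geq \lambda^{4m+2}$ goes the right way, so replacing each exponent by the uniform $4m+2$ only weakens the lower bound and the argument still closes.

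The main obstacle I anticipate is making the operator inequality $\hat{\Delta}_i^{\uparrow 2n-1} \succeq c'\lambda^{4m+2}\hat{\Phi}_i$ fully rigorous: one must be careful that $\im\hat{\Phi}_i$ is genuinely an eigenspace of $\hat{\Delta}_i^{\uparrow 2n-1}$ (not merely of $\hat{\Delta}_i^{2n-1}$), which is exactly what \Cref{paired_up_clm} (together with $\hat{\Delta}_i^{\uparrow} = \hat{\Delta}_i - \hat{\Delta}_i^{\downarrow}$ and $\hat{\Delta}_i^{\downarrow}|\psi\rangle = \partial^\dagger\partial|\psi\rangle = 0$ for paired-up $|\psi\rangle$) provides; and that the eigenvalue gap between $\im\hat{\Phi}_i$ and the kernel $\ker\hat{\Delta}_i^{2n-1}$ is a clean $\Theta(\lambda^{4m_i+2})$ rather than something smaller, which is the content of \Cref{spec_seq_lemma}. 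Everything else is bookkeeping with the complete sets of projectors in \Cref{proof_def}.
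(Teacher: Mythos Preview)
Your proposal is correct and follows essentially the same route as the paper: pass to the up Laplacian, use \Cref{up_Laplacian_clm} to split it as $\sum_i \hat{\Delta}_i^{\uparrow 2n-1}$, invoke \Cref{paired_up_clm} to see that $\im\hat{\Phi}_i$ is an eigenspace of $\hat{\Delta}_i^{\uparrow 2n-1}$ (not merely of $\hat{\Delta}_i^{2n-1}$) with eigenvalue $\Theta(\lambda^{4m_i+2})$, and conclude the operator inequality. The only cosmetic difference is that the paper drops all but one summand via $\sum_i \hat{\Delta}_i^{\uparrow} \succeq \hat{\Delta}_i^{\uparrow}$ and then sums the resulting per-gadget bounds, picking up the factor $t$, whereas you keep the full sum and obtain the sharper $\mathcal{O}(\lambda^{-(4m+2)}E)$ directly; both are fine for the application.
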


\begin{lemma} \label{A_terms_lem}
For $m$-local Hamiltonian $H$ with $t$ terms, recall $\hat{\Delta}^k$ is the Laplacian of the corresponding graph $\hat{\mathcal{G}}_n$. Let $\{\Pi^{(\mathcal{A})}_i\}$ be as defined in \Cref{proof_def}. If $|\varphi\rangle \in \mathcal{C}^{2n-1}(\hat{\mathcal{G}}_n)$ is a state with $\langle\varphi|\hat{\Delta}^{2n-1}|\varphi\rangle < \mathcal{O}(\lambda^2)$, then
\begin{equation*}
\langle\varphi| \sum_i \Pi^{(\mathcal{A})}_i |\varphi\rangle = \mathcal{O}(\lambda^2 t)
\end{equation*}
\end{lemma}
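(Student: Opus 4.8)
The plan is to bound $\langle\varphi|\sum_i \Pi^{(\mathcal{A})}_i|\varphi\rangle$ gadget by gadget using \Cref{spec_seq_lemma_padded}, and then to reassemble the single-gadget energies into the global energy $\langle\varphi|\hat{\Delta}^{2n-1}|\varphi\rangle$ via \Cref{up_Laplacian_clm}. By \Cref{spec_seq_lemma_padded}, the subspace $\mathcal{A}_i \subseteq \mathcal{C}^{2n-1}(\mathcal{G}_n)\oplus\mathcal{C}^{2n-1}(\mathcal{T}_i)$ is spanned by eigenvectors of $\hat{\Delta}^{2n-1}_i$ of eigenvalue $\Theta(1)$, so there is a universal constant $c_0 > 0$ with $\hat{\Delta}^{2n-1}_i \succeq c_0\,\Pi^{(\mathcal{A})}_i$ on the single-gadget chainspace. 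Writing $|\varphi_i\rangle := (\Pi^{2n-1}_0 + \Pi^{2n-1}_i)|\varphi\rangle$ for the component of $|\varphi\rangle$ in that space, and using that $\Pi^{(\mathcal{A})}_i$ is supported there, we get $\langle\varphi|\Pi^{(\mathcal{A})}_i|\varphi\rangle = \langle\varphi_i|\Pi^{(\mathcal{A})}_i|\varphi_i\rangle \le c_0^{-1}\langle\varphi_i|\hat{\Delta}^{2n-1}_i|\varphi_i\rangle$, so it suffices to show $\sum_i\langle\varphi_i|\hat{\Delta}^{2n-1}_i|\varphi_i\rangle = \mathcal{O}(\lambda^2 t)$.

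Split each term as $\langle\varphi_i|\hat{\Delta}^{2n-1}_i|\varphi_i\rangle = ||\hat{d}^{2n-1}_i|\varphi_i\rangle||^2 + ||\hat{\partial}^{2n-1}|\varphi_i\rangle||^2$ (the boundary of $|\varphi_i\rangle$ is computed identically in the single-gadget complex and the full complex). For the up parts, \Cref{up_Laplacian_clm}, which holds because no $2n$-simplex is shared between two distinct gadgets, gives $\sum_i||\hat{d}^{2n-1}_i|\varphi_i\rangle||^2 = \langle\varphi|\hat{\Delta}^{\uparrow 2n-1}|\varphi\rangle \le \langle\varphi|\hat{\Delta}^{2n-1}|\varphi\rangle$. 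The subtlety is entirely in the down parts $\sum_i||\hat{\partial}^{2n-1}|\varphi_i\rangle||^2$: unlike the up Laplacian, the down Laplacian does \emph{not} decompose blockwise over the gadgets, since all gadgets glue onto the shared qubit complex $\mathcal{G}_n$ and see the common qubit boundary. To handle this, decompose $\hat{\partial}^{2n-1}|\varphi_i\rangle = a + b_i + c_i$, where $a := \hat{\partial}^{2n-1}\Pi^{2n-1}_0|\varphi\rangle \in \mathcal{C}^{2n-2}(\mathcal{G}_n)$ is the shared qubit boundary of the qubit component, $b_i \in \mathcal{C}^{2n-2}(\mathcal{G}_n)$ is the pure-qubit part of $\hat{\partial}^{2n-1}\Pi^{2n-1}_i|\varphi\rangle$, and $c_i \in \mathcal{C}^{2n-2}(\mathcal{T}_i)$ is its remaining part. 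The spaces $\mathcal{C}^{2n-2}(\mathcal{G}_n)$ and $\mathcal{C}^{2n-2}(\mathcal{T}_i)$ are orthogonal, so $||\hat{\partial}^{2n-1}|\varphi_i\rangle||^2 = ||a+b_i||^2 + ||c_i||^2$; moreover $\hat{\partial}^{2n-1}|\varphi\rangle = (a + \sum_i b_i) + \sum_i c_i$ with the parenthesized term in $\mathcal{C}^{2n-2}(\mathcal{G}_n)$ and the $c_i$ mutually orthogonal, whence $\sum_i||c_i||^2 \le ||\hat{\partial}^{2n-1}|\varphi\rangle||^2 \le \langle\varphi|\hat{\Delta}^{2n-1}|\varphi\rangle$.

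It remains to bound $\sum_i||a+b_i||^2$, which expands to $(t-1)||a||^2 + ||a+\sum_i b_i||^2 + \sum_i||b_i||^2 - ||\sum_i b_i||^2 \le (t-1)||a||^2 + ||a+\sum_i b_i||^2 + \sum_i||b_i||^2$. Here $||a+\sum_i b_i||$ is the norm of the pure-qubit part of $\hat{\partial}^{2n-1}|\varphi\rangle$, hence $\le \langle\varphi|\hat{\Delta}^{2n-1}|\varphi\rangle^{1/2}$. The key point is that each $b_i$ is $\lambda$-suppressed: for $\hat{\partial}^{2n-1}$ applied to a simplex of $\mathcal{T}_i$ to produce a pure-qubit face it must delete the unique gadget vertex of that simplex, which has weight $\lambda$ (compare \Cref{lambda_clm} and Equation~\ref{boundary_entries_eq}); since the relevant face map is $\poly(n)$-sparse, $||b_i|| = \mathcal{O}(\lambda)\,||\Pi^{2n-1}_i|\varphi\rangle||$, so $\sum_i||b_i||^2 = \mathcal{O}(\lambda^2)$ and $\sum_i||b_i|| = \mathcal{O}(\lambda)\sqrt{t}$ by Cauchy--Schwarz. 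Writing $a = \big(\hat{\partial}^{2n-1}|\varphi\rangle\big)_{\mathcal{C}^{2n-2}(\mathcal{G}_n)} - \sum_i b_i$ then gives $||a||^2 = \mathcal{O}\big(\langle\varphi|\hat{\Delta}^{2n-1}|\varphi\rangle\big) + \mathcal{O}(\lambda^2 t)$. Collecting the pieces and using the hypothesis $\langle\varphi|\hat{\Delta}^{2n-1}|\varphi\rangle < \mathcal{O}(\lambda^2)$, together with the fact that $t \le \poly(n)$ is absorbed into the $\mathcal{O}(\cdot)$, yields $\sum_i\langle\varphi_i|\hat{\Delta}^{2n-1}_i|\varphi_i\rangle = \mathcal{O}(\lambda^2 t)$ and hence the claimed bound. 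The main obstacle is precisely this non-block-diagonality of the down Laplacian: one has to see that the ``leakage'' of the individual gadget boundaries into the common qubit chainspace is suppressed by $\lambda$, and then carry the resulting $t$-dependence (in particular through the term $(t-1)||a||^2$) through to the end.
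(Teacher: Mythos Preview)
Your approach is the same as the paper's at the strategic level: reduce to $\langle\varphi_i|\hat\Delta_i^{2n-1}|\varphi_i\rangle$, dispatch the up part via \Cref{up_Laplacian_clm}, and for the down part split the boundary of $|\varphi_i\rangle$ into its piece $c_i$ on $\mathcal{C}^{2n-2}(\mathcal{T}_i)$ (which is orthogonal across gadgets) and its piece $a+b_i$ on the shared qubit chainspace, then use that the gadget-to-qubit face maps carry a factor of $\lambda$.

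The one genuine slip is in the $t$-accounting at the end. Your identity $\sum_i\|a+b_i\|^2 = (t-1)\|a\|^2 + \|a+\sum_i b_i\|^2 + \sum_i\|b_i\|^2 - \|\sum_i b_i\|^2$ is fine, but the bound you then feed in, $\|a\|^2 = \mathcal{O}(\lambda^2 t)$ (coming from $\|\sum_i b_i\|\le \sum_i\|b_i\|=\mathcal{O}(\lambda\sqrt{t})$), makes the term $(t-1)\|a\|^2$ of order $\lambda^2 t^2$, not $\lambda^2 t$. Saying ``$t\le\poly(n)$ is absorbed into the $\mathcal{O}(\cdot)$'' does not rescue this, because the lemma tracks $t$ explicitly. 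The paper sidesteps the issue by bounding each summand directly rather than expanding: it writes
\[
\Pi_0^{2n-2}\hat\partial_i^{2n-1}|\varphi_i\rangle \;=\; \Pi_0^{2n-2}\hat\partial^{2n-1}|\varphi\rangle \;-\; \Pi_0^{2n-2}\hat\partial^{2n-1}\!\sum_{j\neq i}|\omega_j\rangle,
\]
bounds the first piece by $\mathcal{O}(\lambda)$ from the hypothesis, and bounds the second by $\mathcal{O}(\lambda)$ using that every matrix entry of $\Pi_0^{2n-2}\hat\partial^{2n-1}$ on gadget chains is $\pm\lambda$ (\Cref{boundary_entries_eq}) together with the fact that $\|\sum_{j\neq i}|\omega_j\rangle\|\le 1$; this gives $\|a+b_i\|=\mathcal{O}(\lambda)$ for each $i$ and hence $\sum_i\|a+b_i\|^2=\mathcal{O}(\lambda^2 t)$. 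In your notation, the missing input is the operator-norm statement $\|\Pi_0^{2n-2}\hat\partial^{2n-1}\|_{\bigoplus_j\mathcal{C}^{2n-1}(\mathcal{T}_j)\to\mathcal{C}^{2n-2}(\mathcal{G}_n)}=\mathcal{O}(\lambda)$, which immediately gives $\|\sum_i b_i\|=\mathcal{O}(\lambda)$ and hence $\|a\|^2=\mathcal{O}(\lambda^2)$, restoring the claimed $\mathcal{O}(\lambda^2 t)$.
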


\begin{lemma} \label{B_terms_lem}
For $m$-local Hamiltonian $H$ with $t$ terms, recall $\hat{\Delta}^k$ is the Laplacian of the corresponding graph $\hat{\mathcal{G}}_n$. Let $\{\Pi^{(\mathcal{B})}_i\}$ be as defined in \Cref{proof_def}. If $|\varphi\rangle \in \mathcal{C}^{2n-1}(\hat{\mathcal{G}}_n)$ is a state with $\langle\varphi|\hat{\Delta}^{2n-1}|\varphi\rangle < \mathcal{O}(\lambda^4)$, then
\begin{equation*}
\langle\varphi| \sum_i \Pi^{(\mathcal{B})}_i |\varphi\rangle = \mathcal{O}(\lambda^2 t)
\end{equation*}
\end{lemma}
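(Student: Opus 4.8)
## Proof Proposal for Lemma \ref{B_terms_lem}

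The plan is to mirror the structure of the arguments for Lemmas \ref{phi_terms_lem} and \ref{A_terms_lem}, adapting it to the $\mathcal{B}_i$ eigenspaces, which have eigenvalue $\Theta(\lambda^2)$ on the single-gadget Laplacian $\hat{\Delta}_i^{2n-1}$ and which (by \Cref{spec_seq_lemma_padded}) are $\mathcal{O}(\lambda)$-perturbations of $\mathcal{C}^{2n-1}([\text{bulk}]_i)$. The first step is to relate the global energy $\langle\varphi|\hat{\Delta}^{2n-1}|\varphi\rangle$ to a sum of single-gadget quantities. Using \Cref{up_Laplacian_clm} we have $\hat{\Delta}^{\uparrow 2n-1} = \sum_i \hat{\Delta}^{\uparrow 2n-1}_i$, and since $\langle\varphi|\hat{\Delta}^{2n-1}|\varphi\rangle \geq \langle\varphi|\hat{\Delta}^{\uparrow 2n-1}|\varphi\rangle$ (both $\hat{\Delta}^{\downarrow}$ and $\hat{\Delta}^{\uparrow}$ are positive semi-definite), the hypothesis $\langle\varphi|\hat{\Delta}^{2n-1}|\varphi\rangle < \mathcal{O}(\lambda^4)$ gives $\sum_i \langle\varphi|\hat{\Delta}^{\uparrow 2n-1}_i|\varphi\rangle < \mathcal{O}(\lambda^4)$, so in particular each term is $\mathcal{O}(\lambda^4)$. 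Restricting attention to a single gadget, write $|\varphi_i\rangle = (\Pi_0^{2n-1} + \Pi_i^{2n-1})|\varphi\rangle$; the goal for that gadget is to bound $\langle\varphi|\Pi^{(\mathcal{B})}_i|\varphi\rangle = \langle\varphi_i|\Pi^{(\mathcal{B})}_i|\varphi_i\rangle$ by $\mathcal{O}(\lambda^4 / \lambda^2) = \mathcal{O}(\lambda^2)$ using only the up-Laplacian energy plus possibly a similar down-Laplacian bound, then sum over $i$ to get $\mathcal{O}(\lambda^2 t)$.

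The heart of the argument is a lower bound showing that any state in (a neighborhood of) $\mathcal{B}_i$ carries $\Omega(\lambda^2)$ of energy that the low-energy assumption forbids. Here I would invoke \Cref{bulk_clm}: a normalized state $|\psi\rangle \in \mathcal{C}^{2n-1}([\text{bulk}]_i)$ has $||\Pi^{[2n-2]}_i \hat{\partial}^{2n-1}|\psi\rangle|| = \Omega(\lambda)$ or $||\Pi^{[2n]}_i \hat{d}^{2n-1}|\psi\rangle|| = \Omega(\lambda)$, hence $\langle\psi|\hat{\Delta}^{2n-1}_i|\psi\rangle = \Omega(\lambda^2)$ — consistent with $\mathcal{B}_i$ being the $\Theta(\lambda^2)$ eigenspace. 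Since $\mathcal{B}_i$ is an $\mathcal{O}(\lambda)$-perturbation of $\mathcal{C}^{2n-1}([\text{bulk}]_i)$, Part 2 of \Cref{subspace_perturbation_lem} (or a direct application of Part 1) lets us transfer this: for the component $\Pi^{(\mathcal{B})}_i|\varphi_i\rangle$, the single-gadget energy obeys $\langle\varphi_i|\hat{\Delta}^{2n-1}_i|\varphi_i\rangle \geq c\lambda^2 \, \langle\varphi_i|\Pi^{(\mathcal{B})}_i|\varphi_i\rangle - (\text{cross terms})$, where the cross terms come from the fact that $\hat{\Delta}^{2n-1}_i$ is not block-diagonal across the four-way decomposition of \Cref{spec_seq_lemma_padded} — except that it \emph{is} exactly block-diagonal, because the four spaces $\mathcal{A}_i, \mathcal{B}_i, \im\hat{\Phi}_i, \ker\hat{\Delta}_i^{2n-1}$ are mutually orthogonal eigenspaces of $\hat{\Delta}^{2n-1}_i$. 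So in fact $\langle\varphi_i|\hat{\Delta}^{2n-1}_i|\varphi_i\rangle \geq \Theta(\lambda^2) \langle\varphi_i|\Pi^{(\mathcal{B})}_i|\varphi_i\rangle$ with no cross terms, and combined with $\langle\varphi_i|\hat{\Delta}^{2n-1}_i|\varphi_i\rangle \leq \langle\varphi_i|\hat{\Delta}^{\uparrow 2n-1}_i|\varphi_i\rangle + \langle\varphi_i|\hat{\Delta}^{\downarrow 2n-1}_i|\varphi_i\rangle$ we need to also control the down-Laplacian piece.

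The step I expect to be the main obstacle is controlling the \emph{down}-Laplacian contribution $\sum_i \langle\varphi|\hat{\Delta}^{\downarrow 2n-1}_i|\varphi\rangle$, since \Cref{up_Laplacian_clm} only decomposes the up-Laplacian across gadgets, not the down-Laplacian (the $(2n-2)$-simplices in the boundary of $\mathcal{C}^{2n-1}(\mathcal{G}_n)$ are shared between gadgets). The resolution, which I would follow the paper's earlier template for, is that the relevant down-boundary lands in $\Pi^{[2n-2]}_i$ — the bulk of gadget $i$ — by \Cref{lambda_clm} and \Cref{bulk_clm}, and these bulk $(2n-2)$-simplices belong to only one gadget each, so $\hat{\partial}^{2n-1}$ restricted to the bulk \emph{does} decompose. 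Concretely, I would split $\hat{\partial}^{2n-1}|\varphi\rangle$ into its $[\text{bulk}]$-component and its complement; the complement part has small norm on $\Pi^{(\mathcal{B})}_i|\varphi\rangle$ because $\mathcal{B}_i$ is close to the bulk chainspace and \Cref{bulk_homology_clm}/\Cref{bulk_clm} force any bulk cycle to have boundary inside the bulk; and the $[\text{bulk}]$-component's norm-squared, summed over $i$, is bounded by $\langle\varphi|\hat{\Delta}^{\downarrow 2n-1}|\varphi\rangle \leq \langle\varphi|\hat{\Delta}^{2n-1}|\varphi\rangle < \mathcal{O}(\lambda^4)$ since the bulks are disjoint across gadgets. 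Assembling these pieces: for each $i$, $\Theta(\lambda^2)\langle\varphi|\Pi^{(\mathcal{B})}_i|\varphi\rangle \leq \langle\varphi_i|\hat{\Delta}^{2n-1}_i|\varphi_i\rangle + \mathcal{O}(\lambda)(\text{error})$, and summing over $i$ and dividing by $\Theta(\lambda^2)$ yields $\sum_i \langle\varphi|\Pi^{(\mathcal{B})}_i|\varphi\rangle = \mathcal{O}(\lambda^4 t / \lambda^2) = \mathcal{O}(\lambda^2 t)$, as claimed. I would take care that the $\mathcal{O}(\lambda)$-perturbation errors from \Cref{subspace_perturbation_lem} are genuinely subleading — each introduces at most an extra factor of $\mathcal{O}(\lambda)$ relative to the $\Theta(\lambda^2)$ gap and hence does not change the final order — and that the constants absorbed into $\mathcal{O}(\cdot)$ are uniform in $n$ because $m$ is constant.
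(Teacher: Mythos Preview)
Your intuition that the bulk chainspaces $\mathcal{C}^{2n-1}([\text{bulk}]_i)$ are the key to avoiding cross-gadget interference is exactly right, but the assembly has a real gap. The route you propose uses the eigenvalue inequality $\Theta(\lambda^2)\,\langle\varphi_i|\Pi^{(\mathcal{B})}_i|\varphi_i\rangle \leq \langle\varphi_i|\hat{\Delta}^{2n-1}_i|\varphi_i\rangle$ and then tries to bound the right-hand side by $\mathcal{O}(\lambda^4)$. But the single-gadget energy $\langle\varphi_i|\hat{\Delta}^{2n-1}_i|\varphi_i\rangle$ is only $\mathcal{O}(\lambda^2)$, not $\mathcal{O}(\lambda^4)$: as the proof of \Cref{A_terms_lem} shows, the down-part contains the term $\|\Pi_0^{2n-2}\hat{\partial}_i^{2n-1}|\varphi_i\rangle\|^2$, and this is stuck at $\mathcal{O}(\lambda^2)$ because of the cross-gadget contribution $\|\Pi_0^{2n-2}\hat{\partial}^{2n-1}\sum_{j\neq i}|\omega_j\rangle\| = \mathcal{O}(\lambda)$ on the shared qubit $(2n{-}2)$-simplices. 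Dividing $\mathcal{O}(\lambda^2)$ by $\Theta(\lambda^2)$ gives only $\mathcal{O}(1)$, which is useless. Your attempted rescue --- ``only the bulk part of the energy matters'' --- is not compatible with the eigenvalue inequality you started from: that inequality needs the \emph{full} $\hat{\Delta}^{2n-1}_i$ on the right, and you cannot simply discard the $\Pi_0^{2n-2}$ piece without an independent argument.

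The paper's proof avoids this trap by never passing through $\langle\varphi_i|\hat{\Delta}^{2n-1}_i|\varphi_i\rangle$. Instead it argues by contradiction directly on the bulk projector $\Pi^{[2n-1]}_i$ (related to $\Pi^{(\mathcal{B})}_i$ via \Cref{subspace_perturbation_lem}). Assuming $\|\Pi^{[2n-1]}_i|\varphi_i\rangle\| \notin \mathcal{O}(\lambda)$, \Cref{bulk_clm} gives that $\|\Pi^{[2n-2]}_i\hat{\partial}_i^{2n-1}\Pi^{[2n-1]}_i|\varphi_i\rangle\|$ or $\|\Pi^{[2n]}_i\hat{d}_i^{2n-1}\Pi^{[2n-1]}_i|\varphi_i\rangle\|$ is $\notin \mathcal{O}(\lambda^2)$. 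The essential step you are missing is then to \emph{remove the inner} $\Pi^{[2n-1]}_i$: one must show $\|\Pi^{[2n-2]}_i\hat{\partial}_i^{2n-1}|\varphi_i\rangle\| \notin \mathcal{O}(\lambda^2)$ (or the coboundary analogue) for the full $|\varphi_i\rangle$. This is done via the decomposition $\Pi_0^{2n-1}+\Pi_i^{2n-1} = \Pi^{[2n-1]}_i + \Pi^{[2n-1]\perp}_i(\Pi^{(\mathcal{A})}_i + \Pi^{(\mathcal{B})}_i) + (\hat{\Phi}_i+\hat{\Phi}^\perp_i) - \Pi^{[2n-1]}_i(\hat{\Phi}_i+\hat{\Phi}^\perp_i)$, bounding each of the four correction terms separately using \Cref{lambda_clm}, the perturbation statements from \Cref{spec_seq_lemma_padded}, and crucially the already-proved bound $\|\Pi^{(\mathcal{A})}_i|\varphi_i\rangle\| = \mathcal{O}(\lambda)$ from \Cref{A_terms_lem}. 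Only after this does the observation that bulk simplices are gadget-specific (so $\Pi^{[2n-2]}_i\hat{\partial}_i^{2n-1} = \Pi^{[2n-2]}_i\hat{\partial}^{2n-1}$) let one lower-bound the \emph{global} energy $\langle\varphi|\hat{\Delta}^{2n-1}|\varphi\rangle$ and reach the contradiction.
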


Note that we set $E = o(\lambda^{4m+2})$ in the statement of \Cref{main_thm}, so $\langle\varphi|\hat{\Delta}^{2n-1}|\varphi\rangle < E$ implies the conditions of Lemmas \ref{A_terms_lem}, \ref{B_terms_lem}. Returning to the calculation armed with Lemmas \ref{phi_terms_lem}, \ref{A_terms_lem}, \ref{B_terms_lem}, we get
\begin{align*}
\langle\varphi|\varphi\rangle &= \langle\varphi|\Pi^{(\mathcal{H})}_n|\varphi\rangle \cdot (1-g) - (t-1) \langle\varphi|\big(\Pi_0^{2n-1} - \Pi^{(\mathcal{H})}_n\big)|\varphi\rangle + \mathcal{O}(\lambda t) \\
& \quad + \mathcal{O}(\lambda^{-(4m+2)} E t) + \mathcal{O}(\lambda^2 t) + \mathcal{O}(\lambda^2 t)
\end{align*}
Choosing
\begin{align*}
\lambda \ &= \ c t^{-1} g \\
E \ &= \ c \lambda^{4m+2} t^{-1} g
\end{align*}
for a constant $c$ sufficiently small, this becomes
\begin{align*}
1 &= \langle\varphi|\varphi\rangle \\
&\leq \langle\varphi|\Pi^{(\mathcal{H})}_n|\varphi\rangle \cdot (1-g) - (t-1) \langle\varphi|\big(\Pi_0^{2n-1} - \Pi^{(\mathcal{H})}_n\big)|\varphi\rangle + \frac{1}{10} g \\
&\leq 1 - g + \frac{1}{10} g \\
&< 1
\end{align*}
a contradiction. This concludes the proof of \Cref{main_thm}.
\end{proof}

\pagebreak
\section{Gapped Clique Homology is $\QMA_1$-hard and contained in $\QMA$}\label{sec:final}

Here we restate and prove the main result.

\begin{problem} \label{gapped_homology_prob}
\emph{(Problem \ref{informal_prom_prob} restated.)}
Fix functions $k : \mathbb{N} \rightarrow \mathbb{N}$ and $g : \mathbb{N} \rightarrow [0,\infty)$, with $E(n) \geq 1 / \poly(n)$. The input to the problem is a weighted graph $\mathcal{G}$ on $n$ vertices. The task is to decide whether:
\begin{itemize}
    \item {\bf \YES} \ the $k(n)^{\text{th}}$ homology group of $\Cl(\mathcal{G})$ is non-trivial $H^k(\mathcal{G}) \neq 0$
    \item {\bf \NO} \ the $k(n)^{\text{th}}$ homology group of $\Cl(\mathcal{G})$ is trivial $H^k(\mathcal{G}) = 0$ and the Laplacian $\Delta^k$ has minimum eigenvalue $\lambda_{\min}(\Delta^k) \geq E$.
\end{itemize}
\end{problem}

\begin{theorem} \label{main_cor} 
\emph{(Theorem \ref{QMA1_cor} restated.)}
\Cref{gapped_homology_prob} is $\QMA_1$-hard and  contained in $\QMA$.
\end{theorem}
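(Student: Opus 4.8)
\textbf{Proof proposal for \Cref{main_cor}.}

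The plan is to establish the two directions separately: containment in $\QMA$ and $\QMA_1$-hardness. Containment in $\QMA$ is the easier half, and follows the argument already sketched in \Cref{sec:proof_overview}. Given a weighted graph $\mathcal{G}$ and dimension $k$, the verifier receives a witness $|\psi\rangle \in \mathcal{C}^k(\mathcal{G})$ (encoded in $\lceil \log \dim \mathcal{C}^k(\mathcal{G})\rceil$ qubits, where we use that $\dim \mathcal{C}^k \leq \binom{n}{k+1} \leq 2^n$), and runs quantum phase estimation on the sparse Hermitian operator $\Delta^k$. By \Cref{fact_Laplacian_entries} the Laplacian is $\poly(n)$-sparse with efficiently computable entries, and its norm is $\poly(n)$ (the weights lie between $1/\poly(n)$ and $1$), so phase estimation to precision $g/4$ runs in polynomial time. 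In a \YES instance, \Cref{Hodge_prop} gives $\ker{\Delta^k} \cong H^k(\mathcal{G}) \neq 0$, so a harmonic representative is accepted with certainty; in a \NO instance the promise $\lambda_{\min}(\Delta^k) \geq g$ means every witness is rejected with high probability. Standard amplification closes the completeness--soundness gap. (Strictly, this gives containment in $\QMA$, not $\QMA_1$, because of phase-estimation error; this is the reason only $\QMA$-containment is claimed.)

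For hardness, the plan is to reduce from \qfsat, which is $\QMA_1$-complete by \cite{bravyi2011efficient}. Given a \qfsat instance $H = \sum_j \Pi_j$ with promise gap $g(n) \geq 1/\poly(n)$, first rewrite each $4$-local projector $\Pi_j$ as a sum of rank-$1$ projectors onto the states listed in \Cref{table:states}; by the choice of the Pythagorean gate set $\mathcal{G} = \{\CNOT, U\}$, all of these are \emph{integer states} in the sense of \Cref{integer_state_def}. This produces an $m$-local Hamiltonian $H = \sum_{i=1}^t \phi_i$ with $t = \poly(n)$ terms, each a rank-$1$ projector onto an integer state, with $m \leq 4$, and with the property $\lambda_{\min}(H) = 0$ in \YES cases and $\lambda_{\min}(H) \geq g$ in \NO cases (up to a $1/\poly(n)$ rescaling of $g$ coming from the decomposition). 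Now apply \Cref{main_thm}: setting $\lambda = c t^{-1} g$ and $E = c \lambda^{4m+2} t^{-1} g$ for $c$ a sufficiently small constant, we obtain in $\poly(n)$ time a weighted graph $\hat{\mathcal{G}}_n$ on $\poly(n)$ vertices, with $k = 2n-1$, such that $\lambda_{\min}(H) = 0 \implies \lambda_{\min}(\hat{\Delta}^{2n-1}) = 0$ and $\lambda_{\min}(H) \geq g \implies \lambda_{\min}(\hat{\Delta}^{2n-1}) \geq E$. Note $E \geq 1/\poly(n)$ since $\lambda, g \geq 1/\poly(n)$ and $m = O(1)$.

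It remains to check that this reduction lands inside the promise of \Cref{gapped_homology_prob}. In a \YES instance, $\lambda_{\min}(\hat{\Delta}^{2n-1}) = 0$, and by \Cref{Hodge_prop} this is equivalent to $H^{2n-1}(\hat{\mathcal{G}}_n) \neq 0$, so $\hat{\mathcal{G}}_n$ is a \YES instance of \Cref{gapped_homology_prob}. In a \NO instance, $\lambda_{\min}(\hat{\Delta}^{2n-1}) \geq E > 0$, so again by \Cref{Hodge_prop} the homology group $H^{2n-1}(\hat{\mathcal{G}}_n)$ is trivial, and the gap promise $\lambda_{\min}(\hat{\Delta}^{2n-1}) \geq E \geq 1/\poly(n)$ holds; hence $\hat{\mathcal{G}}_n$ is a \NO instance. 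Taking the gap function $g(n)$ in \Cref{gapped_homology_prob} to match $E$ and the dimension function $k(n) = 2n-1$ completes the reduction, establishing $\QMA_1$-hardness.

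The genuine technical content of this theorem is entirely contained in \Cref{main_thm}, whose proof relies on the single-gadget spectral analysis \Cref{spec_seq_lemma} (itself the payoff of the spectral-sequence machinery of \Cref{spec_seq_sec}), its padded version \Cref{spec_seq_lemma_padded}, and the combining-gadgets Lemmas \ref{phi_terms_lem}, \ref{A_terms_lem}, \ref{B_terms_lem}. Conditional on those, the proof of \Cref{main_cor} is a short bookkeeping argument; the main obstacle is not in this final assembly but upstream, in showing the lower bound $\lambda_{\min}(\hat{\Delta}^{2n-1}) \geq E$ in the \NO case, i.e.\ controlling the excited spectrum of the combined Laplacian via the overlap argument summarized before \Cref{main_thm}.
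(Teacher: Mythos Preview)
Your hardness argument is essentially the same as the paper's and is correct: both simply invoke \Cref{main_thm} together with $\QMA_1$-hardness of \qfsat and check that the resulting instance lands in the promise of \Cref{gapped_homology_prob}. You are more explicit than the paper about the parameter bookkeeping, which is fine.

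There is, however, a gap in your containment argument concerning the witness encoding. You propose to encode $|\psi\rangle \in \mathcal{C}^k(\mathcal{G})$ in $\lceil \log \dim \mathcal{C}^k\rceil$ qubits. This presupposes an efficient bijection between bitstrings and $(k{+}1)$-cliques of $\mathcal{G}$, which in general requires clique enumeration and is not available in polynomial time; without it you cannot implement sparse access to $\Delta^k$ in this basis. The paper instead uses the natural indicator encoding $\iota:|\sigma\rangle \mapsto |z_\sigma\rangle$ into $(\mathbb{C}^2)^{\otimes n}$ (one qubit per vertex), where sparse access follows directly from \Cref{fact_Laplacian_entries} and the graph adjacency data. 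But this embedding is not surjective, so a dishonest prover can send a state supported on non-clique bitstrings; the paper handles this by extending $\Delta^k$ to an operator $\widetilde{\Delta}^k$ that imposes an energy penalty $A \geq E$ on $(\im\iota)^\perp$, and then verifies soundness by decomposing an arbitrary witness into its $\im\iota$ and $(\im\iota)^\perp$ components. Your argument omits this step, so as written it does not establish soundness against arbitrary $n$-qubit witnesses. The fix is exactly the paper's: work in the indicator basis and add the penalty on non-clique strings.
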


\begin{proof}
\Cref{main_thm} and $\QMA_1$-hardness of quantum $m$-$\SAT$ give $\QMA_1$-hardness of \Cref{gapped_homology_prob}. It remains to argue that this problem is also inside $\QMA$.

We want to show that we can verify the minimum eigenvalue of $\Delta^k$ is zero. $\Delta^k$ acts on $\mathcal{C}^k(\mathcal{G})$ for a clique complex $\Cl(\mathcal{G})$. We must first find some way of embedding this Hilbert space into the qubits which our quantum verifier will have at its disposal. Say that $\mathcal{G}$ contains $n$ vertices. Create a qubit for each vertex $v \in \mathcal{G}^0$ in the clique complex. $\mathcal{C}^k(\mathcal{G})$ is spanned by the $(k+1)$-cliques of the graph $\mathcal{G}$. Define the embedding
\begin{align*}
\iota : \mathcal{C}^k &\hookrightarrow (\mathbb{C}^2)^{\otimes n} \\
|\sigma\rangle &\hookrightarrow |z_\sigma\rangle \qquad \forall \sigma \in \mathcal{G}^k
\end{align*}
where $|z_\sigma\rangle$ is the computational basis state of the indicator bitstring $z_\sigma \in \{0,1\}^n$ of $\sigma \in \mathcal{G}^k$.

The witness to the $\QMA$ verification protocol will ideally be $\iota|\psi\rangle$ where $|\psi\rangle \in \mathcal{C}^k(\mathcal{G})$ is a state of energy $\langle\psi|\Delta^k|\psi\rangle = 0$. Consider the operator on $(\mathbb{C}^2)^{\otimes n}$ given as follows, for computational basis states $|x\rangle,|y\rangle$
\begin{equation*}
\langle x|\widetilde{\Delta}^k|y\rangle = \begin{cases}
\langle x|\Delta^k|y\rangle & \text{$x,y$ are $(k+1)$-cliques} \\
A & x=y, \ \text{$x$ is \emph{not} a $(k+1)$-clique} \\
0 & x\neq y, \ \text{$x$ \emph{or} $y$ is \emph{not} a $(k+1)$-clique}
\end{cases}
\end{equation*}
$\widetilde{\Delta}^k$ acts as $\Delta^k$ on $\im{\iota}$, and on $(\im{\iota})^\perp$ it acts as $A \cdot \mathbbm{1}$ for some sufficiently large energy penalty $A \geq E$.

\begin{fact}\label{sparse_fct}
$\widetilde{\Delta}^k$ is sparse. Moreover, the graph $\mathcal{G}$ gives us efficient sparse access to $\widetilde{\Delta}^k$.
\end{fact}

This follows from \Cref{fact_Laplacian_entries}.

\begin{fact}\label{Ham_sim_fct}
Given efficient sparse access to a self-adjoint operator $H$ on $n$ qubits, we can efficiently implement controlled-$e^{iHt}$ and hence we can efficiently run phase estimation on $H$.
\end{fact}

This follows from \cite{berry2007efficient, childs2011simulating, berry2014exponential, low2017optimal}.

Facts \ref{sparse_fct}, \ref{Ham_sim_fct} give us the quantum verifier's protocol: given witness state $|\widetilde{\psi}\rangle$, use phase estimation to measure the energy $\langle\widetilde{\psi}|\widetilde{\Delta}^k|\widetilde{\psi}\rangle$. It remains to show completeness and soundness for this protocol.

{\bf Completeness.} In the YES case the prover sends $|\widetilde{\psi}\rangle = \iota|\psi\rangle$, which has $\langle\widetilde{\psi}|\widetilde{\Delta}^k|\widetilde{\psi}\rangle = 0$.

{\bf Soundness.} In the NO case, we know that all states $|\psi\rangle \in \mathcal{C}^k(\mathcal{G})$ have $\langle\psi|\Delta^k|\psi\rangle \geq E$. Decompose a state $|\widetilde{\psi}\rangle \in (\mathbbm{C}^2)^{\otimes n}$ as $|\widetilde{\psi}\rangle = |\psi\rangle + |\psi'\rangle$ with $|\psi\rangle \in \im{\iota}$ and $|\psi'\rangle \in (\im{\iota})^\perp$. Then
\begin{align*}
\langle\widetilde{\psi}|\widetilde{\Delta}^k|\widetilde{\psi}\rangle &= \langle\psi|\Delta^k|\psi\rangle + A \langle\psi'|\psi'\rangle \\
&\geq E \langle\psi|\psi\rangle + A \langle\psi'|\psi'\rangle \\
&\geq E
\end{align*}
\end{proof}


\pagebreak
\appendix

\section{Intuition of proof in supersymmetric picture} \label{app:susy_proof_sketch}

Phrased in the language of SUSY quantum mechanics, \Cref{QMA1_cor} says:

\begin{corollary}
Deciding whether the fermion hard core model defined on a graph $G$ has a SUSY groundstate with fermion number $k$ is $\QMA_1$-hard and contained in $\QMA$.
\end{corollary}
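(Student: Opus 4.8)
The corollary is essentially a translation of \Cref{QMA1_cor} (equivalently \Cref{main_cor}) into the language of the fermion hard core model set up in \Cref{sec:SUSY}. The plan is to make this translation precise. First I would recall the two dictionary entries established in that section: (a) the $k$-simplices of the independence complex $I(G)$ of a graph $G$ correspond exactly to the $(k+1)$-particle sector of Fock space for the fermion hard core model on $G$, with the supercharge $Q^\dagger$ acting as the boundary operator $\partial$ and the SUSY Hamiltonian $H = \{Q,Q^\dagger\}$ acting as the combinatorial Laplacian; and (b) $I(G)$ is the clique complex of the complement graph $\bar G$. Combining (a) and (b), deciding whether the fermion hard core model on $G$ has a SUSY groundstate with fermion number $k$ — i.e. whether $\ker(\Delta_{k-1})$ restricted to the relevant sector is nontrivial — is exactly the problem of deciding whether $H_{k-1}(\Cl(\bar G)) \neq 0$ (using \Cref{Hodge_prop}, which identifies $\ker\Delta^{k-1}$ with the cohomology, hence via the duality with the homology).

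Second, for hardness I would reduce from \Cref{gapped_homology_prob}. Given an input weighted graph $\mathcal{G}$ together with dimension $k(n)$ and gap promise $E(n)$ for the gapped clique homology problem, I take $G$ to be the complement graph of $\mathcal{G}$ (carrying over the vertex weights, which in the SUSY picture are the staggering parameters $\lambda_i$ appearing in \Cref{fermion_Hamiltonian}), and I ask whether the staggered fermion hard core model on $G$ has a SUSY groundstate with fermion number $k(n)+1$. By the dictionary, YES instances of \Cref{gapped_homology_prob} map to models with a SUSY groundstate, and NO instances map to models whose Hamiltonian restricted to the $(k+1)$-particle sector has minimum eigenvalue $\geq E(n) \geq 1/\poly(n)$ — which is precisely the promise that there is no (near-)zero-energy state. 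Since \Cref{gapped_homology_prob} is $\QMA_1$-hard by \Cref{main_cor}, so is the SUSY groundstate problem. One should be slightly careful to state the problem with the same promise gap structure (a $1/\poly(n)$ separation between the zero-energy YES case and the bounded-below NO case), so that this is a genuine reduction between promise problems.

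Third, for containment in $\QMA$ I would simply observe that the verifier of \Cref{main_cor} works verbatim: the SUSY Hamiltonian $H$ in the $(k+1)$-particle sector \emph{is} (a weighted) $\Delta^k$ of $\Cl(\bar G)$, which by \Cref{fact_Laplacian_entries} is $\poly(n)$-sparse with efficient sparse access computable from $G$, so by Facts \ref{sparse_fct} and \ref{Ham_sim_fct} one runs phase estimation on it, accepting if the measured energy is below $E/2$. Completeness and soundness are identical to the proof of \Cref{main_cor}. The only genuinely substantive content here is the dictionary of \Cref{sec:SUSY}; I do not expect a serious obstacle, the main thing to be careful about is bookkeeping with the shift between "fermion number $k$" and "homological degree $k-1$" and making sure the weighting/staggering correspondence is stated correctly so that the gap promise transfers cleanly. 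Everything else is a restatement of results already proved.
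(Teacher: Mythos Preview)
Your proposal is correct and follows essentially the same approach as the paper: the paper's proof is a single sentence stating that the corollary follows from the rigorous proof of \Cref{QMA1_cor} together with the SUSY--homology dictionary of \Cref{sec:SUSY}, and you have simply spelled out that dictionary (independence complex $=$ clique complex of the complement, $(k{+}1)$-particle sector $=$ $k$-chainspace, SUSY Hamiltonian $=$ Laplacian, staggering $=$ vertex weighting) and the bookkeeping for the promise gap in more detail than the paper bothers to.
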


The formal proof of this statement in the paper comes from the rigorous proof of \Cref{QMA1_cor}, and applying the connection between SUSY groundstates and homology.
In this section we give an outline of the proof expressed in the language of SUSY quantum mechanics. 
While this is not a rigorous proof, we hope it provides an accessible route into the full proof for readers who are more familiar with physics than with homology. 

\vspace{\baselineskip}
\noindent \emph{Proof overview in SUSY framework:} 

To demonstrate $\QMA_1$-hardness we reduce from $\qmsat$.
We do this by constructing a graph $G_n$ such that the fermion hard core model defined on the graph has a $2^n$-dimensional zero energy subspace. 
We can encode the Hilbert space of $n$ qubits into this subspace. 
We then construct gadgets which we can `glue' to the graph to simulate terms in the history state Hamiltonian $\Hbravyi$ used to show $\QMA_1$-completeness of $\qmsat$. 
We design these gadgets in such a way that the fermion hard core model defined on the final graph $\hat{\boldsymbol{G}}_n$ has a SUSY ground state iff the history state Hamiltonian $\Hbravyi$ has a zero energy eigenstate. 
Moreover, we demonstrate that the resulting fermion hard core model Hamiltonian inherits the gap on the smallest eigenvalue in $\NO$ cases from the $\qmsat$ Hamiltonian.
This gap on the lowest eigenvalue in $\NO$ cases is sufficient to demonstrate containment in $\QMA$. 

The first step in the proof is to construct a graph $G_1$ such that the fermion hard core model defined on $G_1$ has two orthonormal SUSY groundstates which are both fermionic integer states.
Where we say that a state is a fermionic integer state if it can be written, up to overall normalisation, as:
\begin{equation*}
\ket{\psi} = \sum_{i} n_i \ket{f_i}
\end{equation*}
where $n_i\in \mathbb{Z}$ and $\ket{f_i} = \prod_{j}a_j^\dagger \ket{\textrm{vac}}$.

\begin{figure}
\centering
\includegraphics[scale=0.6]{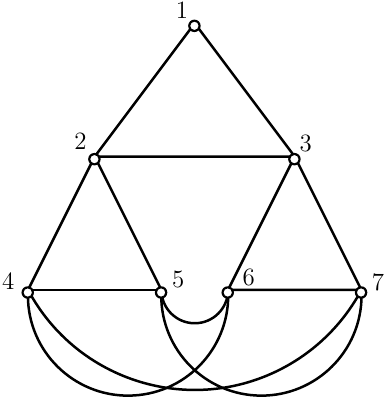}
\caption{The fermion hard core model defined on this graph encodes the Hilbert space of a single qubit in its supersymmetric groundstate subspace.}\label{fig:fermion_pic}
\end{figure}

The graph we choose for our construction is shown in \Cref{fig:fermion_pic}.
While this graph may appear complicated, its independence complex, shown in \Cref{fig:ind_pic} is much easier to visualise.
The two SUSY groundstates which we will identify with the computational basis states of a single qubit are given by the loops in \Cref{fig:ind_pic}.
In terms of fermionic operators they are given by:
\begin{equation*}
\ket{0} = (a^\dagger_1 - a^\dagger_{2} )(a^\dagger_{6} - a^\dagger_{7} )\ket{\textrm{vac}} \textrm{\ \ \ and \ \ \ } \ket{1} = (a^\dagger_1 - a^\dagger_{3} )(a^\dagger_{4} - a^\dagger_{5})\ket{\textrm{vac}}
\end{equation*}
So, the fermion hard core model defined on ${G}_1$ has two two-fermion SUSY groundstates, and no other SUSY ground states.
The graph ${G}_n$ given by $n$ disconnected copies of ${G}_1$ is then our intial graph, where the Hilbert space of $n$ qubits is encoded into the zero energy subspace with $2n$ fermions.
There are no other SUSY ground states.

\begin{figure}
\centering
\includegraphics[scale=0.6]{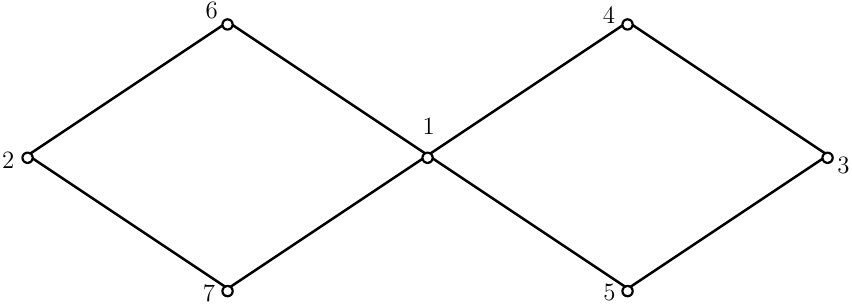}
\caption{This is the complement of the graph in \Cref{fig:fermion_pic}. The SUSY groundstates of the fermion hard-core model defined on the graph shown in \Cref{fig:fermion_pic} are given by the two 1-cycles shown in this figure.}\label{fig:ind_pic}
\end{figure}

The next step in the proof is to design the gadgets.
We do this for each local term needed to construct $\Hbravyi$.
The construction of the gadgets is technical, and a discussion of the details is deferred to \Cref{sec:construction}.
The idea is that for each $m$ local projector $\Pi_m = \ket{\phi}\bra{\phi}$ in $\Hbravyi$ we take the graph ${G}_m$ given by $n$ disconnected copies of ${G}_1$.
We then design a graph $\hat{{G}}_m$ by adding `mediator vertices' which are staggered by $\lambda \ll 1$ such that the spectrum of the fermion hard core model defined on the resulting graph has the following properties:
\begin{enumerate}[i.]
\item \label{spec:i} The SUSY groundspace of the fermion hard core model defined on $\hat{G}_m$ has dimension $2^m-1$, and the fermionic states encoding the states in the groundspace of $\Pi_m$ are $\lambda$-perturbations of states in the SUSY groundspace of $\hat{G}_m$
\item \label{spec:ii} The first excited state, $\ket{\hat{\phi}}$, has energy $\mathcal{O}(\lambda^{4m+2})$ and is a $\lambda$-perturbation of the fermionic state that encodes $\ket{\phi}$. 
\item The rest of the spectrum (which does not correspond to the qubit Hilbert space) has higher energy, moreover we are able to derive the form of these states for arbitrary gadgets
\end{enumerate}
The graph $\hat{{G}}_m$ therefore encodes an $m$-local projector acting on a system of $m$ qubits.
To extend $\hat{{G}}_m$ to a graph that encodes an $m$-local projector acting on a system of $n$ qubits we add $n-m$ copies of ${G}_1$ to our system, disconnected from the graph $\hat{{G}}_m$.
We call this new graph $\hat{\boldsymbol{G}}_m$.

To construct the graph $\hat{\boldsymbol{G}}_n$ corresponding to $\Hbravyi$ the procedure is as follows:
\begin{enumerate}
\item Construct the graph ${G}_n$
\item For each projector in $\Hbravyi$ add the mediators from $\hat{{G}}_m$ to the graph, connected to the relevant vertices from the original graph
\item For each projector in $\Hbravyi$, add edges between all the mediators from that gadget and all the mediators from every other gadget in the graph
\end{enumerate}

Designed in this way it is simple to show that the fermion hard core model defined on $\hat{\boldsymbol{G}}_n$ has a SUSY groundstate iff $\Hbravyi$ is satisfiable.\footnote{This is immediate because SUSY ground states are in one-to-one correspondence with holes in the independence complex of $G$. Our construction of the gadgets fills in the holes in $I(\boldsymbol{G}_n)$ corresponding to states that are lifted in $\Hbravyi$. So there is a hole in $I(\boldsymbol{G}_n)$, and hence a SUSY groundstate, iff there is a state which satisfies every projector in $\Hbravyi$.}
Showing that the lowest eigenvalue of the fermion hard core model in cases where there is no SUSY groundstate is bounded away from zero is more challenging. 

There are two steps to bounding the energy away from zero in \NO cases:
\begin{enumerate}
\item Show that in \NO cases every state in the Hilbert space of fermions on $\hat{\boldsymbol{G}}_n$ must have large overlap with the excited subspace of at least one of the gadgets
\item Demonstrate that this implies that every state in the Hilbert space in \NO cases has high energy
\end{enumerate}

The first point is straightforward - in \NO cases the Hamiltonian $\Hbravyi$ is not satisfiable, so it is not possible to construct a global state which is in the ground state of every projector.
Therefore the overlap of any global state with zero energy subspace of each gadget must be bounded away from one for at least one of the gadgets.

To understand the intuition behind the second point note that the hard core condition, together with the fact that mediator vertices are connected all-to-all between the different gadgets ensures that it is not possible for mediator vertices from two gadgets to be occupied simultaneously.
In other words, we can write any state as:
\begin{equation*}
\ket{\Psi} = \sum_i k_i \ket{\psi_i}
\end{equation*}
where each $\ket{\psi_i}$ can be defined purely on a graph $\hat{\boldsymbol{G}}_m$ for some projector in $\Hbravyi$.
It is straightforward to see that in \NO cases $\bra{\psi_i}H_{\hat{\boldsymbol{G}}_n}\ket{\psi_i}$ is bounded away from zero for every $\ket{\psi_i}$ in the above decomposition.
This is because, as stated above, the $\ket{\psi_i}$ cannot be in the ground state subspace of every gadget, and since the $\ket{\psi_i}$ are defined on the graphs associated to individual gadgets we do not need to consider interference between gadgets to calculate the energy of individual terms in the decomposition.
The final step is to show that interference between different terms in the decomposition cannot reduce the overall energy.
That is, we need to bound the effect of $\bra{\psi_j}H_{\hat{\boldsymbol{G}}_n}\ket{\psi_i}$ terms on the overall energy. 
This can be done by decomposing the Hilbert space into the eigenstates of each gadget, and using the information derived about the form of each eigenstate to argue that this negative interference cannot occur.

\section{Postponed proofs} \label{postponed_proofs_sec}

In this section, we prove Lemmas \ref{phi_terms_lem}, \ref{A_terms_lem}, \ref{B_terms_lem} one at a time. The proof of Lemma \ref{phi_terms_lem} is comparatively simple, since we need only use that the energy of $|\varphi\rangle$ on the \emph{up} Laplacian is small $\langle\varphi|\hat{\Delta}^{\uparrow 2n-1}|\varphi\rangle < E$. Claim \ref{up_Laplacian_clm} tells us that the up Laplacians play nicely with the decomposition into separate gadgets, and there is no interference between gadgets. Unfortunately, we start running into more trouble when we must consider also down Laplacians, which \emph{can} interfere between different gadgets. For this reason, the proofs of Lemmas \ref{A_terms_lem}, \ref{B_terms_lem} are more involved.

Here is some notation which will be useful in the proof of Lemmas \ref{A_terms_lem}, \ref{B_terms_lem}.
\begin{definition}
Recall the block decomposition
\begin{equation*}
\mathcal{C}^{2n-1}(\hat{\mathcal{G}}_n) = \mathcal{C}^{2n-1}(\mathcal{G}_n) \oplus \mathcal{C}^{2n-1}(\mathcal{T}_1) \oplus \dots \oplus \mathcal{C}^{2n-1}(\mathcal{T}_t)
\end{equation*}
Write
\begin{equation*}
|\varphi\rangle = |\omega_0\rangle + |\omega_1\rangle + \dots + |\omega_t\rangle
\end{equation*}
where $|\omega_0\rangle \in \mathcal{C}^{2n-1}(\mathcal{G}_n)$, $|\omega_i\rangle \in \mathcal{C}^{2n-1}(\mathcal{T}_i)$. Further, introduce the notation
\begin{equation*}
|\varphi_i\rangle = (\Pi_0^{2n-1} + \Pi_i^{2n-1}) |\varphi\rangle = |\omega_0\rangle + |\omega_i\rangle
\end{equation*}
\end{definition}

\subsection{Proof of Lemma \ref{phi_terms_lem}}

\begin{proof}
Recall the definition of $|\hat{\phi}_i\rangle$ from Lemma \ref{spec_seq_lemma_padded}. Lemma \ref{spec_seq_lemma_padded} tells us that $|\hat{\phi}_i\rangle \otimes \mathcal{H}_{n-m}$ is a $\hat{\Delta}^{2n-1}_i$-eigenspace with eigenvalue $\lambda^{4m_i+2}$. Combined with Claim \ref{paired_up_clm}, we can see that these states are in fact $\hat{\Delta}^{\uparrow 2n-1}_i$-eigenstates with eigenvalue $\lambda^{4m_i+2}$. Thus $\hat{\Delta}^{\uparrow 2n-1}_i \succeq \lambda^{4m_i+2} \cdot \hat{\Phi}_i$; states in $\im{\hat{\Phi}_i}$ have energy at least $\lambda^{4m_i+2}$ on $\hat{\Delta}^{\uparrow 2n-1}_i$, and states orthogonal to $\im{\hat{\Phi}_i}$ have energy at least zero. Now $\hat{\Delta}^{2n-1} \succeq \hat{\Delta}^{\uparrow 2n-1} = \hat{\Delta}^{\uparrow 2n-1}_1 + \dots + \hat{\Delta}^{\uparrow 2n-1}_t \succeq \hat{\Delta}^{\uparrow 2n-1}_i$ using Claim \ref{up_Laplacian_clm}, so in fact $\hat{\Delta}^{2n-1} \succeq \lambda^{4m_i+2} \cdot \hat{\Phi}_i$. But our state $|\varphi\rangle$ only has energy $\langle\varphi|\hat{\Delta}^{2n-1}|\varphi\rangle < E$ on $\hat{\Delta}^{2n-1}$. Thus
\begin{align*}
\langle\varphi| \hat{\Phi}_i |\varphi\rangle &\leq \lambda^{-(4m_i+2)} \cdot \langle\varphi| \hat{\Delta}^{2n-1} |\varphi\rangle \\
&= \mathcal{O}(\lambda^{-(4m+2)} E) \\
\implies \langle\varphi| \sum_i \hat{\Phi}_i |\varphi\rangle &= \mathcal{O}(\lambda^{-(4m+2)} E t)
\end{align*}
\end{proof}

\subsection{Proof of Lemma \ref{A_terms_lem}}

\begin{proof}
Our strategy is to first show that $|\varphi_i\rangle$ is low energy on $\hat{\Delta}^{2n-1}_i$.
\begin{equation*}
\langle\varphi|\hat{\Delta}^{2n-1}|\varphi\rangle < \mathcal{O}(\lambda^2) \ \implies \
\begin{cases}
\langle\varphi|\hat{\Delta}^{\downarrow 2n-1}|\varphi\rangle < \mathcal{O}(\lambda^2) & \implies \ ||\hat{\partial}^{2n-1}|\varphi\rangle||^2 < \mathcal{O}(\lambda^2) \\
\langle\varphi|\hat{\Delta}^{\uparrow 2n-1}|\varphi\rangle < \mathcal{O}(\lambda^2) & \implies \ ||\hat{d}^{2n-1}|\varphi\rangle||^2 < \mathcal{O}(\lambda^2)
\end{cases}
\end{equation*}
Now
\begin{align*}
&||\hat{d}^{2n-1}|\varphi\rangle||^2 < \mathcal{O}(\lambda^2) \\
\implies \ &||\hat{d}_1^{2n-1}|\varphi_1\rangle||^2 + \dots + ||\hat{d}_t^{2n-1}|\varphi_1\rangle||^2 < \mathcal{O}(\lambda^2) \\
\implies \ &\left|\left| \hat{d}_i^{2n-1} |\varphi_i\rangle \right|\right|^2 < \mathcal{O}(\lambda^2) \ \forall i
\end{align*}
using Claim \ref{up_Laplacian_clm}.
It is not so simple for the boundaries. Write
\begin{equation*}
\left|\left| \hat{\partial}_i^{2n-1} |\varphi_i\rangle \right|\right|^2
= \left|\left| \Pi_0^{2n-2} \hat{\partial}_i^{2n-1} |\varphi_i\rangle \right|\right|^2
+ \left|\left| \Pi_i^{2n-2} \hat{\partial}_i^{2n-1} |\varphi_i\rangle \right|\right|^2
\end{equation*}
For the restriction to the gadget chainspace $\mathcal{C}^{2n-2}(\mathcal{T}_i)$, we indeed have
\begin{equation*}
\left|\left| \Pi_i^{2n-2} \hat{\partial}_i^{2n-1} |\varphi_i\rangle \right|\right|^2 < \mathcal{O}(\lambda^2)
\end{equation*}
since the $(2n-2)$-simplices in gadget $\mathcal{T}_i$ are not shared with any other gadgets. (This is a similar reasoning to the proof of Claim \ref{up_Laplacian_clm}.) However, the $(2n-2)$-simplices in the qubit complex $\mathcal{G}_n^{2n-2}$ touches $(2n-1)$-simplices from many gadgets, and thus the boundaries of multiple $\{|\varphi_j\rangle\}_j$ could contribute to $\Pi_0^{2n-2} \hat{\partial}^{2n-1} |\varphi\rangle$ and possibly cancel out. Here we need a different argument. (To go from the first line to the second line in what follows, note that $\hat{\partial}^{2n-1}$ and $\hat{\partial}_i^{2n-1}$ have identical actions when acting on $\mathcal{C}^{2n-1}(\mathcal{G}) \oplus \mathcal{C}^{2n-1}(\mathcal{T}_i)$.)
\begin{align*}
\Pi_0^{2n-2} \hat{\partial}^{2n-1}|\varphi\rangle &= \Pi_0^{2n-2} \hat{\partial}^{2n-1} \big( |\omega_0\rangle + |\omega_i\rangle \big) + \Pi_0^{2n-2} \hat{\partial}^{2n-1} \sum_{j\neq i} |\omega_j\rangle \\
&= \Pi_0^{2n-2} \hat{\partial}_i^{2n-1} |\varphi_i\rangle + \Pi_0^{2n-2} \hat{\partial}^{2n-1} \sum_{j\neq i} |\omega_j\rangle \\
\implies \ \Pi_0^{2n-2} \hat{\partial}_i^{2n-1} |\varphi_i\rangle &= \Pi_0^{2n-2} \hat{\partial}^{2n-1}|\varphi\rangle - \Pi_0^{2n-2} \hat{\partial}^{2n-1} \sum_{j\neq i} |\omega_j\rangle \\
\implies \ \left|\left| \Pi_0^{2n-2} \hat{\partial}_i^{2n-1} |\varphi_i\rangle \right|\right| &\leq \left|\left| \Pi_0^{2n-2} \hat{\partial}^{2n-1}|\varphi\rangle \right|\right| + \left|\left| \Pi_0^{2n-2} \hat{\partial}^{2n-1} \sum_{j\neq i} |\omega_j\rangle \right|\right| \\
&< \mathcal{O}(\lambda) + \left|\left| \Pi_0^{2n-2} \hat{\partial}^{2n-1} \sum_{j\neq i} |\omega_j\rangle \right|\right|
\end{align*}
Here we used triangle inequality. But,
\begin{equation*}
\left|\left| \Pi_0^{2n-2} \hat{\partial}^{2n-1} \sum_{j\neq i} |\omega_j\rangle \right|\right| = \mathcal{O}(\lambda)
\end{equation*}
since the vertices $\mathcal{T}_j^0$ have weight $\lambda$, recalling Equation \ref{boundary_entries_eq}. Thus
\begin{align*}
\left|\left| \Pi_0^{2n-2} \hat{\partial}_i^{2n-1} |\varphi_i\rangle \right|\right| &< \mathcal{O}(\lambda) + \mathcal{O}(\lambda) = \mathcal{O}(\lambda) \\
\implies \ \langle\varphi_i| \hat{\Delta}_i^{2n-1} |\varphi_i\rangle &= \left|\left| \hat{d}_i^{2n-1} |\varphi_i\rangle \right|\right|^2 + \left|\left| \Pi_i^{2n-2} \hat{\partial}_i^{2n-1} |\varphi_i\rangle \right|\right|^2 + \left|\left| \Pi_0^{2n-2} \hat{\partial}_i^{2n-1} |\varphi_i\rangle \right|\right|^2 \\
&= \mathcal{O}(\lambda^2) + \mathcal{O}(\lambda^2) = \mathcal{O}(\lambda^2)
\end{align*}

$\mathcal{A}_i$ is a $\hat{\Delta}_i^{2n-1}$-eigenspace with eigenvalues $\Theta(1)$, so $\hat{\Delta}_i^{2n-1} \succeq \Theta(1) \cdot \Pi^{(\mathcal{A})}_i$. Thus
\begin{align*}
\langle\varphi_i| \Pi^{(\mathcal{A})}_i |\varphi_i\rangle &\leq \Theta(1) \cdot \langle\varphi_i| \hat{\Delta}_i^{2n-1} |\varphi_i\rangle = \mathcal{O}(\lambda^2) \\
\implies \langle\varphi| \sum_i \Pi^{(\mathcal{A})}_i |\varphi\rangle &= \mathcal{O}(\lambda^2 t)
\end{align*}
\end{proof}

\subsection{Proof of Lemma \ref{B_terms_lem}}

\begin{proof}
We will use what we are told by Lemma \ref{spec_seq_lemma_padded} about the form of $\mathcal{B}_i$. Namely, $\mathcal{B}_i$ is a $\mathcal{O}(\lambda)$-perturbation of the space $\mathcal{C}^{2n-1}([\text{bulk}]_i)$ of $(2n-1)$-simplices touching the central vertex of gadget $\mathcal{T}_i$.

We must rule out the possibility that, when we combine many gadgets, new states of very low energy emerge with high overlap on the $\{\mathcal{B}_i\}$ subspaces. What is the situation we are fighting against? We know states with high overlap on a single $\mathcal{B}_i$ subspace must have energy at least $\Omega(\lambda^2)$. But this state could be a cocycle, and perhaps the $\Omega(\lambda)$ boundary lives in the qubit complex $\mathcal{C}^{2n-2}(\mathcal{G})$. If this could happen, then perhaps many such states from different gadgets could be superposed in such a way that their boundaries destructively interfere on the qubit complex, leading to a state of very low energy. 

However, Claim \ref{bulk_clm} will let us show that not only must the states in an individual $\mathcal{B}_i$ have coboundaries or boundaries of size $\Omega(\lambda)$, but further these $\Omega(\lambda)$ coboundaries and boundaries must be supported on $[\text{bulk}]_i$.

We will show that, if $\langle\varphi|\hat{\Delta}^{2n-1}|\varphi\rangle < \mathcal{O}(\lambda^4)$, then $\langle\varphi_i| \Pi^{(\mathcal{B})}_i |\varphi_i\rangle = \mathcal{O}(\lambda^2)$ for each $i$. Lemma \ref{spec_seq_lemma_padded} and Part 2 of Lemma \ref{subspace_perturbation_lem} tell us that it is enough to show $\langle\varphi_i| \Pi^{[2n-1]}_i |\varphi_i\rangle = \mathcal{O}(\lambda^2)$ for each $i$. We will show the contrapositive. That is, we will assume
\begin{equation} \label{bulk_assumption_eq}
\langle\varphi_i| \Pi^{[2n-1]}_i |\varphi_i\rangle \notin \mathcal{O}(\lambda^2)
\end{equation}
and aim to derive a contradiction with $\langle\varphi|\hat{\Delta}^{2n-1}|\varphi\rangle \leq \mathcal{O}(\lambda^4)$. Our strategy to do this is to use Claim \ref{bulk_clm} to first derive that either $||\Pi^{[2n-2]}_i \hat{\partial}_i^{2n-1} |\varphi_i\rangle||$ or $||\Pi^{[2n]}_i \hat{d}_i^{2n-1} |\varphi_i\rangle||$ must be big. That is, the components of the boundary and coboundary on $[\text{bulk}]_i$ cannot both be small. Then since $\{[\text{bulk}_i]\}$ are separated from each other and `cannot interfere', this will necessitate that $\langle\varphi|\hat{\Delta}^{2n-1}|\varphi\rangle$ is big, providing the contradiction.

We will use the subspace decomposition
\begin{align}
\Pi^{2n-1}_0 + \Pi^{2n-1}_i &= \Pi^{[2n-1]}_i + \Pi^{[2n-1]\perp}_i \nonumber\\
&= \Pi^{[2n-1]}_i + \Pi^{[2n-1]\perp}_i \big( \Pi^{(\mathcal{A})}_i + \Pi^{(\mathcal{B})}_i + \hat{\Phi}_i + \hat{\Phi}^\perp_i \big) \nonumber\\
&= \Pi^{[2n-1]}_i + \Pi^{[2n-1]\perp}_i \Pi^{(\mathcal{A})}_i + \Pi^{[2n-1]\perp}_i \Pi^{(\mathcal{B})}_i + \big( \hat{\Phi}_i + \hat{\Phi}^\perp_i \big) - \Pi^{[2n-1]}_i \big( \hat{\Phi}_i + \hat{\Phi}^\perp_i \big) \label{bulk_decomposition_eq}
\end{align}
where $\Pi^{[2n-1]\perp}_i := \Pi^{2n-1}_0 + \Pi^{2n-1}_i - \Pi^{[2n-1]}_i$.

Let's apply Claim \ref{bulk_clm} to the state $\Pi^{[2n-1]}_i |\varphi_i\rangle \in \mathcal{C}^{2n-1}([\text{bulk}]_i)$. Equation \ref{bulk_assumption_eq} gives $|| \Pi^{[2n-1]}_i |\varphi_i\rangle || \notin \mathcal{O}(\lambda)$. From this, the two cases from Claim \ref{bulk_clm} are $||\Pi^{[2n-2]}_i \hat{\partial}_i^{2n-1} \Pi^{[2n-1]}_i |\varphi_i\rangle|| \notin \mathcal{O}(\lambda^2)$ and $||\Pi^{[2n]}_i \hat{d}_i^{2n-1} \Pi^{[2n-1]}_i |\varphi_i\rangle|| \notin \mathcal{O}(\lambda^2)$.

\bigskip
{\bf Case 1.} \ $||\Pi^{[2n-2]}_i \hat{\partial}_i^{2n-1} \Pi^{[2n-1]}_i |\varphi_i\rangle|| \notin \mathcal{O}(\lambda^2)$

In this case, we will show $||\Pi^{[2n-2]}_i \hat{\partial}_i^{2n-1} |\varphi_i\rangle||$ cannot be small. Using Equation \ref{bulk_decomposition_eq},
\begin{align*}
\Pi^{[2n-2]}_i \hat{\partial}_i^{2n-1} |\varphi_i\rangle &= \Pi^{[2n-2]}_i \hat{\partial}_i^{2n-1} \Pi^{[2n-1]}_i |\varphi_i\rangle \\
& \qquad + \Pi^{[2n-2]}_i \hat{\partial}_i^{2n-1} \Pi^{[2n-1]\perp}_i \Pi^{(\mathcal{A})}_i |\varphi_i\rangle + \Pi^{[2n-2]}_i \hat{\partial}_i^{2n-1} \Pi^{[2n-1]\perp}_i \Pi^{(\mathcal{B})}_i |\varphi_i\rangle \\
& \qquad + \Pi^{[2n-2]}_i \hat{\partial}_i^{2n-1} \big( \hat{\Phi}_i + \hat{\Phi}^\perp_i \big) |\varphi_i\rangle - \Pi^{[2n-2]}_i \hat{\partial}_i^{2n-1} \Pi^{[2n-1]}_i \big( \hat{\Phi}_i + \hat{\Phi}^\perp_i \big) |\varphi_i\rangle \\
\implies ||\Pi^{[2n-2]}_i \hat{\partial}_i^{2n-1} |\varphi_i\rangle|| &\geq ||\Pi^{[2n-2]}_i \hat{\partial}_i^{2n-1} \Pi^{[2n-1]}_i |\varphi_i\rangle|| \\
& \qquad - ||\Pi^{[2n-2]}_i \hat{\partial}_i^{2n-1} \Pi^{[2n-1]\perp}_i \Pi^{(\mathcal{A})}_i |\varphi_i\rangle|| - ||\Pi^{[2n-2]}_i \hat{\partial}_i^{2n-1} \Pi^{[2n-1]\perp}_i \Pi^{(\mathcal{B})}_i |\varphi_i\rangle|| \\
& \qquad - ||\Pi^{[2n-2]}_i \hat{\partial}_i^{2n-1} \big( \hat{\Phi}_i + \hat{\Phi}^\perp_i \big) |\varphi_i\rangle|| - ||\Pi^{[2n-2]}_i \hat{\partial}_i^{2n-1} \Pi^{[2n-1]}_i \big( \hat{\Phi}_i + \hat{\Phi}^\perp_i \big) |\varphi_i\rangle||
\end{align*}
by triangle inequality. We have $||\Pi^{[2n-2]}_i \hat{\partial}_i^{2n-1} \Pi^{[2n-1]}_i |\varphi_i\rangle|| \notin \mathcal{O}(\lambda^2)$ by assumption. To conclude that $||\Pi^{[2n-2]}_i \hat{\partial}_i^{2n-1} |\varphi_i\rangle||$ is big, we will argue that the remaining terms on the right hand side are small.

We know from before that $||\Pi^{(\mathcal{A})}_i |\varphi_i\rangle|| = \mathcal{O}(\lambda)$, thus by Claim \ref{lambda_clm} $||\Pi^{[2n-2]}_i \hat{\partial}_i^{2n-1} \Pi^{[2n-1]\perp}_i \Pi^{(\mathcal{A})}_i |\varphi_i\rangle|| = \mathcal{O}(\lambda^2)$. From Lemma \ref{spec_seq_lemma_padded}, $||\Pi^{[2n-1]\perp}_i \Pi^{(\mathcal{B})}_i|| = ||\Pi^{[2n-1]\perp}_i \big(\Pi^{[2n-1]}_i + \mathcal{O}(\lambda)\big)|| = \mathcal{O}(\lambda)$ and $||\Pi^{[2n-1]}_i \big(\hat{\Phi}_i + \hat{\Phi}^\perp_i\big)|| = ||\Pi^{[2n-1]}_i \big(\Phi_i + \Phi^\perp_i + \mathcal{O}(\lambda)\big)|| = \mathcal{O}(\lambda)$. Thus by Claim \ref{lambda_clm} $||\Pi^{[2n-2]}_i \hat{\partial}_i^{2n-1} \Pi^{[2n-1]\perp}_i \Pi^{(\mathcal{B})}_i |\varphi_i\rangle|| = \mathcal{O}(\lambda^2)$ and $||\Pi^{[2n-2]}_i \hat{\partial}_i^{2n-1} \Pi^{[2n-1]}_i \big( \hat{\Phi}_i + \hat{\Phi}^\perp_i \big) |\varphi_i\rangle|| = \mathcal{O}(\lambda^2)$. Finally, $||\hat{\partial}_i^{2n-1} \big( \hat{\Phi}_i + \hat{\Phi}^\perp_i \big) |\varphi_i\rangle|| = \mathcal{O}(\lambda^{2m_i+1}) = \mathcal{O}(\lambda^2)$ so $||\Pi^{[2n-2]}_i \hat{\partial}_i^{2n-1} \big( \hat{\Phi}_i + \hat{\Phi}^\perp_i \big) |\varphi_i\rangle|| = \mathcal{O}(\lambda^2)$. Altogether, we get
\begin{equation*}
||\Pi^{[2n-2]}_i \hat{\partial}_i^{2n-1} |\varphi_i\rangle|| \notin \mathcal{O}(\lambda^2)
\end{equation*}

\bigskip
{\bf Case 2.} \ $||\Pi^{[2n]}_i \hat{d}_i^{2n-1} \Pi^{[2n-1]}_i |\varphi_i\rangle|| \notin \mathcal{O}(\lambda^2)$

This will be similar to Case 1. Again using Equation \ref{bulk_decomposition_eq},
\begin{align*}
\Pi^{[2n]}_i \hat{d}_i^{2n-1} |\varphi_i\rangle &= \Pi^{[2n]}_i \hat{d}_i^{2n-1} \Pi^{[2n-1]}_i |\varphi_i\rangle \\
& \qquad + \Pi^{[2n]}_i \hat{d}_i^{2n-1} \Pi^{[2n-1]\perp}_i \Pi^{(\mathcal{A})}_i |\varphi_i\rangle + \Pi^{[2n]}_i \hat{d}_i^{2n-1} \Pi^{[2n-1]\perp}_i \Pi^{(\mathcal{B})}_i |\varphi_i\rangle \\
& \qquad + \Pi^{[2n]}_i \hat{d}_i^{2n-1} \big( \hat{\Phi}_i + \hat{\Phi}^\perp_i \big) |\varphi_i\rangle - \Pi^{[2n]}_i \hat{d}_i^{2n-1} \Pi^{[2n-1]}_i \big( \hat{\Phi}_i + \hat{\Phi}^\perp_i \big) |\varphi_i\rangle \\
\implies ||\Pi^{[2n]}_i \hat{d}_i^{2n-1} |\varphi_i\rangle|| &\geq ||\Pi^{[2n]}_i \hat{d}_i^{2n-1} \Pi^{[2n-1]}_i |\varphi_i\rangle|| \\
& \qquad - ||\Pi^{[2n]}_i \hat{d}_i^{2n-1} \Pi^{[2n-1]\perp}_i \Pi^{(\mathcal{A})}_i |\varphi_i\rangle|| - ||\Pi^{[2n]}_i \hat{d}_i^{2n-1} \Pi^{[2n-1]\perp}_i \Pi^{(\mathcal{B})}_i |\varphi_i\rangle|| \\
& \qquad - ||\Pi^{[2n]}_i \hat{d}_i^{2n-1} \big( \hat{\Phi}_i + \hat{\Phi}^\perp_i \big) |\varphi_i\rangle|| - ||\Pi^{[2n]}_i \hat{d}_i^{2n-1} \Pi^{[2n-1]}_i \big( \hat{\Phi}_i + \hat{\Phi}^\perp_i \big) |\varphi_i\rangle||
\end{align*}
By assumption, $||\Pi^{[2n]}_i \hat{d}_i^{2n-1} \Pi^{[2n-1]}_i |\varphi_i\rangle|| \notin \mathcal{O}(\lambda^2)$. We know from before that $||\Pi^{(\mathcal{A})}_i |\varphi_i\rangle|| = \mathcal{O}(\lambda)$, thus by Claim \ref{lambda_clm} $||\Pi^{[2n]}_i \hat{d}_i^{2n-1} \Pi^{[2n-1]\perp}_i \Pi^{(\mathcal{A})}_i |\varphi_i\rangle|| = \mathcal{O}(\lambda^2)$. From Lemma \ref{spec_seq_lemma_padded}, $||\Pi^{[2n-1]\perp}_i \Pi^{(\mathcal{B})}_i|| = ||\Pi^{[2n-1]\perp}_i \big(\Pi^{[2n-1]}_i + \mathcal{O}(\lambda)\big)|| = \mathcal{O}(\lambda)$ and $||\Pi^{[2n-1]}_i \big(\hat{\Phi}_i + \hat{\Phi}^\perp_i\big)|| = ||\Pi^{[2n-1]}_i \big(\Phi_i + \Phi^\perp_i + \mathcal{O}(\lambda)\big)|| = \mathcal{O}(\lambda)$. Thus by Claim \ref{lambda_clm} $||\Pi^{[2n]}_i \hat{d}_i^{2n-1} \Pi^{[2n-1]\perp}_i \Pi^{(\mathcal{B})}_i |\varphi_i\rangle|| = \mathcal{O}(\lambda^2)$ and $||\Pi^{[2n]}_i \hat{d}_i^{2n-1} \Pi^{[2n-1]}_i \big( \hat{\Phi}_i + \hat{\Phi}^\perp_i \big) |\varphi_i\rangle|| = \mathcal{O}(\lambda^2)$. Finally, $||\hat{d}_i^{2n-1} \big( \hat{\Phi}_i + \hat{\Phi}^\perp_i \big) |\varphi_i\rangle|| = \mathcal{O}(\lambda^{2m_i+1}) = \mathcal{O}(\lambda^2)$ so $||\Pi^{[2n]}_i \hat{d}_i^{2n-1} \big( \hat{\Phi}_i + \hat{\Phi}^\perp_i \big) |\varphi_i\rangle|| = \mathcal{O}(\lambda^2)$. Altogether, we get
\begin{equation*}
||\Pi^{[2n]}_i \hat{d}_i^{2n-1} |\varphi_i\rangle|| \notin \mathcal{O}(\lambda^2)
\end{equation*}

\bigskip
We have concluded that either $||\Pi^{[2n-2]}_i \hat{\partial}_i^{2n-1} |\varphi_i\rangle|| \notin \mathcal{O}(\lambda^2)$ or $||\Pi^{[2n]}_i \hat{d}_i^{2n-1} |\varphi_i\rangle|| \notin \mathcal{O}(\lambda^2)$, so
\begin{align*}
&||\Pi^{[2n-2]}_i \hat{\partial}_i^{2n-1} |\varphi_i\rangle||^2 + ||\Pi^{[2n]}_i \hat{d}_i^{2n-1} |\varphi_i\rangle||^2 \notin \mathcal{O}(\lambda^4) \\
\implies \ &\langle\varphi_i| \hat{d}_i^{2n-2} \Pi^{[2n-2]}_i \hat{\partial}_i^{2n-1} |\varphi_i\rangle + \langle\varphi_i| \hat{\partial}_i^{2n} \Pi^{[2n]}_i \hat{d}_i^{2n-1} |\varphi_i\rangle \notin \mathcal{O}(\lambda^4)
\end{align*}

Now consider the operators $\hat{d}_i^{2n-2} \Pi^{[2n-2]}_i \hat{\partial}_i^{2n-1}$, $\hat{\partial}_i^{2n} \Pi^{[2n]}_i \hat{d}_i^{2n-1}$. On $[\text{bulk}]_i$, $\hat{\partial}_i^k$ and $\hat{d}_i^k$ are the same as $\hat{\partial}^k$ and $\hat{d}^k$, thus
\begin{align*}
\hat{d}_i^{2n-2} \Pi^{[2n-2]}_i \hat{\partial}_i^{2n-1} &= \hat{d}^{2n-2} \Pi^{[2n-2]}_i \hat{\partial}^{2n-1} \\
\hat{\partial}_i^{2n} \Pi^{[2n]}_i \hat{d}_i^{2n-1} &= \hat{\partial}^{2n} \Pi^{[2n]}_i \hat{d}^{2n-1}
\end{align*}
So now we can deduce
\begin{align*}
\langle\varphi|\hat{\Delta}^{2n-1}|\varphi\rangle &= \langle\varphi|\hat{d}^{2n-2} \hat{\partial}^{2n-1}|\varphi\rangle + \langle\varphi|\hat{\partial}^{2n} \hat{d}^{2n-1}|\varphi\rangle \\
&\geq \langle\varphi|\hat{d}^{2n-2} \Pi^{[2n-2]}_i \hat{\partial}^{2n-1}|\varphi\rangle + \langle\varphi|\hat{\partial}^{2n} \Pi^{[2n]}_i \hat{d}^{2n-1}|\varphi\rangle \\
&\notin \mathcal{O}(\lambda^4)
\end{align*}
a contradiction to $\langle\varphi|\hat{\Delta}^{2n-1}|\varphi\rangle \leq \mathcal{O}(\lambda^4)$.

We have concluded that $\langle\varphi_i | \Pi^{(\mathcal{B})}_i|\varphi_i\rangle = \mathcal{O}(\lambda^2)$. This tells us
\begin{equation*}
\langle\varphi| \sum_i \Pi^{(\mathcal{B})}_i |\varphi\rangle = \mathcal{O}(\lambda^2 t)
\end{equation*}
\end{proof}

\section{Technical details of $\QMA_1$-hardness construction} \label{app:qma}

In \Cref{sec:qma1} we claim that in order to construct a reduction from $\qmsat$ to \Cref{informal_prom_prob} we only need to implement the projectors shown in \Cref{table:states}.
In this appendix we outline the construction from \cite{bravyi2011efficient} in more detail, to demonstrate how we obtain \Cref{table:states}.

The construction in \cite{bravyi2011efficient} maps a quantum circuit, $U=U_L\cdots U_2U_1$, $U_j \in \mathcal{G}$ operating on $N$ qubits into the ground state of a local Hamiltonian $H(U)$ acting on Hilbert space $\mathcal{H}_{\text{clock}} \otimes \mathcal{H}_{\text{comp}}$ where $\mathcal{H}_{\text{clock}}=(\mathbb{C}^4)^{\otimes L}$ and $\mathcal{H}_{\text{comp}}=(\mathbb{C}^2)^{\otimes N }$. 
The ground state of $H(U)$ is a `history state' \cite{kitaev2002classical} of the form:
\begin{equation*}
\ket{\Omega(\psi_{wit})} = \frac{1}{T}\sum_{t=1}^L \ket{t}\ket{\psi_t}
\end{equation*}
where $\ket{\psi_t} = \Pi_{i=1}^t U_i \ket{\psi_{wit}}$ and ${\ket{t}}$ is an orthonormal basis for $\mathcal{H}_{\text{clock}}$.

A single clock particle in \cite{bravyi2011efficient} has four possible basis states - $\ket{u} = \ket{00}$ - unborn, $\ket{a_1} = \ket{01}$ - active phase one, $\ket{a_2} = \ket{10}$ - active phase two and $\ket{d} = \ket{11}$ - dead.  
As time `passes' each clock particle evolves from the unborn state, through the two active stages, and eventually to the dead state. 
`Legal' clock states are defined as those obeying certain constraints \cite{bravyi2011efficient}:
\begin{enumerate}
    \item The first clock particle is either active or dead
    \item The last clock particle is either unborn or active.
    \item There is at most one active clock particle 
    \item If clock particle $j$ is dead then all particles $k$ satisfying $1\leq k \leq j$ are also dead
\end{enumerate}

In \cite{bravyi2011efficient} it is shown that the ground space of the Hamiltonian $\Hclock = \sum_{j=1}^6 \Hclock^{(j)}$ is spanned by legal clock states, where:
\begin{equation*}
\Hclock^{(1)} = \ket{u}\bra{u}_1
\end{equation*}
\begin{equation*}
\Hclock^{(2)} = \ket{d}\bra{d}_L
\end{equation*}
\begin{equation*}
\Hclock^3 = \sum_{1\leq j \leq k \leq L} \left(\ket{a_1}\bra{a_1} + \ket{a_2}\bra{a_2} \right)_j \otimes \left(\ket{a_1}\bra{a_1} + \ket{a_2}\bra{a_2} \right)_k
\end{equation*}
\begin{equation*}
\Hclock^{(4)} =\sum_{1\leq j \leq k \leq L} \left(\ket{a_1}\bra{a_1} + \ket{a_2}\bra{a_2}+ \ket{u}\bra{u} \right)_j \otimes \left(\ket{d}\bra{d}  \right)_k
\end{equation*}
\begin{equation*}
\Hclock^{(5)} = \sum_{1\leq j \leq k \leq L} \left(\ket{u}\bra{u} \right)_j \otimes \left( \ket{a_1}\bra{a_1} + \ket{a_2}\bra{a_2}+  \ket{d}\bra{d}  \right)_k
\end{equation*}
\begin{equation*}
\Hclock^{(6)} = \sum_{1 \leq j \leq L-1} \ket{d}\bra{d}_j \otimes \ket{u}\bra{u}_{j+1}
\end{equation*}

The computational qubits can be divided into input data and witness registers, $R_{in}$ and $R_{wit}$ such that $|R_{in}| + |R_{wit}| = N$. The input qubits are initialised into the all zero state at time zero via:
\begin{equation*}
\Hin = \ket{a_1}\bra{a_1}_1 \otimes \left( \sum_{b \in R_{in}} \ket{1}\bra{1}_b \right)
\end{equation*}
The `propagation' Hamiltonian is defined as:
\begin{equation*}
\Hprop = \sum_{t=1}^L  \left(\Hpropt + \Hpropt' \right)
\end{equation*}
where:

\begin{equation*}
\Hpropt = \frac{1}{2}\left[\left(\ket{a_1}\bra{a_1} + \ket{a_2}\bra{a_2} \right)\otimes \identity - \ket{a_2}\bra{a_1} \otimes U_t - \ket{a_1}\bra{a_2} \otimes U_t^\dagger\right]
\end{equation*}
\begin{equation*}
\Hpropt' = \frac{1}{2}\left(\ket{a_2,u}\bra{a_2,u} + \ket{d,a_1}\bra{d,a_1} -\ket{d,a_1}\bra{a_2,u} - \ket{a_2,u}\bra{d,a_1} \right)
\end{equation*}

It can easily be checked that the zero energy ground space of:
\begin{equation*}
H(U) =\Hin + \Hclock + \Hprop 
\end{equation*}
is spanned by computational history states of the form:
\begin{equation*}
\ket{\Omega(\psi_{wit})} = \sum_{t=1}^L\left(  \ket{d}^{\otimes t-1} \ket{a_1}_t \ket{u}^{\otimes L-t} \otimes \ket{Q_{t-1}} +\ket{d}^{\otimes t-1} \ket{a_2}_t \ket{u}^{\otimes L-t} \otimes \ket{Q_{t}}    \right)
\end{equation*}

where $\ket{Q_0} = \ket{0}^{\otimes R_{in}} \otimes \ket{\psi_{wit}}$, and $\ket{Q_t} = U_t \ket{Q_{t-1}}$, $t \in [1,L]$. 
Therefore the ground states of $H(U)$ encode the computational history of the circuit $U = U_L...U_1$.

The final step is to penalise computational histories where the circuit rejects the witness, this is achieved using the projector:
\begin{equation*}
\Hout = \ket{a_2}\bra{a_2}_L \otimes \left( \sum_{b \in R_{out}} \ket{1}\bra{1}_b \right)
\end{equation*}
This gives energy to computational history states $\ket{\Omega(\psi_{wit})}$ if the circuit $U$ rejects the witness $\ket{\psi_{wit}}$ with non-zero probability. 

So the final Hamiltonian:
\begin{equation*}
H = \Hin + \Hclock + \Hprop + \Hout
\end{equation*}
has a zero energy ground state iff there exists a witness $\ket{\psi_{wit}}$ such that circuit $U$ accepts with probability one. 

With a more detailed understanding of the construction in \cite{bravyi2011efficient} we can demonstrate that the projectors needed are given by those in \Cref{table:states}.

For each $\Hin$ only a single projector is needed.
\begin{equation*}
\Hin = \ket{011}\bra{011}
\end{equation*}
This is a three qubit projector of rank 1.
$\Hout$ is equivalent, up to permuting the qubits involved in the interaction:
\begin{equation*}
\Hout = \ket{101}\bra{101}
\end{equation*}

Implementing $\Hclock$ requires six projectors.
$\Hclock^{(1)}$ and $\Hclock^{(2)}$ are a two qubit projectors with rank 1:
\begin{equation*}
\Hclock^{(1)} = \ket{00}\bra{00}
\end{equation*}

\begin{equation*}
\Hclock^{(2)} = \ket{11}\bra{11}
\end{equation*}

The remaining terms in $\Hclock$ each act on 4 qubits. $\Hclock^{(3)}$ and $\Hclock^{(5)}$ each have rank 4, $\Hclock^{(4)}$ has rank 3, and $\Hclock^{(6)}$ has rank 1:
\begin{equation*}
\Hclock^{(3)} = \ket{0101}\bra{0101} + \ket{0110}\bra{0110} + \ket{1001}\bra{1001} +\ket{1010}\bra{1010}
\end{equation*}

\begin{equation*}
\Hclock^{(4)} = \ket{0111}\bra{0111} + \ket{1011}\bra{1011} + \ket{0011}\bra{0011}
\end{equation*}

\begin{equation*}
\Hclock^{(5)} = \ket{0001}\bra{0001} + \ket{0010}\bra{0010} + \ket{0011}\bra{0011}
\end{equation*}

\begin{equation*}
\Hclock^{(6)} = \ket{1100}\bra{1100}
\end{equation*}

For $\Hprop$ we have to consider two projectors for $\Hpropt$ and one for $\Hpropt'$.
The projector for propagation under the pythagorean gate is a three qubit projector of rank 2 given by:

\begin{equation*}
\begin{split}
\Hpropt(U_{\mathit Pyth.})=\frac{1}{2}[\left(\ket{01}\bra{01} + \ket{10}\bra{10} \right)\otimes \identity \\
- \frac{1}{5} \left( 3\ket{100}\bra{010}-4\ket{100}\bra{011}+4\ket{101}\bra{010}+3\ket{101}\bra{011}  \right) \\
- \frac{1}{5}\left( 3\ket{010}\bra{100} +4\ket{010}\bra{101} -4\ket{011}\bra{100} +3\ket{011}\bra{101} \right) ]
\end{split}
\end{equation*}

The projector for evolution under $\CNOT$ is a four qubit projector of rank 4:

\begin{equation*}
\begin{split}
\Hpropt(CNOT)= \frac{1}{2}[\left(\ket{01}\bra{01} + \ket{10}\bra{10} \right)\otimes \identity \\
- \left( \ket{1000}\bra{0100}+\ket{1001}\bra{0101} +\ket{1010}\bra{0111} +\ket{1011}\bra{0110}   \right)\\
- \left( \ket{0100}\bra{1000}+ \ket{0101}\bra{1001}+ \ket{0110}\bra{1011}+ \ket{0111}\bra{1010}\right)]
\end{split}
\end{equation*}

\section{Further details on gadgets for integer states}

\subsection{General method for constructing gadgets for integer states} \label{app:gadgets}

Before we discuss the general method for constructing integer states for arbitrary numbers of qubits, we will go through it for the case of 1-, 2-, and 3-qubit states.
This can be read in conjunction with the constructions in \Cref{sec:construction} to gain intuition for what is happening.
It should be noted that in for the 4-qubit states needed for the reduction we do not make use of this general method.
This is because the 4-qubit states from \Cref{table:states} all only involve two cycles.
This means that the simplest way of combining them is not to use this general method, but to use the method from \Cref{sec:4_qubit_construction}.
The method outlined here could be applied to the 4-qubit states, but it leads to a blow up in the number of vertices in the construction, and is not needed for the reasons already outlined.
 
\subsubsection{General procedure for 1-qubit integer states}

To construct $\Kcyc$ for a general one qubit integer state $\phiST = \sum_{i}n_i|z_i\rangle$ we apply the following procedure:
\begin{enumerate}
\item Take $n_0$ copies of the $\z$ cycle and $n_1$ copies of the $\1$ cycle, and label the $a_i$ vertices and $b_i$ vertices by $a_{i,j}$ $b_{i,k}$ where $i \in [2,4]$, $j \in [1,n_0]$ and $k \in [1,n_1]$
\item Cut open each cycle by duplicating the $x$ vertices
\item Label the $x$ vertex adjacent to $a_{3,i}$ by $x_i$, and the $x$ vertex adjacent to $a_{4,i}$ by $x_{i+1}$. 
\item If the $n_0 > 0$ and $n_1 > 0$ label the $x$ vertex adjacent to $b_{4,i}$ by $x_{n_0 +i}$ and the $x$ vertex adjacent to $b_{3,i}$ by $x_{n_0+i+1}$. Label the $x$ vertex adjacent to $b_{3,n_1}$ by $x_1$. (If either $n_0$ or $n_1$ is less than zero reverse the orderings of the $b$ cycles).
\item Glue the cycles together by identifying $x$ vertices with the same labels.
\end{enumerate}

We then apply the thickening and coning off procedure from \Cref{single_gadget_sec} to `fill in' the cycle $\Kcyc$.
To complete the gadget we apply the function $f(\cdot)$ from \Cref{eq:f} to the vertices $\mathcal{K}^0$ with the relation: 
\begin{equation*}
\begin{split}
R =& \{(x,x_i)|\forall i\} \cup \{(a_{j,i},a_j)| \forall i, j \in [2,4]\} \\ &\cup \{(b_{j,i},b_j)| \forall i, j \in [2,4]\}
\cup \{(v,v)|v\in \J^0\} 
\end{split}
\end{equation*}

\subsubsection{General procedure for 2 qubit integer states}

To construct $\Kcyc$ for a general two qubit integer state $\phiST = \sum_{i}n_i|z_i\rangle$ we apply the following procedure:
\begin{enumerate}
\item Take $n_i$ copies of the $|z_i\rangle$ cycle for $i \in [1,4]$.
Label the $a_i$ ($b_i$) vertices and $a_i'$ ($b_i'$) vertices by $a_{i,j}$ ($b_{i,j}$) and $a_{i,j}'$ ($b_{i,j}'$) where $i \in [2,4]$, $j \in [1,\sum_in_i]$
\item Cut open each cycle along the $1$-simplex $[xx']$ by introducing four dummy $x_j$ vertices, as outlined in \Cref{sec:2 qubit cutting}
\item Label the $x_j$ vertices from the $i^{\textrm{th}}$ cycle by $x_{2(i-1)+1},x_{2(i-1)+2},x_{2(i-1)+3},x_{2(i-1)+4}$, where the ordering of the vertices is dependent on whether the cycle is being added with positive or negative sign
\item Glue the cycles together by identifying vertices with the same labels.
\item If $\sum_i n_i \geq 4$ introduce two final dummy vertices, and connect each of the final dummy vertices to either the `ceiling' or the `floor' of the viewing platform as shown in \Cref{fig:multiples 2}
\end{enumerate}

We then apply the thickening and coning off procedure from \Cref{single_gadget_sec} to `fill in' the cycle $\Kcyc$.
To complete the gadget we apply the function $f(\cdot)$ from \Cref{eq:f} to the vertices $\mathcal{K}^0$ with the relation: 
\begin{equation*}
\begin{split}
R =& \{(x,x_i)|\forall i\} \cup \{(a_{j,i},a_j)| \forall i, j \in [2,4]\}\cup \{(a'_{j,i},a'_j)| \forall i, j \in [2,4]\} \\ 
& \cup \{(b_{j,i},b_j)| \forall i, j \in [2,4]\}\cup \{(b'_{j,i},b'_j)| \forall i, j \in [2,4]\}\\
&\cup \{(v,v)|v\in \J^0\} 
\end{split}
\end{equation*}

\subsubsection{General procedure for three qubit integer states}

To construct $\Kcyc$ for a general three qubit integer state $\phiST = \sum_{i}n_i|z_i\rangle$ we apply the following procedure:
\begin{enumerate}
\item Take $n_i$ copies of the $|z_i\rangle$ cycle for $i \in [1,8]$.
Label the $a_i$ ($b_i$) vertices, the $a_i'$ ($b_i'$) vertices and the $a_i''$ ($b_i''$) by $a_{i,j}$ ($b_{i,j}$), $a_{i,j}'$ ($b_{i,j}'$) and and $a_{i,j}''$ ($b_{i,j}''$) where $i \in [2,4]$, $j \in [1,\sum_in_i]$
\item Cut open each cycle by introducing nine dummy $x_j$ vertices in a bi-piramid, as outlined in \Cref{sec:3 qubit cutting}.
\item The $x_1$ vertex from all the open cycles will be shared. Label the $x_j$ for $j \in [2,8]$ vertices from each cycle so that each cycle shares half the vertices with the cycle preceding it, and half the vertices with the cycle following it (the order of the vertices determines whether the cycles are added with positive or negative orientation).
\item Glue the cycles together by identifying vertices with the same labels.
\item If $\sum_i n_i \geq 4$ introduce four more dummy vertices, and connect each of the final dummy vertices to one of the four open one cycles that have been introduced by this procedure, as shown in \Cref{fig:3 qubit 13}.
\item Add four final dummy vertices, and connect each of these to the vertices from one of the four open 2-cycles, as shown in \Cref{fig:3 qubit 13}.
\end{enumerate}

We then apply the thickening and coning off procedure from \Cref{single_gadget_sec} to `fill in' the cycle $\Kcyc$.
To complete the gadget we apply the function $f(\cdot)$ from \Cref{eq:f} to the vertices $\mathcal{K}^0$ with the relation: 
\begin{equation*}
\begin{split}
R =& \{(x,x_i)|\forall i\} \cup \{(a_{j,i},a_j)| \forall i, j \in [2,4]\}\cup \{(a'_{j,i},a'_j)| \forall i, j \in [2,4]\}\cup \{(a''_{j,i},a''_j)| \forall i, j \in [2,4]\} \\ 
& \cup \{(b_{j,i},b_j)| \forall i, j \in [2,4]\}\cup \{(b'_{j,i},b'_j)| \forall i, j \in [2,4]\}\cup \{(b''_{j,i},b''_j)| \forall i, j \in [2,4]\}\\
&\cup \{(v,v)|v\in \J^0\} 
\end{split}
\end{equation*}

\subsubsection{General procedure for constructing cycles for integer states}

To construct $\Kcyc$ for arbitrary $N$ qubit integer states we first need to understand how to `cut open' the $2N-1$-cycles that form the $N$ qubit basis states.
The $2N-1$ cycles corresponding to each basis state share the common $N-1$-simplex given by $[x^{(1)}x^{(2)},\cdots x^{(N)}]$.
We will cut open this $N-1$-simplex to give an open $2(N-1)$-cycle that we can glue cycles together along via the following prescription:
\begin{enumerate} 
\item Take the $N-1$-simplex and construct an $N-1$-dimensional cross-polytope $\mathfrak{g}_{N-1}$ by adding $N-2$ vertices $x_1,\cdots,x_{N-2}$. When constructing the cross-polytope every dummy vertex that is added must share an edge with $x^{(1)}$.\footnote{This rule makes the quotienting rules to go from $\Kcyc$ to $\J$ simpler. For the three qubit case there are two options of how to construct $\mathfrak{g}_2$ following this rule, in higher dimensions there will be more choices -- which option is chosen is arbitrary.}
\item Take each $x_i$ for $i \in [1,N-2]$. Let the vertices it shares an edge with in $\mathfrak{g}_{N-1}$ be denoted $V$. If $x_i$ `cuts open' an existing simplex add an edge between $x_i$ and every vertex that shares an edge with {\it every} vertex in $V$. If $x_i$ is adding new simplices to the cross polytope (i.e. not cutting open an existing simplex) add edges between $x_i$ and the vertices which form part of simplices that have been divided in the process.
\item The vertices that share edges with every vertex in the $[x^{(1)}x^{(2)}\cdots x^{(N)}]$ simplex will be $c_{3}^{(k)}$ and $c_{4}^{(k)}$ for $k \in [1,N]$ and where the $c_{i}^{(k)}$ for $i \in [2,4]$ are either $a$ or $b$ vertices depending on whether the $k^\mathrm{th}$ qubit is in the $\ket{0}$ or $\ket{1}$ state in the cycle being cut. These $2N$ vertices form an $N$-dimensional cross polytope $\mathfrak{g}_N$ (to see this note that the $c_{3}^{(k)}$ and $c_{4}^{(k')}$ share an edge iff $k \neq k'$).
\item Use $2^N$ new dummy vertices $x_{N-1} \cdots x_{N-1+2^N}$ to construct an open $N$-cube (which is the dual of $\mathfrak{g}_N$ and is an $N-1$-cycle). Triangulate each $N-1$-face of the $N$-cube via a standard triangulation.
\item Take each vertex of the $N$-cube and add edges between it and every vertex in the corresponding facet of $\mathfrak{g}_N$. 
\item Form a sequence of bi-pyramids -- the first has the $N$-cube as its base, and its peaks are any pair of non-adjacent vertices in $\mathfrak{g}_{N-1}$. The next bi-pyramid has its base as the previous bi-pyramid, and its peaks as any new pair of non-adjacent vertices in $\mathfrak{g}_{N-1}$. Continue in this manner, constructing a sequence of $N-1$-bi-pyramids.
\end{enumerate}

The final bi-pyramid will be an open $2(N-2)$ cycle as required.

Then to construct $\Kcyc$ for a general $N$ qubit integer state $\phiST = \sum_{i}n_i|z_i\rangle$ we apply the following procedure:
\begin{enumerate}
\item Take $n_i$ copies of the $|z_i\rangle$ cycle for $i \in [1,2^N]$.
Label the $a_i^{(k)}$ ($b_i^{(k)}$) vertices by $a_{i,j}^{(k)}$ ($b_{i,j}^{(k)}$), where $i \in [2,4]$, $k \in [1,N]$, $j \in [1,\sum_in_i]$
\item Cut open each of the $2N-1$-cycles that form the $N$-qubit basis states along their common $N-1$-simplex as outlined above.
\item The $x_1, \cdots x_{N-2}$ vertices from all the open cycles will be shared. Label the $x_j$ for $j \in [N-1,N-1+2^N]$ vertices from each cycle so that each cycle shares half the vertices with the cycle preceding it, and half the vertices with the cycle following it (the order of the vertices determines whether the cycles are added with positive or negative orientation).
\item Glue the cycles together by identifying vertices with the same labels.
\item This process will have left $2(n-1)$ facets of each $n$-cube unpaired. These unpaired facets will form open cycles with those from other $n$-cubes, which need to be filled in. For $\sum_i n_i \geq 3$ there will be $2(n-1)$ $d$-cycles for $d \in [3,n-1]$ which need to be filled in -- each can be filled in by adding a dummy vertex which is connected to every vertex in that cycle. For $\sum_i n_i \geq 4$ there will also be $2(n-1)$ 2-cycles which can be filled in in the same manner.
\end{enumerate}

We then apply the thickening and coning off procedure from \Cref{single_gadget_sec} to `fill in' the cycle $\Kcyc$.
To complete the gadget we apply the function $f(\cdot)$ from \Cref{eq:f} to the vertices $\mathcal{K}^0$ with the relation: 
\begin{equation*}
\begin{split}
R =& \{(x,x_i)|\forall i\} \cup \{(a^{(k)}_{j,i},a^{(k)}_j)| \forall i, \forall k, j \in [2,4]\} \\ & \cup \{(b^{(k)}_{j,i},b^{(k)}_j)| \forall i, \forall k, j \in [2,4]\}
\cup \{(v,v)|v\in \J^0\} 
\end{split}
\end{equation*}

\subsection{Postponed gadget proofs}\label{app:gadget proofs}

In this section we provide proofs of the final lemmas needed to prove that the gadget constructions for the states in \Cref{table:states} work as intended.

\begin{lemma}
The procedure outlined in \Cref{sec:4_qubit_construction} for cutting and gluing four-qubit cycles results in a clique complex $\Kcyc$ which triangulates $S^7$.
\end{lemma}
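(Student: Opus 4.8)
The plan is to reduce the statement to a claim about a single ``cut'' seven-cycle and then assemble $\Kcyc$ by gluing. Write $\mathcal S_z$ for the seven-cycle $|z\rangle$, $z\in\{0,1\}^4$, which is the four-fold join of square loops and hence the cross-polytope $\mathfrak g_8 \cong S^7$ of \Cref{octahedra_sec}, and write $\mathcal S_z^{\mathrm{cut}}$ for the complex obtained from $\mathcal S_z$ by the cutting prescription of \Cref{sec:4_qubit_construction} with fresh dummy vertices $x_1,\dots,x_{10}$. The single-cycle claim is: $\mathcal S_z^{\mathrm{cut}}$ is a triangulated $7$-ball whose boundary is the $6$-sphere $\mathcal B := \mathfrak g_3(\{x,x',x'',x''',x_1,x_2\}) \ast \mathfrak g_4(\{x_3,\dots,x_{10}\})$; note $\mathfrak g_3 \ast \mathfrak g_4 = \mathfrak g_7 \cong S^6$, so $\mathcal B$ is genuinely a triangulated $S^6$. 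Granting this, I would finish as follows: since all four-qubit basis states share the $3$-simplex $[xx'x''x''']$ and the construction identifies the dummy vertices of the two cycles being glued, the boundary spheres of the two cut cycles become identified simplex-for-simplex, so $\Kcyc = \mathcal S_{z_1}^{\mathrm{cut}} \cup \mathcal S_{z_2}^{\mathrm{cut}}$ with the intersection containing $\mathcal B$. Verifying that the link of every simplex of $\Kcyc$ is a PL sphere of the right dimension shows $\Kcyc$ is a closed PL $7$-manifold; a Mayer--Vietoris and van Kampen computation (using that each $\mathcal S_{z_i}^{\mathrm{cut}}$ is a ball, and computing the intersection explicitly) gives $H_\ast(\Kcyc) = H_\ast(S^7)$ and $\pi_1(\Kcyc) = 1$; and the PL Poincar\'e theorem in dimension $7$ then yields $\Kcyc \cong S^7$. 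In the cleanest sub-case, in which the two cut cycles meet in exactly $\mathcal B$, this reduces to the elementary fact that the double of a $7$-ball along its boundary is $S^7$.

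To prove the single-cycle claim I would work inside $\mathcal S_z = \mathfrak g_8 = \mathfrak g_2^{\ast 4}$, with $x,x',x'',x'''$ one vertex of each square-loop factor. Then $[xx'x''x''']$ is a simplex with $\operatorname{lk}([xx'x''x''']) = \mathfrak g_1^{\ast 4} = \mathfrak g_4 \cong S^3$ and closed star $\Delta^3 \ast \mathfrak g_4 \cong B^7$ (a $3$-simplex joined to an $S^3$). Deleting the open star leaves $\mathcal S_z \setminus \mathrm{st}([xx'x''x''']) \cong S^7 \setminus \operatorname{int} B^7 \cong B^7$, with boundary the ``old'' $S^6$ equal to $\partial\Delta^3 \ast \mathfrak g_4$. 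The remaining content of \Cref{sec:4_qubit_construction} is that the ten dummy vertices, wired as prescribed, glue onto this $B^7$ a triangulated collar $S^6 \times I$ with inner boundary the old $S^6$ and outer boundary $\mathcal B$; since attaching a collar to $B^7$ gives $B^7$ again, the claim follows. The crux is to verify that the explicit construction realizes exactly this collar, i.e.\ that at each successive bi-pyramid step the new dummy vertex is joined to precisely the vertices of the link of the lower simplex it is ``cutting open'' (the lists ``vertices connected to both $x$ and $x'$'', ``connected to all of $x,x',x''$'', ``connected to every vertex of $[xx'x''x''']$'', etc.\ enumerated in \Cref{sec:4_qubit_construction}), so that no handle is created, and that the accumulated dummy vertices realize the intended cross-polytope at each stage. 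The latter is automatic from the elementary fact that a bi-pyramid $\mathcal D \ast \{p,q\}$ over a triangulated ball $\mathcal D$ with $p \not\sim q$ is again a triangulated ball with boundary $(\partial\mathcal D) \ast \{p,q\} \cup \mathcal D$; tracking the boundary sphere through the octahedron step and the three bi-pyramid steps of \Cref{fig:4_qubit_3,fig:4_qubit_4} then produces $\mathcal B$. The analogous bookkeeping for the two-qubit case of \Cref{sec:2 qubit cutting} and the three-qubit case of \Cref{sec:3 qubit cutting} is the same argument one and two dimensions lower, and serves as a sanity check.

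For the clique-complex property I would argue piece by piece: cross-polytopes $\mathfrak g_k$ are flag, the join of flag complexes is flag, and a bi-pyramid $\mathcal D \ast \{p,q\}$ over a flag complex $\mathcal D$ with $p \not\sim q$ is flag (if every pair of a vertex set $S$ is an edge then $S \cap \mathcal D^0$ is a simplex of $\mathcal D$ by flagness and $|S \cap \{p,q\}| \le 1$ since $p \not\sim q$, so $S$ is a simplex); moreover the thickening-type attachment of the dummy cross-polytope to the original vertices yields a clique complex by \Cref{thickening_lem_2}. Since the cut cycles are assembled, and then glued to one another, by identifying common \emph{full} subcomplexes with \emph{no} edge ever added between a dummy vertex and a vertex it was not explicitly joined to, the result is $2$-determined exactly as in the proof of \Cref{lem:clique}, so $\Kcyc$ is the clique complex of its $1$-skeleton.

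The main obstacle is the locality verification inside the single-cycle claim: confirming that each dummy vertex is joined to exactly the link of the simplex it is supposed to cut, so that the surgery is ``clean'' and $\mathcal S_z^{\mathrm{cut}}$ is an honest $7$-ball rather than $S^7$ with extra handles or with the seven-cell only partially filled. Everything else is soft PL topology -- links and stars in cross-polytopes, suspensions of balls are balls, the double of a ball is a sphere, and PL Poincar\'e in dimension at least $5$ -- or the routine flag check. I would carry out the bookkeeping explicitly for the state $|1011\rangle - |1000\rangle$ from \Cref{table:states}, the other four-qubit states differing only by relabelling $a$- and $b$-vertices, and cross-check the Euler characteristic of the resulting complex against the value computed in the accompanying code \cite{mathematica}.
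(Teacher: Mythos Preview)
Your overall architecture matches the paper's: show that each cut basis-state cycle is a PL $7$-ball whose boundary is a fixed copy of $\mathfrak g_7$, then glue two such balls along that common boundary to obtain $S^7$. You also correctly identify the boundary sphere as $\mathfrak g_3(\{x,x',x'',x''',x_1,x_2\}) \ast \mathfrak g_4(\{x_3,\dots,x_{10}\})$, which is what the paper calls ``the copy of $\mathfrak g_7$''.

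The gap is in your model of the single cut cycle. You describe it as: delete the open star of the tetrahedron $[xx'x''x''']$ to obtain a $B^7$, then attach a collar $S^6\times I$ built from the ten dummy vertices. That is not what the procedure of \Cref{sec:4_qubit_construction} does. The first step removes the \emph{edge} $[xx']$, which kills the star of $[xx']$ --- a strictly larger set than the star of the tetrahedron. For instance the $7$-simplex $[x\,b_3\,x'\,a_3'\,b_2''\,b_3''\,b_2'''\,b_3''']$ lies in $\mathfrak g_8 \setminus \mathrm{st}([xx'x''x'''])$ but is \emph{absent} from the cut complex, so the cut complex does not contain your $B^7$ as a subcomplex and the dummy vertices cannot be described as merely attaching a collar to it. What actually happens is that $x_1,x_2$ re-triangulate the region between $\mathrm{st}([xx'])$ and $\mathrm{st}([xx'x''x'''])$ (essentially a pair of stellar moves), and only afterwards do $x_3,\dots,x_{10}$ plug in the final piece via the thickening and the three bi-pyramids. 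Because your ``delete star $+$ collar'' decomposition is not a description of the actual complex, the ball conclusion does not follow from the soft bi-pyramid/suspension facts you invoke.

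The paper establishes the single-cycle claim by exactly the brute-force route you defer to as ``bookkeeping'': it classifies every $7$-simplex of $\mathfrak g_8$ by how many vertices it shares with $[xx'x''x''']$, records which simplices are deleted and which pair of new simplices replaces each one, and verifies facet-by-facet that every $6$-face is either shared with another $7$-simplex or lies on the $\mathfrak g_7$ boundary. That exhaustive check is the entire content of the proof; there is no soft shortcut. Once both cut cycles are $7$-balls with the same $\mathfrak g_7$ boundary, the paper glues them directly --- the intersection is exactly $\mathcal B$, so your ``cleanest sub-case'' applies and the Mayer--Vietoris/PL Poincar\'e machinery is not needed. Your flag/clique argument is fine and agrees with the paper's $2$-determined check.
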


\begin{proof}
We prove this for the state $\ket{1011}-\ket{1000}$, generalising the proof to the other 4 qubit entangled states in \Cref{table:states} is trivial.
    
The cycle $c_0$ corresponding to $\ket{1011}$ is given by the join:
\begin{figure}[H]
\centering
\includegraphics[scale=0.8]{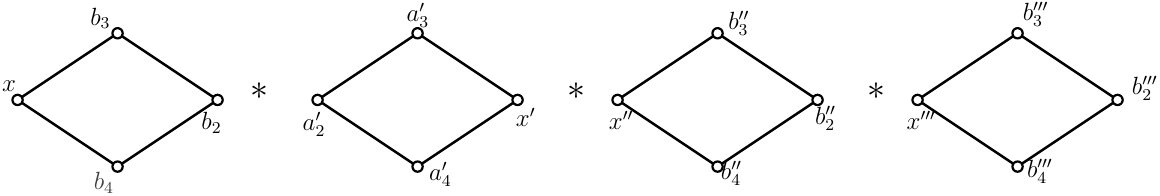}
\end{figure}
This forms an 8-dimensional cross-polytopes, $ \mathfrak{g}_8$.
Let $\mathcal{C}_0$ denote the complex which results from applying the cutting process from \Cref{sec:4_qubit_construction} to this copy of $ \mathfrak{g}_8$ -- i.e. $\mathcal{C}_0$ is constructed by taking the join above, and adding the dummy vertices $x_0 \dots x_{10}$ following the prescription in \Cref{sec:4_qubit_construction}.
We begin by arguing that $\mathcal{C}_0$ is a triangulation of a 7-sphere with a copy of $\mathfrak{g}_7$ cut into it by the dummy vertices.
Topologically this is equivalent to claiming that $\mathcal{C}_0$ is a triangulation of a 7-dimensional hyperplane with boundary given by the copy of $\mathfrak{g}_7$.
It is immediate from the construction in \Cref{sec:4_qubit_construction} that the dummy vertices together with the original vertices $x,x',x'',x'''$ form a copy of $\mathfrak{g}_7$.
So all that remains is to show that no additional holes were formed in the process of constructing this hole.
We achieve this by demonstrating that every facet of every maximal simplex (i.e. every 7-simplex) in $\mathcal{C}_0$ is either a face of the copy of $\mathfrak{g}_7$ that bounds the hole, or is shared with another simplex in $\mathcal{C}_0$.

The hole in $c_0$ is cut into the 3-simplex $\mathcal{S} = [xx'x''x''']$.
So any 7-simplices from $c_0$ which do not share any lower simplices with $\mathcal{S}$ are clearly unaffected by the cutting open procedure from \Cref{sec:3 qubit cutting}, and since in $c_0$ all their facets were shared with another simplex in $c_0$, the same is true in $\mathcal{C}_0$. 

Consider next the 7-simplices in $c_0$ which share exactly one 0-simplex with $\mathcal{S}$. There are 64 such simplices (each of $x,x',x'',x'''$ are involved in 16). 
These 7-simplices are also unaffected by the process of cutting open $\mathcal{S}$ (since only one vertex from the 3-simplex is involved in each 7-simplex no faces of the 7-simplices are removed).
So these 7-simplices are still present in $\mathcal{C}_0$, and all their facets are shared with other 7-simplices which are also present in $\mathcal{C}_0$.

Consider next the 7-simplices in $c_0$ which contain exactly two vertices from $\mathcal{S}$.
Here we have to consider two different types of simplex.
First consider 7-simplices which contain the 1-simplices $[xx'']$, $[xx''']$, $[x'x'']$, $[x'x''']$ or $[x''x''']$.
There are 80 such simplices (16 for each pair).
These 7-simplices are again unaffected by the process of cutting open $\mathcal{S}$ because the lower simplices they share with $\mathcal{S}$ are still intact after adding the dummy vertices (i.e. the edges they share with $\mathcal{S}$ are not cut open by the process of adding the $x_i$).

Next consider 7-simplices which share the 1-simplex $[xx']$ with $\mathcal{S}$.
There are 16 such 7-simplices:
\begin{equation}\label{eq:simplices 1}
\begin{split}
&[xb_3x'a_3'b_2''b_3''b_2''b_3''] \textrm{\ \ \  \ \ \ } [xb_3x'a_3'b_2''b_3''b_2''b_4''] \textrm{\ \ \  \ \ \ }
[xb_3x'a_3'b_2''b_4''b_2''b_3'']
\textrm{\ \ \  \ \ \ }
[xb_3x'a_3'b_2''b_4''b_2''b_4''] \\
&[xb_3x'a_4'b_2''b_3''b_2''b_3''] \textrm{\ \ \  \ \ \ } [xb_3x'a_4'b_2''b_3''b_2''b_4''] \textrm{\ \ \  \ \ \ }
[xb_3x'a_4'b_2''b_4''b_2''b_3'']
\textrm{\ \ \  \ \ \ }
[xb_3x'a_4'b_2''b_4''b_2''b_4''] \\
&[xb_4x'a_3'b_2''b_3''b_2''b_3''] \textrm{\ \ \  \ \ \ } [xb_4x'a_3'b_2''b_3''b_2''b_4''] \textrm{\ \ \  \ \ \ }
[xb_4x'a_3'b_2''b_4''b_2''b_3'']
\textrm{\ \ \  \ \ \ }
[xb_4x'a_3'b_2''b_4''b_2''b_4''] \\
&[xb_4x'a_4'b_2''b_3''b_2''b_3''] \textrm{\ \ \  \ \ \ } [xb_4x'a_4'b_2''b_3''b_2''b_4''] \textrm{\ \ \  \ \ \ }
[xb_4x'a_4'b_2''b_4''b_2''b_3'']
\textrm{\ \ \  \ \ \ }
[xb_4x'a_4'b_2''b_4''b_2''b_4'']
\end{split}
\end{equation}

We note that the dummy vertex $x_1$ shares an edge with both $x$ and $x'$, as well as with the vertices $\{b_3,b_4,a'_3a'_4,b_2'',b''_3,b''_4,b_2''',b_3''',b_4'''\}$ (see \Cref{sec:4_qubit_construction}).
Therefore, each of the 7-simplices in \Cref{eq:simplices 1} is removed from the complex by the process of adding the dummy vertices (since $[xx']$ is no longer in the complex).
But each simplex $s \in c_0$ is replaced by two 7-simplices -- one $s'$ containing the edge $[xx_1]$ and the other $s''$ containing the edge $[x'x_1]$.
The simplices $s'$ and $s''$ share the common 6-simplex given by removing $x$ and $x'$ from the vertex set and adding the vertex $x_1$.
In $c_0$ the other facets of each simplex $s$ in \Cref{eq:simplices 1} were all shared with either:
\begin{itemize}
\item another 7-simplex, $t$, that contains one of $x,x'$. Since $t$ only contains one of $x,x'$ we have that $t \in \mathcal{C}_0$, and moreover $t$ will now share a facet with either $s'$ or $s''$ (depending on whether it is $x$ or $x'$ that is contained in $t$)
\item another 7-simplex, $t$, that contains $x,x'$ and no other vertices from $\mathcal{S}$. Then $t$ is one of the other simplices from \Cref{eq:simplices 1} that has been replaced by two simplices $t',t'' \in \mathcal{C}_0$. We will have that $t'$ shares a facet with $s'$, and $t''$ shares a facet with $s''$.
\item another 7-simplex $t$ that contains $x$, $x'$ and one or more other vertices from $\mathcal{S}$. Then $s',s''$ will share a facet with one of the other new simplices in $\mathcal{C}_0$ that we will consider next
\end{itemize}
So, while the 7-simplices in \Cref{eq:simplices 1} are not present in $\mathcal{C}_0$, the simplices that replace them share every facet with a simplex that is in $\mathcal{C}_0$.

Next we turn to consider simplices which share exactly three vertices with $\mathcal{S}$. 
Again, here we must consider two types of simplices.
First we consider simplices which share the 2-simplices $[xx''x''']$ or $[x'x''x''']$ with $\mathcal{S}$.
There are 32 such 7-simplices (16 for each 2-simplex).
These 7-simplices are not cut open by the process of adding the dummy vertices, because the 2-simplices they share with $\mathcal{S}$ are left intact by the process of cutting $\mathcal{S}$ open.

The other type of simplex we must consider are those that share the 2-simplices $[xx'x'']$ or $[xx'x''']$ with $\mathcal{S}$.
Again, there are 32 such simplices.
These 7-simplices are removed from the complex by the process of adding the dummy vertices, because the 2-simplices they share with $\mathcal{S}$ are no longer present in the complex after $\mathcal{S}$ has been cut open.
We will consider first the 7-simplices which share the 2-simplex $[xx'x'']$ with $\mathcal{S}$:
\begin{equation}\label{eq:simplices 2}
\begin{split}
&[xb_3x'a_3'x''b_3''b_2'''b_3'''] \textrm{\ \ \  \ \ \ } [xb_3x'a_3'x''b_3''b_2'''b_4'''] \textrm{\ \ \  \ \ \ }
[xb_3x'a_3'x''b_4''b_2'''b_3''']
\textrm{\ \ \  \ \ \ }
[xb_3x'a_3'x''b_4''b_2'''b_4'''] \\
&[xb_3x'a_4'x''b_3''b_2'''b_3'''] \textrm{\ \ \  \ \ \ } [xb_3x'a_4'x''b_3''b_2'''b_4'''] \textrm{\ \ \  \ \ \ }
[xb_3x'a_4'x''b_4''b_2'''b_3''']
\textrm{\ \ \  \ \ \ }
[xb_3x'a_4'x''b_4''b_2'''b_4'''] \\
&[xb_4x'a_3'x''b_3''b_2'''b_3'''] \textrm{\ \ \  \ \ \ } [xb_4x'a_3'x''b_3''b_2'''b_4'''] \textrm{\ \ \  \ \ \ }
[xb_4x'a_3'x''b_4''b_2'''b_3''']
\textrm{\ \ \  \ \ \ }
[xb_4x'a_3'x''b_4''b_2'''b_4'''] \\
&[xb_4x'a_4'x''b_3''b_2'''b_3'''] \textrm{\ \ \  \ \ \ } [xb_4x'a_4'x''b_3''b_2'''b_4'''] \textrm{\ \ \  \ \ \ }
[xb_4x'a_4'x''b_4''b_2'''b_3''']
\textrm{\ \ \  \ \ \ }
[xb_4x'a_4'x''b_4''b_2'''b_4''']
\end{split}
\end{equation}
We note that $x_2$ shares an edge with $x$, $x'$ and every other vertex in each of the 7-simplices in \Cref{eq:simplices 2}.
Therefore, as before, each simplex $s \in c_0$ has been removed, but replaced by two new 7-simplices, one $s'$ containing the edge $[xx_2]$ and the other $s''$ containing the edge $[x'x_2]$, where $s',s''$ share the 6-simplex given by removing $x$ and $x'$ from the vertex set, and adding $x_2$.
Again we have to consider the other simplices that shared facets with the simplices in \Cref{eq:simplices 2} to demonstrate that there are no new holes in $\mathcal{C}_0$.
In $c_0$ the other facets of each simplex $s$ in \Cref{eq:simplices 2} were all shared with either:
\begin{itemize}
\item another 7-simplex, $t$, that contains one of $x,x'$. Since $t$ only contains one of $x,x'$ we have that $t \in \mathcal{C}_0$, and moreover $t$ will now share a facet with either $s'$ or $s''$ (depending on whether it is $x$ or $x'$ that is contained in $t$)
\item another 7-simplex, $t$, that contains $x,x'$ and no other vertices from $\mathcal{S}$. Then $t$ is one of the simplices from \Cref{eq:simplices 1} that has been replaced by two simplices $t',t'' \in \mathcal{C}_0$. We note that $t,t''$ do \emph{not} share a facet with $s'$ or $s''$, since these simplices all differ on two vertices, rather than one. 
However, $x_1$ and $x_2$ share an edge, and this leads to the the inclusion of a set of simplices in $\mathcal{C}_0$ which do not directly arise from simplices in $c_0$, but which fill in holes between simplices. These are:
\begin{equation}\label{eq:simplices ff}
\begin{split}
&[x_1x_2xb_3a_3'b_3''b_2'''b_3'''] \textrm{\ \ \  \ \ \ } [x_1x_2xb_3a_3'b_3''b_2'''b_4'''] \textrm{\ \ \  \ \ \ }
[x_1x_2xb_3a_3'b_4''b_2'''b_3''']
\textrm{\ \ \  \ \ \ }
[x_1x_2xb_3a_3'b_4''b_2'''b_4'''] \\
&[x_1x_2xb_3a_4'b_3''b_2'''b_3'''] \textrm{\ \ \  \ \ \ } [x_1x_2xb_3a_4'b_3''b_2'''b_4'''] \textrm{\ \ \  \ \ \ }
[x_1x_2xb_3a_4'b_4''b_2'''b_3''']
\textrm{\ \ \  \ \ \ }
[x_1x_2xb_3a_4'b_4''b_2'''b_4'''] \\
&[x_1x_2xb_4a_3'b_3''b_2'''b_3'''] \textrm{\ \ \  \ \  \ } [x_1x_2xb_4a_3'b_3''b_2'''b_4'''] \textrm{\ \ \  \ \ \ }
[x_1x_2xb_4a_3'b_4''b_2'''b_3''']
\textrm{\ \ \  \ \  \ }
[x_1x_2xb_4a_3'b_4''b_2'''b_4'''] \\
&[x_1x_2xb_4a_4'b_3''b_2'''b_3'''] \textrm{\ \ \  \ \  \ } [x_1x_2xb_4a_4'b_3''b_2'''b_4'''] \textrm{\ \ \ \  \ \  }
[x_1x_2xb_4a_4'b_4''b_2'''b_3''']
\textrm{\ \ \ \  \ \  }
[x_1x_2xb_4a_4'b_4''b_2'''b_4'''] \\ 
&[x_1x_2x'b_3a_3'b_3''b_2'''b_3'''] \textrm{  \  \ \ \ } [x_1x_2x'b_3a_3'b_3''b_2'''b_4'''] \textrm{\  \  \ \ \ }
[x_1x_2x'b_3a_3'b_4''b_2'''b_3''']
\textrm{  \  \ \ \ }
[x_1x_2x'b_3a_3'b_4''b_2'''b_4'''] \\
&[x_1x_2x'b_3a_4'b_3''b_2'''b_3'''] \textrm{  \  \ \ \ } [x_1x_2x'b_3a_4'b_3''b_2'''b_4'''] \textrm{\ \ \  \ \ }
[x_1x_2x'b_3a_4'b_4''b_2'''b_3''']
\textrm{ \ \  \ \  }
[x_1x_2x'b_3a_4'b_4''b_2'''b_4'''] \\
&[x_1x_2x'b_4a_3'b_3''b_2'''b_3'''] \textrm{ \ \  \ \  } [x_1x_2x'b_4a_3'b_3''b_2'''b_4'''] \textrm{\ \ \  \ \  }
[x_1x_2x'b_4a_3'b_4''b_2'''b_3''']
\textrm{\ \ \  \ \  }
[x_1x_2x'b_4a_3'b_4''b_2'''b_4'''] \\
&[x_1x_2x'b_4a_4'b_3''b_2'''b_3'''] \textrm{\ \ \  \ \  } [x_1x_2x'b_4a_4'b_3''b_2'''b_4'''] \textrm{\ \ \  \ \  }
[x_1x_2x'b_4a_4'b_4''b_2'''b_3''']
\textrm{\ \ \  \ \  }
[x_1x_2x'b_4a_4'b_4''b_2'''b_4''']
\end{split}
\end{equation}
Then we have that $s',t'$ each share a facet with the same the 7-simplex, from \Cref{eq:simplices ff}, and likewise $s'',t''$.
\item another 7-simplex $t$ that contains $x$, $x'$, $x''$ and no other vertices from $\mathcal{S}$. Then $t$ is another one of the simplices from \Cref{eq:simplices 2} that is replaced with two simplices $t',t'' \in \mathcal{C}_0$, and $t',s'$ will share a facet, while $t'',s''$ share a facet
\item another 7-simplex $t$ that contains  $\mathcal{S}$. Then $t$ will share a facet with one of the other new simplices in $\mathcal{C}_0$ that we will consider next
\end{itemize}
So again, even though simplices have been removed from the complex they have been replaced in exactly the correct way to ensure that no new holes are added.

The 7-simplices which share the 2-simplex $[xx'x''']$ with $\mathcal{S}$ work in the same way, except in this case it is $x_1$ which shares an edge with both $x$, $x'$ and every other vertex in the vertex set (In this case we do not have the complication that we need to consider simplices from \Cref{eq:simplices ff} because the we only need to consider simplices with the dummy vertex $x_1$).

The final case to consider is the 7-simplices which contain $\mathcal{S}$.
We have 16 such 7-simplices:
\begin{equation}
\begin{split}
&[xb_3x'a_3'x''b_3''x'''b_3'''] \textrm{\ \ \  \ \ \ } [xb_3x'a_3'x''b_3''x'''b_4'''] \textrm{\ \ \  \ \ \ }
[xb_3x'a_3'x''b_4''x'''b_3''']
\textrm{\ \ \  \ \ \ }
[xb_3x'a_3'x''b_4''x'''b_4'''] \\
&[xb_3x'a_4'x''b_3''x'''b_3'''] \textrm{\ \ \  \ \ \ } [xb_3x'a_4'x''b_3''x'''b_4'''] \textrm{\ \ \  \ \ \ }
[xb_3x'a_4'x''b_4''x'''b_3''']
\textrm{\ \ \  \ \ \ }
[xb_3x'a_4'x''b_4''x'''b_4'''] \\
&[xb_4x'a_3'x''b_3''x'''b_3'''] \textrm{\ \ \  \ \ \ } [xb_4x'a_3'x''b_3''x'''b_4'''] \textrm{\ \ \  \ \ \ }
[xb_4x'a_3'x''b_4''x'''b_3''']
\textrm{\ \ \  \ \ \ }
[xb_4x'a_3'x''b_4''x'''b_4'''] \\
&[xb_4x'a_4'x''b_3''x'''b_3'''] \textrm{\ \ \  \ \ \ } [xb_4x'a_4'x''b_3''x'''b_4'''] \textrm{\ \ \  \ \ \ }
[xb_4x'a_4'x''b_4''x'''b_3''']
\textrm{\ \ \  \ \ \ }
[xb_4x'a_4'x''b_4''x'''b_4''']
\end{split}
\end{equation}
The edges between the dummy vertices $\{x_3 \dots x_{10}\}$ and the rest of the complex are designed to ensure there are no gaps between the simplices that replace these simplices in $\mathcal{C}_0$.
We will show how this works for the simplex $[xb_3x'a_3'x''b_3''x'''b_3''']$ -- the other simplices can be analysed in a similar manner.
From \Cref{sec:4_qubit_construction} we can see that the vertex $x_9$ is connected to every vertex in this simplex.
Therefore, this 7-simplex is cut open, and forms two new 7-simplices, one which contains the 1-simplex $[xx_9]$ and one which contains the 1-simplex $[x'x_9]$.
There are eight facets of $[xb_3x'a_3'x''b_3''x'''b_3''']$ that we need to consider:
\begin{itemize}
\item $[b_3x'a_3'x''b_3''x'''b_3''']$ -- in $s'$ this facet becomes  $[b_3x_9a_3'x''b_3''x'''b_3''']$, which is shared with $s''$. In $s''$ this remains  $[b_3x'a_3'x''b_3''x'''b_3''']$, which is shared with $[b_2 b_3x'a_3'x''b_3''x'''b_3''']$
\item $[xx'a_3'x''b_3''x'''b_3''']$ -- in $s'$ this facet becomes $[xx_9a_3'x''b_3''x'''b_3''']$ which is shared with $[xa_4 x_9a_3'x''b_3''x'''b_3''']$. In $s''$ this becomes $[x_9x'a_3'x''b_3''x'''b_3''']$ which is shared with $[x_9a_4x'a_3'x''b_3''x'''b_3''']$ 
\item $[xb_3a_3'x''b_3''x'''b_3''']$ -- in $s'$ this remains $[xb_3a_3'x''b_3''x'''b_3''']$ which is shared with $[xb_3a'_2a_3'x''b_3''x'''b_3''']$. In $s''$ this becomes $[x_9b_3a_3'x''b_3''x'''b_3''']$ which is shared with $s'$.
\item $[xb_3x'x''b_3''x'''b_3''']$ -- in $s'$ this becomes $[xb_3x_9x''b_3''x'''b_3''']$ which is shared with $[xb_3x_9a'_4x''b_3''x'''b_3''']$. In $s''$ this becomes $[x_9b_3x'x''b_3''x'''b_3''']$ which is shared with $[x_9b_3x'a'_4x''b_3''x'''b_3''']$.
\item $[xb_3x'a_3'b_3''x'''b_3''']$ -- in $s'$ this becomes $[xb_3x_9a_3'b_3''x'''b_3''']$ which is shared with $[xb_3x_9a_3'x_1b_3''x'''b_3''']$. In $s''$ this becomes $[x_9b_3x'a_3'b_3''x'''b_3''']$ which is shared with $[x_9b_3x'a_3'x_1b_3''x'''b_3''']$.
\item $[xb_3x'a_3'x''x'''b_3''']$ -- in $s'$ this becomes $[xb_3x_9a_3'x''x'''b_3''']$ which is shared with $[xb_3x_9a_3'x''b_4''x'''b_3''']$. In $s''$ this becomes $[x_9b_3x'a_3'x''x'''b_3''']$ which is shared with $[x_9b_3x'a_3'x''b_4''x'''b_3''']$.
\item $[xb_3x'a_3'x''b_3''b_3''']$ -- in $s'$ this becomes $[xb_3x_9a_3'x''b_3''b_3''']$ which is shared with $[xb_3x_9a_3'x''b_3''x_2b_3''']$. In $s''$ this becomes $[x_9b_3x'a_3'x''b_3''b_3''']$ which is shared with $[x_9b_3x'a_3'x''b_3''x_2b_3''']$.
\item $[xb_3x'a_3'x''b_3''x''']$ -- in $s'$ this becomes $[xb_3x'_9a_3'x''b_3''x''']$ which is shared with $[xb_3x_9a_3'x''b_3''x'''x_8]$. In $s''$ this becomes $[x_9b_3x'a_3'x''b_3''x''']$ which is shared with $[x_9b_3x'a_3'x''b_3''x'''x_8]$.
\end{itemize}
So there are no resulting holes in the complex.

So we have shown that each of the 7-simplices in the cycle $c_0$ are still in $\mathcal{C}_0$, or have been removed and replaced with a pair of simplices which `fill in' the same point in the complex, leaving no gaps.
Next we consider the new simplices added by the dummy vertices, which are not simply replacing simplices from $c_0$.
That is, simplices which contain two or more dummy vertices.
The thickening procedure is designed precisely so that every new simplex that is added shares every facet (except those that bound the hole $\mathfrak{g}_7)$ with another simplex in the complex, leaving no gaps (see \Cref{sec:thickening}).
So, as claimed, the only hole we have cut in the copy of $\mathfrak{g}_8$ is that given by the copy of $\mathfrak{g}_7$ formed out of the vertices from $\mathcal{S}$ along with the dummy vertices (the complex $\mathcal{C}_0$ has trivial homology despite being formed by cutting a hole in $\mathfrak{g}_8$ becaues the process of cutting a hole takes a sphere, which has non-trivial homology, to a hyper-plane, which has trivial homology).

Precisely the same argument applies to cutting open the cycle for the basis state $\ket{1000}$.
So we are left with complexes which are each triangulations of a 7-dimensional hyper-planes with boundaries of the same size.
Gluing these hyper-planes together along the boundaries clearly gives a triangulation of $S^7$ as claimed.
Moreover, at every step when constructing this complex we have added edges, and every simplex induced by the edge.
So the resulting complex is 2-determined, and therefore a clique-complex.

\end{proof}

\begin{lemma}
The procedure outlined in \Cref{sec:3 qubit construction} for cutting and gluing more than three three-qubit cycles results in a clique complex $\Kcyc$ which triangulates $S^5$.
\end{lemma}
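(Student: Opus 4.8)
The plan is to mirror the structure of the preceding lemma (the $S^7$ case for four-qubit cycles), since the geometric picture is the same: we first understand the result of cutting open a single $5$-cycle $c_i$ corresponding to a three-qubit basis state $|z_i\rangle$, then understand what happens when many such cut-open cycles are glued together along their shared dummy-vertex cross-polytopes. As in \Cref{sec:3 qubit cutting}, each $5$-cycle is a copy of the cross-polytope $\mathfrak{g}_6$, and the common $2$-simplex $[xx'x'']$ is cut open using the dummy vertex $x_1$ (cutting the edge $[xx']$) together with the cube $\{x_2,\dots,x_9\}$ that is the dual of the octahedron on the vertices adjacent to all of $x,x',x''$, finally arranged into a tower of two bi-pyramids with apexes $\{x,x'\}$ and $\{x'',x_1\}$. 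The first step is to show that this cutting procedure, applied to one $c_i$, yields a clique complex $\mathcal{C}_i$ which is a triangulation of a $5$-dimensional hyperplane (equivalently, an $S^5$ with one $\mathfrak{g}_5$-shaped hole cut into it), with trivial homology. This is proved exactly as in the $S^7$ lemma: enumerate the $5$-simplices of $\mathfrak{g}_6$ according to how many vertices they share with $\mathcal{S}=[xx'x'']$, observe that those sharing at most a $1$-simplex that survives the cut are untouched, and for those whose shared face is destroyed, check that each is replaced by a pair (or family) of $5$-simplices through the dummy vertices such that every facet not on the boundary $\mathfrak{g}_5$ is still shared with another simplex of $\mathcal{C}_i$ — the thickening construction of \Cref{sec:thickening} is precisely what guarantees no gaps appear among the purely-dummy simplices.

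The second step handles the gluing of $N \geq 4$ such hyperplanes. We glue $\mathcal{C}_1,\dots,\mathcal{C}_N$ cyclically by identifying the dummy vertices as described in \Cref{sec:3 qubit construction}: $x, x', x''$ and $x_1$ are common to all cycles, while the cube vertices $\{x_2,\dots,x_9\}$ are shared half-and-half between consecutive cubes. As noted in the text, for $N\geq 4$ this opens up both $2$-cycles and $1$-cycles — the ``floors and ceilings'' of the viewing platform — which are filled by the eight extra dummy vertices of \Cref{fig:3 qubit 13}, one per open $1$-cycle and one per open $2$-cycle. I would argue as in the two-qubit case (\Cref{sec:3 qubit cutting}) and the four-qubit lemma: the union of the $\mathcal{C}_i$ is a triangulated manifold whose only potential defects are these lower-dimensional holes, and the eight filler vertices close exactly these holes without introducing new ones (each filler vertex is coned onto a cycle that bounds, so it contributes a ball). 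After attaching the $\mathfrak{g}_6$ pieces to the remaining $a,b$ vertices of each basis state (done identically for each cube as in the single-cycle case), one checks that the resulting closed manifold is a triangulation of $S^5$: each $\mathcal{C}_i$ contributes a $5$-hyperplane, and gluing $N$ hyperplanes in a cyclic fan along a common $\mathfrak{g}_5$ boundary yields a single $S^5$. Finally, since every operation performed — cutting, thickening, coning, gluing — only ever adds vertices, edges, and \emph{all} simplices induced by those edges, the complex is $2$-determined and hence a clique complex, by the criterion in \Cref{sec:clique}.

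The main obstacle, and the step I expect to require the most care, is the bookkeeping in the second step: verifying that the $1$- and $2$-cycles opened up by the cyclic gluing of cubes are \emph{exactly} those closed by the eight designated filler vertices, and that no additional spurious holes are created at the interfaces between three or more cubes meeting near a shared face. In the $N=3$ case this was four open $2$-cycles (\Cref{fig:3 qubit 10}); for $N\geq 4$ the floors/ceilings degrade from simplices to $1$-cycles, and one must confirm the combinatorics of which dummy vertex connects to which cycle is consistent around the full cyclic arrangement — this is where an explicit simplex-by-simplex argument, or deferral to the machine-checked computation in \cite{mathematica} for the specific states of \Cref{table:states}, becomes necessary. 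The homological content (``gluing hyperplanes gives a sphere, coning a bounding cycle gives a ball'') is routine; the difficulty is purely in confirming the local combinatorial structure matches the intended topological picture. I would structure the proof to isolate this into one lemma-like paragraph and lean on \Cref{sec:thickening} and the analogous four-qubit argument for everything else.
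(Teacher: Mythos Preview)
Your proposal is correct and follows essentially the same approach as the paper: cut a single $\mathfrak{g}_6$ open along $\mathcal{S}=[xx'x'']$ to obtain a $5$-hyperplane bounded by a $\mathfrak{g}_5$, verify by a case analysis on how many vertices each maximal simplex shares with $\mathcal{S}$ that no spurious holes appear, then glue many such hyperplanes cyclically by sharing dummy vertices and close the residual lower-dimensional cycles with the extra coned-off dummy vertices. The paper carries this out for the concrete Pythagorean state $-5|011\rangle+4|100\rangle+3|101\rangle$ (twelve hyperplanes), explicitly listing the dummy-vertex labels for each copy and deferring the final ``every hole is closed'' check to inspection of \Cref{fig:3 qubit 12}, \Cref{fig:3 qubit 13}; your identification of this bookkeeping step as the crux, and your suggestion to lean on the machine verification in \cite{mathematica} for the specific states, matches exactly what the paper does.
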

\begin{proof}
For concreteness we will consider the state $-5\ket{011}+4\ket{100}+3\ket{101}$ throughout the proof, but it trivially generalises to other three-qubit states.

First we demonstrate that applying the cutting process from \Cref{sec:3 qubit cutting} to the cycle $c_0$ corresponding to the basis state $\ket{011}$ results in a complex $\mathcal{C}_0$ which is a triangulation of $S^5$ with a copy of $\mathfrak{g}_5$ cut into it by the dummy vertices.
Is is immediate from \Cref{sec:3 qubit cutting} that the dummy vertices $\{x_1 \dots x_9\}$ along with the original vertices $x,x',x''$ form a copy of $\mathfrak{g}_5$.
So all that remains is to show that no additional holes were formed in the process of constructing this hole.

As in the proof of the previous lemma, we will do this by showing that the facet of every simplex in $\mathcal{C}_0$ is either shared with another simplex in $\mathcal{C}_0$, or is a face of the hole $\mathfrak{g}_5$.
Again, we will only need to show this for maximal simplices, which are 5-simplices in this case.

In this case the hole in $c_0$ is cut into the 2-simplex $\mathcal{S} = [xx'x'']$.
So any simplex from $c_0$ that shares no lower simplices with $\mathcal{S}$ is present in $\mathcal{C}_0$, and shares all facets with other simplices in $\mathcal{C}_0$.

We also have immediately that any simplex that shares exactly one vertex with $\mathcal{S}$ is still in the complex $\mathcal{C}_0$ and shares all facets with other simplices in the complex because the lower simplices they contain are unaffected by the cutting procedure.
Moreover, any simplex that shares only the 1-simplices $[xx'']$ or $[x'x'']$ with $\mathcal{S}$ is still in the complex $\mathcal{C}_0$ and shares all facets with other simplices in the complex for the same reason.

Next consider 5-simplices which share the 1-simplex $[xx']$ with $\mathcal{S}$.
There are eight such 5-simplices:
\begin{equation}\label{eq:3 simplices}
\begin{split}
&[xa_3x'b_3'b_2''b_3''] \textrm{\ \ \  \ \ \ } [xa_3x'b_3'b_2''b_4'']\textrm{\ \ \  \ \ \ } [xa_3x'b_4'b_2''b_3'']\textrm{\ \ \  \ \ \ } [xa_3x'b_4'b_2''b_4''] \\
& [xa_4x'b_3'b_2''b_3''] \textrm{\ \ \  \ \ \ } [xa_4x'b_3'b_2''b_4'']\textrm{\ \ \  \ \ \ } [xa_4x'b_4'b_2''b_3'']\textrm{\ \ \  \ \ \ } [xa_4x'b_4'b_2''b_4'']
\end{split}
\end{equation}
The dummy vertex $x_1$ is connected to the original vertices $\{x,a_3,a_4,x',b'_3,b'_4,b''_2,b''_3,b''_4\}$ (see \Cref{sec:3 qubit cutting}).
Therefore each simplex $s \in c_0$ from \Cref{eq:3 simplices} is replaced in $\mathcal{C}_0$ by two simplices -- one $s'$ containing the 1-simplex $[xx_1]$ and the other $s''$ containing the 1-simplex $[x'x_1]$.
The simplices $s',s''$ share the 4-simplex given by removing the vertices $x$ and $x'$ from the original vertex set, and adding the vertex $x_1$. 
In $c_0$  the other facets of each simplex $s$ in \Cref{eq:3 simplices} were shared with either:
\begin{itemize}
\item another 5-simplex, $t$, that contains one of $x,x'$. Since $t$ only contains one of $x,x'$ we have that $t \in \mathcal{C}_0$, and moreover $t$ will now share a facet with either $s$ or $s''$ (depending on whether it is $x$ or $x'$ that is contained in $t$)
\item another 5-simplex that contains $x,x'$ but not $x''$. Then $t$ is one of the other simplices from \Cref{eq:3 simplices} that has now been replaced by two simplices $t',t'' \in \mathcal{C}_0$. We will have that $t'$ shares a facet with $s'$, and $t''$ shares a facet with $s''$
\item another 5-simplex that contains $x,x',x''$ -- we will consider such 5-simplices next
\end{itemize}
So while the 5-simplices in \Cref{eq:3 simplices} are not present in $\mathcal{C}_0$, they are replaced by new 5-simplices that share every facet with a simplex that is in $\mathcal{C}_0$.

Next we consider simplices that contain $\mathcal{S}$.
There are eight such 5-simplices:
\begin{equation}\label{eq:3 simplices 2}
\begin{split}
&[xa_3x'b_3'x''b_3''] \textrm{\ \ \  \ \ \ } [xa_3x'b_3'x''b_4'']\textrm{\ \ \  \ \ \ } [xa_3x'b_4'x''b_3'']\textrm{\ \ \  \ \ \ } [xa_3x'b_4'x''b_4''] \\
& [xa_4x'b_3'x''b_3''] \textrm{\ \ \  \ \ \ } [xa_4x'b_3'x''b_4'']\textrm{\ \ \  \ \ \ } [xa_4x'b_4'x''b_3'']\textrm{\ \ \  \ \ \ } [xa_4x'b_4'x''b_4'']
\end{split}
\end{equation}
The edges between the dummy vertices $x_2 \dots x_9$ are designed to ensure there are no gaps between the simplices that replace these simplices in $\mathcal{C}_0$.
We will show how this works for the simplex $s=[xa_3x'b_3'x''b_3'']$, the other simplices can be analysed in a similar manner.
From \Cref{sec:3 qubit cutting} we can see that the simplex $x_2$ is connected to every vertex in this simplex.
Therefore this 5-simplex is cut open and forms two new 5-simplices: $s=[xa_3x_2b_3'x''b_3'']$, $s=[x_2a_3x'b_3'x''b_3'']$.
There are six facets of $s$ that we need to consider:
\begin{itemize}
\item $[a_3x'b_3'x''b_3'']$ -- in $s'$ this facet becomes $[a_3x_2b_3'x''b_3'']$, which is shared with $s'$. In $s''$ this is unchanged, and is shared with $[a_2a_3x'b_3'x''b_3'']$.
\item $[xx'b_3'x''b_3'']$ -- in $s'$ this facet becomes $[xx_2b_3'x''b_3'']$, which is shared with $[xx_3x_2b_3'x''b_3'']$. In $s''$ this facet becomes $[x_2x'b_3'x''b_3'']$, which is shared with $[x_2x_3x'b_3'x''b_3'']$.
\item $[xa_3b_3'x''b_3'']$ -- in $s'$ this facet is unchanged, and is shared with $[xa_3b_2'b_3'x''b_3'']$. In $s''$ this facet becomes $[x_2a_3b_3'x''b_3'']$ and is shared with $s'$.
\item $[xa_3x'x''b_3'']$ -- in $s'$ this facet becomes $[xa_3x_2x''b_3'']$, which is shared with $[xa_3x_2x_8x''b_3'']$. In $s''$ this facet becomes $[x_2a_3x'x''b_3'']$, and is shared with $[x_2a_3x'x_8x''b_3'']$.
\item $[xa_3x'b_3'b_3'']$ -- in $s'$ this facet becomes $[xa_3x_2b_3'b_3'']$, which is shared with $[xa_3x_2b_3'x_1b_3'']$. In $s''$ this facet becomes $[x_2a_3x'b_3'b_3'']$, and is shared with $[x_2a_3x'b_3'x_1b_3'']$.
\item $[xa_3x_2b_3'x'']$ -- in $s'$ this facet becomes $[xa_3x_2b_3'x'']$, which is shared with $[xa_3x'b_3'x''x_4]$. In $s''$ this facet becomes $[x_2a_3x'b_3'x'']$, and is shared with $[x_2a_3x'b_3'x''x_4]$.
\end{itemize}
So there are no resulting holes in the complex.
We have shown that each of the 5-simplices in the cycle $c_0$ is either still in $\mathcal{C}_0$, or have been removed and replaced with a pair of simplices which share every facet with another facet in the complex, leaving no gaps.
Now we consider the new simplices added by the dummy vertices, which are not simply replacing simplices from $c_0$.
That is, simplices which contain two or more dummy vertices.
It can be seen by inspection of \Cref{fig:3 qubit 7}, \Cref{fig:3 qubit 6} and \Cref{fig:3 qubit 8} that every cycle involving the dummy vertices can be continuously deformed to a cycle  which includes only the original vertices. 
Therefore, every facet of every new simplex must be shared with another simplex in the complex, except the simplices that bound the hole $\mathfrak{g}_5$ (the complex $\mathcal{C}_0$ has trivial homology despite beging formed by cutting a hole in $\mathfrak{g}_6$ because the process of cutting a hole takes a sphere to a hyper-plane). 

Precisely the same argument applies to cutting open five more copies of $c_0$, and to cutting open four copies of $c_1$ (the cycle corresponding to $\ket{100}$) and three copies of $c_2$ (the cycle corresponding to $\ket{101}$).
Therefore if we label the dummy vertices as follows:
\begin{itemize}
\item $c_0^{(1)}$: $x_1 \dots x_{9}$
\item $c_0^{(2)}$: $x_1$ \& $x_6 \dots x_{13}$
\item $c_0^{(3)}$: $x_1$ \& $x_{10} \dots x_{17}$
\item $c_0^{(4)}$: $x_1$ \& $x_{14} \dots x_{21}$
\item $c_0^{(5)}$: $x_1$ \& $x_{18} \dots x_{25}$
\item $c_1^{(1)}$: $x_1$ \& $x_{22} \dots x_{29}$
\item $c_1^{(2)}$: $x_1$ \&  $x_{26} \dots x_{33}$
\item $c_1^{(3)}$: $x_1$ \& $x_{30} \dots x_{37}$
\item $c_1^{(4)}$: $x_1$ \& $x_{34} \dots x_{41}$
\item $c_2^{(1)}$: $x_1$ \& $x_{38} \dots x_{45}$
\item $c_2^{(2)}$: $x_1$ \& $x_{42} \dots x_{49}$
\item $c_2^{(3)}$: $x_1$ \& $x_{46} \dots x_{49}$ \& $x_2 \dots x_5$
\end{itemize}
Then we are taking 12 copies of the hyper plane, and gluing them along common boundaries.
However, care has to be taken because this way of gluing together hyperplanes does not glue together the entire boundary (as in the case when we are just gluing two hyperplanes together), instead it only glues along parts of the boundaries -- see \Cref{fig:3 qubit 11}.
This procedure therefore leads to a triangulation of $S^5$ with a number of holes in it.
To fill in the holes we add additional dummy vertices, connected up to the existing dummy vertices -- see \Cref{fig:3 qubit 12} and \Cref{fig:3 qubit 13}.
It can be checked that this process of identifying vertices and adding additional dummy vertices closes every hole in the surface of the triangulation of $S^5$, and therefore gives a complex which is a triangulation of $S^5$ as claimed.
Moreover, at every point when constructing the complex whenever we have added an edge we have also added every simplex containing that edge, so the resulting complex is 2-determined, and hence a clique complex.

\end{proof}

\pagebreak
\bibliographystyle{alpha}
\bibliography{refs}

\end{document}